%

\documentclass[preprint]{imsart}

\RequirePackage{amsthm,amsmath,amsfonts,amssymb}
\RequirePackage[numbers]{natbib}

\usepackage[utf8]{inputenc}

\usepackage{amsmath,amssymb,graphics,dsfont,bm,xcolor,float}

\usepackage{graphicx}
\usepackage[caption = false]{subfig} 

\usepackage{array,multirow}
\newcommand{\STAB}[1]{\begin{tabular}{@{}c@{}}#1\end{tabular}}

\startlocaldefs

\newtheorem{theorem}{Theorem}
\newtheorem{lemma}{Lemma}
\newcommand{\R}{\mathbb{R}}

\newcommand{\p}{\mathbb{P}}

\newcommand{\E}{\mathbb{E}}

\newcommand{\B}{\mathcal{B}}

\newcommand{\X}{\mathcal{X}}

\newcommand{\1}{\mathds{1}}

\newcommand{\dd}{\mathrm{d}}

\endlocaldefs

\begin{document}

\begin{frontmatter}
\title{A presmoothing approach for estimation in {semiparametric} mixture cure models}
\runtitle{A presmoothing approach for  mixture cure models}
\runauthor{Musta, Patilea, Van Keilegom}
\begin{aug}
\author[A]{\fnms{Eni} \snm{Musta}\ead[label=e1,mark]{e.musta@uva.nl}},
\author[B]{\fnms{Valentin} \snm{Patilea}\ead[label=e2]{valentin.patilea@ensai.fr}}
\and
\author[A]{\fnms{Ingrid} \snm{Van Keilegom}\ead[label=e3,mark]{ingrid.vankeilegom@kuleuven.be}}
\address[A]{ORSTAT, KU Leuven, Belgium, \printead{e1,e3}}

\address[B]{CREST, Ensai, France, \printead{e2}}
\end{aug}

\begin{abstract}
A challenge {when} dealing with survival analysis data is accounting for  a cure fraction,  meaning that some subjects will never experience the event of interest.  Mixture cure models have been frequently used to estimate both the probability of being cured and the time to event for the susceptible subjects, by usually assuming a parametric (logistic) form of the incidence.  
We propose a new estimation procedure for a parametric cure rate that relies on  a preliminary smooth estimator and is independent of the model assumed for the latency.  We investigate the {theoretical properties} 
of the estimators and show through simulations that, in the logistic/Cox model, presmoothing leads to more accurate results compared to the maximum likelihood estimator. To illustrate the practical use, we apply the new estimation procedure to two studies of melanoma survival data.
\end{abstract}

\begin{keyword}
\kwd{cure models}
\kwd{ kernel smoothing}
\kwd{logistic model}
\kwd{survival analysis}
\end{keyword}

\end{frontmatter}


\section{Introduction}
There are many situations in survival analysis problems where some of the subjects will never experience the event of interest. For instance, as significant progress is being made for treatment of different types of cancers, many of the patients get cured of the disease and do not experience recurrence or cancer-related death. Other examples include study of time to natural conception,
time to default in finance and risk management,
time to early failure of integrated circuits in {engineering, time to find a job after a layoff.}
However, because of the finite duration of the studies and censoring, the cured subjects (for which the event never takes place) cannot be distinguished from the `susceptible' ones. We can just get an indication of the presence of a cure fraction from the context of the study and  a long plateau (containing many censored observations) with height greater than zero in the Kaplan-Meier estimator of the survival function. Predicting the probability of being cured given a set of characteristics is often of particular interest in order to make better decisions in terms of {treatment, management strategies or public policies.}
This lead to the development of mixture cure models.

Mixture cure models were first proposed by \cite{boag49} and \cite{berkson52}. They assume that the population is a mixture of two groups: the cured  and the susceptible subjects. Within this very wide class of models, various approaches have been considered in the literature for modelling and {estimating}  the incidence (probability of being uncured) and the latency (survival function of the uncured subjects). Initially, fully parametric models with a logistic regression form of the incidence and various parametric distributions for the latency  were used in \cite{farewell82,yamaguchi92,kuk92}.  Later on, more flexible semi-parametric approaches were proposed for the latency based on the Cox proportional hazards model \cite{ST2000,peng2000} or accelerated failure time models \cite{li2002,zhang2007}. However, they still maintain the logistic regression model for the incidence.  More recently, nonparametric methods have been developed for both or one of the model components in \cite{XP2014,PK2019,AKL19}. 
{In this} wide range of models, probably the most commonly used one in practice is the {logistic/Cox mixture} cure model  \citep{stringer2016cure,wycinka2017,lee2017extinct}.

There have been different proposals for estimation in the logistic/Cox mixture cure model. The presence of a latent variable (the unknown cure status), does not allow for a `direct' approach as in the classical Cox proportional hazards model. \cite{kuk92} adapted a marginal likelihood approach computed through Monte Carlo approximations, whereas \cite{peng2000} and \cite{ST2000} computed the  maximum likelihood estimator via the Expectation-Maximization algorithm. Asymptotic properties of the latter estimators are investigated in \cite{Lu2008}, while the procedure is implemented in the package \texttt{smcure} \cite{cai_smcure}. 
One concern about the previous estimators is that they are obtained by iterative procedures which {could} be unstable  in practice. {In particular, when the sample size is small there are situations in which the EM algorithm fails to converge (even though the \texttt{smcure} package can still provide without error the estimates obtained when the maximum number of iterations is reached). Such problems are for example reported in \cite{han2017statistical}. In addition, the maximum likelihood estimator for the incidence component depends on which variables are included in the latency model (see for example the illustration in Section~\ref{sec:application}) and this instability might in practice lead to unobserved effects (when the effect is not very strong). In particular, if the latency model is misspecified, even the estimators of the incidence parameters suffer from induced bias (see for example \cite{BP18}).}

 In this paper, we introduce an alternative {estimation method which applies {very broadly and}, in particular, for the} logistic/Cox mixture cure model. {Our approach focuses on } direct estimation of the cure probability without using distributional assumptions on the latency and iterative algorithms.  It relies on  a preliminary nonparametric estimator for the incidence which is then `projected' on {a parametric class of functions (like logistic functions)}.  The idea of constructing a parametric estimator by nonparametric estimation has been previously proposed for the classical linear regression by \cite{cristobal1987class}. {Later on it was shown to be effective also in the context of variable selection and functional linear regression  \citep{presmoothing_var_sel,ferraty2012presmoothing}. However, its }
extension to nonlinear setups has been very little investigated. Here we show that in the context of mixture cure models, even when a parametric form is  assumed for the incidence, the use of a presmoothed estimator as an intermediate step for obtaining the parameter estimates often leads to more accurate results. Once the cure fraction is estimated, we estimate the survival distribution of the uncured subjects. {In the case of the logistic/Cox cure model, this is done by maximizing the Cox component of the likelihood.}  In this step, {an iterative algorithm}
is used to compute the estimators of the baseline cumulative hazard and the regression parameters. This new approach is of practical relevance given  the popularity {of the semiparametric logistic/Cox mixture cure model.} However, the method can be applied more in general to a mixture cure model with a parametric form of the incidence and {other type of models for the uncured subjects, such as the semiparametric proportional odds model or the semiparametric AFT model. Our findings} suggest that presmoothing has potential {to improve parameter estimation for small and moderate sample size.}

The paper is organized as follows. In Sections \ref{sec:model} and \ref{sec:method} we describe the model and the estimation procedure.   {Section \ref{sec:Cox} focuses on the estimation method in the case of the logistic/Cox mixture cure model.}
Consistency and asymptotic normality of the estimators are shown in Section~\ref{sec:asymptotics}. {Thanks to the presmoothing, we are able to present theoretical results under more reasonable assumptions and thus we contribute to fill a gap between unrealistic technical conditions and applications. 
	The finite sample performance of the method is investigated through a simulation study and results are reported in {Section~\ref{sec:simulations}}.  
	{For practical purposes, we propose to make simple and commonly used choices for the bandwidth and the kernel function in the presmoothing step, and we show that these choices provide satisfactory results.}
	  The proposed estimation procedure is applied to two medical datasets about studies of patients with melanoma cancer (see Section~\ref{sec:application}).   {We conclude in Section \ref{sec:disc} with some discussion and ideas for further research.  Finally, some of the proofs can be found in Section \ref{sec:appendix}, while the remaining proofs {and additional simulation results} are collected in the online Supplementary Material.}

	\section{Model description}
	\label{sec:model}
	In the mixture cure model  the survival time $T$ can be decomposed as 
	\[
	T=BT_0+(1-B)\infty,
	\]
	where $T_0$ represents {the finite survival time} for an uncured individual and  {$B$ is an unobserved $0$-$1$ random variable giving the uncured status: $B=1$ for uncured individuals and $B=0$ otherwise}. By convention $0\cdot \infty = 0$.  Let $C$ be the censoring time and $(X',Z')'$ a $(p+q)$-dimensional vector of  covariates, where $x'$ denotes the transpose of the vector $x$. Let $\X$ and $\mathcal{Z}$ be the supports of $X$ and $Z$ respectively. Observations consist {of} $n$ i.i.d. realizations of $(Y,\Delta,X,Z)$, where $Y=\min(T,C)$ is {the finite follow-up time} and $\Delta=\1_{\{T\leq C\}}$ is the censoring indicator. {Since $Y$ is finite, then necessarily $\p(C<\infty)=1$, that means} the censoring times are finite ({which makes sense given the} limited duration of the studies). As a result, {censored survival times of the uncured subjects} cannot be distinguished from the cured ones. 
	
	The covariates included in $X$ are those used to model the cure rate, while the ones in $Z$ affect the survival conditional on the uncured status. This allows in general to use different variables for modelling the incidence and the latency but does not exclude situations in which the two vectors $X$ and $Z$ share some components or are exactly the same.  
	Apart from the standard assumption in survival analysis that $T_0\perp (C,X) | Z$, here we also need 
	\begin{equation}\label{eqn:bb_ind}
		B\perp (C,T_0,Z)|X.
	\end{equation} 
	This implies in particular that 
	\begin{equation}
		\label{eqn:CI1}
		T\perp C| (X,Z)
	\end{equation}
	(see {Lemma~1 in the Suplementary Material}). Moreover, \eqref{eqn:bb_ind} implies 
	\begin{equation}
		\label{eqn:condition_X_Z}
		\p(T=\infty|X,Z)=\p(T=\infty|X). 
	\end{equation}
	In addition, in the cure model context we need that the event time $T_0$ has support $[0,\tau_0]$, {i.e.  $\{T>\tau_0\}=\{T=\infty\}$, such} that   
	\begin{equation}
		\label{eqn:CI2}
		{\inf_{x}\p(C>\tau_0|X=x)>0.}
	\end{equation}
	{(If the support of $T_0$ given $Z=z$ depends on $z$, then we let $\tau_0 = \sup \tau_0(z)$, where $\tau_0(z)$ is the right endpoint of this support.)}   This condition tells us that all the observations with $Y>\tau_0$ are cured. Even if it might {seem restrictive,} it is reasonable when a cure model is justified by a `good' follow-up beyond the time when most of the events occur and it is commonly accepted in the cure model literature in order for the mixture cure model to be identifiable and not to  overestimate the cure rate. Since $T_0\perp X | Z$, we have 
	\begin{equation*}
		\p(T_0\leq t |X,Z)=\p(T_0\leq t|Z), \quad \forall t\in [0,\tau_0].
	\end{equation*}
	
	
	We assume a parametric model for the cure rate and we denote by $\pi_0(x)$ the cure probability of a subject with covariate $x$, i.e
	\[
	\pi_0(x)=\p(T=\infty|X=x)=1-\phi(\gamma_0,x),
	\]
	for some {parametric model $\{\phi (\gamma,x): \gamma \in G\}$ and $\gamma_0\in G$.} The first component of $X$ is equal to one and {the first component of $\gamma$} corresponds to the intercept. In order for $\gamma$ to be identifiable we need the following condition
	\begin{equation}
		\label{eqn:CI3}
		\p\left(\phi(\gamma,X)=\phi(\tilde{\gamma},X)\right)=1\qquad\text{ implies that }\qquad \gamma=\tilde{\gamma}.
	\end{equation}
{Choosing a parametric model for the incidence seems quite standard in the literature of mixture cure models (\cite{PK2019,BP18,sposto2002cure}) because of its simplicity and ease of interpretability (particularly for multiple covariates).  To check the fit of this model in practice, one can compare the prediction error with that of a more flexible single-index model as done in \cite{AKL19} and for our real data application  in Section~\ref{sec:application}. It is also possible to test whether this assumption is reasonable using the test proposed in \cite{muller2019goodness}, but this is currently developed only for one covariate. Among the parametric models for the incidence component, } the most common example is the logistic model, where 
	\begin{equation}
		\label{eqn:logistic}
		\phi(\gamma,x)=1/(1+{\exp(-\gamma' x)}).
	\end{equation}
	{We state} the results in Section \ref{sec:asymptotics} for a general parametric model for the incidence, but then we focus on the logistic function in the simulation study in Section \ref{sec:simulations} since it is more of interest in practice. 
	For the uncured subjects, we {can consider a general semiparametric model defined through the survival function 
		\begin{equation} \label{Su}
			S_u(t|z) =S_u(t|z;\beta,\Lambda) = \p(T_0>t| {Z=z}, B=1) \quad\text{and}\quad S_u(\tau_0|z) =0,
		\end{equation}
		{where the conditional survival function $S_u$ is allowed to depend on a finite-dimensional parameter, denoted by $\beta\in \mathcal B$, and/or an infinite-dimensional parameter, denoted by $\Lambda\in \mathcal H$, with $\mathcal B$ and $\mathcal H$ the respective parameter sets.} Let $\beta_0\in\mathcal  B$ and $\Lambda_0 \in\mathcal H$ be the true values of these parameters. As a result, the conditional survival function corresponding to $T$  is then
		$$
		S(t|x,z)=\p(T>t|{X=x, Z=z})=1-\phi(\gamma_0,x)+\phi(\gamma_0,x)S_u(t|z).
		$$
		The main example we keep in mind is the Cox proportional hazards (PH) model where  $\Lambda_0$ is the baseline cumulative hazard. In this case 
		\begin{equation} \label{SuCox}
			S_u(t|z) 
			=S_0(t)^{\exp(\beta'_0z)} = \exp(-\Lambda_0(t)\exp(\beta'_0z)),
		\end{equation}
		where $S_0 $ is the baseline survival and $\beta_0 $ does not contain an intercept.}


	\section{{Presmoothing estimation approach}}
	\label{sec:method}
	The estimation method we propose is based on a two step procedure. We first estimate nonparametrically the cure probability for each observation and then compute an estimator of $\gamma$ as the maximizer of the logistic likelihood, ignoring {the model for the uncured subjects.}
	In the second step, we plug-in this estimator of $\gamma$ in the {full  likelihood of the
		mixture cure model and fit the 
		latency model 
		using maximum likelihood estimation.} 
	In what follows, we describe in more details these two steps. 
	
	\textit{Step 1.} Even though a parametric model is assumed for the incidence, we start by computing a nonparametric estimator of the cure probability for each subject. One possibility is to use  the method {followed by \cite{PK2019} (see also \cite{XP2014})}, {but other estimators are possible as well, as long as the conditions given in Section \ref{sec:asymptotics} are satisfied.  The estimator of \cite{PK2019} is defined as follows:}
	\begin{equation}
		\label{def:hat_pi}
		\hat\pi(x)=\prod_{t\in\R} \left(1-\frac{\hat{H}_1(dt|x)}{\hat{H}([t,\infty)|x)}\right){,}
	\end{equation}
	where  {$\hat{H}([t,\infty)|x)=\hat{H}_1([t,\infty)|x)+\hat{H}_0([t,\infty)|x)$,} $\hat{H}_1(dt|x) = \hat{H}_1((t-dt,t]|x)$ for small $dt$ and 
	\[
	\hat{H}_k([t,\infty)|x)=\sum_{i=1}^n\frac{\tilde{K}_b(X_i-x)}{\sum_{j=1}^n\tilde{K}_b(X_j-x)}\1_{\{Y_i\geq t, \Delta_i=k\}},\quad k=0,1{,}
	\] 
	are estimators of
	\[
	H_k([t,\infty)|x)=\p\left(Y\geq t,\Delta=k|X=x\right), 
	\] 
	$H([t,\infty)|x)=H_1([t,\infty)|x)+H_0([t,\infty)|x)$. Here $\tilde{K}_{{b}}$ is a multidimensional kernel function defined in the following way. If $X$ is composed of continuous and discrete components, {$X=(X_c,X_d)\in\X_c\times\X_d\subset\R^{p_c}\times\R^{p_d}$ with $p_c+p_d=p$}, then
	\[
	\tilde{K}_b(X_i-x)=K_b(X_{c,i}-x_c)\1_{\{X_{d,i}=x_d\}},
	\] 
	where $b=b_n$ is a bandwidth sequence, $K_b(\cdot)=K(\cdot/b)/b^{p_c}$ and $K(u)=\prod_{j=1}^{p_c}k(u_j)$, {with $k$ a kernel.} 
	Note that, one can compute this estimator with any covariate but here we only use $X$ because of our assumption~\eqref{eqn:condition_X_Z}.
	{The estimator} $\hat\pi(x)$ coincides with the Beran estimator of the {conditional survival function} $S$ at the largest observed event time $Y_{(m)}$ {and does not require any specification of $\tau_0$}. {Since $\hat{H}_1(dt|x)$ is different from zero only at the observed event times, computation of $\hat\pi(x)$ requires only a product over $t$ in the set of the observed event times.} Afterwards, we consider the logistic likelihood
	\[
	\hat L_{n,1}(\gamma)=\prod_{i=1}^n \phi(\gamma,X_i)^{1-\hat\pi(X_i)}(1-\phi(\gamma,X_i))^{\hat\pi(X_i)}{,}
	\]
	and define $\hat\gamma_n$ as the maximizer of 
	\begin{equation}
		\label{def:hat_L_gamma}
		\log \hat{L}_{n,1}(\gamma)=\sum_{i=1}^n\Big\{\left[1-\hat\pi(X_i)\right] \log  \phi(\gamma,X_i)+\hat\pi(X_i)\log \left[1-\phi(\gamma,X_i)\right]\Big\}.
	\end{equation}
	Existence and uniqueness of $\hat\gamma_n$ holds under the same conditions as for the maximum likelihood estimator in the binary outcome regression model where $1-\hat\pi(X_i)$ is replaced by {the outcome $B_i$}. For example, in the logistic model,   it is required that $p<n$ and the matrix of the variables $X$ has full rank.

	\textit{Step 2.} {Now we consider the  likelihood of the 
		mixture cure model.  Let {$$f_u(t|z;\beta,\Lambda) = -(\partial/\partial t) S_u(t|z;\beta,\Lambda)$$ with $S_u(t|z;\beta,\Lambda)$ as defined in (\ref{Su}),} {denote an element in the model for the conditional density of} $T_0$ given $Z=z$, which is supposed to exist {and belong to the model}.} Assuming  non informative censoring and that the distribution of the covariates does not carry information on the parameters $\beta$, $\Lambda$, the likelihood criterion is then
	{\begin{equation}
			\label{eqn:likelihood2}
			L_{n,2}( \beta,\Lambda,\gamma)=\prod_{i=1}^{n}\left\{\phi(\gamma,X_i)f_u(Y_i|Z_i;\beta,\Lambda)
			\right\}^{\Delta_i}\!\left\{1\!-\phi(\gamma,X_i)+\phi(\gamma,X_i)S_u(Y_i|Z_i;\beta,\Lambda) \right\}^{1-\Delta_i},
		\end{equation}
		and we maximize it w.r.t. $\beta$ and $\Lambda$ for $\gamma=\hat\gamma_n$, i.e.  $(\hat\beta_n,\hat\Lambda_n)$  are the maximizers of 
		\begin{equation}\label{eqn:likelihood2b}
			\hat{l}_n(\beta,\Lambda,\hat\gamma_n) 
			= \frac{1}{n}\sum_{i=1}^n \ell (Y_i,\Delta_i,X_i,Z_i;\beta,\Lambda,\hat\gamma_n),
		\end{equation}
		over a set of possible values for $\beta $ and $\Lambda$, where
		\begin{equation}
			\label{def:hat_l_cox_g}
			\begin{split}
				\ell (Y_i,\Delta_i,X_i,Z_i;\beta,\Lambda,\gamma) &=\Delta_i  \log f_u(Y_i|Z_i;\beta,\Lambda) \\
				&\quad+ 
				(1-\Delta_i)\log\left\{1-\phi(\gamma,X_i)+\phi(\gamma,X_i)S_u(Y_i|Z_i;\beta,\Lambda)\right\}.
			\end{split}
		\end{equation}
	}

	\section{{Presmoothing estimation for the parametric/Cox mixture cure model}} 
	\label{sec:Cox}

	In the sequel we focus on the case of a Cox PH model {defined in (\ref{SuCox})} for the conditional law of $T_0$. 
	The criterion defined in \eqref{eqn:likelihood2b} becomes 
	\begin{multline}
		\label{def:hat_l_cox}
		\hat{l}_n(\beta,\Lambda,\hat\gamma_n)  
		=\frac{1}{n}\sum_{i=1}^n \Delta_i  \left\{{\1_{\{Y_i<\tau_0\}}}[ \log \Delta\Lambda(Y_i)+\beta'Z_i]-\Lambda(Y_i)e^{\beta'Z_i} \right\} 
		\\
		+\frac{1}{n}\sum_{i=1}^n(1-\Delta_i )\log\left\{1-\phi(\hat\gamma_n,X_i)+\phi(\hat\gamma_n,X_i)\exp\left(-\Lambda(Y_i)e^{\beta'Z_i}\right)\right\},\,
	\end{multline}
	and has to be maximized with respect to $\beta$ and $\Lambda$ in  
	the class of step functions $\Lambda$  defined on $[0,\tau_0]$ (thus by definition $\Lambda(t)={\infty}$ if $t>\tau_0$), with jumps of size $\Delta \Lambda$ at the event times. {The indicator of the event $\{Y_i<\tau_0\}$ in the first term is needed in case the distribution of the event times has a jump at $\tau_0$ meaning that $\p(T_0=\tau_0|Z)>0$. In such a case $f_u(\tau_0|Z;\beta,\Lambda)=\exp(-\Lambda(\tau_0)e^{\beta'Z})$ where $\Lambda(\tau_0)=\lim_{t\uparrow\tau_0}\Lambda(t)$. Otherwise, if $\p(T_0=\tau_0|Z)=0$, then  for all uncensored observations we have $\1_{\{Y<\tau_0\}}=1$ with probability one. Thus, the presence of the indicator function can be neglected. }
	As in \cite{Lu2008}, it can be shown that  
	\begin{equation}\label{def_MLE}
		(\hat\beta_n, \hat\Lambda_n)= \arg\max_{\beta, \Lambda }
		\hat{l}_n(\beta,\Lambda,\hat\gamma_n)
	\end{equation}
	exists and it is finite. Moreover, for any given $\beta$ and $\gamma$, the ${\Lambda}_{n,\beta,\gamma}$ which maximizes $\hat{l}_n(\beta,\Lambda,\gamma)$ {in \eqref{def:hat_l_cox}}, 
	with respect to $\Lambda$  with jumps at the event times,
	can be characterized as 
	\begin{equation}
		\label{eqn:hat_lambda_n}
		{\Lambda}_{n,\beta,\gamma}(t)=\frac{1}{n}\sum_{i=1}^{n}\frac{\Delta_i\1_{\{Y_i\leq  {t {, Y_i<\tau_0} }\}}}{\frac{1}{n}\sum_{j=1}^n\1_{\{{{Y_i\leq Y_j\leq\tau_0}  }\}}\exp(\beta'Z_j)\left\{{\Delta_j}+(1-{\Delta_j})g_j(Y_j,{\Lambda}_{n,\beta},\beta,\gamma)\right\}} ,
	\end{equation}
	where
	\begin{equation}
		\label{def:g_j}
		g_j(t,\Lambda,\beta,\gamma)=\frac{\phi(\gamma,X_j)\exp\left(-\Lambda(t) \exp\left(\beta' Z_j\right)\right)}{1-\phi(\gamma,X_j)+\phi(\gamma,X_j)\exp\left(-\Lambda(t) \exp\left(\beta' Z_j\right)\right)}.
	\end{equation}
	Next, we could define
	$$
	\hat\beta_n = \arg\max_{\beta}\hat{l}_n(\beta,{\Lambda}_{n,\beta,\hat\gamma_n} ,\hat\gamma_n) \quad \text{and} \quad \hat\Lambda_n  = {\Lambda}_{n,\hat\beta_n,\hat\gamma_n}.
	$$
	
	To compute $(\hat\beta_n,\hat\Lambda_n)$ we use an iterative algorithm based on profiling.  
	To be precise, we start with initial values which are the maximum partial likelihood estimator and the Breslow estimator (as if there was no cure fraction) and we iterate between the next two steps until convergence:
	\begin{itemize}
		\item[{a)}] Compute the weights 
		\[
		w_j^{(m)}={\Delta_j}+(1-{\Delta_j})\frac{\phi(\hat\gamma_n,X_j)\hat{S}^{(m)}_u(Y_j|Z_j)}{1-\phi(\hat\gamma_n,X_j)+\phi(\hat\gamma_n,X_j)\hat{S}^{(m)}_u(Y_j|Z_j)},
		\]
		where
		\[
		\hat{S}^{(m)}_u(Y_j|Z_j)=\exp\left(-\hat{\Lambda}^{(m)}_n(Y_j) \exp\left(\hat\beta^{(m)'}_n Z_j\right)\right),
		\]
		using the estimators $\hat{\Lambda}^{(m)}_n$, $\hat\beta^{(m)}_n $ of the previous step. 
		\item[{b)}] Using the previous weights, update the estimators for $\Lambda$  and $\beta$, i.e. $\hat\beta^{(m+1)}_n$ is the maximizer of
		\[
		\prod_{i=1}^n \left\{\frac{e^{\beta'Z_i}}{\sum_{{Y_k\geq Y_i }}w_k^{(m)}e^{\beta'Z_k}}\right\}^{\Delta_i
		}
		\]
		and
		\begin{equation}
			\label{eqn:lambda}
			\hat\Lambda_n^{(m+1)}(t)=\sum_{i=1}^n\frac{\Delta_i\1_{\{{Y_i\leq t{, Y_i<\tau_0}}\}}}{\sum_{j=1}^n\1_{\{{{Y_i\leq Y_j\leq\tau_0}}\}} w_j^{(m)}\exp\left(\hat{\beta}_n^{(m+1)'} Z_j\right)}.
		\end{equation}
	\end{itemize}
	{The update of $\Lambda$ an $\beta$ in {Step (b)} coincides with the maximization step of the EM algorithm and the weights $w^{(m)}$ correspond to the expectation of the latent variable $B$ given the observed data and the current parameter values. However, unlike the maximum likelihood estimation  \citep{ST2000}, we are  keeping $\hat\gamma_n$ fixed while performing this iterative algorithm.}  {The estimator $\hat\Lambda_n$ seems to depend on the unknown $\tau_0$. However,  with data at hand, one could easily proceed without knowing $\tau_0$. Indeed, if there are ties at the last uncensored observation, then $\tau_0$ is revealed by the data. On the other hand, if there are no ties, all uncensored observations will be smaller than $\tau_0$, hence no need to know $\tau_0$.}
	
	
	As suggested in \cite{taylor95,ST2000}, we impose the zero-tail constraint, meaning that $\hat{S}_u^{(m)}$ is forced to be equal to zero beyond the last event. In this way, all censored observations in the plateau are assigned to the cured group. 
	

	\section{Asymptotic results}
	\label{sec:asymptotics}
	
	We first explain {why} presmoothing  {allows for more realistic} asymptotic results in semiparametric  mixture cure models. 
	Next, we 
	show consistency and asymptotic normality of the proposed estimators $\hat{\gamma}_n$, $\hat\beta_n$ and $\hat\Lambda_n$ 
	{for the parametric/Cox 
		mixture cure model}
	when, in Step 1, we  use a general nonparametric estimator $\hat\pi$ of $\pi_0$ that satisfies certain assumptions. Afterwards, we verify these conditions for the particular estimator $\hat\pi$ in \eqref{def:hat_pi}. 
	Some of the proofs can be found in Section~\ref{sec:appendix} and the rest in  the online Supplementary Material.
	The assumptions mentioned in  Section \ref{sec:model} are assumed to be satisfied throughout this section. {In addition $Var(Z)$ is supposed to have full rank. }
	
	
	\subsection{A challenge with mixture cure models}
	
	To derive asymptotic results, in most of the existing literature it has been  assumed  that
	\begin{equation}
		\label{eqn:jump_cond}
		\inf_{z}\p(T_0\geq\tau_0|Z=z)>0, 
	\end{equation}
	\citep{Lu2008,PK2019}. In nonparametric approaches such a condition keeps the denominators away from zero. In the parametric/Cox 
	mixture cure model, it guarantees that  the baseline distribution stays bounded on the compact support $[0,\tau_0]$.
	{However, condition \eqref{eqn:jump_cond} implies that $\inf_{z}\mathbb P (Y=\tau_0{,\Delta=1}|Z = z)  >0$, a condition which is not frequently satisfied in real-data applications.

		One could imagine that, instead of imposing condition \eqref{eqn:jump_cond}, it could be possible to proceed as follows: first restrict to events on $[0,\tau^*]$ for some 
		$\tau^* < \tau_0$ such that 
		\begin{equation}
			\label{eqn:no_jump_cond}
			\inf_{z}\p(Y\geq\tau^*{,\Delta=1}|Z=z)>0, 
		\end{equation}
		next derive the asymptotics, and finally let $\tau^*$ tend to $\tau_0$. This idea is used, for instance, in Cox PH model, see  \cite{FleHar} chapter 8, or \cite{AndGill82}. However, this idea does not seem to work for  mixture cure models without suitable adaptation. This is because it implicitly requires 
		that $\beta_0$ and $\Lambda_0$ are identifiable from the restricted  data. Here, identifiability means that the true values $\beta_0$ and $\Lambda_0$ of the parameters maximize the expectation of the criterion maximized to obtain the estimators. 
		Two aspects have to be taken into account when analyzing this identifiability. The first aspect is related to the parameter identifiability in the semiparametric model for $T_0$ when the events are restricted to $[0,\tau^*]$. This property is satisfied in the common models, in particular it holds true in the Cox PH model as soon as $Var(Z)$ has full rank. 
		The second aspect is the additional complexity induced by the mixture with a cure fraction. If the cure fraction is unknown and one decides to restrict to events on  $[0,\tau^*]$, the parameter identifiability is likely lost because  the events $\{T_0\in (\tau^*, \tau_0]\}$ and $\{T=\infty\}$ are not distinguishable. The usual remedy for this is to impose \eqref{eqn:jump_cond}, so that  $\tau^*$ could be taken equal to $\tau_0$.

		Presmoothing allows  to avoid  condition \eqref{eqn:jump_cond}  and thus to fill the gap between the technical conditions and the reality of the data. This is possible because, when using the presmoothing, the conditional probability of the event $\{T=\infty\}$ is identified by other means. We are thus able to prove the consistency of $\hat \beta$ and $\hat\Lambda$ without imposing   \eqref{eqn:jump_cond}. Deriving the asymptotic normality without \eqref{eqn:jump_cond} remains an open problem which will be addressed elsewhere.

		\subsection{Consistency}
		
		We first prove  consistency of $\hat\gamma_n$ and then use that result to obtain consistency of $\hat\Lambda_n$ and $\hat\beta_n$. In order to proceed with our results, the following conditions will be used.
		\begin{itemize}
			\item[(AC1)] $\sup_{x\in\X}\left|\hat\pi(x)-\pi_0(x)\right|\to 0$ almost surely. 
			\item[(AC2)] The parameters $\beta_0$ and $\gamma_0$ lie in the interior of compact sets $B\subset\R^q$, $G\subset\R^p$.
			\item[(AC3)]There exist some constants $a>0$, $c>0$ such that
			\[
			\left|\phi(\gamma_1,x)-\phi(\gamma_2,x)\right|\leq c\Vert\gamma_1-\gamma_2\Vert^a,\qquad\forall\gamma_1,\gamma_2\in G,\,\forall x\in\X,
			\]
			where $\Vert\cdot\Vert$ denotes the Euclidean distance.
			\item[(AC4)] $\inf_{\gamma\in G}\inf_{x\in\X}\phi(\gamma,x)>0$ and $\inf_{\gamma\in G}\inf_{x\in\X}\phi(\gamma,x)<1$.
			\item[(AC5)] The covariates are bounded: $\p\left(\Vert Z\Vert<{m} \text{ and } \Vert X\Vert<m\right)=1$ for some $m>0$.
			\item[(AC6)] {The baseline hazard function $\lambda_0(t)=\Lambda'_0(t)$} is strictly positive and continuous on   $[0,{\tau_0)}$.  
			\item[(AC7)] With probability one,   the conditional distribution function of the censoring times $F_C(t|x,z)$  is continuous in $t$ on $[0,{\tau_0}]$ and   
			there exists a constant $C>0$ such that
			$$
			\inf_{0\leq t_1<t_2\leq\tau_0} \inf_{x,z} \frac{F_C(t_2|x,z)- F_C(t_1|x,z)}{t_2-t_1} >C.
			$$
		\end{itemize}
		{(AC1) is a minimal assumption given that we want to match $\phi(\gamma,\cdot)$ to $\hat \pi (\cdot)$. 
			(AC2) to (AC4) are mild conditions satisfied by usual binary regression models, like for instance the logistic one, and (AC5) is always satisfied in practice for large $m$. }
		\begin{theorem}
			\label{theo:consistency}
			{Let the estimator $\hat\gamma_n$ be defined as in (\ref{def:hat_L_gamma}).}
			Assume that 
			(AC1)-(AC4) hold. Then, $\hat\gamma_n\to\gamma_0$ almost surely.
		\end{theorem}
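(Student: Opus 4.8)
The plan is to treat $\hat\gamma_n$ as an M-estimator and invoke the standard argmax (Wald-type) consistency theorem. I would introduce the population criterion
\[
M(\gamma)=\E\Big[(1-\pi_0(X))\log\phi(\gamma,X)+\pi_0(X)\log(1-\phi(\gamma,X))\Big],
\]
and observe that, since $1-\pi_0(X)=\phi(\gamma_0,X)$, this is exactly the expected Bernoulli log-likelihood with true success probability $\phi(\gamma_0,X)$ evaluated under the candidate probability $\phi(\gamma,X)$. The argument then rests on two ingredients: (i) $\gamma_0$ is the unique, well-separated maximizer of $M$ over $G$, and (ii) the normalized empirical criterion $n^{-1}\log\hat L_{n,1}(\gamma)$ converges to $M(\gamma)$ uniformly in $\gamma\in G$, almost surely.

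For the identification in (i), I would proceed pointwise: for each fixed $x$, the information inequality (equivalently, nonnegativity of the Kullback--Leibler divergence, a consequence of the strict concavity of the logarithm via Jensen's inequality) shows that $p\mapsto \phi(\gamma_0,x)\log p+(1-\phi(\gamma_0,x))\log(1-p)$ is maximized uniquely at $p=\phi(\gamma_0,x)$. Taking expectations over $X$ gives $M(\gamma)\le M(\gamma_0)$, with equality if and only if $\phi(\gamma,X)=\phi(\gamma_0,X)$ almost surely; the identifiability assumption \eqref{eqn:CI3} then forces $\gamma=\gamma_0$, so $\gamma_0$ is the unique maximizer. Boundedness of $\log\phi(\gamma,x)$ and $\log(1-\phi(\gamma,x))$ from (AC4) guarantees $M$ is finite, while (AC3)--(AC4) make the integrand continuous in $\gamma$; together with compactness of $G$ (AC2) and dominated convergence this makes $M$ continuous, so its unique maximizer is automatically well separated.

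For the uniform convergence in (ii), I would split the error into two terms. The first replaces $\hat\pi$ by $\pi_0$, contributing
\[
\frac{1}{n}\sum_{i=1}^n\big(\hat\pi(X_i)-\pi_0(X_i)\big)\big[\log(1-\phi(\gamma,X_i))-\log\phi(\gamma,X_i)\big],
\]
which is bounded, uniformly in $\gamma$, by $\sup_{x}|\hat\pi(x)-\pi_0(x)|$ times the bound on $|\log(1-\phi)-\log\phi|$ furnished by (AC4), and hence tends to $0$ almost surely by (AC1). The second term is handled by a uniform strong law of large numbers applied to the average of $g(\gamma,X_i)=(1-\pi_0(X_i))\log\phi(\gamma,X_i)+\pi_0(X_i)\log(1-\phi(\gamma,X_i))$; this is justified because $g(\gamma,x)$ is continuous in $\gamma$ (AC3, AC4), bounded by a constant (AC4, so domination is automatic regardless of the law of $X$), and $G$ is compact (AC2).

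Combining (i) and (ii), the argmax theorem yields $\hat\gamma_n\to\gamma_0$ almost surely. I expect the main obstacle to be the identification step: everything reduces to showing that matching $\phi(\gamma,\cdot)$ to the population cure probabilities pins down $\gamma$, and this is precisely where the pointwise information inequality must be combined with \eqref{eqn:CI3}. By contrast, the effect of the preliminary nonparametric estimator enters cleanly through (AC1) and creates no additional difficulty, which is exactly the virtue of the presmoothing construction.
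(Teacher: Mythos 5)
Your proposal is correct, and its mathematical core coincides with the paper's own proof: both arguments decompose the criterion by first swapping $\hat\pi$ for $\pi_0$ (controlled by (AC1) together with the uniform bounds on $\log\phi$ and $\log(1-\phi)$ supplied by (AC4)), then pass from the empirical average to the population criterion, and both settle identification through the same pointwise Kullback--Leibler (information) inequality combined with \eqref{eqn:CI3}. Where you differ is the machinery for the limit passage. You invoke the off-the-shelf uniform strong law of large numbers over the compact set $G$ (valid here since the summand is continuous in $\gamma$ by (AC3)--(AC4) with a constant envelope) together with the argmax/Wald theorem, noting that uniqueness plus continuity plus compactness makes the maximizer automatically well separated. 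The paper instead argues realization by realization: it fixes $\omega$ in a probability-one event, extracts a convergent subsequence $\hat\gamma_{m_k}\to\gamma^*$ using compactness (AC2), and then deals with the fact that $\gamma^*$ depends on $\omega$ by approximating it with elements of a countable dense subset of $G$ and intersecting the countably many probability-one SLLN events, before applying the same information inequality to conclude $\gamma^*=\gamma_0$. The paper itself remarks that, for this simple parametric model, the convergence ``follows easily from the uniform law of large numbers'' and that the longer subsequence argument is chosen deliberately because the identical technique is recycled in the proof of Theorem~\ref{theo:consistency2}, where the nuisance parameter $(\beta,\Lambda)$ is infinite dimensional, the subsequential limit again depends on $\omega$, and no ready-made uniform LLN over the parameter space is available. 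So your route buys brevity and transparency for Theorem~\ref{theo:consistency} in isolation, while the paper's route buys a self-contained template that scales to the semiparametric consistency proof.
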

		\begin{theorem}
			\label{theo:consistency2}
			Let the estimators $\hat\beta_n$ and $\hat\Lambda_n$ be defined as in Section~\ref{sec:Cox}.
			Assume that
			(AC1)-(AC7) hold.  Then, with probability one,
			$\Vert\hat\beta_n-\beta_0\Vert\to 0$,
			where $\Vert\cdot\Vert$ denotes the Euclidean distance. Moreover, 
			for any $\tau^*\leq  \tau_0$
			satisfying  \eqref{eqn:no_jump_cond}, with probability one,
			\[
			\sup_{t\in[0,{\tau^*}]}\left|\hat\Lambda_n(t)-\Lambda_0(t)\right|\to 0.
			\]
		\end{theorem}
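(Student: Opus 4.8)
The plan is to treat this as a consistency proof for a profiled nonparametric maximum likelihood estimator in the spirit of \cite{Lu2008}, but carried out on the restricted interval $[0,\tau^*]$ so as to replace the strong condition \eqref{eqn:jump_cond} by the weaker \eqref{eqn:no_jump_cond}. First I would use Theorem~\ref{theo:consistency} together with (AC3)--(AC5) to reduce to the case of a known incidence: since $\hat\gamma_n\to\gamma_0$ almost surely and $\phi$ is continuous and bounded away from $0$ and $1$, replacing $\hat\gamma_n$ by $\gamma_0$ in \eqref{def:hat_l_cox}, \eqref{eqn:hat_lambda_n} and \eqref{def:g_j} produces errors that are uniformly negligible, so the analysis can proceed with $\gamma_0$ fixed. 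The backbone of the argument is then the classical four-step scheme: (i) a uniform bound on $\hat\Lambda_n$ over $[0,\tau^*]$; (ii) a Helly selection to extract subsequential limits $\hat\beta_{n_k}\to\beta^*$ and $\hat\Lambda_{n_k}\to\Lambda^*$; (iii) identification of the limit as $(\beta_0,\Lambda_0)$ via a Kullback--Leibler argument; and (iv) upgrading subsequential convergence to full and uniform convergence.

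The crucial and genuinely new step is (i). Starting from the explicit characterization \eqref{eqn:hat_lambda_n}, I would bound the weights $\Delta_j+(1-\Delta_j)g_j$ from below by $\Delta_j$ (recall $g_j\ge 0$ in \eqref{def:g_j}), so that for every $Y_i\le\tau^*$ the denominator is at least $e^{-m\Vert\beta\Vert}\,\tfrac1n\sum_j \1_{\{Y_j\ge\tau^*,\,\Delta_j=1\}}$, using the boundedness of $Z$ from (AC5). By the strong law this empirical mean converges to $\E[\1_{\{Y\ge\tau^*,\Delta=1\}}]\ge\inf_z\p(Y\ge\tau^*,\Delta=1\mid Z=z)$, which is strictly positive precisely by \eqref{eqn:no_jump_cond}; hence the denominator is bounded below by a positive constant uniformly in $Y_i\le\tau^*$, and $\hat\Lambda_n(\tau^*)\le c^{-1}\,\tfrac1n\sum_i\Delta_i\1_{\{Y_i\le\tau^*\}}$ stays bounded almost surely. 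Here (AC7) ensures the censoring distribution places mass throughout $[0,\tau^*]$ so the at-risk sets do not degenerate. This is exactly the point where restricting to $\tau^*<\tau_0$ with \eqref{eqn:no_jump_cond} circumvents the blow-up of $\Lambda_0$ near $\tau_0$ (where $S_u(\tau_0\mid z)=0$ forces $\Lambda_0(\tau_0)=\infty$) that would otherwise demand \eqref{eqn:jump_cond}.

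For steps (ii)--(iv), by (AC2) and the bound from (i) the monotone functions $\hat\Lambda_n$ are uniformly bounded on $[0,\tau^*]$ while $\hat\beta_n$ ranges in the compact set $B$, so Helly's selection theorem yields a subsequence along which $\hat\beta_{n_k}\to\beta^*\in B$ and $\hat\Lambda_{n_k}(t)\to\Lambda^*(t)$ at continuity points of the nondecreasing limit $\Lambda^*$. To identify $(\beta^*,\Lambda^*)$, I would establish that the population criterion $l(\beta,\Lambda,\gamma_0)=\E[\ell(Y,\Delta,X,Z;\beta,\Lambda,\gamma_0)]$ is maximized uniquely at $(\beta_0,\Lambda_0)$ on $[0,\tau^*]$, via a Jensen/Kullback--Leibler inequality whose equality case, together with the full-rank assumption on $\mathrm{Var}(Z)$ and positivity of $\lambda_0$ in (AC6), pins down both parameters. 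Comparing $\hat l_n(\hat\beta_n,\hat\Lambda_n,\hat\gamma_n)$ with $\hat l_n(\beta_0,\Lambda_0^{(n)},\hat\gamma_n)$ for a step-function approximation $\Lambda_0^{(n)}$ of $\Lambda_0$ on the event times, and passing to the limit with a uniform law of large numbers, gives $l(\beta^*,\Lambda^*,\gamma_0)\ge l(\beta_0,\Lambda_0,\gamma_0)$, which forces equality and hence $\beta^*=\beta_0$, $\Lambda^*=\Lambda_0$ on $[0,\tau^*]$. Since every subsequence produces the same limit, $\hat\beta_n\to\beta_0$ almost surely and $\hat\Lambda_n(t)\to\Lambda_0(t)$ pointwise; continuity of $\Lambda_0$ (AC6) then upgrades this to uniform convergence on $[0,\tau^*]$.

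The main obstacle I anticipate is the identification step (iii), specifically showing that the subsequential limit $\Lambda^*$ is absolutely continuous so that the comparison involving the density term $\Delta_i\log\Delta\Lambda(Y_i)$ in \eqref{def:hat_l_cox} is legitimate. The standard remedy I would adopt is to extract from the inequality $\hat l_n(\hat\beta_n,\hat\Lambda_n,\cdot)\ge \hat l_n(\beta_0,\Lambda_0^{(n)},\cdot)$ that $\Lambda^*$ is dominated by $\Lambda_0$ in a Radon--Nikodym sense, forcing absolute continuity, after which the equality case of Jensen's inequality identifies the derivative $\lambda^*$ with $\lambda_0$ almost everywhere on $[0,\tau^*]$. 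A secondary difficulty is that \eqref{def:hat_l_cox} still involves $\hat\Lambda_n$ at points beyond $\tau^*$, where it is uncontrolled; I would handle this by confining the perturbations used in the comparison to $[0,\tau^*]$, so that the uncontrolled tail contributes identically on both sides and cancels, which is also why the consistency of $\hat\Lambda_n$ can only be asserted up to $\tau^*$.
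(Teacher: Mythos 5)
Your skeleton (uniform boundedness of $\hat\Lambda_n(\tau^*)$, Helly selection, Kullback--Leibler identification, uniqueness of subsequential limits) follows the same broad lines as the paper, and your step (i) is essentially the paper's boundedness lemma: bound the weights below by $\Delta_j$, use (AC2) and (AC5) to control $e^{\hat\beta_n'Z_j}$, and invoke the strong law together with \eqref{eqn:no_jump_cond}. The identification step is also in the right spirit, although the paper is careful to phrase it for the stopped data $(Y^*,\Delta^*)$ obtained by truncating at $\tau^*$, and for the stopped hazard $\Lambda_{|\tau^*}$ carrying a unit mass at $\tau^*$, using the inversion formulae for mixture cure models; once $\gamma_0$ is pinned down by the presmoothing step, this recovers $(\beta_0,\Lambda_{0|\tau^*})$ as the unique maximizer of the restricted population criterion.

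The genuine gap is your treatment of what you call the ``secondary difficulty'': the observations beyond $\tau^*$. Your claim that confining perturbations to $[0,\tau^*]$ makes the uncontrolled tail ``contribute identically on both sides and cancel'' is false. The estimators maximize the full criterion \eqref{def:hat_l_cox} over $[0,\tau_0]$, and any comparison inequality pits $(\hat\beta_n,\hat\Lambda_n)$ against a competitor such as $(\beta_0,\tilde\Lambda_{0,n})$: on $\{Y_i\geq\tau^*\}$ the two sides involve different slopes ($\hat\beta_n$ versus $\beta_0$), different cumulative hazard values, and different jump sizes inside the $\Delta_i\log\Delta\Lambda(Y_i)$ terms, so nothing cancels. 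Even if you forced the two hazards to share increments on $(\tau^*,\tau_0]$, the values $\Lambda(Y_i)$ for $Y_i>\tau^*$ still depend on the mass placed on $[0,\tau^*]$, which differs between the two sides. This is exactly where the paper spends its hardest lemma: it decomposes the full criterion minus the restricted one into four tail terms and shows (using, among other things, Wellner's theorem on ratios of empirical to true distribution functions to control logarithms of jumps) that the difference of the two tail contributions is asymptotically bounded by a multiple of $\p(T_0\geq\tau^*)\log\{1/\p(T_0\geq\tau^*)\}+\p(C\in[\tau^*,\tau_0])$ --- small when $\tau^*$ is near $\tau_0$, but not zero. A second missing ingredient is the ordering of quantifiers: this tail bound must be beaten by the Kullback--Leibler separation constant, yet that constant a priori depends on $\tau^*$; the paper resolves the circularity with a separation lemma (via Pinsker's inequality) that is uniform over $\tau^*\in[\bar\tau+\delta,\tau_0)$, after which $\tau^*$ can be chosen so the tail contribution is less than half the separation. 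Without these two pieces the comparison inequality cannot be contradicted, and your argument does not close.
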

		{When condition \eqref{eqn:jump_cond} is satisfied and $\tau^*=\tau_0$ in the previous Theorem, we are referring to the continuous version of $\Lambda_0$, i.e.  $\Lambda_0(\tau_0)=\lim_{t\uparrow\tau_0}\Lambda_0(t)$. Note that, by definition,  we also have $\hat\Lambda_n(\tau_0)=\lim_{t\uparrow\tau_0}\hat\Lambda_n(t)$. } 
		\subsection{Asymptotic normality}
		We first derive asymptotic normality of $\hat\gamma_n$ following the approach in \cite{chen2003}. Theorem 2 in that paper provides sufficient conditions for the $\sqrt{n}$ normality of parametric estimators obtained by minimizing an objective function that depends on a preliminary infinite dimensional estimator $\hat\pi$. In our case, since $\hat\gamma_n$ solves
		\[
		\frac{1}{n}\nabla_\gamma\log \hat{L}_{n,1}(\gamma)=0,
		\] 
		where $\nabla_\gamma$ denotes the vector-valued partial differentiation operator with respect to the components of $\gamma$, it follows that $\hat\gamma_n$ minimizes the function 
		\[
		\left\Vert\frac{1}{n}\nabla_\gamma\log \hat{L}_{n,1}(\gamma)\right\Vert=\left\Vert\frac{1}{n}\sum_{i=1}^n m(X_i;\gamma,\hat\pi)\right\Vert,
		\]
		where
		\begin{equation}
			\label{def:m}
			m(x;\gamma,\pi)=\left[\frac{1-\pi(x)}{\phi(\gamma,x)}-\frac{\pi(x)}{1-\phi(\gamma,x)}\right]\nabla_\gamma\phi(\gamma,x).
		\end{equation}
		Hence, we only need to check that the conditions of Theorem 2 in \cite{chen2003} are satisfied.           
		To do that, we need the following assumptions which are stronger than the previous (AC1)-(AC4).
		\begin{itemize}
			\item[(AN1)] The parameter $\gamma_0$  lies in the interior of a compact set $G\subset\R^p$ and, for each $x\in\X$, the function $\gamma\mapsto\phi(\gamma,x)$ is twice continuously differentiable with uniformly bounded derivatives in $G\times\X$ and satisfies (AC4).
			\item[(AN2)] $\pi_0(\cdot)$ belongs to a class of functions $\Pi$ such that 
			\begin{equation*}
				\label{eqn:entropy}
				\int_0^\infty \sqrt{\log N(\epsilon,\Pi,\Vert\cdot\Vert_{\infty})}\,\dd\epsilon<\infty,
			\end{equation*}
			where $N(\epsilon,\Pi,\Vert\cdot\Vert_{\infty})$ denotes the $\epsilon$-covering number of the space $\Pi$ with respect to $\Vert\pi\Vert_{{\infty}}=\sup_{x\in\X}|\pi(x)|$.
			\item[(AN3)] The matrix $\E\left[\nabla_\gamma\phi(\gamma_0,X)\nabla_\gamma\phi(\gamma_0,X)'\right]$ is positive definite.
			\item[(AN4)] The estimator $\hat\pi(\cdot)$ satisfies the following properties:
			\begin{itemize}
				\item[(i)] $\p\left(\hat\pi(\cdot)\in\Pi\right)\to 1$.
				\item[(ii)] $\left\Vert\hat\pi(x)-\pi_0(x)\right\Vert_{{\infty}}=o_P(n^{-1/4})$. 
				\item[(iii)] There exists a function $\Psi$ such that 
				\[
				\begin{split}
					&\E^*\left[\left(\hat\pi(X)-\pi_0(X)\right)\left(\frac{1}{\phi(\gamma_0,X)}+\frac{1}{1-\phi(\gamma_0,X)}\right)\nabla_\gamma\phi(\gamma_0,X)\right]\qquad\qquad\\
					&\qquad\qquad\qquad\qquad\qquad\qquad\qquad\qquad\qquad=\frac{1}{n}\sum_{i=1}^n\Psi(Y_i,\Delta_i,X_i)+R_n,
				\end{split}
				\]
				where $\E^*$ denotes the conditional expectation given the sample, taken with respect to the generic variable $X$. Moreover, $\E[\Psi(Y,\Delta,X)]=0$ and $\Vert R_n\Vert=o_P(n^{-1/2})$.
			\end{itemize}
		\end{itemize}
		\begin{theorem}
			\label{theo:asymptotic_normality}
			{Let the estimator $\hat\gamma_n$ be defined as in (\ref{def:hat_L_gamma}).}
			Assume that
			(AN1)-(AN4) hold. Then, \[
			n^{1/2}\left(\hat\gamma_n-\gamma_0\right)\xrightarrow{d}N(0,\Sigma_\gamma)
			\]
			with covariance matrix $\Sigma_\gamma$ defined { in~(A28).}
		\end{theorem}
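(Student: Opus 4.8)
The plan is to verify that the moment function $m$ in~\eqref{def:m} satisfies the high-level conditions of Theorem~2 in \cite{chen2003} and then to read off the limit law from the master expansion of that theorem. Recall that $\hat\gamma_n$ solves $\frac1n\sum_{i}m(X_i;\gamma,\hat\pi)=0$, so it is a semiparametric $Z$-estimator with finite-dimensional parameter $\gamma$ and infinite-dimensional nuisance $\pi$ entering through the preliminary $\hat\pi$. In the notation of \cite{chen2003} the conditions to be checked are: (i) consistency of $\hat\gamma_n$ with $\gamma_0$ interior; (ii) differentiability of the population moment $M(\gamma,\pi)=\E[m(X;\gamma,\pi)]$ in $\gamma$ at $(\gamma_0,\pi_0)$, with nonsingular derivative $\Gamma_1$; (iii) a pathwise derivative $\Gamma_2$ of $M(\gamma_0,\cdot)$ in $\pi$ with a quadratic remainder bound; (iv) stochastic equicontinuity of the empirical process $(\gamma,\pi)\mapsto \frac1{\sqrt n}\sum_i[m(X_i;\gamma,\pi)-M(\gamma,\pi)]$; and (v) the asymptotic linear representation of $\sqrt n\,\Gamma_2[\hat\pi-\pi_0]$ together with the rate $\|\hat\pi-\pi_0\|_\infty=o_P(n^{-1/4})$. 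Consistency (i) follows from Theorem~\ref{theo:consistency}, since (AN1)--(AN4) are stronger than (AC1)--(AC4) and $\gamma_0$ is interior by (AN1).

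For (ii)--(iii) I would exploit the algebraic structure of $m$. A key observation is that, because $\pi_0(x)=1-\phi(\gamma_0,x)$, the bracket in~\eqref{def:m} vanishes at $(\gamma_0,\pi_0)$, so $m(x;\gamma_0,\pi_0)=0$ for every $x$; consequently the usual leading empirical term is identically zero and the entire asymptotic fluctuation of $\hat\gamma_n$ comes from estimating $\pi_0$. Differentiating $M(\gamma,\pi_0)$ in $\gamma$ at $\gamma_0$, and using that the bracket vanishes there, I obtain the symmetric matrix
\[
\Gamma_1=-\,\E\!\left[\frac{1}{\phi(\gamma_0,X)\,\bigl(1-\phi(\gamma_0,X)\bigr)}\,\nabla_\gamma\phi(\gamma_0,X)\,\nabla_\gamma\phi(\gamma_0,X)'\right],
\]
which is negative definite, hence invertible, by (AN3) together with the fact that $\phi(\gamma_0,\cdot)$ is bounded away from $0$ and $1$ (AN1/AC4). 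Because $m$ is affine in $\pi$, the pathwise derivative is exact, with zero quadratic-in-$\pi$ remainder, and equals $\Gamma_2[\pi-\pi_0]=-\,\E[(1/\phi(\gamma_0,X)+1/(1-\phi(\gamma_0,X)))\nabla_\gamma\phi(\gamma_0,X)(\pi(X)-\pi_0(X))]$, which is precisely the linear functional appearing in (AN4)(iii); thus condition (iii) holds trivially.

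The main obstacle is the stochastic equicontinuity condition (iv), which I would establish by showing that the class $\{m(\cdot;\gamma,\pi):\gamma\in G,\ \pi\in\Pi\}$ is Donsker. The dependence on $\gamma$ is through $\phi(\gamma,\cdot)$ and $\nabla_\gamma\phi(\gamma,\cdot)$, which are smooth with uniformly bounded derivatives by (AN1), so the $\gamma$-indexed part forms a finite-dimensional and hence Donsker class; the $\pi$-indexed part is controlled by the entropy integral (AN2), exactly the requirement needed for a Donsker conclusion. Since $\nabla_\gamma\phi$ is uniformly bounded and $\phi$ is bounded away from $0$ and $1$ by (AN1), the map $m$ is uniformly bounded and admits a constant envelope, so the products and sums defining $m$ preserve the Donsker property. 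Condition (AN4)(i) guarantees that $\hat\pi$ falls in $\Pi$ with probability tending to one, so the empirical process may legitimately be evaluated at $(\hat\gamma_n,\hat\pi)$; combined with the consistency of $(\hat\gamma_n,\hat\pi)$ for $(\gamma_0,\pi_0)$ and the rate $o_P(n^{-1/4})$ in (AN4)(ii), which forces the bilinear $\gamma$--$\pi$ cross term in the expansion of $M$ to be $o_P(n^{-1/2})$, this yields the required equicontinuity remainder.

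Finally I would assemble the master expansion. Expanding the estimating equation around $(\gamma_0,\pi_0)$, using $m(\cdot;\gamma_0,\pi_0)\equiv0$ to annihilate the leading empirical average, invoking (iv) to replace the empirical nuisance fluctuation by its population pathwise derivative, and substituting the representation (AN4)(iii) for $\Gamma_2[\hat\pi-\pi_0]$, I obtain
\[
\Gamma_1\,(\hat\gamma_n-\gamma_0)=\frac1n\sum_{i=1}^n\Psi(Y_i,\Delta_i,X_i)+o_P(n^{-1/2}),
\]
so that $\sqrt n(\hat\gamma_n-\gamma_0)=\Gamma_1^{-1}\,n^{-1/2}\sum_i\Psi(Y_i,\Delta_i,X_i)+o_P(1)$. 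Since $\E[\Psi]=0$ and $\Psi$ has finite variance, the classical central limit theorem gives $n^{1/2}(\hat\gamma_n-\gamma_0)\xrightarrow{d}N(0,\Sigma_\gamma)$ with $\Sigma_\gamma=\Gamma_1^{-1}\,\mathrm{Var}\bigl(\Psi(Y,\Delta,X)\bigr)\,\Gamma_1^{-1}$ (using the symmetry of $\Gamma_1$), which is the expression in~(A28).
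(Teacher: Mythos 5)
Your proposal is correct and follows essentially the same route as the paper: both verify the conditions of Theorem 2 in \cite{chen2003}, exploit that $m(x;\gamma_0,\pi_0)\equiv 0$ and that $m$ is affine in $\pi$ (so the pathwise derivative $\Gamma_2$ is exact), compute $\Gamma_1=-\E\bigl[\{\phi(\gamma_0,X)(1-\phi(\gamma_0,X))\}^{-1}\nabla_\gamma\phi(\gamma_0,X)\nabla_\gamma\phi(\gamma_0,X)'\bigr]$, and conclude via (AN4)(iii) and the CLT that $\Sigma_\gamma=\Gamma_1^{-1}\mathrm{Var}(\Psi)\Gamma_1^{-1}$. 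Your Donsker-class argument for stochastic equicontinuity is the same device the paper implements by combining the Lipschitz bound on $m$ with the entropy condition (AN2) through Theorem 3 of \cite{chen2003}.
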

		{For deriving the asymptotic distribution of $\hat\beta_n$ and $\hat\Lambda_n$ we assume, for simplicity,  that condition \eqref{eqn:jump_cond} is satisfied. In such case, in Theorem~\ref{theo:consistency2} we can take $\tau^*=\tau_0$ and obtain uniform strong consistency of $\hat\Lambda_n$ on the whole support $[0,\tau_0]$. We believe that, at the price  of additional technicalities, asymptotic distributional theory can be obtained also without imposing \eqref{eqn:jump_cond}, as we did for the consistency in Theorem~\ref{theo:consistency2}. This conjecture is supported by simulations but we leave the problem   to be addressed by future research. }
		\begin{theorem}
			\label{theo:asymptotic_normality2}
			Let the estimators $\hat\beta_n$ and $\hat\Lambda_n$ be defined as in Section~\ref{sec:Cox}. 	Assume that {condition~\eqref{eqn:jump_cond},}
			(AN1)-(AN4) and {(AC2)}, (AC5)-(AC7) hold. Then, 
			\[
			\left	\langle\sqrt{n}\left(\hat\Lambda_n-\Lambda_0\right),\sqrt{n}\left(\hat\beta_n-\beta_0\right)\right\rangle\to G 
			\]
			weakly in $l^\infty(\mathcal{H}_\mathfrak{m})$, where {$\mathcal{H}_\mathfrak{m}$ is a functional space defined in Section \ref{sec:appendix_Cox}}, { $l^\infty(\mathcal{H}_\mathfrak{m})$ denotes the  space of bounded real-valued functions on $\mathcal{H}_\mathfrak{m}$,}  $G$ is a tight Gaussian process in $l^\infty(\mathcal{H}_\mathfrak{m})$ with mean zero and covariance process given {in~(A39)} { and  for $h=(h_1,h_2)\in \mathcal{H}_\mathfrak{m}$
				\[
				\langle \Lambda,\beta\rangle(h)=\int_0^{\tau_0}h_1(t)\,\dd\Lambda(t)+h_2'\beta.
				\]	
			} 	
		\end{theorem}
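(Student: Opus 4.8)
The plan is to realize $(\hat\beta_n,\hat\Lambda_n)$ as the zero of an infinite-dimensional estimating equation and to invoke a Z-estimator master theorem in the spirit of \cite{Lu2008}, while keeping track of the additional randomness injected by the plugged-in $\hat\gamma_n$. For a direction $h=(h_1,h_2)\in\mathcal{H}_\mathfrak{m}$, differentiating the integrand $\ell$ of (\ref{def:hat_l_cox_g}) along a path that perturbs $\Lambda$ in the direction $h_1$ and $\beta$ in the direction $h_2$ yields a score $\dot\ell(\,\cdot\,;\beta,\Lambda,\gamma)[h]$, and I set $\Psi_n(\beta,\Lambda,\gamma)(h)=n^{-1}\sum_{i=1}^n\dot\ell(Y_i,\Delta_i,X_i,Z_i;\beta,\Lambda,\gamma)[h]$ with population version $\Psi(\beta,\Lambda,\gamma)(h)=\E[\dot\ell(\,\cdot\,;\beta,\Lambda,\gamma)[h]]$. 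By construction $(\hat\beta_n,\hat\Lambda_n)$ solve $\Psi_n(\hat\beta_n,\hat\Lambda_n,\hat\gamma_n)(h)=0$ for all $h$, while the true parameters satisfy $\Psi(\beta_0,\Lambda_0,\gamma_0)(h)=0$.

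First I would verify the hypotheses of the master theorem (e.g.\ Theorem~3.3.1 of van~der~Vaart and Wellner). Consistency of $(\hat\beta_n,\hat\Lambda_n)$ is already provided by Theorem~\ref{theo:consistency2}, which under \eqref{eqn:jump_cond} holds uniformly on all of $[0,\tau_0]$. The Donsker property of the class of score functions indexed by $h\in\mathcal{H}_\mathfrak{m}$ and by $(\beta,\Lambda,\gamma)$ in a neighbourhood of the truth follows from the boundedness of the covariates (AC5), the smoothness of $\phi$ (AN1), and the fact that $\Lambda$ ranges over nondecreasing functions of uniformly bounded variation on the compact $[0,\tau_0]$; together with (AC6)-(AC7) this keeps all denominators in $g_j$ of (\ref{def:g_j}) and in $\Lambda_{n,\beta,\gamma}$ bounded away from zero. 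I would then show that $(\beta,\Lambda)\mapsto\Psi(\beta,\Lambda,\gamma_0)$ is Fréchet differentiable at $(\beta_0,\Lambda_0)$ with derivative $\dot\Psi_0$ equal to the information operator of the model, and that $\dot\Psi_0$ is continuously invertible on $\mathcal{H}_\mathfrak{m}$. This last point is the key analytic step: it amounts to solving a Fredholm-type integral equation associated with the cure-model information operator, and it is exactly here that condition \eqref{eqn:jump_cond} (bounding the baseline away from degeneracy at $\tau_0$) together with the full rank of $Var(Z)$ guarantees invertibility.

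The genuinely new ingredient, relative to the maximum-likelihood analysis of \cite{Lu2008}, is the plug-in of $\hat\gamma_n$ in place of $\gamma_0$. I would therefore expand jointly in all three arguments: writing
$$
0=\Psi_n(\hat\beta_n,\hat\Lambda_n,\hat\gamma_n)=(\Psi_n-\Psi)(\beta_0,\Lambda_0,\gamma_0)+\dot\Psi_0\,\langle\hat\Lambda_n-\Lambda_0,\hat\beta_n-\beta_0\rangle+\dot\Psi_\gamma\,(\hat\gamma_n-\gamma_0)+o_P(n^{-1/2}),
$$
where $\dot\Psi_\gamma$ is the finite-dimensional partial derivative of $\Psi$ with respect to $\gamma$, obtained by differentiating $g_j$ at $\gamma_0$. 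Solving for the parameter of interest gives the linear representation
$$
\sqrt{n}\,\langle\hat\Lambda_n-\Lambda_0,\hat\beta_n-\beta_0\rangle=-\dot\Psi_0^{-1}\Big[\sqrt{n}\,(\Psi_n-\Psi)(\beta_0,\Lambda_0,\gamma_0)+\dot\Psi_\gamma\,\sqrt{n}\,(\hat\gamma_n-\gamma_0)\Big]+o_P(1).
$$
By Theorem~\ref{theo:asymptotic_normality} and assumption (AN4)(iii), $\sqrt{n}(\hat\gamma_n-\gamma_0)$ admits an i.i.d.\ influence-function representation, and $\sqrt{n}(\Psi_n-\Psi)(\beta_0,\Lambda_0,\gamma_0)$ is itself an empirical-process term; hence the bracket is a sum of i.i.d.\ mean-zero processes indexed by $h$, converging weakly in $l^\infty(\mathcal{H}_\mathfrak{m})$ to a tight Gaussian process by the Donsker property. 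Applying the bounded linear operator $-\dot\Psi_0^{-1}$ (continuous mapping) then yields weak convergence of $\sqrt{n}\langle\hat\Lambda_n-\Lambda_0,\hat\beta_n-\beta_0\rangle$ to the tight Gaussian process $G$, whose covariance (A39) is read off from the covariance of the combined influence functions, the $\hat\gamma_n$-term contributing the cross-covariance with $\Psi$ of (AN4)(iii).

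I expect the main obstacle to be the continuous invertibility of the information operator $\dot\Psi_0$: establishing it requires a careful study of the associated integral equation on $[0,\tau_0]$ under the mixture structure, and it is the step that forces the use of \eqref{eqn:jump_cond}, explaining why asymptotic normality is proved only under that condition even though consistency is not. A secondary but delicate point is verifying that the $\hat\gamma_n$-correction enters at the $\sqrt{n}$-scale with a genuine i.i.d.\ expansion and that the remainder in the joint Taylor expansion is $o_P(n^{-1/2})$ uniformly over $h\in\mathcal{H}_\mathfrak{m}$, which is what makes the two sources of randomness combine into a single tight Gaussian limit.
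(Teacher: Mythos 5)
Your proposal is correct and follows essentially the same route as the paper: the paper also casts $(\hat\beta_n,\hat\Lambda_n)$ as a Z-estimator and verifies the conditions of Theorem 3.3.1 of van der Vaart and Wellner (via Theorem 4 of Lu, 2008), with the Fr\'echet differentiability and continuous invertibility of the information operator $\sigma$ taken over from Lu's analysis (simplified here because the $\gamma$-components of the direction $h$ are set to zero). The plug-in of $\hat\gamma_n$ is handled exactly as you describe: the score with $\hat\gamma_n$ is split off from the score with $\gamma_0$, Taylor-expanded in $\phi(\hat\gamma_n,\cdot)-\phi(\gamma_0,\cdot)$, and combined with the i.i.d. influence-function representation of $\sqrt{n}(\hat\gamma_n-\gamma_0)$ from (AN4)(iii) and Theorem~\ref{theo:asymptotic_normality}, so that the sum is an empirical process over a Donsker class whose tight Gaussian limit, mapped through $-\sigma^{-1}$, yields $G$ and the covariance in~(A39).
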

	{The asymptotic variances of each component of $\hat\beta_n$ and of $\hat\Lambda_n(t)$ can be obtained from the covariance process in~(A39) by taking $h_1(t)=0$ for all $t$ and $h_2=e_i$ (the $i$th unit vector) or $h_2=0$ and $h_1(s)=\1_{\{s\leq t\}}$. We leave the details about these covariance matrices in the Supplementary Material because they have quite complicated expressions that require definitions of several other quantities. Even though it could be possible in principle to estimate the asymptotic standard errors through plug-in estimators and numerical inverse, we think that this is not feasible in practice and we  do not intend to exploit it further. Instead, we use a bootstrap procedure for estimation of the standard errors in the application discussed in Section~\ref{sec:application}.  However,  the maximum likelihood estimators are not more favorable in this regard. For example, in the logistic/Cox model, the proposed estimators of the asymptotic variance in \cite{Lu2008} also involve solving numerically complicated nonlinear equalitons. For this reason,  bootstrap is used in practice to estimate the standard errors even for the maximum likelihood estimators. }
	
	{By considering a two-step procedure, where estimation of the incidence parameters is performed independently of the latency model, we expect to loose efficiency of the estimators. However, this does not cause major concern because our  purpose is to provide an alternative estimation method that performs better than the maximum likelihood estimation with sample sizes usually encountered in practice. Efficiency is a key concept for the asymptotics of the estimators, and in general there is no particular need for another method since the MLE would be the best choice. However, in many nonlinear models, like the mixture cure models, the asymptotic approximation is poor and the efficiency becomes a less relevant purpose for real data sample sizes. Hence, we choose to trade efficiency for better performance in a wider range of applications. }
		\subsection{Verification of assumptions for $\hat\pi$}
		Next we show that our assumptions 
		(AN1)-(AN4) of the asymptotic theory  are satisfied for  the nonparametric estimator $\hat\pi$ defined in \eqref{def:hat_pi} and the logistic model in \eqref{eqn:logistic}. For reasons of simplicity, since we use results available in the literature only for a one-dimensional covariate,  we consider only cases with one continuous covariate. In order for assumption (AN4) to be satisfied we need the following conditions:
		\begin{itemize}
			\item[(C1)] The bandwidth $b$ is such that $nb^4\to 0$ and $nb^{3+\xi}/(\log b^{-1}) \to \infty$ for some $\xi>0$.
			\item[(C2)] The support $\X$ of $X$ is a compact subset of $\R$. The density $f_X(\cdot)$ of $X$ is bounded away from zero and twice differentiable with bounded second derivative.
			\item[(C3)] The kernel $k$ is a twice continuously differentiable, symmetric  probability density function with compact support and $\int uk(u)\,\dd u=0$.
			\item[(C4)] (i) The functions $H([0,t]|x)$, $H_1([0,t]|x)$ are twice differentiable with respect to $x$, with uniformly bounded derivatives for all $t\leq \tau_0$, $x\in\X$. Moreover, there exist continuous nondecreasing functions $L_1$, $L_2$, $L_3$ such that $L_i(0)=0$, $L_i(\tau_0)<\infty$ and 	for all $t, s\in[0,\tau_0]$, $x\in\X$,
			\[
			\begin{split}
				\left|H_c(t|x)-H_c(s|x)\right|\leq \left|L_1(t)-L_1(s)\right|,&\quad \left|H_{1c}(t|x)-H_{1c}(s|x)\right|\leq \left|L_1(t)-L_1(s)\right|\\
				\left|\frac{\partial H_c(t|x)}{\partial x}-\frac{\partial H_c(s|x)}{\partial x}\right|&\leq \left|L_2(t)-L_2(s)\right|\\
				\left|\frac{\partial H_{1c}(t|x)}{\partial x}-\frac{\partial H_{1c}(s|x)}{\partial x}\right|&\leq \left|L_3(t)-L_3(s)\right|,
			\end{split}
			\]
			where the subscript c denotes the continuous part of a function.
			
			(ii) The jump points for the distribution function $G(t|x)$ of the censoring times given the covariate, are {finite and} the same for all $x$. The partial derivative of $G(t|x)$ with respect to $x$ exists and is uniformly bounded for all $t\leq\tau_0$, $x\in\X$. Moreover, 
			the partial derivative with respect to $x$ of $F(t|x)$ {(distribution function  of the survival times $T$ given $X=x$)}  exists and is uniformly bounded for all $t\leq\tau_0$, $x\in\X$.   
			\item[(C5)]{The survival time $T$ and the censoring time $C$ are independent given $X$.}
		\end{itemize}
		{(C1) to (C5) are conditions guaranteeing the rates of convergence and the i.i.d. representation  \citep{DA2002}}.
		In case of discrete covariates we also need to have only a finite number of atoms.  Assumption (C5) is needed because we are dealing with the distribution of $T$ conditional only on the covariate $X$ (since the cure rate depends only on $X$).
		
		\begin{theorem}
			\label{theo:pi_hat}
			Under the conditions (C1)-(C5), the assumptions (AN1)-(AN4) hold true for the logistic model {and the estimator $\hat\pi(x)$ defined in (\ref{def:hat_pi}).}
		\end{theorem}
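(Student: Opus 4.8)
The plan is to verify the four blocks of assumptions one at a time, treating (AN1) and (AN3) as elementary computations specific to the logistic specification, and reserving the real work for (AN2) and (AN4), which are statements about the Beran-type estimator $\hat\pi$ in \eqref{def:hat_pi}.

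First I would dispose of (AN1) and (AN3). For the logistic model \eqref{eqn:logistic} one has $\nabla_\gamma\phi(\gamma,x)=\phi(\gamma,x)(1-\phi(\gamma,x))x$, with an explicit Hessian of the same product form; since $\X$ is compact (C2) and $G$ is compact, the linear predictor $\gamma'x$ is bounded on $G\times\X$, so $\phi$ is bounded away from $0$ and $1$ there (giving (AC4)) and all first and second derivatives are uniformly bounded, which is (AN1). For (AN3) the matrix in question equals $\E[\phi(\gamma_0,X)^2(1-\phi(\gamma_0,X))^2 XX']$; because $\phi(1-\phi)$ is strictly positive on the compact set, for any nonzero $v$ the quadratic form $\E[\phi^2(1-\phi)^2(v'X)^2]$ vanishes only if $v'X=0$ almost surely, and the absence of such a $v$ is exactly the identifiability requirement \eqref{eqn:CI3} specialised to the logistic family (equivalently, the full rank of the design matrix of $X$ noted after \eqref{def:hat_L_gamma}).

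For (AN2) and (AN4)(i) I would take $\Pi$ to be the class of functions on the compact set $\X\subset\R$ that are uniformly bounded in $[0,1]$ and uniformly Lipschitz with a fixed constant. Such a class has metric entropy $\log N(\epsilon,\Pi,\Vert\cdot\Vert_\infty)=O(\epsilon^{-1})$, so $\sqrt{\log N}$ is of order $\epsilon^{-1/2}$ and the Dudley integral in (AN2) converges; that $\pi_0$ itself is Lipschitz follows from the smoothness of $H$ and $H_1$ in (C4). For (AN4)(i) I would show $\hat\pi\in\Pi$ with probability tending to one: since the kernel $k$ is twice continuously differentiable with compact support (C3) and $f_X$ is bounded away from zero (C2), $\hat\pi$ is differentiable, and using the rates below one gets $\sup_x|\hat\pi'(x)|=O_P(1)$, so $\hat\pi$ is eventually Lipschitz with the prescribed constant.

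The rate (AN4)(ii) and the linearisation (AN4)(iii) both rest on the asymptotic i.i.d. representation of the conditional Kaplan--Meier estimator of \cite{DA2002}, which under (C1)--(C5) yields, uniformly in $x\in\X$,
\begin{equation*}
\hat\pi(x)-\pi_0(x)=\frac{1}{nf_X(x)}\sum_{i=1}^n K_b(X_i-x)\,\xi(Y_i,\Delta_i;x)+b^2\,\mathrm{Bias}(x)+r_n(x),
\end{equation*}
with $\E[\xi(Y,\Delta;x)\mid X=x]=0$, $\mathrm{Bias}$ bounded, and $\sup_x|r_n(x)|=o_P(n^{-1/2})$. The stochastic term is $O_P((nb/\log n)^{-1/2})$ and the bias is $O(b^2)$; since (C1) forces $nb^4\to0$ and $nb^{3+\xi}/\log b^{-1}\to\infty$, the sum is $o_P(n^{-1/4})$, which is (AN4)(ii). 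For (AN4)(iii), write $w(x)=(1/\phi(\gamma_0,x)+1/(1-\phi(\gamma_0,x)))\nabla_\gamma\phi(\gamma_0,x)$, which for the logistic model collapses to $w(x)=x$, so that the left-hand side of (AN4)(iii) equals $\int(\hat\pi(x)-\pi_0(x))\,x\,f_X(x)\,\dd x$. Substituting the representation, the remainder contributes $\int r_n(x)\,x\,f_X(x)\,\dd x=o_P(n^{-1/2})$ and the bias contributes $b^2\int\mathrm{Bias}(x)\,x\,f_X(x)\,\dd x=O(b^2)=o(n^{-1/2})$ by $nb^4\to0$, both absorbed into $R_n$. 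The leading term is $n^{-1}\sum_i\int K_b(X_i-x)\xi(Y_i,\Delta_i;x)\,x\,\dd x$, and a change of variables with a Taylor expansion (the first-order term vanishing by symmetry, C3) replaces the kernel average by its limit, giving $\Psi(Y,\Delta,X)=\xi(Y,\Delta;X)\,X$ up to an $o_P(n^{-1/2})$ error; finally $\E[\Psi(Y,\Delta,X)]=\E[X\,\E[\xi(Y,\Delta;X)\mid X]]=0$.

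I expect (AN4)(iii) to be the main obstacle. It requires not merely invoking the representation of \cite{DA2002}, but also interchanging the sample-conditional expectation $\E^*$ with the i.i.d. sum, controlling the kernel-smoothing error uniformly so that the leading term collapses to the clean influence function $\Psi$, and verifying that every discarded piece is genuinely $o_P(n^{-1/2})$ under the bandwidth constraints in (C1); the remaining assumptions, by contrast, reduce to explicit properties of the logistic link and a standard entropy bound.
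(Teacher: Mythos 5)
Your overall route coincides with the paper's: (AN1) and (AN3) by direct computation with the logistic link (your reduction of (AN3) to the identifiability condition \eqref{eqn:CI3} is exactly right), (AN2) from a metric-entropy bound for a smoothness class, and (AN4)(ii)--(iii) from the i.i.d.\ representation of the Beran estimator in \cite{DA2002} followed by a change of variables and a second-order Taylor expansion, with the first-order term killed by the symmetry of the kernel and the $O(b^2)$ bias killed by $nb^4\to 0$; this matches the paper's treatment, including the simplification $\bigl(\phi^{-1}+(1-\phi)^{-1}\bigr)\nabla_\gamma\phi(\gamma_0,x)=x$ for the logistic model and the conclusion $\E[\Psi(Y,\Delta,X)]=0$ from the conditional centering of the influence function. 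Your choice of a plain Lipschitz class for $\Pi$ (entropy $O(\epsilon^{-1})$, so Dudley integral finite in dimension one) instead of the paper's class of differentiable functions with $\xi$-H\"older derivative (entropy $O(\epsilon^{-1/(1+\xi)})$) is legitimate as far as (AN2) is concerned.

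The genuine gap is in (AN4)(i). You must show that $\hat\pi$ lies, with probability tending to one, in a class with a \emph{fixed} Lipschitz constant $M$, and you assert $\sup_x|\hat\pi'(x)|=O_P(1)$ ``using the rates below''. But the rates below concern $\hat\pi-\pi_0$, not its derivative: an asymptotic representation of $\hat\pi-\pi_0$ cannot be differentiated term by term, so it yields no control on $\hat\pi'$ at all. What is required is a separate almost-sure uniform rate for the derivative of the Beran estimator; this is precisely why the paper invokes Propositions 4.1 and 4.2 of \cite{KA99}, which give $\sup_x|\hat\pi'(x)-\pi_0'(x)|=O\bigl((nb^3)^{-1/2}(\log b^{-1})^{1/2}\bigr)$ a.s.\ together with a H\"older-seminorm bound on $\hat\pi'-\pi_0'$ -- the latter being the reason the paper works with the smoother class and the reason condition (C1) carries the exponent $\xi$ in $nb^{3+\xi}/(\log b^{-1})\to\infty$. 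Moreover, even granting some derivative bound, $O_P(1)$ is not enough: tightness of $\sup_x|\hat\pi'(x)|$ does not produce a single fixed $M$ with $\p\bigl(\sup_x|\hat\pi'(x)|\le M\bigr)\to 1$. You need uniform consistency of $\hat\pi'$ towards $\pi_0'$, after which any fixed $M>\sup_x|\pi_0'(x)|$ (chosen large enough that $\pi_0\in\Pi$ as well, as (AN2) requires) settles membership. With those two repairs -- citing the derivative rates and replacing $O_P(1)$ by uniform consistency -- your argument becomes a complete proof along the same lines as the paper's.
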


		\section{Simulation study}
		\label{sec:simulations}
		In this section we focus on the logistic/Cox mixture cure model and evaluate the finite sample performance of the proposed method. Comparison is made with the maximum likelihood estimator implemented in the package \texttt{smcure}. 
		
		{
	We first illustrate through a brief example the convergence problems of the smcure estimator. 	We consider a model where the incidence depends on four independent covariates: $X_1\sim N(0,2)$, $X_2\sim \text{Uniform}(-1,1)$, $X_3\sim\text{Bernoulli}(0.8)$, $X_4\sim\text{Bernoulli}(0.2)$. The latency depends on $Z_1=X_1$, $Z_2=X_3$ and $Z_3=X_4$.  We generate the cure status $B$ as a Bernoulli random variable with success probability  $\phi(\gamma,X)$ where $\phi$ is the logistic function and $\gamma=(0.6,-1,1,2.5,1.2)$. The survival times for the uncured observations are generated according to a Weibull proportional hazards model
	\[
	S_u(t|z)=\exp\left(-\mu t^\rho\exp(\beta'z)\right),
	\]
	and are truncated at $\tau_0=14$ for $\rho=1.75$, $\mu=1.5$, $\beta=(-0.8,0.9,0.5)$. The censoring times are independent from $X$ and $T$. They are generated from the exponential distribution with parameter $\lambda_C=0.22$ and are truncated at $\tau=16$. We generate $1000$ datasets according to this model with sample size $n=100$, and we observe that smcure fails to converge in $43\%$ of the cases. Convergence fails mainly in the $\gamma$ parameter, with only $17\%$ of the cases failing to converge also for the $\beta$ parameter (because of the unreasonable $\gamma$ estimators). On the other hand, there was no convergence problem in the second step of the presmoothing approach. In addition, even among the cases where smcure converged, the presmoothing approach showed significantly better behavior, as can be seen in Table~\ref{tab:convergence}. 
	}
\begin{table}[h!]
	\caption{\label{tab:convergence}Bias, variance and MSE of $\hat\gamma$ and $\hat\beta$ for \texttt{smcure}  and our approach  among the iterations that converged for smcure.}
	\centering
	\scalebox{0.85}{
		\fbox{
			\begin{tabular}{rrrrrrr}
				&	& & & & &  \\[-7pt]
		&	\multicolumn{3}{c}{presmoothing}&\multicolumn{3}{c}{smcure}\\
			Par. &  Bias & Var. & MSE & Bias & Var. & MSE \\[2pt]
				\hline
				& & & & & &  \\[-8pt]
				 $\gamma_1 $ & $-0.113 $  & $0.620 $ & $0.633  $ & $0.200  $ & $8.318 $ & $8.358 $ \\
			$\gamma_2$	& $-0.073 $  & $0.156$ & $ 0.162 $ & $-0.388 $ & $3.085$ & $3.236 $\\
			 $\gamma_3 $ & $-0.071$  & $0.546$ & $0.551 $ & $ 0.280$ & $1.957 $ & $2.035 $ \\
				$\gamma_4$ & $ 0.037$  & $1.326$ & $1.327 $ & $0.704$ & $14.395 $ & $14.891 $ \\
					$\gamma_5$ & $ -0.250$  & $8.398$ & $8.461 $ & $1.621$ & $36.450 $ & $36.945 $ \\
				 $\beta_1 $ & $-0.014 $  & $0.011 $ & $0.012  $ & $-0.017  $ & $ 0.012$ & $ 0.012$\\
				$\beta_2$ & $0.024 $  & $0.064 $ & $ 0.065 $ & $ 0.026 $ & $0.065 $ & $0.065$ \\
				$\beta_3$ & $-0.053 $  & $0.165 $ & $ 0.168 $ & $- 0.053 $ & $0.166 $ & $0.169$ \\	
			\end{tabular}
	}}
\end{table}

{Hence, in the cases in which smcure exhibits very poor behavior, the presmoothing is obviously superior. Next, we focus on models for which smcure behaves reasonable (there are convergence problems in less then $3\%$ of the cases) and show that, even in such scenarios presmoothing can lead to more accurate results. }

		We consider four different models and for each of them various choices of the parameters in order to cover a wide range of scenarios. 
		The models are as follows. 
		
		\textit{Model 1.} Both incidence and latency depend on one covariate $X$, which is uniform on $(-1,1)$. We generate the cure status $B$ as a Bernoulli random variable with success probability  $\phi(\gamma,X)$ where $\phi$ is the logistic function. The survival times for the uncured observations are generated according to a Weibull proportional hazards model
		\[
		S_u(t|x)=\exp\left(-\mu t^\rho\exp(\beta x)\right),
		\]
		and are truncated at $\tau_0$ for $\rho=1.75$, $\mu=1.5$, $\beta=1$ and $\tau_0=4$. The censoring times are independent from $X$ and $T$. They are generated from the exponential distribution with parameter $\lambda_C$ and are truncated at $\tau=6$. 
		
		\textit{Model 2.} Both incidence and latency depend on one covariate $X$ with standard normal distribution. The cure status  and the survival times for the uncured observations are generated as in Model 1 for $\rho=1.75$, $\mu=1.5$, $\beta=1$ and $\tau_0=10$. The censoring times are  generated according to a Weibull proportional hazards model
		\[
		S_C(t|x)=\exp\left(-\nu\mu t^{\rho}\exp(\beta_Cx)\right),
		\] 
		for $\beta_C=1$ and various choices of $\nu$ and are truncated at $\tau=15$. 
		
		\textit{Model 3.} For the incidence we consider three independent covariates: $X_1$ is normal with mean zero and standard deviation $2$, $X_2$ and $X_3$ are Bernoulli random variables with parameters $0.6 $ and  $0.4$ respectively. The latency also depends on three covariates: $Z_1=X_1$, $Z_2$ is a uniform random variable on $(-3,3)$ independent of the previous ones  and $Z_3=X_2$.  The cure status  and the survival times for the uncured observations are generated as in Model 1 for $\rho=1.75$, $\mu=1.5$ and different choices of the other parameters.  The censoring times are  generated independently of the previous variables from an exponential distribution with parameter $\lambda_C$ 
		and are truncated at $\tau$, for given choices of  $\lambda_C$ and $\tau$. 
		
		\textit{Model 4.} This setting is obtained by adding an additional continuous covariate to the incidence component of Model 3. To be precise,  $X_1$ is normal with mean zero and standard deviation $2$, $X_2$ is uniform on $(-1,1)$ {independent of the other variables}, $X_3$ and $X_4$ are Bernoulli random variables with parameters $0.6 $ and  $0.4$ respectively. As in Model 3, $Z_1=X_1$, $Z_2$ is a uniform random variable on $(-3,3)$ independent of the previous ones  and $Z_3=X_3$. The event and censoring times are generated as in the previous model.
		
		For the four models we choose the values of the unspecified parameters in such a way that the cure rate is around $20\%$, $30\%$, $50\%$ (corresponding respectively to scenarios $1$, $2$ and $3$) and the censoring rate corresponds to three levels (with a difference of $5\%$ between each other). The specification of the parameters and the corresponding censoring rate and percentage of the observations in the plateau are given Table~\ref{tab:models}. {Note that, within each scenario, the fraction of the observations in the plateau decreases as the censoring rate increases because more cured observations are censored earlier and as a result are not observed in the plateau. This makes the estimation of the cure rate more difficult.} The truncation of the  survival  and censoring times on $[0,\tau_0]$ and $[0,\tau]$ is made in such a way that $\tau_0<\tau$ and  condition \eqref{eqn:jump_cond} is satisfied { but in practice
			it is unlikely to observe event times at $\tau_0$. In this way, we try to find a compromise between theoretical assumptions and  real-life scenarios.}
		
		\begin{table}
			\caption{	\label{tab:models}Parameter values and model characteristics for each scenario.}
			\centering
			\addtolength{\tabcolsep}{-4pt}
			\fbox{%
				\begin{tabular}{ccccccc}
					Model & Parameters & Scenario & Cens.  & Cens. & Cens.  & Plateau\\
					&  &  &  level & parameters &  rate & \\
					\hline
					& & & & & &  \\[-10pt]
					&  &  & $1$  & $\lambda_C=0.1 $ & $25\%$ & $15\% $\\
					& $\gamma=(1.75, 2)$ & $1 $ & $2$  & $\lambda_C=0.2 $ & $30\%$ & $11\% $\\
					&  &  & $3$  & $\lambda_C=0.3 $ & $35\%$ & $9\% $\\
					\cline{2-7}
					& &  & $1$  & $\lambda_C=0.1 $ & $34\%$ & $22\% $\\
					1 & $\gamma=(1, 1.5)$ & $2$ & $2$  & $\lambda_C=0.25 $ & $40\%$ & $15\% $\\
					&  &  & $3$  & $\lambda_C=0.4 $ & $46\%$ & $10\% $\\
					\cline{2-7}
					& &  & $1$  & $\lambda_C=0.2 $ & $54\%$ & $32\% $\\
					& $\gamma=(0.1, 5)$ & $3 $ & $2$  & $\lambda_C=0.4 $ & $59\%$ & $23\% $\\
					&  &  & $3$  & $\lambda_C=0.7 $ & $65\%$ & $15\% $\\
					\hline
					& & & & & &  \\[-10pt]
					&  &  & $1$  & $\nu=1/15 $ & $25\%$ & $7\% $\\
					& $\gamma=(1.5, 0.5)$ & $1$ & $2$  & $\nu=1/7 $ & $30\%$ & $4\% $\\
					&  &  & $3$  & $\nu=1/4 $ & $35\%$ & $2\% $\\
					\cline{2-7}
					& &  & $1$  & $\nu=1/13 $ & $35\%$ & $14\% $\\
					2 & $\gamma=(1, 1)$ & $2 $ & $2$  & $\nu=1/10 $ & $40\%$ & $9\% $\\
					&  &  & $3$  & $\nu=5/18 $ & $45\%$ & $6\% $\\
					\cline{2-7}
					& &  & $1$  & $\nu=1/9 $ & $56\%$ & $38\% $\\
					& $\gamma=(-0.1, 5)$ & $3 $ & $2$  & $\nu=1/4$ & $60\%$ & $30\% $\\
					&  &  & $3$  & $\nu=2/5 $ & $65\%$ & $25\% $\\
					\hline
					& & & & & &  \\[-10pt]
					& $\gamma=(0.5, -1,2.5, 1.2)$ &  & $1$  & $\lambda_C=0.12$ & $25\%$ & $10\% $\\
					& $\beta=( -1,0.5, 1.5)$ & $1$ & $2$  & $\lambda_C=0.25 $ & $30\%$ & $6\% $\\
					& $\tau_0=30$, $\tau=35$ &  & $3$  & $\lambda_C=0.45 $ & $35\%$ & $4\% $\\
					\cline{2-7}
					& $\gamma=(1,2,1.8, 0.5)$&  & $1$  & $\lambda_C=0.2 $ & $35\%$ & $16\% $\\
					3 & $\beta=(1,0.5, 2)$ & $2 $ & $2$  & $\lambda_C=0.5 $ & $40\%$ & $9\% $\\
					& $\tau_0=6$, $\tau=8$ &  & $3$  & $\lambda_C=0.8 $ & $45\%$ & $6\% $\\
					\cline{2-7}
					&$\gamma=(-0.8,1.3,1.5,-0.2) $&  & $1$  & $\lambda_C=0.3 $ & $55\%$ & $24\% $\\
					& $\beta=(1,-0.1,0.8)$ & $3 $ & $2$  & $\lambda_C=0.7 $ & $59\%$ & $14\% $\\
					& $\tau_0=5$, $\tau=7$ &  & $3$  & $\lambda_C=1.3 $ & $65\%$ & $8\% $\\
					\hline
					& & & & & &  \\[-10pt]
					& $\gamma=(0.6,-1,1,2.5,1.2)$ &  & $1$  & $\lambda_C=0.1$ & $25\%$ & $11\% $\\
					& $\beta=( -0.8,0.3,0.5)$ & $1$ & $2$  & $\lambda_C=0.22 $ & $30\%$ & $7\% $\\
					& $\tau_0=14$, $\tau=16$ &  & $3$  & $\lambda_C=0.35 $ & $35\%$ & $5\% $\\
					\cline{2-7}
					& $\gamma=(0.45,0.5,2{,}1,0.5)$&  & $1$  & $\lambda_C=0.15 $ & $35\%$ & $11\% $\\
					4 & $\beta=(1,0.5, 2)$ & $2 $ & $2$  & $\lambda_C=0.35 $ & $40\%$ & $7\% $\\
					& $\tau_0=18$, $\tau=20$ &  & $3$  & $\lambda_C=0.6 $ & $45\%$ & $5\% $\\
					\cline{2-7}
					&$\gamma=(-0.22,0.3,-0.4,0.5,-0.2) $&  & $1$  & $\lambda_C=0.2 $ & $55\%$ & $30\% $\\
					& $\beta=(0.4,-0.1,0.5)$ & $3 $ & $2$  & $\lambda_C=0.4 $ & $59\%$ & $20\% $\\
					& $\tau_0=6$, $\tau=8$ &  & $3$  & $\lambda_C=0.7 $ & $65\%$ & $12\% $\\
				\end{tabular}
			}
		\end{table}
		
		\begin{table}
			\caption{\label{tab:results1_2}Bias, variance and MSE of $\hat\gamma$ and $\hat\beta$ for \texttt{smcure} (second rows) and our approach (first rows) in Model 1 and 2.}
			\centering
			\scalebox{0.85}{
				\fbox{
					\begin{tabular}{ccccrrrrrrrrr}
						&	& & & & & & & & & && \\[-7pt]
						&	&&&\multicolumn{3}{c}{Cens. level 1}&\multicolumn{3}{c}{Cens. level 2}&\multicolumn{3}{c}{Cens. level 3}\\
						Mod.&	n&  scen. & Par. &  Bias & Var. & MSE & Bias & Var. & MSE & Bias & Var. & MSE\\[2pt]
						\hline
						&	& & & & & & & & & && \\[-8pt]
						1&	$200$ & $1 $ & $\gamma_1 $ & $0.001 $  & $0.060 $ & $0.060  $ & $0.020  $ & $0.065 $ & $0.065 $ & $0.005  $ & $0.078 $ & $0.078 $\\
						&	& &  & $0.021 $  & $0.063 $ & $ 0.063 $ & $0.050  $ & $0.068 $ & $0.071 $ & $  0.044$ & $0.084 $ & $0.086 $\\
						&	& & $\gamma_2 $ & $-0.034 $  & $0.164 $ & $0.165  $ & $ -0.014 $ & $0.202 $ & $0.202 $ & $-0.051  $ & $0.209 $ & $0.212 $\\
						&	& & & $ 0.026$  & $0.173 $ & $0.173  $ & $ 0.067 $ & $0.222 $ & $0.226 $ & $0.044  $ & $0.229 $ & $0.230 $\\
						&	& & $\beta $ & $0.008 $  & $0.028 $ & $0.028  $ & $0.015  $ & $ 0.029$ & $ 0.029$ & $0.013  $ & $0.034 $ & $0.035 $\\
						&	& &  & $0.007 $  & $0.028 $ & $ 0.028 $ & $ 0.012 $ & $0.029 $ & $0.029 $ & $0.009  $ & $0.035 $ & $0.035 $\\
						\cline{3-13}
						&	& & & & & & & & & && \\[-8pt]
						&	& $3 $ & $\gamma_1 $ & $-0.001 $  & $0.059 $ & $0.059  $ & $0.009  $ & $ 0.065$ & $ 0.065$ & $ -0.014 $ & $0.091 $ & $ 0.092$\\
						&	& &  & $0.010 $  & $0.064 $ & $0.064  $ & $0.029  $ & $0.074 $ & $0.075 $ & $0.037  $ & $0.113 $ & $0.115 $\\
						&	& & $\gamma_2 $ & $-0.034 $  & $0.536 $ & $0.537  $ & $-0.111  $ & $ 0.595$ & $0.608 $ & $ -0.085 $ & $ 0.809$ & $0.816 $\\
						&	& & & $0.201 $  & $ 0.649$ & $ 0.689 $ & $0.218  $ & $0.768 $ & $0.816 $ & $0.400  $ & $1.146 $ & $1.306 $\\
						&	& & $\beta $ & $ 0.011$  & $0.090 $ & $ 0.090 $ & $ 0.024 $ & $ 0.109$ & $0.110 $ & $0.014  $ & $0.128 $ & $0.128 $\\
						&	& &  & $ 0.007$  & $0.091 $ & $ 0.091 $ & $0.014  $ & $0.110 $ & $0.110 $ & $-0.001  $ & $0.129 $ & $0.129 $\\
						\cline{2-13}
						&	& & & & & & & & & && \\[-8pt]
						&	$400$ & $1 $ & $\gamma_1 $ & $0.001 $  & $0.028 $ & $0.028  $ & $ 0.007 $ & $0.032 $ & $0.032 $ & $0.001  $ & $0.037 $ & $0.037 $\\
						&	& &  & $0.015 $  & $0.029 $ & $0.030  $ & $0.027  $ & $0.033 $ & $ 0.034$ & $0.024  $ & $0.039 $ & $0.039 $\\
						&	& & $\gamma_2 $ & $-0.024 $  & $0.083 $ & $ 0.084 $ & $-0.004  $ & $ 0.088$ & $ 0.088$ & $ -0.018 $ & $0.107 $ & $0.107 $\\
						&	& & & $ 0.021$  & $0.087 $ & $0.087  $ & $ 0.049 $ & $0.093 $ & $0.095 $ & $0.041  $ & $0.111 $ & $0.113 $\\
						&	& & $\beta $ & $0.003 $  & $0.013 $  & $0.013  $ & $0.007 $ & $0.015 $ & $0.015  $ & $0.002 $ & $ 0.016$ & $0.016$\\
						&	& &  & $0.002 $  & $0.013 $ & $0.013  $ & $0.005  $ & $0.015 $ & $0.015 $ & $ 0.000 $ & $ 0.016$ & $0.016 $\\
						\cline{3-13}
						&	& & & & & & & & & && \\[-8pt]
						&	& $3 $ & $\gamma_1 $ & $-0.004 $  & $0.029 $ & $0.029  $ & $-0.004  $ & $0.030 $ & $ 0.030$ & $-0.007  $ & $0.048 $ & $ 0.048$\\
						&	& &  & $ 0.002$  & $0.030 $ & $0.030  $ & $ 0.009 $ & $0.033 $ & $0.033 $ & $0.015 $ & $0.053 $ & $0.053 $\\
						&	& & $\gamma_2 $ & $-0.050 $  & $ 0.237$ & $ 0.239 $ & $-0.080  $ & $0.312 $ & $0.318 $ & $-0.134  $ & $0.432 $ & $ 0.450$\\
						&	& & & $0.111 $  & $0.260 $ & $0.273  $ & $ 0.142 $ & $0.361 $ & $ 0.381$ & $ 0.167 $ & $0.491 $ & $0.519 $\\
						&	& & $\beta $ & $-0.003 $  & $0.039 $ & $0.039  $ & $0.024  $ & $0.051 $ & $0.052 $ & $ 0.013 $ & $0.071 $ & $0.071 $\\
						&	& &  & $-0.007 $  & $0.039 $ & $ 0.039 $ & $ 0.017 $ & $0.051 $ & $0.052 $ & $0.000  $ & $0.071 $ & $0.071 $\\
						\hline
						&	& & & & & & & & & && \\[-8pt]
						2&	$200$ & $1 $ & $\gamma_1 $ & $0.004 $  & $0.040 $ & $0.040  $ & $0.020  $ & $0.045 $ & $0.045 $ & $-0.016  $ & $0.060 $ & $0.060 $\\
						&	& &  & $0.017 $  & $ 0.040$ & $ 0.040 $ & $ 0.058 $ & $0.047 $ & $0.050 $ & $ 0.083 $ & $0.079 $ & $0.086 $\\
						&	& & $\gamma_2 $ & $ 0.001$  & $0.039 $ & $0.039  $ & $- 0.022 $ & $0.042 $ & $0.043 $ & $-0.027  $ & $ 0.055$ & $0.056 $\\
						&	& & & $0.016 $  & $0.040 $ & $0.040  $ & $ 0.008 $ & $0.047 $ & $0.047 $ & $0.029 $ & $0.072 $ & $0.073 $\\
						&	& & $\beta $ & $0.006 $  & $ 0.011$ & $0.011  $ & $ 0.000 $ & $ 0.014$ & $ 0.014$ & $0.011  $ & $0.015 $ & $0.015 $\\
						&	& &  & $0.005 $  & $ 0.011$ & $0.011  $ & $ -0.002 $ & $0.014 $ & $ 0.014$ & $0.004  $ & $0.016 $ & $0.016 $\\
						\cline{3-13}
						&	& & & & & & & & & && \\[-8pt]
						&	& $3 $ & $\gamma_1 $ & $-0.016 $  & $0.071 $ & $ 0.071 $ & $ -0.057 $ & $0.065 $ & $0.068 $ & $ -0.139 $ & $ 0.083$ & $ 0.102$\\
						&	& &  & $ 0.029$  & $0.092 $ & $0.092  $ & $ 0.051 $ & $0.119 $ & $0.121 $ & $0.024  $ & $0.175 $ & $0.176 $\\
						&	& & $\gamma_2 $ & $-0.468 $  & $0.723 $ & $0.942  $ & $ -0.943 $ & $0.823 $ & $ 1.713$ & $-1.348  $ & $0.829 $ & $ 2.646$\\
						&	& & & $0.364 $  & $ 0.926$ & $ 1.058 $ & $ 0.495 $ & $ 1.453$ & $ 1.698$ & $0.596  $ & $2.128 $ & $2.482 $\\
						&	& & $\beta $ & $0.017 $  & $0.035 $ & $ 0.035 $ & $0.022  $ & $0.039 $ & $0.039 $ & $ 0.036 $ & $0.052 $ & $0.054 $\\
						&	& &  & $ 0.014$  & $ 0.035$ & $0.035  $ & $ 0.017 $ & $0.040 $ & $0.040 $ & $0.025  $ & $0.053 $ & $0.054 $\\
						\cline{2-13}
						&	& & & & & & & & & && \\[-8pt]
						&	$400$ & $1 $ & $\gamma_1 $ & $0.011 $  & $0.019 $ & $0.019  $ & $0.019  $ & $0.023 $ & $ 0.023$ & $0.002  $ & $0.032 $ & $0.032 $\\
						&	& &  & $0.018 $  & $0.019 $ & $0.019  $ & $ 0.037 $ & $0.023 $ & $0.025 $ & $0.047  $ & $0.034 $ & $0.036 $\\
						&	& & $\gamma_2 $ & $-0.002 $  & $0.018 $ & $ 0.018 $ & $-0.010  $ & $0.023 $ & $ 0.023$ & $ -0.019 $ & $0.027 $ & $0.028 $\\
						&	& & & $0.009 $  & $0.018 $ & $ 0.018 $ & $ 0.007 $ & $0.025 $ & $ 0.025$ & $ 0.008 $ & $ 0.032$ & $0.032 $\\
						&	& & $\beta $ & $ 0.000$  & $0.006 $ & $ 0.006 $ & $0.004  $ & $0.006 $ & $ 0.006$ & $  0.003$ & $ 0.008$ & $0.008 $\\
						&	& &  & $0.000 $  & $ 0.006$ & $0.006  $ & $ 0.002 $ & $0.006 $ & $0.006 $ & $ 0.000 $ & $0.008 $ & $0.008 $\\
						\cline{3-13}
						&	& & & & & & & & & && \\[-8pt]
						&	& $3 $ & $\gamma_1 $ & $-0.015 $  & $0.031 $ & $0.031  $ & $-0.071  $ & $0.034 $ & $0.039 $ & $-0.086  $ & $ 0.041$ & $0.048 $\\
						&	& &  & $0.014 $  & $0.037 $ & $0.038  $ & $ 0.001 $ & $0.050 $ & $0.050 $ & $ 0.047 $ & $0.072 $ & $0.074 $\\
						&	& & $\gamma_2 $ & $-0.444 $  & $ 0.330$ & $ 0.527 $ & $-0.802  $ & $0.410 $ & $ 1.053$ & $ -1.191 $ & $0.463 $ & $1.881 $\\
						&	& & & $0.149 $  & $ 0.364$ & $0.386  $  & $ 0.244$ & $0.557 $ & $0.616  $ & $0.325 $ & $ 0.739$& $ 0.845 $\\
						&	& & $\beta $ & $0.007 $  & $0.016 $ & $0.016  $ & $0.015  $ & $0.019 $ & $0.020 $ & $0.017  $ & $ 0.024$ & $0.024 $\\
						&	& &  & $0.004 $  & $ 0.016$ & $0.016  $ & $ 0.010 $ & $0.019 $ & $0.020 $ & $0.010  $ & $ 0.024$ & $0.024 $\\
					\end{tabular}
			}}
		\end{table}
		For each setting we consider samples of size $n=200, 400, 1000$. This leads to a total of $108$ settings ($4$ models, $3$ scenarios for the cure rate, $3$ censoring levels and $3$ sample sizes). In this way, we hope to address a number  of issues such as the effect of the cure proportion, the sample size, amount and type of censoring, covariates (number, relation between $X$ and $Z$ and their distribution).
		For each configuration $1020$  datasets were generated and the estimators of $\beta_0$ and $\gamma_0$ were computed through \texttt{smcure} and our method. We report the bias, variance and mean squared error (MSE) of the estimators, computed after omitting the lowest and the highest $1\%$ of the estimators (for stability of the reported results) and rounded to three decimals. Tables~\ref{tab:results1_2}-\ref{tab:results4} show some of the results, while the rest can be found in the online Supplementary Material. {We aim to provide a ready-to-use method that works well in practice without needing to think about how to choose the kernel function or the bandwidth. Hence, we illustrate the performance of the method for some standard and commonly used choices. } The kernel function $k$ is taken to be the Epanechnikov kernel $k(u)=(3/4)(1-u^2)\1_{\{|u|\leq1\}}
		$. 
		We use the  cross-validation bandwidth  (implemented in the R package \texttt{np}) for kernel estimators of conditional distribution functions, in our case for estimation of $H=H_0+H_1$ given the continuous covariates (affecting the incidence). In addition, we restrict to the interval  $[0,Y_{(m)}]$, where $Y_{(m)}$ is the last observed event time since the estimator of the cure probability $\hat\pi$ in \eqref{def:hat_pi} is essentially a product over values of $t$ that are equal to the observed event times. This means that we use the {cross-validation} bandwidth for estimation of the conditional distribution $H(t|x)$ for $t\leq Y_{(m)}$. This choice of bandwidth improves significantly the performance of the estimators, compared to the cross-validation bandwidth on the whole interval $[0,\tau]$, in situations with a large percentage  of observations in the plateau, while it leads to little difference otherwise. 
		\begin{table}
			\caption{	\label{tab:results3}Bias, variance and MSE of $\hat\gamma$ for \texttt{smcure} (second rows) and our approach (first rows) in Model 3.}
			\centering
			\scalebox{0.85}{
				\fbox{
					\begin{tabular}{ccccrrrrrrrrr}
						&	& & & & & & & & & && \\[-8pt]
						&	&&&\multicolumn{3}{c}{Cens. level 1}&\multicolumn{3}{c}{Cens. level 2}&\multicolumn{3}{c}{Cens. level 3}\\
						Mod.&	n&  scen. & Par. &  Bias & Var. & MSE & Bias & Var. & MSE & Bias & Var. & MSE\\[2pt]
						\hline
						&	& & & & & & & & & && \\[-8pt]
						3&	$200$ & $1 $ & $\gamma_1 $ & $ 0.025$  & $0.147 $ & $0.147  $ & $0.010  $ & $0.192 $ & $0.192 $ & $ -0.008 $ & $0.243 $ & $0.243 $\\
						&	& &  & $ 0.034$  & $0.147 $ & $0.148  $ & $0.034  $ & $0.191 $ & $ 0.192$ & $0.062  $ & $0.249 $ & $0.253 $\\
						&	& & $\gamma_2 $ & $-0.045 $  & $0.042 $ & $ 0.044 $ & $ -0.078 $ & $0.049 $ & $ 0.055$ & $-0.085  $ & $0.059 $ & $ 0.066$\\
						&	& & & $-0.077 $  & $ 0.050$ & $ 0.056 $ & $ -0.122 $ & $0.065 $ & $0.080 $ & $-0.148  $ & $0.092 $ & $0.144 $\\
						&	& & $\gamma_3 $ & $0.081 $  & $0.366 $ & $ 0.373 $ & $ 0.074 $ & $ 0.485$ & $ 0.491$ & $0.029  $ & $0.536 $ & $ 0.537$\\
						&	& & & $0.174 $  & $ 0.397$ & $0.427  $ & $ 0.266 $ & $0.574 $ & $0.644 $ & $0.309  $ & $ 0.799$ & $0.895 $\\
						&	& & $\gamma_4 $ & $ -0.046$  & $0.326 $ & $ 0.373 $ & $-0.160  $ & $ 0.412$ & $0.437 $ & $ -0.289 $ & $ 0.453$ & $ 0.537$\\
						&	& & & $0.087 $  & $0.366 $ & $ 0.374 $ & $ 0.089 $ & $0.528 $ & $0.535 $ & $0.186  $ & $ 0.908$ & $0.943 $\\
						\cline{3-13}
						&	& & & & & & & & & && \\[-8pt]
						&	& $3 $ & $\gamma_1 $ & $-0.059 $  & $ 0.161$ & $ 0.164 $ & $  -0.091$ & $ 0.258$ & $0.266 $ & $ -0.223 $ & $ 0.419$ & $0.468 $\\
						&	& &  & $-0.053 $  & $ 0.163$ & $ 0.166 $ & $-0.071  $ & $0.261 $ & $0.266 $ & $-0.138  $ & $0.524 $ & $ 0.543$\\
						&	& & $\gamma_2 $ & $0.018 $  & $0.046 $ & $  0.046$ & $0.026  $ & $ 0.063$ & $0.064 $ & $0.086  $ & $0.088 $ & $ 0.096$\\
						&	& & & $0.080 $  & $ 0.052$ & $ 0.058 $ & $0.121  $ & $0.080 $ & $0.095 $ & $0.252  $ & $0.170 $ & $0.233 $\\
						&	& & $\gamma_3 $ & $0.060 $  & $0.235 $ & $0.238  $ & $0.076  $ & $ 0.366$ & $0.372 $ & $ 0.135 $ & $0.517 $ & $ 0.535$\\
						&	& & & $0.091 $  & $0.242 $ & $0.251  $ & $0.135  $ & $ 0.375$ & $ 0.393$ & $0.228  $ & $0.642 $ & $0.694 $\\
						&	& & $\gamma_4 $ & $-0.030 $  & $0.202 $ & $ 0.203 $ & $ -0.040 $ & $0.292 $ & $ 0.293$ & $ -0.081 $ & $0.479 $ & $0.486 $\\
						&	& & & $-0.027 $  & $ 0.205$ & $0.205  $ & $ -0.017 $ & $0.277 $ & $0.277 $ & $ -0.037 $ & $0.534 $ & $ 0.535$\\
						\cline{2-13}
						&	& & & & & & & & & && \\[-8pt]
						&	$400$ & $1 $ & $\gamma_1 $ & $ 0.016$  & $0.074 $ & $0.074  $ & $0.021  $ & $0.091 $ & $ 0.092$ & $ 0.003 $ & $ 0.128$ & $0.128 $\\
						&	& &  & $0.017 $  & $ 0.072$ & $0.073  $ & $ 0.022 $ & $0.082 $ & $0.082 $ & $ 0.023 $ & $0.108 $ & $0.108 $\\
						&	& & $\gamma_2 $ & $ -0.026$  & $0.019 $ & $0.019  $ & $ -0.039 $ & $0.023 $ & $ 0.025$ & $-0.070  $ & $0.032 $ & $ 0.037$\\
						&	& & & $-0.042 $  & $0.020 $ & $0.021  $ & $ -0.049 $ & $0.025 $ & $0.027 $ & $-0.081  $ & $0.035 $ & $ 0.041$\\
						&	& & $\gamma_3 $ & $0.039 $  & $0.194 $ & $ 0.195 $ & $0.028  $ & $0.219 $ & $ 0.220$ & $ 0.026 $ & $0.298 $ & $ 0.298$\\
						&	& & & $0.093 $  & $ 0.190$ & $ 0.198 $ & $ 0.097 $ & $0.206 $ & $0.215 $ & $0.158  $ & $0.297 $ & $ 0.322$\\
						&	& & $\gamma_4 $ & $ 0.010$  & $0.171 $ & $ 0.171 $ & $-0.091  $ & $0.193 $ & $0.201 $ & $ -0.178 $ & $0.276 $ & $0.307 $\\
						&	& & & $ 0.070$  & $0.177 $ & $ 0.182 $ & $0.038  $ & $0.198 $ & $0.200 $ & $0.088  $ & $0.289 $ & $0.297 $\\
						\cline{3-13}
						&	& & & & & & & & & && \\[-8pt]
						&	& $3 $ & $\gamma_1 $ & $-0.023 $  & $0.089 $ & $0.089  $ & $-0.051  $ & $0.118 $ & $0.121 $ & $-0.124  $ & $0.212 $ & $0.228 $\\
						&	& &  & $-0.029 $  & $0.092 $ & $ 0.093 $ & $-0.032  $ & $0.112 $ & $ 0.113$ & $-0.062  $ & $0.200 $ & $0.204 $\\
						&	& & $\gamma_2 $ & $0.003 $  & $0.023 $ & $  0.023$ & $0.006  $ & $0.033 $ & $0.033 $ & $0.042  $ & $0.048 $ & $0.050 $\\
						&	& & & $0.042 $  & $0.023 $ & $ 0.025 $ & $ 0.057 $ & $0.034 $ & $0.037 $ & $ 0.104 $ & $ 0.055$ & $0.066 $\\
						&	& & $\gamma_3 $ & $0.010 $  & $0.113 $ & $0.113  $ & $ 0.042 $ & $ 0.166$ & $ 0.168$ & $0.090  $ & $0.276 $ & $0.284 $\\
						&	& & & $0.039 $  & $0.111 $ & $0.111  $ & $ 0.060 $ & $0.152 $ & $0.156 $ & $ 0.108 $ & $0.250 $ & $ 0.262$\\
						&	& & $\gamma_4 $ & $0.012 $  & $0.110 $ & $ 0.110 $ & $ -0.021 $ & $ 0.131$ & $ 0.131$ & $-0.047  $ & $0.220 $ & $0.223 $\\
						&	& & & $0.014 $  & $ 0.111$ & $0.111  $ & $ -0.018 $ & $0.117 $ & $ 0.118$ & $-0.020  $ & $0.183 $ & $0.183 $\\
					\end{tabular}
			}	}
		\end{table}
		
		Simulations show that, for not large sample size, the new method performs better than \texttt{smcure} for estimation of $\gamma_0$, mostly because of a smaller variance.
		As  the sample size increases, they tend {to behave quite similarly.} 
		On  the other hand, both methods give almost the same estimates for $\beta_0$ and $\Lambda$. 
		The most favorable situation for our method is when there is little censoring among uncured observations and 
		the censored uncured observations are in the region of covariates that corresponds to higher cure rate. This comes from the fact that the nonparametric estimator in \eqref{def:hat_pi} takes larger values when the  product has more terms equal to one. This should not be a problem when we expect that subjects with high probability of being cured correspond to longer survival times, meaning that it is more probable for them to be censored compared to those with small cure probability and shorter survival times.
		
		\begin{table}
			\caption{	\label{tab:results4}Bias, variance and MSE of $\hat\gamma$ for \texttt{smcure} (second rows) and our approach (first rows) in Model 4.}
			\centering
			\scalebox{0.85}{
				\fbox{
					\begin{tabular}{ccccrrrrrrrrr}
						&	& & & & & & & & & && \\[-8pt]
						&	&&&\multicolumn{3}{c}{Cens. level 1}&\multicolumn{3}{c}{Cens. level 2}&\multicolumn{3}{c}{Cens. level 3}\\
						Mod.&	n&  scen. & Par. &  Bias & Var. & MSE & Bias & Var. & MSE & Bias & Var. & MSE\\[2pt]
						\hline
						&	& & & & & & & & & && \\[-8pt]
						4&	$200$ & $1 $ & $\gamma_1 $ & $ 0.041$  & $ 0.157$ & $0.159   $ & $ 0.016 $ & $ 0.187 $ & $0.188  $ & $-0.010 $ & $ 0.210 $ & $0.210 $\\
						&	& &  &   $0.077 $  & $0.178 $ & $ 0.184  $ & $ 0.096 $ & $0.228  $ & $ 0.238 $ & $ 0.127$ & $0.285  $ & $0.301 $\\
						&	& & $\gamma_2 $ &   $ -0.017$  & $0.039 $ & $ 0.039  $ & $ -0.019 $ & $ 0.042 $ & $ 0.042 $ & $ -0.015$ & $0.049  $ & $ 0.049$\\
						&	& &  &   $-0.090 $  & $0.052 $ & $ 0.060  $ & $ -0.125 $ & $  0.069$ & $ 0.085 $ & $ -0.164$ & $ 0.108 $ & $ 0.135$\\
						&	& & $\gamma_3 $   & $ -0.245$  & $0.159 $ & $0.219   $ & $ -0.281 $ & $  0.165$ & $ 0.244 $ & $ -0.355$ & $0.179  $ & $0.305 $\\
						& & &  & $0.064 $  & $0.244 $ & $ 0.249  $ & $ 0.084 $ & $0.304  $ & $ 0.311 $ & $ 0.114$ & $ 0.395 $ & $0.408 $\\
						&	& & $\gamma_4 $   & $-0.068 $  & $0.331 $ & $ 0.336  $ & $-0.162  $ & $ 0.385 $ & $ 0.411 $ & $ -0.285$ & $ 0.443 $ & $0.524 $\\
						& & &  & $0.171 $  & $0.401 $ & $ 0.430 $ & $ 0.241 $ & $0.561  $ & $ 0.619 $ & $ 0.314$ & $ 0.842 $ & $ 0.941$\\
						&	& & $\gamma_5 $   & $-0.095 $  & $0.301 $ & $ 0.310  $ & $ -0.234 $ & $0.349  $ & $ 0.404 $ & $ -0.366$ & $ 0.371 $ & $ 0.505$\\
						& & &  & $ 0.106$  & $0.363 $ & $ 0.375  $ & $ 0.143 $ & $0.509  $ & $0.529  $ & $ 0.177$ & $ 0.693 $ & $ 0.724$\\
						\cline{3-13}
						&	& & & & & & & & & && \\[-8pt]
						&	& $3 $ &$\gamma_1 $ & $ -0.044$  & $0.079 $ & $  0.081 $ & $-0.079  $ & $  0.095$ & $ 0.101 $ & $-0.148 $ & $  0.132$ & $0.154 $\\
						& &  &  & $0.000 $  & $ 0.079$ & $ 0.079  $ & $ 0.003 $ & $ 0.096 $ & $ 0.096 $ & $0.009 $ & $ 0.141 $ & $ 0.141$\\
						&	& & $\gamma_2 $ &   $ 0.018$  & $0.007 $ & $  0.008 $ & $ 0.024 $ & $  0.008$ & $0.009  $ & $0.041 $ & $ 0.010 $ & $0.012 $\\
						& & &  & $0.015 $  & $0.008 $ & $ 0.008  $ & $ 0.017 $ & $ 0.009 $ & $ 0.010 $ & $0.028 $ & $ 0.013 $ & $0.014 $\\
						&	& & $\gamma_3 $ &   $0.034 $  & $0.066 $ & $  0.067 $ & $ 0.046 $ & $ 0.073 $ & $ 0.075 $ & $0.067 $ & $0.087  $ & $0.091 $\\
						& & &  & $-0.025 $  & $0.080 $ & $   0.080$  & $ -0.033 $ & $ 0.091 $ & $ 0.092 $ & $ -0.041$ & $0.120  $ & $ 0.122$\\
						&	& & $\gamma_4 $ &   $0.041 $  & $0.102 $ & $  0.104 $ & $  0.054$ & $ 0.125 $ & $0.128  $ & $0.082 $ & $ 0.166 $ & $0.173 $\\
						& & &  & $0.022 $  & $0.100 $ & $0.101   $ & $ 0.023 $ & $ 0.126 $ & $ 0.126 $ & $0.026 $ & $0.179  $ & $ 0.180$\\
						&	& & $\gamma_5 $ &   $-0.031 $  & $ 0.103$ & $ 0.104  $ & $ -0.034 $ & $0.120  $ & $ 0.121 $ & $ -0.054$ & $0.159  $ & $0.162 $\\
						& & &  &$ -0.016$  & $0.099 $ & $  0.099 $ & $ -0.013 $ & $ 0.115 $ & $ 0.115 $ & $ -0.018$ & $ 0.150 $ & $ 0.151$\\
						\cline{2-13}
						&	& & & & & & & & & && \\[-8pt]
						&	$400$ & $1 $ & $\gamma_1 $ & $ 0.013$  & $ 0.067$ & $ 0.067  $ & $ 0.015 $ & $ 0.079 $ & $0.080  $ & $ 0.005$ & $ 0.097 $ & $ 0.097$\\
						& &  &  & $0.024 $  & $0.067 $ & $   0.068$ & $ 0.037 $ & $  0.080$ & $0.082  $ & $ 0.043$ & $ 0.101 $ & $0.103 $\\
						&	& & $\gamma_2 $ &   $ -0.001$  & $ 0.017$ & $  0.017 $ & $  -0.003$ & $ 0.020 $ & $ 0.020 $ & $-0.007 $ & $ 0.023 $ & $0.023 $\\
						& & &  & $-0.042 $  & $0.020 $ & $   0.021$ & $ -0.055 $ & $0.025  $ & $ 0.028 $ & $-0.079 $ & $  0.034$ & $0.041 $\\
						&	& & $\gamma_3 $ &   $-0.229 $  & $0.089 $ & $  0.141 $ & $ -0.207 $ & $ 0.090 $ & $ 0.133 $ & $ -0.275$ & $ 0.102 $ & $0.178 $\\
						& & &  & $0.046 $  & $0.107 $ & $ 0.109  $ & $  0.061$ & $ 0.137 $ & $ 0.141 $ & $ 0.066$ & $ 0.162 $ & $ 0.166$\\
						&	& & $\gamma_4 $ &   $ -0.063$  & $ 0.161$ & $ 0.165  $ & $ -0.143 $ & $ 0.175 $ & $  0.196$ & $-0.222 $ & $ 0.215 $ & $ 0.265$\\
						& & &  & $ 0.085$  & $ 0.176$ & $  0.183 $ & $ 0.107 $ & $ 0.222 $ & $ 0.234 $ & $ 0.145$ & $0.318  $ & $0.339 $\\
						&	& & $\gamma_5 $ &  $ -0.075$  & $0.146 $ & $ 0.151  $ & $-0.192  $ & $ 0.177 $ & $  0.214$ & $-0.299 $ & $ 0.199 $ & $ 0.289$\\
						& & &  & $0.043 $  & $0.157 $ & $ 0.159  $ & $ 0.049 $ & $ 0.194 $ & $0.196  $ & $ 0.060$ & $ 0.253 $ & $ 0.257$\\
						\cline{3-13}
						&	& & & & & & & & & && \\[-8pt]
						&	& $3 $ & $\gamma_1 $ & $-0.024 $  & $ 0.038$ & $ 0.039  $ & $ -0.038 $ & $ 0.047 $ & $0.048  $ & $-0.092 $ & $0.064  $ & $0.073 $\\
						& &  &  & $-0.003 $  & $0.036 $ & $ 0.036  $ & $ 0.002 $ & $ 0.044 $ & $ 0.044 $ & $0.004 $ & $ 0.060 $ & $ 0.060$\\
						&	& & $\gamma_2 $ &   $ 0.006$  & $ 0.003$ & $ 0.003  $ & $0.010  $ & $ 0.004 $ & $ 0.004 $ & $ 0.019$ & $  0.005$ & $0.006 $\\
						& & &  & $0.005 $  & $ 0.004$ & $ 0.004  $ & $ 0.005 $ & $0.004  $ & $0.004  $ & $0.006 $ & $  0.006$ & $0.006 $\\
						&	& & $\gamma_3 $ &   $ 0.028$  & $0.033 $ & $  0.034 $ & $0.045  $ & $0.038  $ & $ 0.040 $ & $0.065 $ & $ 0.045 $ & $0.049 $\\
						& & &  & $ -0.010$  & $ 0.036$ & $ 0.036  $ & $ -0.008 $ & $ 0.042 $ & $ 0.042 $ & $ -0.008$ & $ 0.052 $ & $0.052 $\\
						&	& & $\gamma_4 $ &  $ 0.021$  & $ 0.052$ & $  0.053 $ & $0.032  $ & $ 0.062 $ & $ 0.063 $ & $0.060 $ & $ 0.088 $ & $ 0.092$\\
						& & &  & $0.015 $  & $ 0.050$ & $  0.050 $ & $ 0.016 $ & $0.060  $ & $ 0.060 $ & $0.019 $ & $ 0.083 $ & $ 0.083$\\
						&	& & $\gamma_5 $ &   $ -0.024$  & $0.049 $ & $ 0.050  $ & $  -0.041$ & $ 0.059 $ & $ 0.061 $ & $ -0.048$ & $ 0.077 $ & $ 0.079$\\
						& & &  & $-0.017 $  & $0.049 $ & $ 0.049  $ & $ -0.022 $ & $  0.056$ & $  0.057$ & $-0.022 $ & $ 0.069 $ & $0.069 $\\
					\end{tabular}
			}	}
		\end{table}
		
		This is indeed the case in Model 1 and we observe that our approach outperforms \texttt{smcure} in all the scenarios. The difference between the two is more marked when $n$ is small and the absolute value of the $\gamma$ coefficient is larger. In Model 2, the situation is more difficult because censoring depends on the covariate in such a way that, the non-cured subjects have the same probability of being censored independently of their cure probability. However, for the first two scenarios the new method is still superior. The third scenario is more problematic because  
		the cure probability drops very fast from almost one to almost zero, resulting in a large fraction of uncured observations with almost zero cure probability. The presence of censoring in this region leads to overestimation of the cure rate. If we would take $\beta_C=0.1$ (meaning larger probability of being censored for higher cure rate), then the new approach is significantly superior (see  Table \ref{tab:2_4} for $n=400$ and scenario 3). In Model 3,  complications arise because of the presence of different covariates for the incidence and latency. Hence, subjects with higher cure rate might correspond to shorter survival times. As a result, the previous problem might still happen and its effects are more visible for large sample size and large censoring rate. Finally, Model 4 suggests that, even though the assumptions in Section \ref{sec:asymptotics} were shown to be satisfied only for one continuous covariate, the method  {could be applied in more general cases.}
		We noticed that, when a continuous covariate affects only the incidence and not the latency, the bandwidth selected by the \texttt{np} package is often very large, meaning that it fails to capture the effect of this covariate on the conditional distribution function. In those cases, we truncate the selected bandwidth from above at $2$. Note that the bandwidth is chosen for standardized covariates so the truncation level can be fixed regardless of the distribution of the covariate. We decided to truncate at $2$ since it seems to be a kind of boundary for a `reasonable' bandwidth {with standardized covariates} (we do not want to externally affect chosen bandwidths smaller than $2$ but we only replace extremely large values by $2$). However, even when reasonable, the \texttt{np} bandwidth for $X_2$ seems to be larger than it should, resulting in more bias in the estimator of $\gamma_3$. Nevertheless, in terms of mean squared error, the method performs well  for not large sample size. If $X_2$ would affect also the latency, the selected bandwidth would be more adequate and there would be no bias problems.
		
		\begin{table}
			\caption{\label{tab:2_4}Bias, variance and MSE of $\hat\gamma$ and $\hat\beta$ for \texttt{smcure}  and our approach in Model 2, scenario 3 when $\beta_C=0.1$ and $n=400$.}
			\centering
			\scalebox{0.85}{
				\fbox{
					\begin{tabular}{crrrrrr}
						& \multicolumn{3}{c}{}  & \multicolumn{3}{c}{} \\[-8pt]
						&	 \multicolumn{3}{c}{\texttt{smcure} package}   & \multicolumn{3}{c}{Our approach}\\
						Parameter	& Bias & Var & MSE & Bias &Var & MSE\\[2pt]
						\cline{1-7}
						& & & & & & \\[-8pt]
						$\gamma_1$ & $0.014 $ & $0.123 $ & $0.123$   & $-0.058 $ & $ 0.103$ & $0.106$\\
						$\gamma_2$ & $0.418 $ & $1.243 $ & $1.418 $   & $-0.535 $ & $0.652 $ & $0.937 $\\
						$\beta$ & $0.001$ & $0.025$ & $0.025 $  & $0.001 $ & $0.027 $ & $ 0.027$\\
			\end{tabular}}}
		\end{table}
		
		To conclude,  the new approach seems to perform significantly better than \texttt{smcure} when  the sample size is not large and the fraction of censored observations is not much higher than the expected cure proportion. In other situations, both methods are comparable. However, one has to be more careful when there is no reason to expect that the censored subjects correspond to higher cure probabilities.
		
		{In the previous settings, we truncated the event times at $\tau_0$ in such a way that condition \eqref{eqn:jump_cond} is satisfied  but in practice
				it is unlikely to observe event times at $\tau_0$. Next, we consider one additional model for which condition \eqref{eqn:jump_cond} is not satisfied. The covariates and the parameters are as in Model 3 described above, but the event times are  generated from a Weibull distribution on $[0,\tau_0]$ with $\tau_0=15$, i.e.
		\[
		S_u(t|z)=\frac{\exp\left\{-\mu t^\rho\exp(\beta'z)\right\}-\exp\left\{-\mu\tau_0^\rho\exp(\beta'z)\right\}}{1-\exp\left\{-\mu\tau_0^\rho\exp(\beta'z)\right\}}
		\]
		The censoring times are exponentially distributed  as in Model 3 and truncated at $\tau=20$. Results for sample size $n=200$ and three censoring levels are shown in Table~\ref{tab:nojump}. Compared to Model 3 above, we observe that, when condition \eqref{eqn:jump_cond} is not satisfied, presmoothing is even more superior than the smcure estimator. 
				}
			\begin{table}[h!]
			\caption{	\label{tab:nojump}Bias, variance and MSE of $\hat\gamma$ for \texttt{smcure} (second rows) and our approach (first rows) in Model 3 without condition \eqref{eqn:jump_cond}.}
			\centering
			\scalebox{0.85}{
				\fbox{
					\begin{tabular}{crrrrrrrrr}
					 & & & & & & & && \\[-8pt]
					&\multicolumn{3}{c}{Cens. level 1}&\multicolumn{3}{c}{Cens. level 2}&\multicolumn{3}{c}{Cens. level 3}\\
						 Par. &  Bias & Var. & MSE & Bias & Var. & MSE & Bias & Var. & MSE\\[2pt]
						\hline
					 & & & & & & & && \\[-8pt]
					  $\gamma_1 $ & $ 0.015$  & $0.152 $ & $0.152  $ & $0.000  $ & $0.196 $ & $0.196$ & $ -0.032 $ & $0.246 $ & $0.247 $\\
						 & $ 0.017$  & $0.150 $ & $0.151  $ & $0.027  $ & $0.193 $ & $ 0.194$ & $0.035  $ & $0.260 $ & $0.262 $\\
					 $\gamma_2 $ & $-0.054 $  & $0.044 $ & $ 0.047 $ & $ -0.077 $ & $0.052 $ & $ 0.058$ & $-0.109  $ & $0.064 $ & $ 0.076$\\
					 & $-0.085$  & $ 0.050$ & $ 0.057 $ & $ -0.119 $ & $0.069 $ & $0.083 $ & $-0.171  $ & $0.101 $ & $0.130$\\
					 $\gamma_3 $ & $0.087 $  & $0.379 $ & $ 0.386 $ & $ 0.073 $ & $ 0.450$ & $ 0.456$ & $0.045  $ & $0.578 $ & $ 0.580$\\
						& $0.197 $  & $ 0.423$ & $0.462  $ & $ 0.249 $ & $0.561 $ & $0.623 $ & $0.343  $ & $ 0.885$ & $1.002 $\\
					 $\gamma_4 $ & $ -0.010$  & $0.339 $ & $ 0.339 $ & $-0.106 $ & $ 0.373$ & $0.385 $ & $ -0.228 $ & $ 0.498$ & $ 0.550$\\
						& $0.125 $  & $0.364 $ & $ 0.380 $ & $ 0.156 $ & $0.523 $ & $0.548 $ & $0.260  $ & $ 1.513$ & $1.581 $\\
							\end{tabular}
			}	}
		\end{table}
		
	Finally we conclude with a remark about the computational aspect.	{The proposed approach is computationally more intensive than the MLE mainly because of the bandwidth selection through a cross-validation procedure. For example,  for one iteration in Model 3 with sample size $200$ and $400$, \texttt{smcure} computes the estimates in $0.7$ and $0.8$ seconds respectively, while the new approach requires $4.1$ and $23.5$ seconds (with a Core i7-8665U CPU desktop).  However, this seems still reasonable since the method is not meant for much larger sample sizes.}

		\section{Application: Melanoma study}
		\label{sec:application}
		To illustrate the practical performance, we apply the proposed estimation procedure to two medical datasets for patients with melanoma and compare the results with \texttt{smcure}. Melanoma is the third most common skin cancer type with overall incidence rate 21.8 out of 100,000 people in the US (Cancer statistics from the Center for Disease Control and Prevention) and according to the American Cancer Society, $6850$ people are expected to die of melanoma in $2020$. However, in the recent years,  the chances of survival for melanoma patients have increased due to earlier diagnosis and improvement of treatment and surgical techniques. The 5-year survival rates based on the stage of the cancer when it was first diagnosed are $92\%$ for localized, $65\%$ for regional and $25\%$ for distant stage. It is also known that this disease is more common among white people and the death rate is higher for men  than women.  Even though most melanoma patients are cured by their initial treatment, it is not possible to distinguish them from the uncured patients. Hence, accurately estimating the probability of being cured is important in order to plan further treatment and prevent recurrence of uncured patients. 
		
		\subsection{Eastern Cooperative Oncology Group (ECOG) Data}
		We use the melanoma data (ECOG phase III clinical trial e1684) from the \texttt{smcure} package \cite{cai_smcure} in order to compare our results with those of \texttt{smcure}.
		The purpose of this study was to evaluate the effect of treatment (high dose interferon alpha-2b regimen) as the postoperative adjuvant therapy. The event time is the time from initial treatment to recurrence of melanoma and three covariates have been considered: age (continuous variable centered to the mean), gender (0=male and 1=female) and treatment (0=control and 1=treatment).
		The data consists of $284$ observations (after deleting missing data) out of which $196$ had recurrence of the melanoma cancer (around  $30\%$ censoring).  The Kaplan-Meier curve is shown in Figure~\ref{fig:KM_melanoma_1}. 
		The parameter estimates, standard errors and corresponding p-values for the Wald test using our method and the \texttt{smcure} package are given in Table~\ref{tab:melanoma1}. Standard errors are computed through  $500$ {naive} bootstrap samples.
		
		\begin{figure}
			\centering
			\makebox{
				\includegraphics[width=0.48\linewidth]{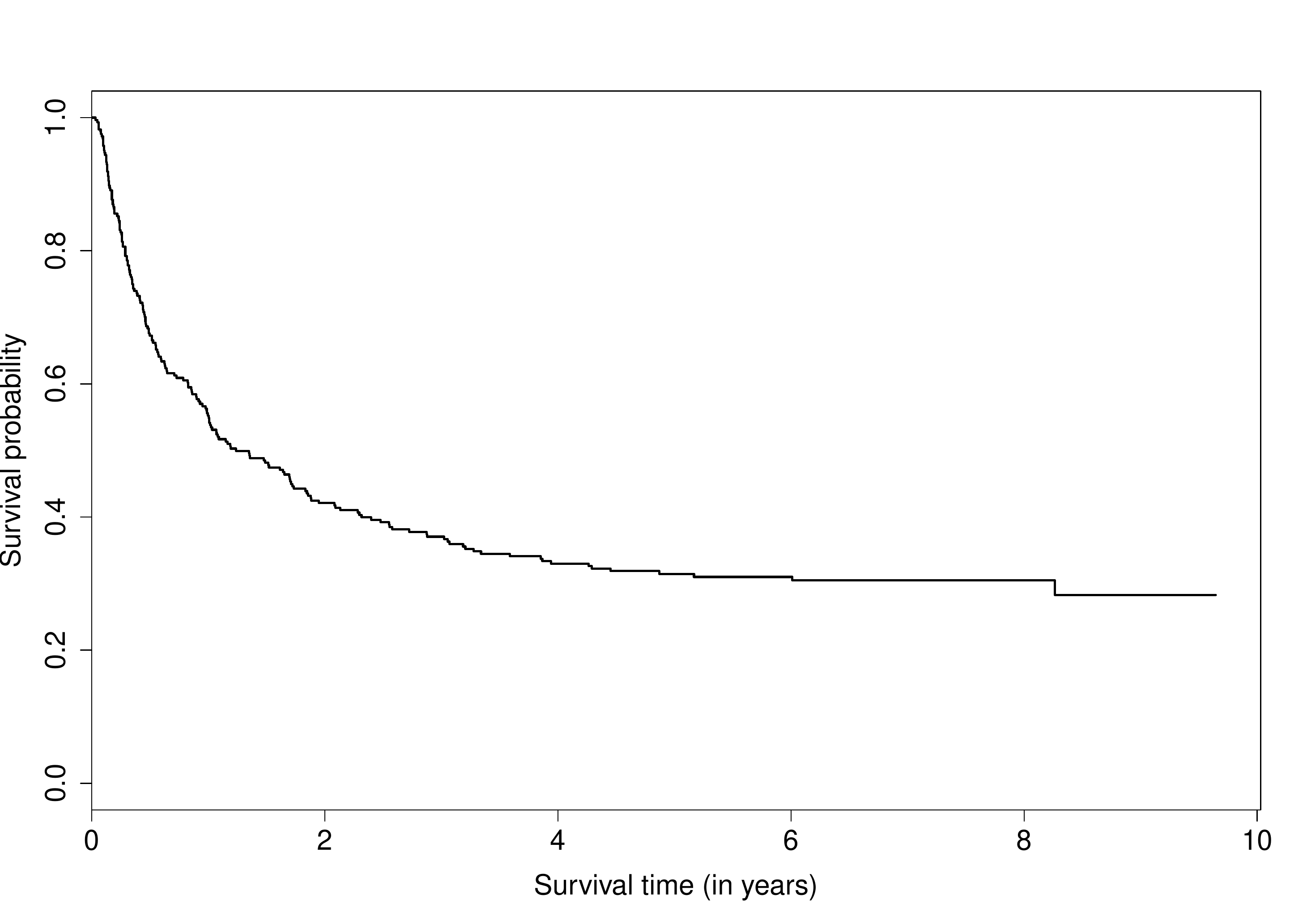}\qquad\includegraphics[width=0.48\linewidth]{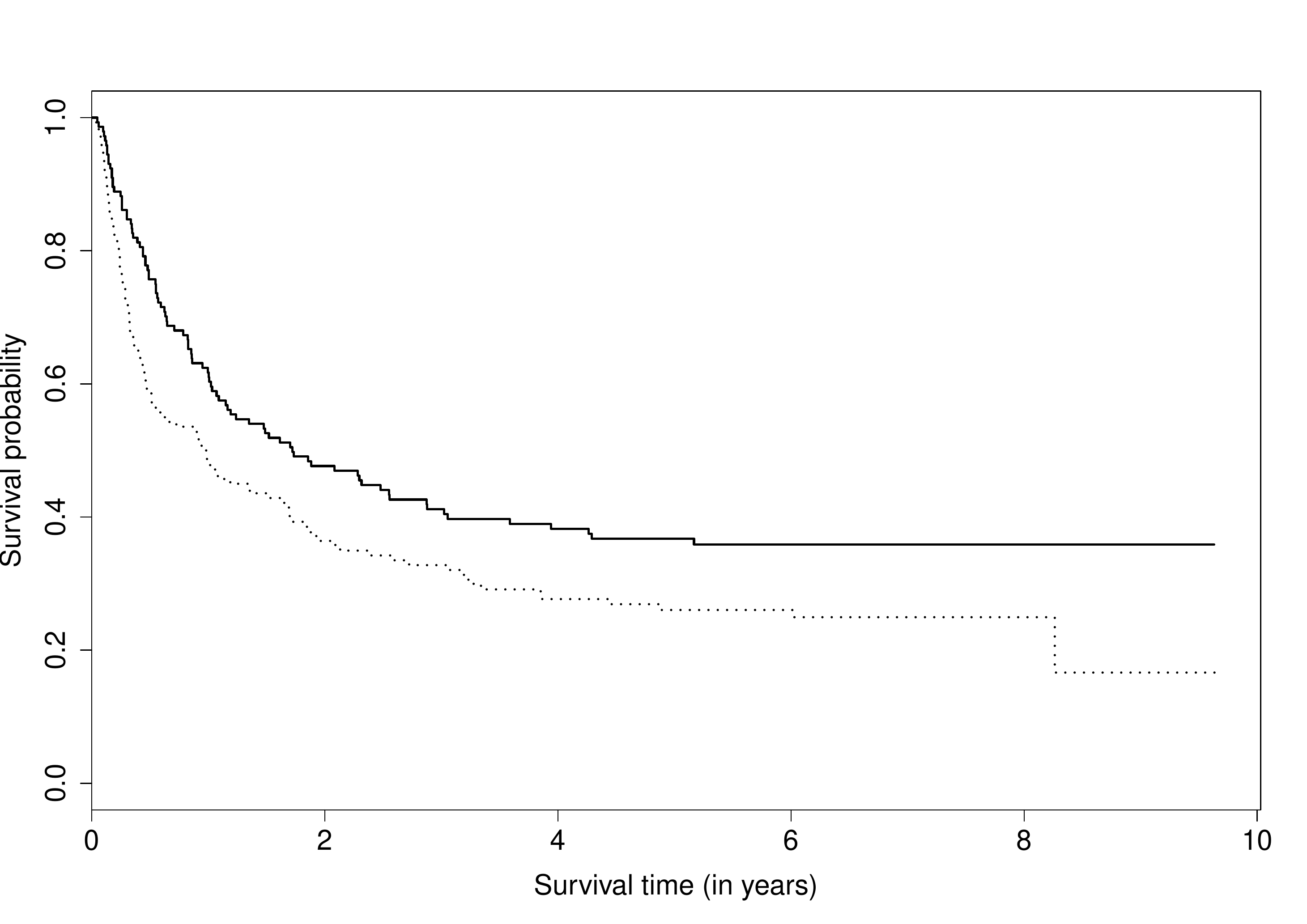}}
			\caption{\label{fig:KM_melanoma_1}Left panel: Kaplan-Meier survival curve for ECOG data. Right panel: Kaplan-Meier survival curves for the treatment group (solid) and control group (dotted) in the ECOG data.}
		\end{figure}
		
		\begin{table}
			\caption{\label{tab:melanoma1}Results for the incidence (logistic component) and the latency (Cox PH component) from the ECOG data.}
			\centering
			\scalebox{0.85}{
				\fbox{
					\begin{tabular}{c|crrrrrr}
						&& \multicolumn{3}{c}{}  & \multicolumn{3}{c}{} \\[-8pt]
						&	& \multicolumn{3}{c}{\texttt{smcure} package}   & \multicolumn{3}{c}{Our approach}\\
						&	Covariates	& Estimates & SE & p-value & Estimates & SE & p-value\\[2pt]
						\cline{2-8}
						&	& & & & & & \\[-8pt]
						\multirow{4}{*}{\STAB{\rotatebox[origin=c]{90}{incidence}}}	&Intercept & $1.3649 $ & $0.3457 $ & $8\cdot10^{-5} $   & $1.6697 $ & $ 0.3415$ & $10^{-6} $\\
						&	Age & $0.0203 $ & $0.0159 $ & $0.2029 $   & $0.0220 $ & $0.0104 $ & $0.0344 $\\
						&	Gender & $-0.0869 $ & $0.3347 $ & $0.7949 $  & $-0.3039 $ & $0.3448 $ & $ 0.3493$\\
						&	Treatment & $-0.5884 $ & $ 0.3706$ & $0.1123 $   & $-0.9345 $ & $0.3603 $ & $0.0095 $\\
						\hline
						&	& & & & & &   \\[-8pt]
						\multirow{3}{*}{\STAB{\rotatebox[origin=c]{90}{latency}}}	&	Age & $-0.0077 $ & $0.0069 $ & $0.2663 $  &  $-0.0079 $ & $0.0060 $ & $0.1861 $\\
						&	Gender & $0.0994 $ & $0.1932 $ & $0.6067 $   & $0.1240 $ & $0.1653 $ & $0.4534 $\\
						&	Treatment & $-0.1535 $ & $0.1715 $ & $0.3707 $  & $-0.0947 $ & $ 0.1692$ & $0.5756 $\\
			\end{tabular}}}
		\end{table}
		
		We observe that, for both methods, the effects of the covariates
		have the same direction. Only the intercept was found significant for the incidence with \texttt{smcure}, while our method concludes that  also age and treatment are  significant. In particular, the probability of recurring melanoma is higher for the control group compared to the treatment group. This seems to be indeed the case if we look at the Kaplan Meier survival curves for the two groups in Figure~\ref{fig:KM_melanoma_1}.  On the other hand, both methods agree that none of the covariates is significant for the latency. 
		
		{To illustrate another advantage of the new approach, we also compute the maximum likelihood estimator with the  \texttt{smcure} package for different choices of the latency model. We see in Table~\ref{tab:melanoma3} that the estimators of the incidence component (and their significance) change depending on which variables are included in the latency. On the other hand, the new method does not suffer from this problem because it estimates the incidence independently of the latency.
			\begin{table}
			\caption{\label{tab:melanoma3}Results for the incidence (logistic component) and the latency (Cox PH component) from the ECOG data.}
			\centering
			\scalebox{0.85}{
				\fbox{
					\begin{tabular}{c|c|rrr|rrr|rrr}
						&& \multicolumn{3}{c}{}  & \multicolumn{3}{c}{}  & \multicolumn{3}{c}{}\\[-8pt]
						&	& \multicolumn{3}{c}{Model 1}    & \multicolumn{3}{c}{Model 2 }& \multicolumn{3}{c}{Model 3 }\\
						&	Covariates	& Estimates & SE & p-value & Estimates & SE & p-value	& Estimates & SE & p-value\\[2pt]
						\cline{2-11}
						&	& & & & & & &&&\\[-8pt]
						\multirow{4}{*}{\STAB{\rotatebox[origin=c]{90}{incidence}}}	&Intercept & $1.3507 $ & $0.3001 $ & $7\cdot10^{-6} $   & $1.4148 $ & $ 0.3213$ & $10^{-5} $& $1.4181$ & $ 0.3073$ & $4\cdot10^{-6} $\\
						&	Age & $0.0164$ & $0.0125 $ & $0.1905 $   & $0.0205 $ & $0.0154 $ & $0.1803 $& $0.0209 $ & $ 0.0146$ & $0.1528 $\\
						&	Gender & $-0.0265 $ & $0.3113 $ & $0.9320 $  & $-0.0673 $ & $0.3352$ & $ 0.8407$& $-0.0222$ & $ 0.3130$ & $0.9432 $\\
						&	Treatment & $-0.6060$ & $ 0.3509$ & $0.0842 $   & $-0.6773 $ & $0.3223 $ & $0.0415 $& $-0.6913$ & $ 0.3439$ & $0.0444 $\\
						\hline
						&	& & & & & &  &&& \\[-8pt]
						\multirow{3}{*}{\STAB{\rotatebox[origin=c]{90}{latency}}}	&	Age &  &  &  &  $-0.0074 $ & $0.0066 $ & $0.2568 $& $-0.0073 $ & $ 0.0064$ & $0.2579 $\\
						&	Gender &&  &    & $0.0789 $ & $0.1863 $ & $0.6719$&  & & \\
						&	Treatment & $-0.1324 $ & $0.1561 $ & $0.3963$  &  &  & &  &  &\\
			\end{tabular}}}
		\end{table}
	}
		\subsection{Surveillance, Epidemiology and End Results (SEER) database}
		
		The SEER database collects cancer incidence data from population-based cancer registries in US. These data consist of patient demographic characteristics, primary tumor site, tumor morphology, stage at diagnosis, length of follow up and vital status. We select the database `Incidence - SEER 18 Regs Research Data' and extract the melanoma cancer data for the county of San Francisco in California during the period $2004-2015$. We consider only patients with stage at diagnosis: localized, regional and distant and exclude those with unknown or zero follow-up time  and restrict the study to white people because of the very small number of cases from other races. The event time is death because of melanoma.  This cohort consists of $1445$ melanoma cases out of which $596$ are female and $849$ male. The age ranges from $11 $ to $101$ years old, the follow-up from $1$ to $155$ months. For most of the patients the cancer has been diagnosed at early stage (localized), while for $101$ of them the stage at diagnosis is `regional' and only for $42$ it is `distant'. We aim at evaluating how age, gender and stage at diagnosis affect the survival  of melanoma patients in this cohort. The use of cure models is justified from the presence of a long plateau containing around $20\%$ of the observations (see the Kaplan-Meier curve in Figure \ref{fig:KM_melanoma_2}). Moreover, the Kaplan-Meier curves depending on gender  and stage at diagnosis  in Figure~\ref{fig:KM_melanoma_2} confirm that  gender and stage affect the cure rate. 
		
		\begin{figure}
			\begin{center}	\makebox{
					{\includegraphics[width=0.48\linewidth]{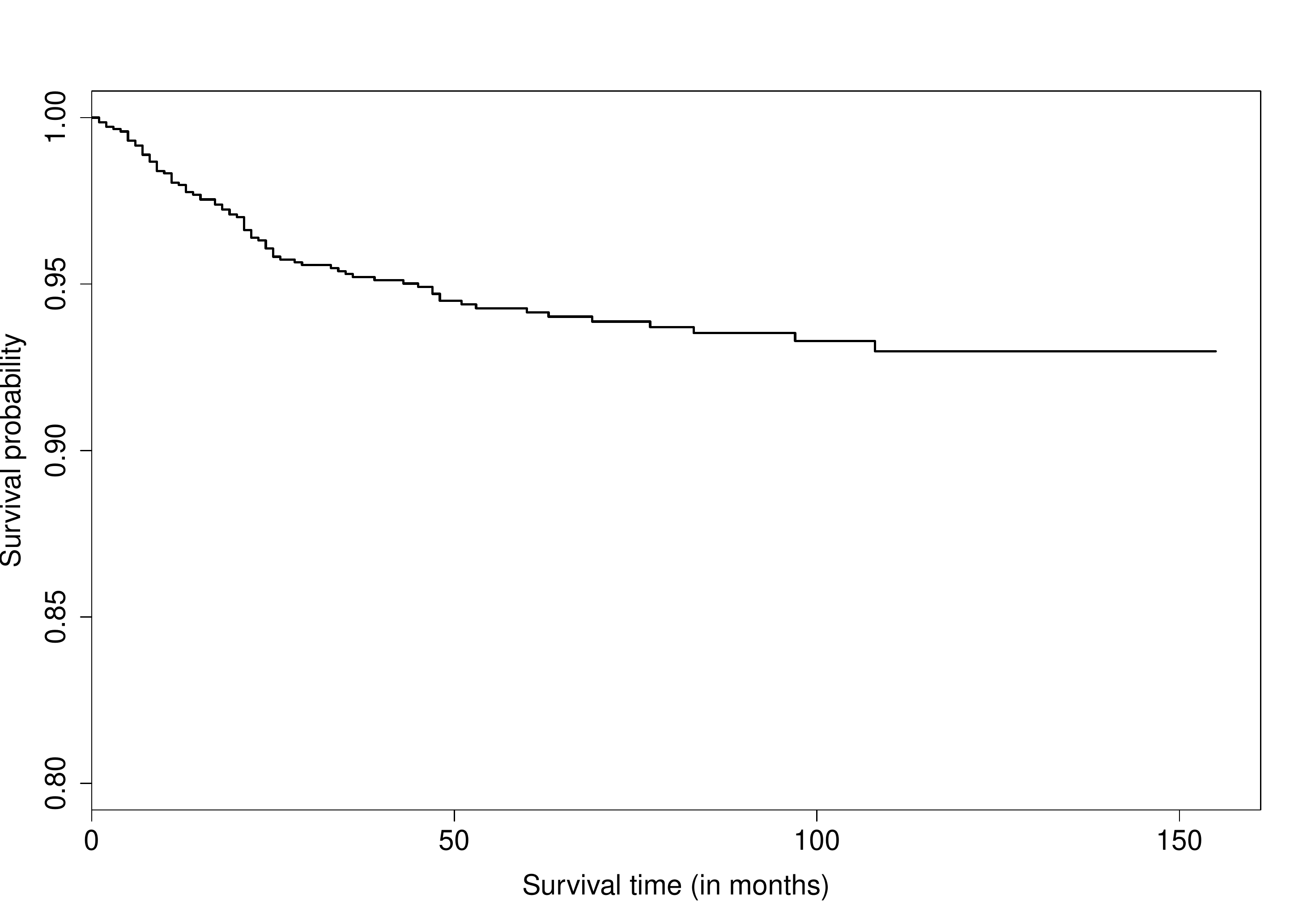}}
			}	\end{center}
			\makebox{
				{\includegraphics[width=0.48\linewidth]{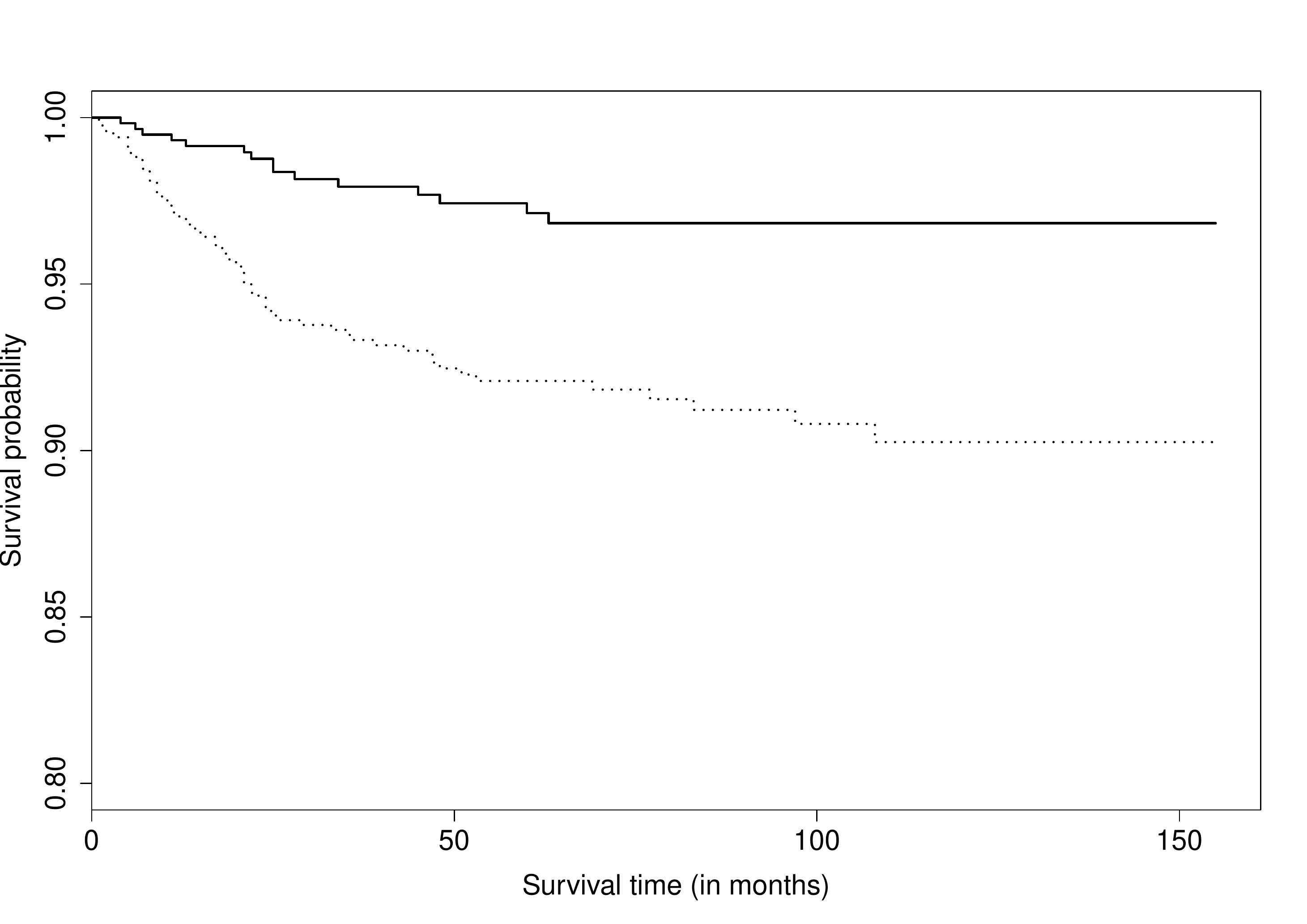}\qquad 	\includegraphics[width=0.48\linewidth]{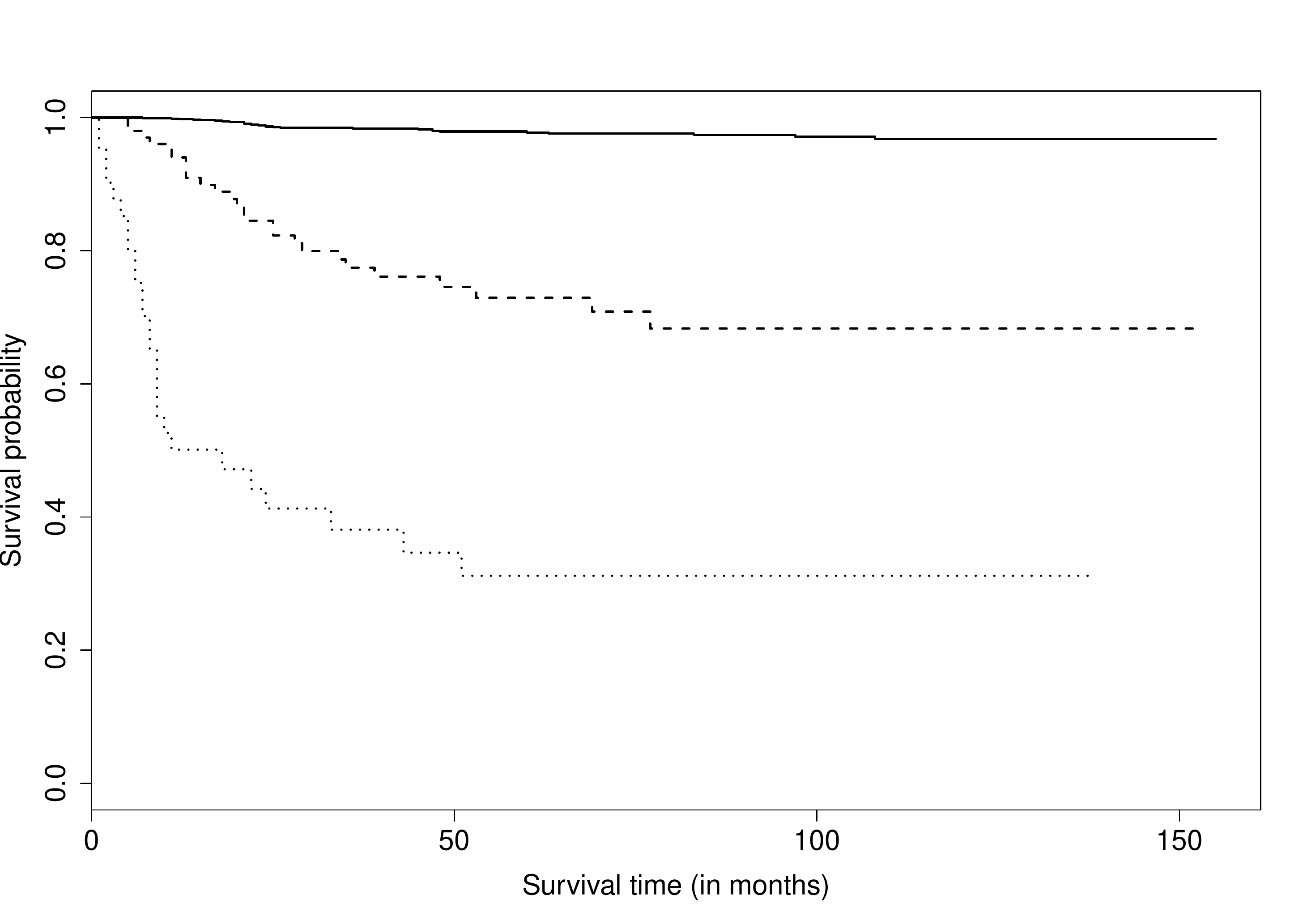}}}
			\caption{\label{fig:KM_melanoma_2}Upper panel: Kaplan-Meier survival curves for SEER data. Left panel: group division based on gender, females (solid) and males (dotted). Right panel: group division based on cancer stage at diagnosis, localized (solid), regional (dashed) and distant (dotted).}
		\end{figure}
		
		We checked the fit of the logistic model by comparing it with the single-index mixture cure model proposed in \cite{AKL19} through the prediction error of the incidence. More precisely, as in \cite{AKL19}, we divide the data into a training set and a test set of size $964$ and $481$ respectively. Using the training set, we estimate the logistic/Cox model and the single-index/Cox model. Afterwards, we compute the prediction error in the test set given by
		\[
		PE=-\sum_{j=1}^{481}\left\{\hat{W}_j\log [1-\hat\pi(X_j^{\text{test}})]+(1-\hat{W}_j)\log \hat\pi(X_j^{\text{test}})\right\}
		\]
		where $\hat\pi(X_j^{\text{test}})$ and $\hat{W_j}$ are the predicted cure probability and the predicted weight for the $j$th observation in the test set, computed based on the parameter estimates (and the link function for the single-index model) in the training set. {More precisely, for the logistic/Cox model we have $\hat\pi(X_j^{\text{test}})=\phi(\hat\gamma_n,X_j^{\text{test}})$ and 
			\[
			\hat{W}_j=\Delta_j^{\text{test}}+(1-\Delta_j^{\text{test}})\frac{\hat\pi(X_j^{\text{test}})\exp\left(-\hat\Lambda_n(Y_j^{\text{test}})e^{\hat\beta'_nZ_j^{\text{test}}}\right)}{1-\hat\pi(X_j^{\text{test}})+\hat\pi(X_j^{\text{test}})\exp\left(-\hat\Lambda_n(Y_j^{\text{test}})e^{\hat\beta'_nZ_j^{\text{test}}}\right)}
			\]
			where $\hat\gamma_n$, $\hat\beta_n$ and $\hat\Lambda_n$ are  the estimated parameters and the estimated hazard function in the training set. For the single-index/Cox model, the only difference is that $\hat\pi(X_j^{\text{test}})=\hat{g}_n(\hat\gamma_n,X_j^{\text{test}})$ where $\hat{g}_n$ is the estimated link function as in \cite{AKL19}. The weights $\hat W_j$ correspond to the conditional expectation of the cure status $B$ given the observations. } 
		We find that the prediction error for the logistic model is $98.53$, whereas for the single-index model it is $156.55$. This means that the logistic model performs better. 
		
		\begin{table}
			\caption{\label{tab:melanoma2}
				Results for the incidence (logistic component) and the latency (Cox PH component) from the  SEER data.}
			\centering
			\scalebox{0.85}{
				\fbox{
					\begin{tabular}{c|crrrrrr}
						&& \multicolumn{3}{c}{}  & \multicolumn{3}{c}{} \\[-8pt]
						&	& \multicolumn{3}{c}{\texttt{smcure} package}   & \multicolumn{3}{c}{Our approach}\\
						&	Covariates	& Estimates & SE & p-value & Estimates & SE & p-value\\[2pt]
						\cline{2-8}
						&	& & & & & & \\[-8pt]
						\multirow{5}{*}{\STAB{\rotatebox[origin=c]{90}{incidence}}}&	Intercept & $-4.2071 $ & $0.3817 $ & $0 $  & $-4.2436 $ & $0.3980 $ & $0 $ \\
						&	Age & $0.0304 $ & $0.0122 $ & $0.0124 $  & $0.0328 $ & $0.0172 $ & $ 0.0565$ \\
						&	Gender & $ 1.1318$ & $0.4211 $ & $0.0072 $  & $1.2341 $ & $0.4792 $ & $0.010 $ \\
						&	$S_1$& $2.6738 $ & $0.3702 $ & $5\cdot 10^{-13} $  & $ 2.4474$ & $0.4247 $ & $ 8\cdot 10^{-9}$ \\
						&	$S_2$ & $4.0763 $ & $0.5067 $ & $8\cdot 10^{-16} $  & $3.9426 $ & $0.4536 $ & $ 0$ \\
						\hline
						&	& & & & & &  \\[-8pt]
						\multirow{4}{*}{\STAB{\rotatebox[origin=c]{90}{latency}}}&	Age & $-0.0139 $ & $ 0.0098$ & $ 0.1577$  & $-0.0143 $ & $0.0106 $ & $0.1756 $ \\
						&	Gender & $-0.0549 $ & $0.4065 $ & $ 0.8925$  & $-0.0871 $ & $0.3687 $ & $0.8131 $ \\
						&	$S_1$& $0.5176 $ & $0.3993 $ & $0.1949$  & $0.6130 $ & $0.3971 $ & $0.1226 $ \\
						&	$S_2$ & $1.8039 $ & $0.4529 $ & $7\cdot 10^{-5} $  & $1.8623 $ & $0.5072 $ & $0.0002 $ \\
					\end{tabular}
			}}
		\end{table}
		
		The parameter estimates, standard errors and corresponding p-values for the Wald test using our method and the \texttt{smcure} package are given in Table \ref{tab:melanoma2}. Standard errors  are computed  with $500$ {naive} bootstrap samples. The covariate stage is classified using two dummy Bernoulli variables $S_1$ and $S_2$, where $S_1=1$ indicates the regional stage and
		$S_2=1$ indicates the distant stage.
		The gender variable is equal to zero for females and one for males. We observe that both methods agree that all the considered covariates are significant for the incidence (with age being a borderline case for our approach).  
		For the latency, only being in the distant stage is found significant with both methods. 
		Moreover, again the effects of all the covariates on the latency and incidence have the same direction for both methods.

		\section{Discussion} \label{sec:disc}
		In this paper we proposed a new estimation procedure for the mixture cure model with a parametric form of the incidence (for example logistic) and {any semiparametric model for the latency. We investigated more in detail the logistic/Cox model given its practical relevance}. Instead of using an iterative algorithm for dealing with the unknown cure status, this method relies on a preliminary nonparametric estimator of the cure probabilities.  We showed through simulations that the new approach improves upon the classical maximum likelihood estimator implemented in the package \texttt{smcure}, mainly for smaller sample sizes.  For the latency, both methods behave similarly. Hence, it is of particular interest in situations in which the focus is on the estimation of cure probabilities.  {The real data application on the ECOG clinical trial also showed that the improvement in estimation can be meaningful in practice and help detecting significant effects.}
		
		{ The proposed method has the advantage of direct estimation of the incidence component, without relying on the latency, which makes it robust to latency model misspecification. On the contrary, the \texttt{smcure} estimator  strongly depends on the choice of the variables for the latency and could be biased for a misspecified Cox  model. Hence, for practical reason, confronting the estimators obtained with the two methods is valuable for confirming the results or obtaining new insights.  }
		{From the theoretical point of view, unlike the standard maximum likelihood estimation,  presmoothing allows us  to obtain consistency and asymptotic normality without requiring the `unrealistic' assumption that the distribution of uncured subjects  has a positive mass at the end point of the support.  } 
		
		{It might be argued that since the proposed method relies on smoothing, it is more complex and the results can be affected by the choice of the kernel function or the bandwidth. Our purpose was to show that the user doesn't have to think about this  because the standard choices proposed in this paper perform well in practice. In addition, since the final estimator is a parametric one and the kernel estimator is only a preliminary step of the procedure, the results would anyway be more stable with respect to these choices than in a nonparametric setting. The main challenge this method faces is extension to many continuous covariates for the incidence. 		}
		We did not deeply investigate such situations since, in that case, multiple bandwidths have to be chosen, which can be more problematic and computationally intensive. However, our approach based on presmoothing allows to efficiently handle these situations if the estimator $\hat\pi$ is constructed in a more adequate way. {One possibility would be to construct the estimator assuming a single-index model for the latency, which is reasonable since the final goal is a parametric estimator. With this approach one can avoid the choice of multiple bandwidths and perform the estimation as in the one dimensional case.} However, this problem will be addressed by future research.   {In this regard, even though considering only one continuous covariate might seem restrictive in practice, the proposed procedure constitutes the basis for further developments of new estimators for general dimension scenarios that do not require multidimensional smoothing.}

		\section{Appendix}
		\label{sec:appendix}
		\subsection{Proof of Theorem \ref{theo:asymptotic_normality2}}
		\label{sec:appendix_Cox}
		We obtain the asymptotic normality of $\hat\Lambda_n$, $\hat\beta_n$ following the proof of Theorem 3 in \cite{Lu2008}. In order to work with a one-dimensional submodel, {for $d$ in a neighbourhood of the origin,} let   $\Lambda_d(t)=\int_0^t\{1+dh_1(s)\}\dd \hat\Lambda_n(s)$ and $\beta_d=dh_2+\hat\beta_n$, where $h_1$ is a function of bounded variation on $[0,{\tau_0}]$ and $h_2$ is a $q$-dimensional real vector. Let $\hat{S}_n(\hat\Lambda_n,\hat\beta_n)(h_1,h_2)$ denote the derivative of ${\hat{l}_n}
		(\Lambda_d,\beta_d)$ (defined in \eqref{def:hat_l_cox}) with respect to $d$ and evaluated at $d=0$. We have 
		\[
		\begin{split}
			\hat{S}_n(\hat\Lambda_n,\hat\beta_n)(h_1,h_2)
			&=\frac{1}{n}\sum_{i=1}^n \Delta_i\1_{\{Y_i<\tau_0\}} \left[h_1(Y_i)+h'_2Z_i\right]\\
			&\quad-\frac{1}{n}\sum_{i=1}^n \left\{\Delta_i+(1-\Delta_i){\1_{\{Y_i\leq\tau_0\}}}g_i(Y_i,\hat\Lambda_n,\hat\beta_n,\hat\gamma_n)\right\}\\
			&\qquad\qquad\qquad\quad\times\left\{e^{\hat\beta'_nZ_i}\int_0^{Y_i}h_1(s)\dd\hat\Lambda_n(s)+e^{\hat\beta'_nZ_i}\hat\Lambda_n(Y_i)h'_2Z_i
			\right\},
		\end{split}
		\]
		where $g_j$ is defined in \eqref{def:g_j} and $\hat\gamma_n$ is the maximizer of \eqref{def:hat_L_gamma}.  Let  $\Upsilon_n=(\hat\Lambda_n,\hat\beta_n)$ and $\Upsilon_0=(\Lambda_0,\beta_0)$. Furthermore, denote  by $S$ the asymptotic version of $\hat{S}_n$:
		\[
		\begin{split}
			S(\Lambda,\beta)(h_1,h_2)&=\E\bigg[ \Delta\1_{\{Y<\tau_0\}} \{h_1(Y)+h'_2Z\}-\left\{\Delta+(1-\Delta){\1_{\{Y\leq\tau_0\}}}g(Y,\Lambda,\beta,\gamma_0)\right\}\\
			&\qquad\qquad\qquad\qquad\left. \times\left\{e^{\beta'Z}\int_0^{Y}h_1(s)\dd\Lambda(s)+e^{\beta'Z}\Lambda(Y)h'_2Z
			\right\}\right].
		\end{split}
		\]
		We have $\hat{S}_n(\Upsilon_n)=0$ and $S(\Upsilon_0)=0$.  
		The score function $S_n$ and $S$ are respectively  a random  and a deterministic map from $\Xi$ to $l^{\infty}(\mathcal{H}_\mathfrak{m})$ (the space of bounded real-valued functions on $\mathcal{H}_\mathfrak{m}$), where
		\[
		\Xi=\left\{(\Lambda,\beta)\,:\,\sup_{h\in\mathcal{H}_\mathfrak{m}}\left|\int_0^{{\tau_0}}h_1(s)\dd\Lambda(s)+h'_2\beta\right|<\infty\right\}
		\] 
		and  $\mathcal{H}_\mathfrak{m}=\{h\in\mathcal{H}\,:\, \Vert h\Vert_{H}\leq \mathfrak{m}\}$. Here $\Vert h\Vert_{H}=\Vert h_1\Vert_v+\Vert h_2\Vert_{L_1}$, $\Vert h_2\Vert_{L_1}=\sum_{j=1}^q|h_{2,j}|$, $\Vert h_1\Vert_v=|h_1(0)|+V_0^{{\tau_0}}(h_1)$ and $V_0^{{\tau_0}}(h_1)$ denotes the total variation of $h_1$ on $[0,{\tau_0}]$. 
		This means that $S_n$ is a random variable defined in the abstract probability space $(\Omega,\mathcal{F},\p)$  (where the random vector $(B,T_0,C,X,Z)$ is defined) with values in the space of bounded functions $\Xi\mapsto l^\infty(\mathcal{H}_\mathfrak{m})$ with respect to the supremum norm. The latter one is a Banach space equipped with the Borel $\sigma$-field.
		
		We need to show that conditions 1-4 of Theorem 4 in \cite{Lu2008} (or Theorem 3.3.1 in \cite{VW96}) are satisfied. 
		The main difference of the function $S$ from the one in \cite{Lu2008} is that here $\gamma=\gamma_0$ fixed. We are only considering variation with respect to $\beta$ and not $\gamma$, so the components of $h$ that correspond to $\gamma$ are set to zero.
		However, conditions 2 and 3 of Theorem 4 in \cite{Lu2008}  for $S$ can be shown in the same way as in \cite{Lu2008}.  
		Details about conditions 1 and 4 can be found in the online Supplementary Material.
		\qed

		\subsection{Proof of Theorem \ref{theo:pi_hat}}
		The logistic model for the cure probability obviously satisfies assumptions (AN1) and  (AN3).  Let $\Pi$ be the space of  continuously differentiable functions $f$ from $\X$ to $[0,1]$ such that $\sup_{x\in\X}|f'(x)|\leq M$ and
		\[
		\sup_{x_1, x_2\in\X}\frac{|f'(x_1)-f'(x_2)|}{|x_1-x_2|^\xi}\leq M
		\]
		for some $M>0$ and $\xi\in(0,1]$.  If such space is equipped with the supremum norm, the covering numbers  satisfy 
		\[
		\log N\left(\epsilon,\Pi,\Vert\cdot\Vert_\infty\right)\leq K\frac{1}{\epsilon^{1/(1+\xi)}}
		\]
		for some constant $K>0$ independent of $\epsilon$ (see Theorem 2.7.1 in \cite{VW96}). Obviously, for $\epsilon>1$, $\log N(\epsilon,\Pi,\Vert\cdot\Vert_\infty)=0$. Hence, assumption (AN2) is satisfied.   
		It remains to check (AN4). Recall that the estimator of the cure probability $\hat\pi(x)$ is the value at time $\tau_0$ of the Beran estimator $\hat{S}(t|x)$, while $\pi_0(x)=S(\tau_0|x)$. Moreover, by assumption  \eqref{eqn:CI2}, we have $\inf_x H((\tau_0,\infty)|x)>0$. From Proposition 4.1 and 4.2 in \cite{KA99} it follows that
		\[
		\begin{aligned}
			\sup_x\left|\hat{\pi}(x)-\pi_0(x)\right|&=O\left((nb)^{-1/2}(\log b^{-1})^{1/2}\right)\quad a.s., \\
			\sup_x\left|\hat{\pi}'(x)-\pi_0'(x)\right|&=O\left((nb^3)^{-1/2}(\log b^{-1})^{1/2}\right)\quad a.s.
		\end{aligned}
		\]
		and
		\[
		\sup_{x_1, x_2\in\X}\frac{|\hat\pi'(x_1)-\pi'_0(x_1)-\hat\pi'(x_2)+\pi'_0(x_2)|}{|x_1-x_2|^{\xi/2}}=O\left(\left(nb^{3+\xi}\right)^{-1/2}(\log b^{-1})^{1/2}\right)\quad a.s.,
		\]
		where $\xi$ is as in assumption (C1). Since $\pi_0$ is twice continuously differentiable, from  assumption (C1) it follows that $\hat\pi$ satisfies (i,ii) of (AN4).
		From Theorem 3.2 of \cite{DA2002} (with $T=\tau_0$) we have $\hat{\pi}(x)-\pi_0(x)=\frac{1}{n}\sum_{i=1}^n A_i(x)+R_n(x)$, where
		\begin{equation}
			\label{eqn:iid_pi}
			\begin{split}
				A_i(x)=-\frac{1-\phi(\gamma_0,x)}{f_X(x)}\frac{1}{b}k\left(\frac{x-X_i}{b}\right)
				\left\{\frac{\Delta_i\1_{\{Y_i\leq\tau_0\}}}{H([Y_i,\infty)|x)}-\int_0^{Y_i\wedge \tau_0}\frac{H_1(ds|x)}{H^2([s,\infty)|x)}\right\}
			\end{split}
		\end{equation}
		and $\sup_x|R_n(x)|=O\left((nb)^{-3/4}(\log n)^{3/4}\right)$ a.s..  
		Hence
		\[
		\begin{split}
			&\E^*\left[\left(\hat\pi(X)-\pi_0(X)\right)\left(\frac{1}{\phi(\gamma_0,X)}+\frac{1}{1-\phi(\gamma_0,X)}\right)\nabla_\gamma\phi(\gamma_0,X)\right]\\
			&=\frac{1}{n}\sum_{i=1}^n\E^*\left[A_i(x)\left(\frac{1}{\phi(\gamma_0,X)}+\frac{1}{1-\phi(\gamma_0,X)}\right)\nabla_\gamma\phi(\gamma_0,X)\right] \\
			&\quad+\E^*\left[R_n(X)\left(\frac{1}{\phi(\gamma_0,X)}+\frac{1}{1-\phi(\gamma_0,X)}\right)\nabla_\gamma\phi(\gamma_0,X)\right].
		\end{split}
		\] 
		The second term on the right hand side of the previous display is bounded by $c\sup_x|R_n(x)|=o(n^{-1/2})$ for some $c>0$ because of assumptions (C1) and (AN1). Furthermore, from (AN1) and (AC4) and a Taylor expansion, it follows that the generic element of the sum in the first term is equal to 
		\[
		\begin{split}
			&-\int_{\X}\frac{1}{b}k\left(\frac{x-X_i}{b}\right)
			\left\{\frac{\Delta_i\1_{\{Y_i\leq\tau_0\}}}{H([Y_i,\infty)|x)}-\int_0^{Y_i\wedge \tau_0}\frac{H_1(ds|x)}{H^2([s,\infty)|x)}\right\}\frac{1}{\phi(\gamma_0,x)}\nabla_\gamma\phi(\gamma_0,x)\,\dd x\\
			&=-
			\left\{\frac{\Delta_i\1_{\{Y_i\leq\tau_0\}}}{H([Y_i,\infty)|X_i)}-\int_0^{Y_i\wedge \tau_0}\frac{H_1(ds|X_i)}{H^2([s,\infty)|X_i)}\right\}
			\frac{1}{\phi(\gamma_0,X_i)}\nabla_\gamma\phi(\gamma_0,X_i)+O(b^2).
		\end{split}
		\]
		Since because of (C1) we have $O(b^2)=o(n^{-1/2})$, (AN4-iii) holds with 
		\[
		\Psi(Y,\Delta,X)=-\left\{\frac{\Delta\1_{\{Y\leq\tau_0\}}}{H([Y,\infty)|X)}-\int_0^{Y\wedge \tau_0}\frac{H_1(ds|X)}{H^2([s,\infty)|X)}\right\}\frac{1}{\phi(\gamma_0,X)}\nabla_\gamma\phi(\gamma_0,X).
		\]
		\qed
		\section*{Acknowledgements}
		I. Van Keilegom and E. Musta acknowledge  financial support from the European Research Council (2016-2021, Horizon 2020 and grant agreement 694409). For the simulations we used the infrastructure of the Flemish
		Supercomputer Center (VSC).
		
		\section*{Supplement}
		Supporting information may be found in the online appendix.
		{It contains the proofs of Theorems 1, 2 and 3} in Section~\ref{sec:asymptotics} and  additional simulation results.

\bibliographystyle{imsart-number} 
\bibliography{cure_models_all}       


\clearpage
\setcounter{equation}{0}
\renewcommand{\theequation}{A\arabic{equation}} \setcounter{table}{0}
\renewcommand{\thetable}{A\arabic{table}}  
\setcounter{page}{1}

\newpage

\medskip

\newpage

\thispagestyle{plain}
\centerline{\LARGE\bf A presmoothing approach}
\smallskip
\centerline{\LARGE\bf for estimation in semiparametric mixture cure models}
\bigskip
\centerline{\LARGE Supplementary Material}
\bigskip
\centerline{\Large Eni Musta$^*$, Valentin Patilea$^\dagger$ and Ingrid Van Keilegom$^*$}
\medskip
\centerline{\it $^*$KU Leuven, $^\dagger$Ensai }
\bigskip

\appendix

This supplement is organized as follows.
Appendix A contains technical lemmas and proofs.
 Appendix B collects additional simulation results, that were omitted from the main paper due to page limits.

\section{Technical lemmas and proofs}
\label{sec:proofs1}
\begin{lemma}
	\label{lemma:BP}
Let $B$ be a Bernoulli random variable, $T_0$ a nonnegative random variable and let $T=T_0$ if $B=1$ and $T=\infty$ if $B=0$. Let $X$ and $Z$ be two real-valued random vectors. Then
\[
T_0\perp (C,X)\mid Z\quad \text{ and } B\perp (C,T_0,Z)\mid X\quad\Longrightarrow\quad T\perp C\mid (X,Z)
\]
\end{lemma}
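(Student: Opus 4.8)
The plan is to show that the joint conditional survival function of $(T,C)$ given $(X,Z)$ factorizes; since $C<\infty$ almost surely this determines the conditional law of $(T,C)$ and hence yields the claimed conditional independence. The only structural fact I would use about $T$ is that it is a deterministic function of $(B,T_0)$: for every finite $t\ge 0$ one has $\{T>t\}=\{B=0\}\cup\{B=1,\,T_0>t\}$, because $T=\infty$ whenever $B=0$.

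First I would record two consequences of the hypotheses obtained by the weak-union and decomposition rules for conditional independence. From $B\perp (C,T_0,Z)\mid X$ I get $B\perp (C,T_0)\mid (X,Z)$, together with $\p(B=b\mid X,Z)=\p(B=b\mid X)$ (the latter since $B\perp Z\mid X$ by decomposition). From $T_0\perp (C,X)\mid Z$ I get, more precisely, $\p(T_0>t\mid C,X,Z)=\p(T_0>t\mid Z)$, and also $\p(T_0>t\mid X,Z)=\p(T_0>t\mid Z)$.

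Then, for finite $t,c\ge 0$, I would split along $\{B=0\}$ and $\{B=1\}$:
\begin{align*}
\p(T>t,\,C>c\mid X,Z)
&=\p(B=0,\,C>c\mid X,Z)+\p(B=1,\,T_0>t,\,C>c\mid X,Z).
\end{align*}
Using $B\perp (C,T_0)\mid (X,Z)$, the first term equals $\p(B=0\mid X)\,\p(C>c\mid X,Z)$, and the second equals $\p(B=1\mid X)\,\p(T_0>t,\,C>c\mid X,Z)$. Conditioning on $C$ inside the remaining factor and invoking $\p(T_0>t\mid C,X,Z)=\p(T_0>t\mid Z)$ turns it into $\p(T_0>t\mid Z)\,\p(C>c\mid X,Z)$. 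Collecting terms, the whole expression becomes
\[
\p(C>c\mid X,Z)\Big\{\p(B=0\mid X)+\p(B=1\mid X)\,\p(T_0>t\mid Z)\Big\}.
\]
Applying the same splitting directly to $\p(T>t\mid X,Z)$ (now using $B\perp T_0\mid (X,Z)$ and $\p(T_0>t\mid X,Z)=\p(T_0>t\mid Z)$) identifies the brace with $\p(T>t\mid X,Z)$, giving the desired product form.

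The computation is essentially bookkeeping; the point to be careful about is the repeated, correct application of the weak-union and decomposition rules so that each conditional independence is invoked under the right conditioning set, and the treatment of the atom at infinity. For the latter, letting $t\to\infty$ shows the factorization persists, with $\p(T>t\mid X,Z)\to\p(B=0\mid X)=\p(T=\infty\mid X,Z)$, so the joint survival function factorizes on all of $[0,\infty]\times[0,\infty)$. Since $C$ is finite almost surely, this determines the joint law of $(T,C)$ given $(X,Z)$ and therefore establishes $T\perp C\mid (X,Z)$.
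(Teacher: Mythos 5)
Your proof is correct, but it proceeds by a different mechanism than the paper's. The paper works entirely at the level of conditional-independence calculus: it rewrites each hypothesis as an equivalent conjunction of simpler conditional independencies (e.g.\ $B\perp (C,T_0,Z)\mid X$ iff $B\perp C\mid (X,Z,T_0)$, $B\perp T_0\mid (X,Z)$ and $B\perp Z\mid X$), then recombines $C\perp B\mid(X,Z,T_0)$ with $C\perp T_0\mid(X,Z)$ via the contraction property to obtain the stronger statement $(B,T_0)\perp C\mid (X,Z)$, and finishes in one line because $T$ is a measurable function of $(B,T_0)$. You instead unroll that contraction step by hand: you split $\{T>t\}$ along $\{B=0\}$ and $\{B=1,T_0>t\}$ and verify directly that the joint conditional survival function of $(T,C)$ given $(X,Z)$ factorizes, recognizing the brace as $\p(T>t\mid X,Z)$. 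The trade-off is real: the paper's route is shorter, never touches survival functions or the atom at infinity, and delivers the stronger intermediate fact $(B,T_0)\perp C\mid(X,Z)$ for free; your route is more elementary and self-contained (no appeal to the contraction axiom or to preservation of conditional independence under measurable maps), and it incidentally produces the mixture representation $\p(T>t\mid X,Z)=\p(B=0\mid X)+\p(B=1\mid X)\p(T_0>t\mid Z)$ that underlies the cure model itself, but it obliges you to close the argument with a monotone-class/$\pi$-system step to pass from factorization of survival functions on finite $(t,c)$ to full conditional independence on $[0,\infty]$ — a step you gesture at ("this determines the joint law") and which does go through, since the sets $\{T>t\}$, $t<\infty$, form a generating $\pi$-system of the Borel $\sigma$-field of $[0,\infty]$; note that the almost-sure finiteness of $C$ is not actually needed for this.
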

\begin{proof}
This lemma is similar to Lemma 8.1 in \cite{BP}. We provide the proof for completeness. By elementary properties of conditional independence we have
\[
B\perp (C,T_0,Z)\mid X \quad\Longleftrightarrow \quad B\perp C\mid (X,Z,T_0)\quad\text{and}\quad B\perp T_0\mid (X,Z)\quad\text{and}\quad B\perp Z\mid X
\]
and
\[
T_0\perp (C,X)\mid Z \quad\Longleftrightarrow \quad T_0\perp C\mid (X,Z)\quad\text{and}\quad T_0\perp X\mid Z.
\]
Then,
\[
C\perp B\mid (X,Z,T_0) \quad\text{and}\quad C\perp T_0\mid (X,Z)\quad\Longleftrightarrow \quad (B,T_0)\perp C\mid (X,Z)
\]
The result follows from the fact that $T$ is completely determined by $B$ and $T_0$.
\end{proof}


\subsection{Identifiability with restricted survival times}

For any $0<\tau^*\leq \tau_0$, let
$$
T_0^*= \min (T_0, \tau^*), \qquad  T^* = B T_0^*+ (1-B) \infty \qquad \text{and} \qquad  
C^* =  \min (C, \tau^*).
$$
Moreover, let 
$$
Y^* = \min (T^*, C^*)\quad \text{ and }
\Delta^* = \1_{\{T^*\leq C^*\}}.
$$

A first aspect to study is the identifiability of the true values of the parameter when $(Y,\Delta)$ is replaced by $(Y^*,\Delta^*)$. Here, identifiability means that the true values $\beta_0$ and $\Lambda_0$ of the parameters maximize the expectation of the criterion maximized to obtain the estimators. This issue is addressed in Lemma \ref{theo:identif}. Let us introduce some additional notation: for any $0<\tau^* \leq  \tau_0$ and $\Lambda\in\mathcal H$, $\Lambda_{|\tau^*}$ is defined as 
\begin{equation}\label{def:lambda_trunc}
\Lambda_{|\tau^*}(t) = \Lambda (t), \;\;\forall t\in [0,\tau^*) \quad \text{ and } \quad \Delta\Lambda_{|\tau^*}(\tau^*)=\Lambda_{|\tau^*}(\{\tau^*\})=1.
\end{equation} 
The dominating measure for the model of $T_0$ changes with such a stopped cumulative hazard measure to allow for a positive mass at $\tau^*$. Then, $\ell$ defined in \eqref{def:hat_l_cox_g} becomes  
\begin{multline}\label{el_star}
\ell  (y, d, x, z ; \beta, \Lambda_{|\tau^*}, \gamma) 
=   \1_{\{y< \tau^*\}} \left[d \log f_u(y|z;\beta,\Lambda) \right. \\
\left. + (1-d)\log\left\{1-\phi(\gamma,x)+\phi(\gamma,x)S_u(  y |z;\beta,\Lambda)\right\} \right]\\
+   \1_{\{y \geq  \tau^*\}} \left[ d \log S_u(\tau^*|z;\beta,\Lambda) + (1-d)\log\left\{1-\phi(\gamma,x)\right\}\right] .
\end{multline}

\begin{lemma}
\label{theo:identif}
Let  $0<\tau^* \leq \tau_0$.  
Assume that for any $\tilde \beta\in B$ and $\tilde \Lambda\in\mathcal H$,
\begin{equation}\label{eq:ident_T0}
S_u ( t | z; \tilde \beta , \tilde \Lambda_{|\tau^*}) = S_u ( t | z;\beta_0,\Lambda_{0|\tau^*}), \; \forall t \in [0, \tau^*) \quad \Longrightarrow \quad \tilde \beta=\beta_0 \;\text{ and } \; \tilde \Lambda_{|\tau^*} = \Lambda_{0|\tau^*}.
\end{equation}
 Then $(\beta_0,\Lambda_{0|\tau^*})$ is  the unique solution of 
\begin{equation}
\label{eq:ident_T0_b}
\max_{\beta\in B,\Lambda\in\mathcal H }\mathbb E \left[ \ell ( Y^*   ,\Delta^*,X,Z;\beta,\Lambda_{|\tau^*},\gamma_0) \right] .
\end{equation}
\end{lemma}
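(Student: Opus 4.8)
The plan is to identify the population criterion in \eqref{eq:ident_T0_b} with a Kullback--Leibler cross-entropy between the true conditional law of $(Y^*,\Delta^*)$ given $(X,Z)$ and the law implied by the model, so that the information inequality delivers the maximization and the hypothesis \eqref{eq:ident_T0} delivers uniqueness. First I would restore into the criterion the incidence term that was dropped when passing from the full log-likelihood to $\ell$. Because $\gamma=\gamma_0$ is held fixed, the objective $(\beta,\Lambda)\mapsto\E[\ell(Y^*,\Delta^*,X,Z;\beta,\Lambda_{|\tau^*},\gamma_0)]$ differs from $\E[\tilde\ell]$, where $\tilde\ell=\ell+\Delta^*\log\phi(\gamma_0,X)$ is the full conditional log-likelihood of $(Y^*,\Delta^*)$ given $(X,Z)$, only by the additive constant $\E[\Delta^*\log\phi(\gamma_0,X)]$, which does not depend on $(\beta,\Lambda)$. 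Hence the two objectives share the same maximizers and it is enough to argue with $\tilde\ell$.

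Next I would check that at $(\beta_0,\Lambda_{0|\tau^*},\gamma_0)$ the model likelihood $\exp(\tilde\ell)$ reproduces the true conditional law of $(Y^*,\Delta^*)$ given $(X,Z)$. This is a bookkeeping computation from $Y^*=\min(T^*,C^*)$ and $\Delta^*=\1_{\{T^*\le C^*\}}$: a cured subject contributes only the atom $\{Y^*=\tau^*,\Delta^*=0\}$ with conditional mass $1-\phi(\gamma_0,X)$; an uncured subject with $T_0\ge\tau^*$ contributes the atom $\{Y^*=\tau^*,\Delta^*=1\}$ with mass proportional to $\phi(\gamma_0,X)S_u(\tau^*|Z)$; and uncured subjects with $T_0<\tau^*$ produce the absolutely continuous pieces carrying $f_u(\cdot|Z;\beta_0,\Lambda_0)$ and $1-\phi(\gamma_0,X)+\phi(\gamma_0,X)S_u(\cdot|Z)$. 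The conditional independence $T\perp C\mid(X,Z)$ from \eqref{eqn:CI1} lets the censoring factor split off as the same unparametrized factor for every candidate $(\beta,\Lambda)$, while the unit jump that $\Lambda_{|\tau^*}$ carries at $\tau^*$ is precisely the dominating-measure modification that turns the atomic contributions into the terms $\log S_u(\tau^*|z)$ and $\log\{1-\phi(\gamma_0,x)\}$ appearing in \eqref{el_star}.

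With this identification, for almost every fixed $(x,z)$ the conditional difference $\E[\tilde\ell(\cdot;\beta,\Lambda_{|\tau^*},\gamma_0)\mid x,z]-\E[\tilde\ell(\cdot;\beta_0,\Lambda_{0|\tau^*},\gamma_0)\mid x,z]$ equals $-\,\mathrm{KL}$ of the true conditional law against the model law, which by Jensen's inequality is nonpositive and vanishes if and only if the two conditional laws coincide; integrating over $(X,Z)$ shows that $(\beta_0,\Lambda_{0|\tau^*})$ solves \eqref{eq:ident_T0_b}. For uniqueness I would examine the equality case: coincidence of the two laws forces, on the censored continuous piece $\{Y^*<\tau^*,\Delta^*=0\}$, the identity $1-\phi(\gamma_0,x)+\phi(\gamma_0,x)S_u(y|z;\beta,\Lambda)=1-\phi(\gamma_0,x)+\phi(\gamma_0,x)S_u(y|z;\beta_0,\Lambda_0)$ after cancelling the common censoring density; since $\phi(\gamma_0,x)>0$ by (AC4), this yields $S_u(\cdot|z;\beta,\Lambda)=S_u(\cdot|z;\beta_0,\Lambda_0)$ on $[0,\tau^*)$ for almost every $z$, and the hypothesis \eqref{eq:ident_T0} then gives $\beta=\beta_0$ and $\Lambda_{|\tau^*}=\Lambda_{0|\tau^*}$.

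The step I expect to be most delicate is the joint handling of the atom at $\tau^*$ and the stopped cumulative hazard: one must make sure that the masses the model assigns to $\{Y^*=\tau^*,\Delta^*=1\}$ and $\{Y^*=\tau^*,\Delta^*=0\}$ are treated consistently with $\Lambda_{|\tau^*}$, and that the cancellation of the censoring factor in the equality case is legitimate, which is where positivity of the conditional censoring density on $[0,\tau^*)$ is used to pass from equality of the observed-data sub-densities to equality of the latency survival functions.
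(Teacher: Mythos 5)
Your proposal is correct and takes essentially the same route as the paper: both arguments reduce \eqref{eq:ident_T0_b} to a Kullback--Leibler comparison by observing that, up to terms free of $(\beta,\Lambda)$ (the incidence term $\Delta^*\log\phi(\gamma_0,X)$ and the censoring factors), $\ell(\cdot\,;\beta,\Lambda_{|\tau^*},\gamma_0)$ is the conditional log-density of $(Y^*,\Delta^*)$ given $(X,Z)$ with respect to a dominating measure carrying atoms at $\tau^*$, so that $(\beta_0,\Lambda_{0|\tau^*})$ maximizes, and uniqueness follows by feeding the equality case of the information inequality into \eqref{eq:ident_T0}. The only divergence is how one checks that the true law of $(Y^*,\Delta^*)$ coincides with the model law at the stopped parameters: you do it by direct bookkeeping of the atoms and the continuous pieces (flagging, correctly, that positivity of the conditional censoring distribution on $[0,\tau^*)$ is what lets equality of observed sub-densities pass to equality of the latency survivals), whereas the paper derives the same matching, namely $\phi^*=\phi(\gamma_0,\cdot)$ and $F^*_u(\cdot|x,z)=F_u(\cdot|z;\beta_0,\Lambda_{0|\tau^*})$ on $[0,\tau^*)$ together with the atom at $\tau^*$, by applying the inversion formulae of Patilea and Van Keilegom to the sub-distributions $H^*_1,H^*_0$ --- a difference of presentation rather than of substance.
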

{Condition \eqref{eq:ident_T0} is a minimal requirement of identification of the true value of the parameters  in the model for the uncured subjects if the variable $T_0\wedge C$ was observed and only the events in a subset of the support of $T_0$ are considered.}  In the Cox PH model 
\eqref{eq:ident_T0} is guaranteed by the requirement that $Var(Z)$ has full rank.

\begin{proof}[Proof of Lemma \ref{theo:identif}]
First, let 
$$
H_k([0,t] |x,z) = \mathbb P (Y\leq t, { \Delta = k } | X=x,Z=z), \quad k\in\{0,1\}, \quad t\in[0,\infty),
$$
and let $H_k(dt |x,z) $ be the associated conditional measures. These conditional measures characterize the distribution of $(Y,\Delta)$ given $X=x$ and $Z=z$. By the model and independence assumptions, for any $t\geq 0$,
\begin{equation}\label{eq:inv1_0}
H_1(dt | x, z) = \phi(\gamma_0,x) F_C([t,\infty)|x,z) f_u(t|z;\beta_0,\Lambda_{0}) dt,
\end{equation}
and
\begin{equation}\label{eq:inv2_0}
 H_0(dt | x, z ) =  \{1-\phi(\gamma_0,x)+ \phi(\gamma_0,x) S_u(t|z;\beta_0,\Lambda_{0}) \}F_C(dt|x,z) .
\end{equation}
Following an usual notation abuse, herein we
treat $dt$ not just as the length of a small interval but also as the name of the interval itself. Note that up to additive terms which do not depend on the parameters $\beta,\Lambda$,
$$
(y,d)\mapsto d   \log f_u(y |z ;\beta_0,\Lambda_0) 
+ 
(1-d )\log\left\{1-\phi(\gamma_0,x)+\phi(\gamma_0,x)S_u(y|z;\beta_0,\Lambda_0)\right\},
$$
is the conditional log-density of $(Y,\Delta)$ given $X=x$ and $Z=z$. From this and Kullback information inequality one can deduce that the expectation of $\ell$ defined in \eqref{def:hat_l_cox_g} is maximized by $\beta_0,\Lambda_0$ and $\gamma_0$. 

Let $0<\tau^*\leq \tau_0$. Note that 
$$
 H_1([\tau^*,\tau_0] | x, z) = H_1([\tau^*,\infty) | x, z) = \phi(\gamma_0,x) \int_{[\tau^*,\tau_0]}  F_C([t,\infty)|x,z)f_u(t|z;\beta_0,\Lambda_{0}) dt,
$$
and
\begin{multline*}
H_0([\tau^*,\infty) | x, z)  = \phi(\gamma_0,x) \int_{[\tau^*,\tau_0]} S_u(t|z;\beta_0,\Lambda_{0})  F_C(dt|x,z) \\+\{1- \phi(\gamma_0,x)\}  F_C([\tau^*,\infty)|x,z) .
\end{multline*}
Moreover, 
\begin{multline*}
d(x,z;\tau^*) :=  \phi(\gamma_0,x) \int_{[\tau^*,\tau_0]}  F_C([t,\infty)|x,z)f_u(t|z;\beta_0,\Lambda_{0}) dt \\ +\phi(\gamma_0,x) \int_{[\tau^*,\tau_0]} S_u(t|z;\beta_0,\Lambda_{0})  F_C(dt|x,z)  \\ = \phi(\gamma_0,x) F_C([\tau^*,\infty) |x,z) S_u(\tau^*|z;\beta_0,\Lambda_{0})\\ = \mathbb P (T_0 \wedge C \geq \tau^*, B=1).
\end{multline*}
In the limit case of no cure, $d(x,z;\tau^*) =  H_1([\tau^*,\infty) | x, z)+ H_0([\tau^*,\infty) | x, z)$. 
By construction we have 
$
Y^* =   \min (Y, \tau^*),
$
and
$$
\mathbb P (Y^*=\tau^*, \Delta^* =1 | X=x, Z=z) = 
d(x,z;\tau^*).
$$

Next, let 
$$
H_k^*([0,t] |x,z) = \mathbb P (Y^*\leq t, { \Delta^* = k } | X=x,Z=z), \quad k\in\{0,1\}, \quad t\in[0,\infty),
$$
and let $H_k^*(dt |x,z) $ be the associated conditional measures. 
This means
for any $t\in[0,\tau^*)  $, 
\begin{equation*}
H^*_1(dt | x, z) = H_1(dt | x, z) \quad \text{and} \quad H^*_0(dt | x, z) = H_0(dt | x, z) .
\end{equation*}
Moreover, 
$$
H_1^*(\{\tau^*\} | x, z) = H_1^*([\tau^*,\infty) | x, z) = d(x,z;\tau^*),
$$
and
$$
H_0^*( \{\tau^*\} | x, z) =H_0^*([\tau^*,\tau_0] | x, z)  =  \{1- \phi(\gamma_0,x)\}  F_C([\tau^*,\infty)|x,z).
$$

Now, according to  the inversion formulae of \cite{PK}, without any reference to a model, one can solve the set of equations 
\begin{eqnarray}\label{sq2}
H^*_1(dt | x, z) &=& \phi^*(x,z) F^*_C([t,\infty)|x,z) F^*_u(dt|x,z),\notag \\
\\
H_0^*(dt | x, z ) &=& \{1-\phi^*(x,z)+ \phi^*(x,z) S^*_u(t|x,z) \}F^*_C(dt|x,z) ,\notag
\end{eqnarray}
where $F_u^* = 1-S_u^*$. Solving \eqref{sq2} for $F^*_C$, $S^*_u$ and $\phi^*$, the functional $S^*_u$ is a proper survival function which puts mass only on sets where  $H_1^*$ does. Note that solving the similar system with  $H_1,H_0$ instead of $H_1^*, H_0^*$, one gets the true $F_C$, $S_u$ and $\phi$. 
If $\Lambda_C^*$ denotes the cumulative hazard function associated to the solution $F^*_C$, then 
$$
\Lambda_C^*(dt | x,z) = \frac{H_0^*(dt | x, z )}{H^*_1((t,\infty)| x, z)+ H_0^*([t,\infty) | x, z )},\quad t\geq 0,
$$
and thus, by construction, we have 
$ F_C(dt|x,z) = F^*_C(dt|x,z)$  on  $[0,\tau^*)$, for any $x,z$. Then, by \eqref{eq:inv2_0} and the second equation in \eqref{sq2} we deduce 
$$
\phi^*(x,z) F^*_u(t|x,z) = \phi(\gamma_0,x)F_u(t|z;\beta_0,\Lambda_{0}),  \quad \forall t\in [0,\tau^*), \forall x,z.
$$ 
Next, taking into account that $S^*_u(t|x,z)= 0$, $\forall t\geq \tau^*$, $\forall x,z$, and integrating the second equation \eqref{sq2} on $[\tau^*,\infty)$, 
we obtain
\begin{multline*}
\{1-\phi^*(x,z)\}F^*_C([\tau^*,\infty)|x,z) = H_0^*( \{\tau^*\} | x, z ) = \{1-\phi(x,z)\}F_C([\tau^*,\infty)|x,z) .
\end{multline*}
Since $F^*_C([0,\tau^*)|x,z) = F_C([0,\tau^*)|x,z)$, we deduce that $\phi^*(x,z) =  \phi(\gamma_0,x)$
and thus 
\begin{equation}\label{za1}
F^*_u(t|x,z) = F_u(t|z;\beta_0,\Lambda_{0})= F_u(t|z;\beta_0,\Lambda_{0|\tau^*}),  \quad \forall t\in [0,\tau^*), \forall x,z.
\end{equation}
The second equality in the last display is by the construction of the survival function from the cumulative hazard function: only the values of $\Lambda_0$ on $[0,t]$ contribute to obtain $F_u(t|z;\beta_0,\Lambda_{0})$.
Since the inversion formula necessarily yields $F^*_u([0,\tau^*]|x,z) \equiv 1$, we deduce 
\begin{equation}\label{za2}
F^*_u(\{\tau^* \}|x,z)=S_u(\tau^*|z;\beta_0,\Lambda_{0})=S_u(\tau^*|z;\beta_0,\Lambda_{0|\tau^*}).
\end{equation} 

Finally, we can write 
\begin{multline*}
E \left[ \ell ( Y^*  ,\Delta^*,X,Z;\beta,\Lambda_{|\tau^*},\gamma_0) \right] \\
= \iiint 
\log \ell ( t  ,1,x,z;\beta,\Lambda_{|\tau^*},\gamma_0) 
H^*_1(dt | x, z) G(dx, dz)
\\ + \iiint  \log \ell ( t  ,0,x,z;\beta,\Lambda_{|\tau^*},\gamma_0)  H_0^*(dt | x, z)  G(dx, dz).
\end{multline*}
To obtain the identifiability result it remains to apply Kullback information inequality. More precisely, it suffices to {notice} that here, up to additive terms which do not depend on the parameters, {$\ell$ defined} in \eqref{el_star} considered with $\beta_0,\Lambda_{0|\tau^*}$ corresponds to the {log-density of  the conditional law of $(Y^*,\Delta^*)$ given $X=x$ and $Z=z$}. (Note that the dominated measure changed as we introduce jumps at $\tau^*$.) This follows from \eqref{za1} and \eqref{za2}. Thus $\beta_0,\Lambda_{0|\tau^*}$ is solution of the problem \eqref{eq:ident_T0_b}. The unicity of the solution is guaranteed by \eqref{eq:ident_T0}.
\end{proof}

\subsection{Consistency}

\begin{proof}[\textsc{Proof of Theorem~\ref{theo:consistency}.}]
	We follow the idea of \cite{scharfstein}. Since we are interested in almost sure convergence, we work with fixed realizations of the data, $\omega$ that will lie in a set of probability one. Let $\Omega$ be the abstract probability space where the random vector $(B,T_0,C,X,Z)$ is defined (for example we can take $\Omega=\{0,1\}\times[0,\tau_0]\times[0,\tau]\times\X\times\mathcal{Z}$ and $(B,T_0,C,X,Z)(\omega)=\omega$. Let $N\subset\Omega$ be a set of probability one $\p(N)=1$ and fix $\omega\in N$. We will show that each subsequence $\hat\gamma_{n_k}$ has a subsequence that converges to $\gamma_0$. As a  bounded sequence in $\R^p$, $\hat\gamma_{n_k}$ has a convergent subsequence $\hat\gamma_{m_k}\to\gamma^*$. It suffices to show that $\gamma^*=\gamma_0$. Since $\hat\gamma_{m_k}$ maximizes $\log \hat{L}_{m_k,1}$, we have
	\begin{equation}
	\label{eqn:consistency_1}
	\begin{split}
	0&\leq \frac{1}{m_k}\log \hat{L}_{m_k,1}(\hat\gamma_{m_k})-\frac{1}{m_k}\log \hat{L}_{m_k,1}(\gamma_0)\\
	&=\frac{1}{m_k}\sum_{i=1}^{m_k}\left[\left\{1-\hat\pi(X_i)\right\}\log\frac{\phi(\hat\gamma_{m_k},X_i)}{\phi(\gamma_0,X_i)}+\hat\pi(X_i)\log\frac{1-\phi(\hat\gamma_{m_k},X_i)}{1-\phi(\gamma_0,X_i)}\right]\\
	&=\frac{1}{m_k}\sum_{i=1}^{m_k}\left[\left\{1-\pi_0(X_i)\right\}\log\frac{\phi(\gamma^*,X_i)}{\phi(\gamma_0,X_i)}+\pi_0(X_i)\log\frac{1-\phi(\gamma^*,X_i)}{1-\phi(\gamma_0,X_i)}\right]+o(1)
	\end{split}
	\end{equation}
	if $N\subset\{\omega: \sup_x\left|\hat\pi(x)-\pi_0(x)\right|\to 0\}$. Note that the remainder term $o(1)$ in the previous display depends on $\omega$ and converges to zero as $\hat\pi$ converges to $\pi_0$. Next we will show that, for an appropriate choice of $N$,  the first term converges to 
	\begin{equation}
	\label{eqn:expectation_consistency}
	\E\left[\left\{1-\pi_0(X)\right\}\log\frac{\phi(\gamma^*,X)}{\phi(\gamma_0,X)}+\pi_0(x)\log\frac{1-\phi(\gamma^*,X)}{1-\phi(\gamma_0,X)}\right]
	\end{equation}
	where the expectation is taken with respect to $X$ and $\gamma^*\in\R^p$ (for a fixed $\omega$). Since here we are dealing with a simple parametric model, this convergence follows easily from the uniform law of large numbers. However, we follow a longer argument to explain the idea that will be used also in the proof of Theorem~\ref{theo:consistency2} (where the model is semiparametric). It is obvious, by the law of large numbers, that 
	\[
	\begin{split}
	&\frac{1}{m_k}\sum_{i=1}^{m_k}\left[\left\{1-\pi_0(X_i)\right\}\log\phi(\gamma_0,X_i)+\pi_0(X_i)\log\left(1-\phi(\gamma_0,X_i)\right)\right]\\
	&\to \E\left[\left\{1-\pi_0(X)\right\}\log \phi(\gamma_0,X)+\pi_0(x)\log\left(1-\phi(\gamma_0,X)\right)\right]\text{ a.s. }
	\end{split}
	\]
	and, at first sight it seems that the same holds when $\gamma_0$ is replaced by $\gamma^*$. However, the proof is more delicate because $\gamma^*$ depends on $\omega$ and thus also the event of probability one where  the strong law of large numbers holds for this average. To avoid this we consider a countable dense subset of $G$, $\{\tilde{\gamma}_l\}_{l\geq 1}$ (for example the subset for which all components of $\gamma$ are rational numbers). Now, consider the countable collection  of the probability one sets $\{N_l\}_{l\geq 1}$ where
	\[
	\begin{split}
	&\frac{1}{m_k}\sum_{i=1}^{m_k}\left[\left\{1-\pi_0(X_i)\right\}\log\phi(\tilde\gamma_l,X_i)+\pi_0(X_i)\log\left(1-\phi(\tilde\gamma_l,X_i)\right)\right]\\
	&\to \E\left[\left\{1-\pi_0(X)\right\}\log \phi(\tilde\gamma_l,X)+\pi_0(x)\log\left(1-\phi(\tilde\gamma_l,X)\right)\right]\quad\forall l\geq 1.
	\end{split}
	\]
	If $N\subseteq\left(\cap_{l\geq 1} N_l\right)$, we can write 
	\[
	\begin{split}
	&\left|\frac{1}{m_k}\sum_{i=1}^{m_k}\left[\left\{1-\pi_0(X_i)\right\}\log\phi(\gamma^*,X_i)+\pi_0(X_i)\log\left(1-\phi(\gamma^*,X_i)\right)\right]\right.\\
	&\quad-\E\left[\left\{1-\pi_0(X)\right\}\log \phi(\gamma^*,X)+\pi_0(x)\log\left(1-\phi(\gamma^*,X)\right)\right]\Bigg|\\
	&\leq \left|\frac{1}{m_k}\sum_{i=1}^{m_k}\left[\left\{1-\pi_0(X_i)\right\}\log\frac{\phi(\gamma^*,X_i)}{\phi(\tilde{\gamma}_l,X_i)}+\pi_0(X_i)\log\frac{\left(1-\phi(\gamma^*,X_i)\right)}{\left(1-\phi(\tilde\gamma_l,X_i)\right)}\right]\right|\\
	&\quad+\left|\frac{1}{m_k}\sum_{i=1}^{m_k}\left[\left\{1-\pi_0(X_i)\right\}\log\phi(\tilde\gamma_l,X_i)+\pi_0(X_i)\log\left(1-\phi(\tilde\gamma_l,X_i)\right)\right]\right.\\ 
	&\qquad-\E\left[\left\{1-\pi_0(X)\right\}\log \phi(\tilde\gamma_l,X)+\pi_0(x)\log\left(1-\phi(\tilde\gamma_l,X)\right)\right]\Bigg|\\
	&\quad+\left|\E\left[\left\{1-\pi_0(X)\right\}\log\frac{ \phi(\gamma^*,X)}{\phi(\tilde\gamma_l,X)}+\pi_0(x)\log\frac{1-\phi(\gamma^*,X)}{1-\phi(\tilde\gamma_l,X)}\right]\right|.
	\end{split}
	\]
	Since $\tilde\gamma_l$ can be taken arbitrarily close to $\gamma^*$, by properties of $\phi$ in assumptions (AC3)-(AC4), it can be easily derived that, for an appropriate choice of $\tilde\gamma_l$, the first and the third term on the right hand side in the previous equation converge to zero. Moreover, the second term also converges to zero in the set of probability one that we are considering. As a result, we can conclude that 
	\[
	\begin{split}
	0&\leq \frac{\log \hat{L}_{m_k,1}(\hat\gamma_{m_k})-\log \hat{L}_{m_k,1}(\gamma_0)}{m_k}\\
	&=\E\left[\left\{1-\pi_0(X)\right\}\log\frac{\phi(\gamma^*,X)}{\phi(\gamma_0,X)}+\pi_0(x)\log\frac{1-\phi(\gamma^*,X)}{1-\phi(\gamma_0,X)}\right]+o(1)
	\end{split}
	\]	
	For each $x\in\X$, consider the function
	\[
	g_x(z)=\phi(\gamma_0,x)\log\frac{z}{\phi(\gamma_0,x)}+\left\{1-\phi(\gamma_0,x)\right\}\log\frac{1-z}{1-\phi(\gamma_0,x)},\quad z\in(0,1).
	\]
	It is easy to check that $g_x(z)\leq 0$ and the equality holds only if $z=\phi(\gamma_0,x)$. Hence, the expectation in \eqref{eqn:expectation_consistency} is smaller or equal to zero. Due to the inequality in \eqref{eqn:consistency_1}, it must be equal to zero, which means that $\phi(\gamma^*,X)=\phi(\gamma_0,X)$. By the identifiability assumption \eqref{eqn:CI3}, this is possible only if $\gamma^*=\gamma_0$.
	\end{proof}

\begin{lemma}
	\label{lemma:boundedness_hat_Lambda}
	Assume (AC2),(AC5)
	hold {and $\tau^*$ is such that \eqref{eqn:no_jump_cond} is satisfied.}  Then $\sup_{n}\hat\Lambda_n({\tau^*})<\infty$ almost surely.
\end{lemma}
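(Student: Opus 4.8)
The plan is to exploit the closed-form representation of $\hat\Lambda_n$ in \eqref{eqn:lambda} (equivalently \eqref{eqn:hat_lambda_n}) and to bound its value at $\tau^*$ by bounding the denominators from below. Writing
\[
\hat\Lambda_n(\tau^*) = \sum_{i=1}^n \frac{\Delta_i \1_{\{Y_i \leq \tau^*,\, Y_i < \tau_0\}}}{\sum_{j=1}^n \1_{\{Y_i \leq Y_j \leq \tau_0\}}\, w_j\, \exp(\hat\beta_n' Z_j)},
\]
with $w_j = \Delta_j + (1-\Delta_j)\, g_j(Y_j,\hat\Lambda_n,\hat\beta_n,\hat\gamma_n)$, I would first note that the self-referential character of the weights (through $g_j$, which depends on $\hat\Lambda_n$) does not obstruct a lower bound on the denominator: since $g_j \geq 0$ by its definition in \eqref{def:g_j}, we have $w_j \geq \Delta_j \geq 0$ for every $j$. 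This sidesteps the fixed-point structure entirely, so no separate control of $\hat\Lambda_n$ inside the weights is needed.

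Next I would control $\exp(\hat\beta_n' Z_j)$ from below uniformly in $n$. By (AC5) the covariate $Z$ is bounded, $\Vert Z\Vert < m$ almost surely, and since $\hat\beta_n$ is constructed as a maximizer over the compact parameter set $B$ of (AC2), it satisfies $\Vert\hat\beta_n\Vert \leq M$ for all $n$; hence $\exp(\hat\beta_n' Z_j) \geq e^{-Mm} =: c > 0$. Combining this with $w_j \geq \Delta_j$, and observing that an uncensored observation necessarily satisfies $Y_j = T_0 \leq \tau_0$, the denominator at any $Y_i \leq \tau^*$ is bounded below by
\[
c \sum_{j=1}^n \1_{\{Y_i \leq Y_j \leq \tau_0\}}\Delta_j \;\geq\; c \sum_{j=1}^n \1_{\{Y_j \geq \tau^*,\, \Delta_j = 1\}},
\]
where the last inequality uses $Y_i \leq \tau^* \leq Y_j$.

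This yields the uniform bound
\[
\hat\Lambda_n(\tau^*) \leq \frac{\sum_{i=1}^n \Delta_i \1_{\{Y_i \leq \tau^*\}}}{c \sum_{j=1}^n \1_{\{Y_j \geq \tau^*,\, \Delta_j = 1\}}} \leq \frac{1}{c}\left( \frac{1}{n}\sum_{j=1}^n \1_{\{Y_j \geq \tau^*,\, \Delta_j = 1\}}\right)^{-1}.
\]
Finally I would invoke the strong law of large numbers: the average in the denominator converges almost surely to $\p(Y \geq \tau^*, \Delta = 1)$, which is strictly positive because \eqref{eqn:no_jump_cond} gives $\inf_z \p(Y \geq \tau^* ,\Delta=1 \mid Z = z) > 0$, hence $\p(Y \geq \tau^*, \Delta = 1) > 0$ after integrating over $Z$. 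The right-hand side therefore converges to a finite limit, and since $\hat\Lambda_n(\tau^*)$ is finite for each fixed $n$, taking the supremum gives $\sup_n \hat\Lambda_n(\tau^*) < \infty$ almost surely.

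The step requiring the most care is the uniform lower bound on $\exp(\hat\beta_n' Z_j)$: it rests on $\hat\beta_n$ staying in the compact set $B$, which is a consequence of how the estimator is defined (maximization over $B$) rather than of its consistency, so that no circularity with Theorem~\ref{theo:consistency2} arises. Everything else is a clean combination of the nonnegativity of $g_j$, the boundedness of $Z$, and a single application of the law of large numbers.
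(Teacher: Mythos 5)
Your proposal is correct and follows essentially the same route as the paper's proof: bound the denominator in the closed-form representation from below using $g_j\geq 0$ (so the weights dominate $\Delta_j$), the compactness of $B$ together with the boundedness of $Z$ to control $\exp(\hat\beta_n'Z_j)$, and the strong law of large numbers with condition \eqref{eqn:no_jump_cond} to keep $\frac{1}{n}\sum_j \Delta_j\1_{\{\tau^*\leq Y_j\leq\tau_0\}}$ away from zero. If anything, your final step (eventual boundedness plus finiteness of $\hat\Lambda_n(\tau^*)$ for each fixed $n$) is stated slightly more carefully than the paper's bound via $\inf_n$ of the empirical average, which could be vacuous for very small $n$.
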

\begin{proof}
	
		By definition
		\[
		\hat\Lambda_n({\tau^*})=\frac{1}{n}\sum_{i=1}^{n}\frac{\Delta_i\1_{\{Y_i< {\tau^*}\}}}{\frac{1}{n}\sum_{j=1}^n\1_{\{Y_i\leq Y_j\leq \tau_0\}}\exp(\hat\beta'_nZ_j)\left\{{\Delta_j}+(1-{\Delta_j})g_j(Y_j,\hat\Lambda_n,\hat\beta_n,\hat\gamma_n)\right\}}.
		\]
		From assumptions (AC2) and (AC5) we have
	\[
	\begin{split}
	&\frac{1}{n}\sum_{j=1}^n\1_{\{\tau^*\leq Y_j\leq \tau_0\}}\exp(\hat\beta'_nZ_j)\left\{\Delta_j+(1-\Delta_j)g_j(Y_j,\hat\Lambda_n,\hat\beta_n,\hat\gamma_n)\right\}\\
	&\geq \frac{1}{n}\sum_{j=1}^n\Delta_j\1_{\{\tau^*\leq Y_j\leq \tau_0\}}\exp(\hat\beta'_nZ_j)\\
	&\geq c\frac{1}{n}\sum_{j=1}^n\Delta_j\1_{\{\tau^*\leq Y_j\leq \tau_0\}},
	\end{split}
	\]	
	for some $c>0$. Since $\frac{1}{n}\sum_{j=1}^n\Delta_j\1_{\{\tau^*\leq Y_j\leq \tau_0\}}\xrightarrow{a.s.} \p\left(Y\geq \tau^*,\Delta=1\right)>0$, it follows that $\frac{1}{n}\sum_{j=1}^n\Delta_j\1_{\{\tau^*\leq Y_j\leq \tau_0\}}$ is bounded from below away from zero almost everywhere. As a result
	\[
	\begin{split}
	\sup_n\hat\Lambda_n(\tau^*)&\leq\sup_n \frac{1}{n}\sum_{i=1}^{n}\frac{\Delta_i\1_{\{Y_i< \tau^*\}}}{\frac{1}{n}\sum_{j=1}^n\1_{\{\tau^*\leq Y_j\leq \tau_0\}}\exp(\hat\beta'_nZ_j)\left\{\Delta_j+(1-\Delta_j)g_j(Y_j,\hat\Lambda_n,\hat\beta_n,\hat\gamma_n)\right\}}\\
	&\leq\sup_n \frac{1}{n}\sum_{i=1}^{n}\frac{\Delta_i\1_{\{Y_i< \tau_0\}}}{c\frac{1}{n}\sum_{j=1}^n\Delta_j\1_{\{\tau^*\leq Y_j\leq \tau_0\}}}\\
	&\leq \frac{1}{c}\left(\inf_n{\frac{1}{n}\sum_{j=1}^n\Delta_j\1_{\{\tau^*\leq Y_j\leq \tau_0\}}}\right)^{-1}
	\end{split}
	\]
	is bounded almost surely.
	Note that, if \eqref{eqn:jump_cond} is satisfied, then we can take $\tau^*=\tau_0$.
\end{proof}

\begin{proof}[\textsc{Proof of Theorem~\ref{theo:consistency2}.}]

Let  $0<\tau^* \leq \tau_0$ and 
\begin{equation*}
\hat{l}_n^*( \beta, \Lambda_{|\tau^*}, \hat\gamma_n ) = \frac{1}{n}\sum_{i=1}^n \ell (Y_i^*,\Delta_i^*,X_i,Z_i;\beta, \Lambda_{|\tau^*}, \hat\gamma_n),
\end{equation*}
with $\ell$ defined in \eqref{el_star}. 
If we consider the Cox PH model for the conditional law of $T_0$, then 
\begin{multline*}
\hat{l}^*_n(\beta,\Lambda_{|\tau^*}, \hat\gamma_n)  
=\frac{1}{n}\sum_{i=1}^n \Delta_i \left[ \1_{\{Y_i < \tau^*\}} \left\{ \log \Delta\Lambda(Y_i)+\beta'Z_i-\Lambda(Y_i)e^{\beta'Z_i} \right\} 
\right] \\
+\frac{1}{n}\sum_{i=1}^n(1-\Delta_i ) \1_{\{Y_i < \tau^*\}}\log\left\{1-\phi(\hat\gamma_n,X_i)+\phi(\hat\gamma_n,X_i)\exp\left(-\Lambda(Y_i)e^{\beta'Z_i}\right)\right\}\\
- \frac{\Lambda(\tau^*-)}{n}\sum_{i=1}^n \1_{\{Y_i \geq \tau^*\}} \1_{\{B_i =1\}} e^{\beta'Z_i}
+  \frac{1}{n}\sum_{i=1}^n \1_{\{Y_i \geq \tau^*\}} \1_{\{B_i =0\}}  \log\left\{1-\phi(\hat\gamma_n,X_i)\right\},
\end{multline*}
and has to be maximized with respect to $\beta$ and $\Lambda$ in
the class of step functions $\Lambda$
with jumps of size $\Delta \Lambda$ at the event times in $[0,\tau^*) $. 
As in \cite{Lu}, it can be shown that the maximizer $(\hat\Lambda^*_n,\hat\beta^*_n)$ of $\hat{l}^*_n$ exists and it is finite. Moreover, for $t\in[0,\tau_0] $, $\hat{\Lambda}_n^* = \Lambda^*_{n,\hat\beta^*_n, \hat\gamma_n}$ where 
\begin{equation*}
\Lambda^*_{n,\beta, \gamma} (t)=\frac{1}{n}\sum_{i=1}^{n}\frac{\Delta_i\1_{\{Y_i{\leq t, Y_i<\tau^* }\}}}{\frac{1}{n}\sum_{j=1}^n\1_{\{{Y_j\geq Y_i }\}}\exp(\beta' Z_j)\left\{{\Delta_j^*}+(1-{\Delta_j})\1_{\{Y_j < \tau^*\}}g_j(Y_j, \Lambda^*_{n,\beta,\gamma},\beta,\gamma)\right\}} ,
\end{equation*}
$\Delta^* = \1_{\{T_0^* \leq C^*\}} = \Delta \1_{\{Y_j < \tau^*\}} + \1_{\{Y_i \geq \tau^*\}} \1_{\{B_i =1\}}$ and $g_j(t,\Lambda,\beta,\gamma)$ defined in \eqref{def:g_j}. 
%
%

Let
\begin{equation}
\label{def:tilde_Lambda_0}
\tilde\Lambda_{0,n}(t)=\frac{1}{n}\sum_{i=1}^{n}\frac{\Delta_i\1_{\{Y_i\leq t{, Y_i<\tau_0}\}}}{\frac{1}{n}\sum_{j=1}^n\1_{\{Y_j\geq Y_i{,Y_j\leq\tau_0}\}}\exp(\beta'_0Z_j)\left\{{\Delta_j}+(1-{\Delta_j})g_j(Y_j,\Lambda_{0},\beta_0,\gamma_0))\right\}}. 
\end{equation}

{We want to prove that $\hat\beta_n{\xrightarrow{a.s.}}\beta_0$, and $\sup_{t\in[0,\bar\tau]}|\hat\Lambda_n(t)-\Lambda_0(t)|{\xrightarrow{a.s.}}0$ for any $\bar\tau<\tau_0$. We suppose  that the previous statement is false, i.e $\hat\beta_n$ does not converge {almost surely} to $\beta_0$ or there exists $\bar\tau$ such that $\sup_{t\in[0,\bar\tau]}|\hat\Lambda_n(t)-\Lambda_0(t)|$ does not converge to zero {almost surely}. This means that, 
	there exist  $\epsilon>0$  and $\bar\tau<\tau_0$ such that
	\[
	\p[A_{1} (\bar\tau,\epsilon)]>0, \;\; \text{ with } \;\; A_{1} (\bar\tau,\epsilon)=\left\{\limsup_{n\rightarrow \infty }\left[ \left\|\hat\beta_n-\beta_0\right\|+\sup_{t\in[0,\bar\tau]}\left|\hat\Lambda_n(t)-\Lambda_0(t)\right|\right] >\epsilon \right\}.
	\]	
	{On the other hand,} since $(\hat\Lambda_n,\hat\beta_n)$ maximizes $\hat\ell_n(\Lambda,\beta,\hat\gamma_n)$, for any realization {$\omega$} of the data we have 
	\begin{equation}
	\label{eqn:inequality_1}
	\hat{l}_n(\hat \beta_n,  \hat \Lambda_{n}, \hat \gamma_n) - \hat{l}_n( \beta_0, \tilde \Lambda_{0,n}, \hat \gamma_n)  \geq 0.
	\end{equation}
{Then the idea for creating the contradiction  is to } show that  the previous inequality is not satisfied for any $\omega$ in some event of positive probability. 
	We argue for a fixed realization $\omega$ of the data. 
	As a  bounded sequence in $\R^q$, $\hat\beta_{n}$ has a convergent subsequence $\hat{\beta}_{n_k}\to\bar\beta$. Let $(\tau_i)_{i\geq 1}$ be an increasing sequence such that $\lim_{i\to \infty}\tau_i=\tau_0$.  Since for all $\tau<\tau_0$, $\hat\Lambda_n(\tau)<\infty$ almost surely (see Lemma~\ref{lemma:boundedness_hat_Lambda}), by Helly's  selection theorem (\cite{ash}), there exists a 
	subsequence $\hat{\Lambda}_{m_k}$ of $\hat{\Lambda}_{n_k}$, converging pointwise to a function $\bar\Lambda$ on $[0,\tau_1]$. Repeating the same argument, we can extract a further subsequence converging pointwise to a function $\bar\Lambda$ on $[0,\tau_2]$ and so on. Hence, there exist a subsequence $\hat{\Lambda}_{r_k}$ converging pointwise to a function $\bar\Lambda$ on all compacts of  $[0,\tau_0]$ that do not include $\tau_0$. {This defines a monotone function $\bar\Lambda$ on $[0,\tau_0)$, which could be extended at $\tau_0$ by taking the limit.}  
	As in Lemma 2 of \cite{Lu}, it can be shown that $\bar\Lambda$ is absolutely continuous and pointwise  convergence of monotone functions to a continuous monotone function implies uniform convergence on  compacts.  Note that the chosen subsequence and the limits $\bar\beta$ and $\bar\Lambda$ depend on $\omega$. To keep the notation simple, in what follows we use the index $n$ instead of the chosen subsequence $r_k$. 	
	For any $\tau^*<\tau_0$,  we can write
	\begin{align}\label{lik_deco}
	&\!\!\! 0\leq \hat{l}_n(\hat \beta_n,  \hat \Lambda_{n}, \hat \gamma_n) - \hat{l}_n( \beta_0, \tilde \Lambda_{0,n}, \hat \gamma_n) \notag \\ 
	&= \hat{l}_n^*(\hat \beta_n,  \hat \Lambda_{n|\tau^*}, \hat \gamma_n) + D_{1n} - \hat{l}^*_n( \beta_0, \tilde \Lambda_{0,n|\tau^*},\hat \gamma_n)   - D_{2n} \notag \\
	&= \mathbb E [\ell (Y^*,\Delta^*,X,Z;\bar \beta, \bar\Lambda_{|\tau^*},\gamma_0)] + D_{1n}+ R_{1n} \notag\\
	&\quad- \mathbb E [\ell (Y^*,\Delta^*,X,Z;\beta_0,\Lambda_{0|\tau^*} ,\gamma_0)] - D_{2n}-R_{2n}
	,
	\end{align}
	where
	\begin{equation}\label{deco_d1}
	D_{1n} =  \hat{l}_n(\hat \beta_n,  \hat \Lambda_{n}, \hat \gamma_n) - \hat{l}_n^*(\hat \beta_n,  \hat \Lambda_{n|\tau^*}, \hat \gamma_n),
	\end{equation}
	\begin{equation}\label{deco_d2}
	D_{2n} = \hat{l}_n( \beta_0, \tilde \Lambda_{0,n}, \hat \gamma_n) - 
	\hat{l}^*_n( \beta_0, \tilde \Lambda_{0,n|\tau^*},\hat \gamma_n),
	\end{equation}
	\begin{equation}\label{deco_r1}
	R_{1n} = \hat{l}_n^*(\hat \beta_n,  \hat\Lambda_{n|\tau^*}, \hat \gamma_n) - \mathbb E [\ell (Y^*,\Delta^*,X,Z;\bar\beta, \bar\Lambda_{|\tau^*} ,   \gamma_0)] ,
	\end{equation}
	\begin{equation}\label{deco_r2}
	R_{2n} =\hat{l}^*_n( \beta_0, \tilde \Lambda_{0,n|\tau^*},\hat \gamma_n)  - \mathbb E [\ell (Y^*,\Delta^*,X,Z;\beta_0,\Lambda_{0|\tau^*} ,\gamma_0)] .
	\end{equation}
Note that the limit of $(\hat\beta_n,\hat\Lambda_n)$ depends on $\omega$, but here the expectation is taken with respect to $(Y^*,\Delta^*,X,Z)$ for fixed $(\bar\beta,\bar\Lambda)$.
	We now define the event 
	$
	A_{3}(\tau^*)=\left\{|R_{1n}-R_{2n}|\rightarrow 0\right\}.
	$
	By Lemma \ref{lem:consist_full_1}, for any $\tau^*<\tau_0$, we have  $\p[A_{1} (\bar\tau,\epsilon)\cap A_{3}(\tau^*)]=\p[A_{1} (\bar\tau,\epsilon)]$.
	Next, for $\bar\tau<\tau_0$ and $\epsilon>0$ such that $\p[A_{1} (\bar\tau,\epsilon)]>0$, by Lemma \ref{lem:consist_full_3} there exist $0<c_1<1$ and $\delta>0$ such that we have
	\[
	\begin{split}
	c&=\inf\Bigg\{\mathbb E [\ell (Y^*,\Delta^*,X,Z;\beta_0,\Lambda_{0|\tau^*} ,\gamma_0)]-	\mathbb E [\ell (Y^*,\Delta^*,X,Z; \beta, \Lambda_{|\tau^*},\gamma_0)]:\\
	&\left.\qquad\bar{\tau}+\delta \leq \tau^*<\tau_0,\quad \|\beta-\beta_0\|\geq    c_1\epsilon/2   \quad\text{ or}\quad \sup\limits_{t\in[0,\bar \tau]}|\Lambda(t)-\Lambda_0(t)|\geq    (1-c_1) \epsilon /2   \right\}>0.
	\end{split}
	\]
	Note that if $\omega \in  A_{1}(\bar\tau,\epsilon)$ and $\bar\beta$ and $\bar \Lambda$ are the limits for $ \hat \beta_n$ and $\hat \Lambda_n$, respectively, then necessarily, either $\|\bar \beta-\beta_0\|\geq    c_1\epsilon/2 ,$ or $\sup\limits_{t\in[0, \bar \tau]}|\bar \Lambda(t)-\Lambda_0(t)|\geq    (1-c_1) \epsilon /2$, and consequently 
	$$
	\mathbb E [\ell (Y^*,\Delta^*,X,Z;\bar \beta, \bar\Lambda_{|\tau^*},\gamma_0)]  - \mathbb E [\ell (Y^*,\Delta^*,X,Z;\beta_0,\Lambda_{0|\tau^*} ,\gamma_0)] \geq c,\quad \forall \bar{\tau}{+\delta}\leq \tau^*<\tau_0.
	$$
	Finally, we define $$A_{2}(\tau^*)=\left\{ \limsup_{n\rightarrow\infty} \left|D_{1n}-D_{2,n} \right| \leq c/2\right\},$$ with $D_{1n}$ and $D_{2n} $ defined in \eqref{deco_d1} and \eqref{deco_d2}, and choose $\tau^*\in [\bar\tau+\delta,\tau_0)$ such that 
	$$
	c_b \left\{ \p(T_0\geq \tau^*) \log \{1/\p (T_0\geq \tau^*)\} +  \p(C\in[ \tau^*,\tau_0])\right\}<c/2,
	$$
	with $c_b$ the constant from Lemma \ref{lem:consist_full_2}. Then we have $\p[A_{2}(\tau^*)]=1$.
	Gathering facts, we deduce that  by a suitable choice of $\tau^*\in[\bar\tau+\delta,\tau_0)$, we necessarily have $\p[ A_{1} (\bar\tau,\epsilon)\cap
	A_{2}(\tau^*)\cap A_{3}(\tau^*)]>0$. Moreover,  with such a suitable $\tau^*$, 
	for any  $\omega\in A_{1} (\bar\tau,\epsilon)\cap
	A_{2}(\tau^*)\cap A_{3}(\tau^*)$, 
	we have  
	$$\limsup_{n\rightarrow \infty}  \left[ \hat{l}_n(\hat \beta_n,  \hat \Lambda_{n}, \hat \gamma_n) - \hat{l}_n( \beta_0, \tilde \Lambda_{0,n}, \hat \gamma_n)\right] \leq -c/2<0.$$ 
	We deduce that  \eqref{eqn:inequality_1} is violated on an event of positive probability, which by definition is impossible. Thus
	$\hat\beta_n{\xrightarrow{a.s.}}\beta_0$, and $\sup_{t\in[0,\bar\tau]}|\hat\Lambda_n(t)-\Lambda_0(t)|{\xrightarrow{a.s.}}0$ for any $\bar\tau<\tau_0$.
}

{If condition \eqref{eqn:jump_cond} is satisfied, we want to show in addition that $|\hat\Lambda_n(\tau_0)-\Lambda_0(\tau_0)|{\xrightarrow{a.s.}}0$. In that case,  $\hat\Lambda_n(\tau_0)<\infty$ almost surely and as a result, for any realization $\omega$, there exists a subsequence $\hat\Lambda_{r_k}$ converging  to some absolutely continuous function  $\bar\Lambda$ uniformly on $[0,\tau_0]$. 
	Since we already showed that 
	$|\hat\Lambda_n(t)-\Lambda_0(t)|{\xrightarrow{a.s.}}0$ for any $t<\tau_0$ and $\Lambda_0(\tau_0)=\lim_{t\uparrow\tau_0}\Lambda_0(t)$, we necessarily have $\bar\Lambda=\Lambda_0$ on the whole interval $[0,\tau_0]$. This concludes the proof of the Theorem.
}
\end{proof}

\begin{lemma}
	\label{lem:consist_full_1}
	Consider a realization of the data $\omega$ and assume that $\hat\beta_n(\omega)\to\bar\beta$ and $\hat\Lambda_n(\omega)(t)\to \bar\Lambda(t)$ for any $t\in[0,\tau_0)$, for some absolutely continuous function $\bar\Lambda$. Let $0<\tau^*<\tau_0$ and let $R_{1n}$, $R_{2n}$ be defined as in \eqref{deco_r1} and \eqref{deco_r2}, respectively. There exists an event $A_3(\tau^*)$ of probability one such that, for any  $\omega\in A_3(\tau^*)$, 
	$$
	R_{1n}(\omega)-R_{2n}(\omega)\to 0.
	$$
\end{lemma}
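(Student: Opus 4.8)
The plan is to take seriously the fact that neither $R_{1n}$ nor $R_{2n}$ converges on its own, and that their \emph{difference} is the only well-behaved quantity. Indeed, the empirical criterion $\hat l_n^*$ evaluated at a step function $\Lambda$ contains the terms $\Delta_i\1_{\{Y_i<\tau^*\}}\log\Delta\Lambda(Y_i)$ whose jumps are of order $1/n$, so $\hat l_n^*(\hat\beta_n,\hat\Lambda_{n|\tau^*},\hat\gamma_n)\to-\infty$ while the population expectation in \eqref{deco_r1} is finite, and the same divergence afflicts \eqref{deco_r2}. I would therefore write
\[
R_{1n}-R_{2n}=\bigl[\hat l_n^*(\hat\beta_n,\hat\Lambda_{n|\tau^*},\hat\gamma_n)-\hat l_n^*(\beta_0,\tilde\Lambda_{0,n|\tau^*},\hat\gamma_n)\bigr]-\bigl[\mathbb E\,\ell(\cdot;\bar\beta,\bar\Lambda_{|\tau^*},\gamma_0)-\mathbb E\,\ell(\cdot;\beta_0,\Lambda_{0|\tau^*},\gamma_0)\bigr],
\]
and, using the restricted form \eqref{el_star} of $\ell$, split each summand into a \emph{jump part} $\Delta_i\1_{\{Y_i<\tau^*\}}\log\Delta\Lambda(Y_i)$ and a \emph{regular part} collecting the terms $\beta'Z_i$, $\Lambda(Y_i)e^{\beta'Z_i}$, $\log\{1-\phi(\gamma,X_i)+\phi(\gamma,X_i)e^{-\Lambda(Y_i)e^{\beta'Z_i}}\}$ and the bounded $\{Y_i\ge\tau^*\}$ contributions, following the overall strategy of \cite{Lu2008}.

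For the jump part I would invoke the self-consistency characterizations \eqref{eqn:hat_lambda_n} and \eqref{def:tilde_Lambda_0}: both $\Delta\hat\Lambda_n(Y_i)$ and $\Delta\tilde\Lambda_{0,n}(Y_i)$ equal $\Delta_i/n$ divided by a denominator, so the factor $\Delta_i/n$ cancels in the ratio and
\[
\frac{1}{n}\sum_{i=1}^n\Delta_i\1_{\{Y_i<\tau^*\}}\log\frac{\Delta\hat\Lambda_n(Y_i)}{\Delta\tilde\Lambda_{0,n}(Y_i)}=\frac{1}{n}\sum_{i=1}^n\Delta_i\1_{\{Y_i<\tau^*\}}\bigl[\log D_{0,n}(Y_i)-\log D_n(Y_i)\bigr],
\]
where $D_n(t)=\frac1n\sum_j\1_{\{t\le Y_j\le\tau_0\}}e^{\hat\beta_n'Z_j}\{\Delta_j+(1-\Delta_j)g_j(Y_j,\hat\Lambda_n,\hat\beta_n,\hat\gamma_n)\}$ with $g_j$ as in \eqref{def:g_j}, and $D_{0,n}$ its analogue at $(\beta_0,\Lambda_0,\gamma_0)$. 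For $Y_i<\tau^*$ these denominators are bounded away from $0$ and $\infty$ almost surely, exactly as in Lemma \ref{lemma:boundedness_hat_Lambda}, using \eqref{eqn:no_jump_cond}, (AC5), (AC2) and $0<g_j<1$; hence the (no longer singular) integrand is a bounded function of the data and the parameters.

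The remaining step is a uniform law of large numbers. The regular part of $\ell$ and the denominators $D_n,D_{0,n}$ are uniformly bounded and Lipschitz in $(\beta,\gamma)$ and in $\Lambda$ for the sup-norm on $[0,\tau^*]$, indexed by the monotone functions $\Lambda$ with $\Lambda(\tau^*)\le\sup_n\hat\Lambda_n(\tau^*)<\infty$, a Glivenko--Cantelli class. Arguing as in the countable-dense-subset device used in the proof of Theorem \ref{theo:consistency}, I would obtain, on an event $A_3(\tau^*)$ of probability one, uniform convergence of the empirical averages to their expectations. Combined with $\hat\beta_n\to\bar\beta$, $\hat\gamma_n\to\gamma_0$, the uniform convergence $\hat\Lambda_n\to\bar\Lambda$ on $[0,\tau^*]$ (pointwise convergence of monotone functions to the continuous $\bar\Lambda$) and continuity of $(\beta,\Lambda,\gamma)\mapsto\mathbb E\,\ell$ via dominated convergence, the regular empirical differences converge to the regular population differences. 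Finally, the log-ratio of denominators converges to $\mathbb E[\Delta^*\1_{\{Y^*<\tau^*\}}(\log d_0-\log\bar d)]$, with $d_0,\bar d$ the limits of $D_{0,n},D_n$; passing to the limit in \eqref{eqn:hat_lambda_n} and \eqref{def:tilde_Lambda_0} gives $\bar\lambda\,\bar d=\lambda_0\,d_0$ (both equal the sub-density of observed events), so this limit matches exactly the population jump difference $\mathbb E[\Delta^*\1_{\{Y^*<\tau^*\}}(\log\bar\lambda-\log\lambda_0)]$, yielding $R_{1n}-R_{2n}\to0$.

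I expect the main obstacle to be the jump terms. Since $\log\Delta\hat\Lambda_n$ is singular as $n\to\infty$, the cancellation only becomes visible after the ratio reduction, and one must check that the $\Delta_i/n$ factors cancel exactly, that the denominators stay bounded below uniformly over $Y_i<\tau^*$ (this is where \eqref{eqn:no_jump_cond} and Lemma \ref{lemma:boundedness_hat_Lambda} are indispensable), and that the limiting hazard/denominator identity holds so that the finite remainder reproduces the population jump difference. A secondary difficulty is that $\bar\beta$ and $\bar\Lambda$ depend on the realization $\omega$, which forces the law of large numbers to hold uniformly over the parameter class rather than merely pointwise.
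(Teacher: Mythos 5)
Your proposal is correct and follows essentially the same route as the paper's proof: the same decomposition of $R_{1n}-R_{2n}$ into empirical differences minus population differences, the same cancellation of the singular $\log\Delta\Lambda$ terms through the jump ratio $\Delta\hat\Lambda_n(Y_i)/\Delta\tilde\Lambda_{0,n}(Y_i)$, and the same countable-dense-subset law-of-large-numbers device (as in the proof of Theorem~\ref{theo:consistency}) to handle the $\omega$-dependence of $(\bar\beta,\bar\Lambda)$. In fact, your reduction of the jump ratio to the ratio of the denominators in \eqref{eqn:hat_lambda_n} and \eqref{def:tilde_Lambda_0}, with boundedness coming from \eqref{eqn:no_jump_cond} as in Lemma~\ref{lemma:boundedness_hat_Lambda} and the limiting identity $\bar\lambda\,\bar d=\lambda_0\,d_0$, supplies an explicit justification for the one step the paper merely asserts, namely that the averaged log jump-ratio can be replaced by the average of $\log\{\bar\lambda(Y_i)/\lambda_0(Y_i)\}$ at the cost of an $o(1)$ term.
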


\begin{proof} {Let us consider some $0<\tau^*<\tau_0$}. From Theorem \ref{theo:consistency} and  Lemma 2 in \cite{Lu} it follows that the event 
	\[
	A_3^1 {(\tau^*)}=\left\{\hat\gamma_n\to\gamma_0\quad\text{ and }\quad\sup_{t\in[0,\tau^*]}|\tilde{\Lambda}_{0,n}(t)-\Lambda_0(t)|\to 0\right\}
	\]
	has probability one. Next we argue for the given realization of the data $\omega{\in A_3^1(\tau^*)}$ and  will determine the event $A_3{(\tau^*)}$ appropriately.  
	By the triangular inequality we can write
	\begin{align}
	\label{eqn:R2}
	|R_{1n}\!-R_{2n}|&\leq\left|\left\{\hat{l}^*_n(\hat\beta_n,\hat{\Lambda}_{n|\tau*},\hat\gamma_n)\! -\hat{l}^*_n(\beta_0,\tilde{\Lambda}_{0,n|\tau*},\hat\gamma_n)\right\} \!-\! \left\{\hat{l}^*_n(\bar\beta,\bar{\Lambda}_{|\tau*},\!\gamma_0) \!- \! \hat{l}^*_n(\beta_0,{\Lambda}_{0|\tau*},\!\gamma_0)\right\}\right| \notag\\
	&\quad+\left|\hat{l}^*_n(\beta_0,{\Lambda}_{0|\tau*},\!\gamma_0)-\E\left[l(Y^*,\Delta^*,X,Z;\beta_0,\Lambda_{0|\tau^*},\!\gamma_0)\right]\right|\notag\\
	&\quad+\left|\hat{l}^*_n(\bar\beta,\bar{\Lambda}_{|\tau*},\gamma_0)-\E\left[l(Y^*,\Delta^*,X,Z;\bar\beta,\bar\Lambda_{|\tau^*},\gamma_0)\right]\right|.
	\end{align}
	Since $\bar\Lambda$ is absolutely continuous, it is differentiable almost everywhere. Let $\bar\lambda(t)=\dd\bar\Lambda(t)/\dd t$. 
	By definition we have
	\[
	\begin{split}
	&\hat{l}^*_n(\hat\beta_n,\hat{\Lambda}_{n|\tau*},\hat\gamma_n)-\hat{l}^*_n(\beta_0,\tilde{\Lambda}_{0,n|\tau*},\hat\gamma_n)\\
	&=\frac{1}{n}\sum_{i=1}^n \Delta_i \1_{\{Y_i < \tau^*\}} \left\{ \log \frac{\Delta\hat\Lambda_n(Y_i)}{\Delta\tilde\Lambda_{0,n}(Y_i)}+(\hat\beta_n-\beta_0)'Z_i-\hat\Lambda_n(Y_i)e^{\hat\beta'_nZ_i}+\tilde\Lambda_{0,n}(Y_i)e^{\beta'_0Z_i} \right\}  \\
	&\qquad+\frac{1}{n}\sum_{i=1}^n(1-\Delta_i ) \1_{\{Y_i < \tau^*\}}\log\frac{1-\phi(\hat\gamma_n,X_i)+\phi(\hat\gamma_n,X_i)\exp\left(-\hat\Lambda_n(Y_i)e^{\hat\beta'_nZ_i}\right)}{1-\phi(\hat\gamma_n,X_i)+\phi(\hat\gamma_n,X_i)\exp\left(-\tilde\Lambda_{0,n}(Y_i)e^{\beta'_0Z_i}\right)}\\
	&\qquad - \frac{1}{n}\sum_{i=1}^n \1_{\{Y_i \geq \tau^*\}}\1_{\{B_i =1\}}\left\{
	\hat\Lambda_n(\tau^*)e^{\hat\beta'_nZ_i} - \tilde\Lambda_{0,n}(\tau^*)e^{\beta'_0Z_i}\right\}.  
	\end{split}
	\]
	If $\omega\in A^1_3{(\tau^*)}$, we obtain
	\[
	\begin{split}
	&\hat{l}^*_n(\hat\beta_n,\hat{\Lambda}_{n|\tau*},\hat\gamma_n)-\hat{l}^*_n(\beta_0,\tilde{\Lambda}_{0,n|\tau*},\hat\gamma_n)\\
	&=\frac{1}{n}\sum_{i=1}^n \Delta_i \1_{\{Y_i < \tau^*\}} \left\{ \log \frac{\bar\lambda(Y_i)}{\lambda_{0}(Y_i)}+(\bar\beta-\beta_0)'Z_i-\bar\Lambda(Y_i)e^{\bar\beta'Z_i}+\Lambda_{0}(Y_i)e^{\beta'_0Z_i} \right\}  \\
	&\qquad+\frac{1}{n}\sum_{i=1}^n(1-\Delta_i ) \1_{\{Y_i < \tau^*\}}\log\frac{1-\phi(\gamma_0,X_i)+\phi(\gamma_0,X_i)\exp\left(-\bar\Lambda(Y_i)e^{\bar\beta'Z_i}\right)}{1-\phi(\gamma_0,X_i)+\phi(\gamma_0,X_i)\exp\left(-\Lambda_{0}(Y_i)e^{\beta'_0Z_i}\right)}\\
	&\qquad  - \frac{1}{n}\sum_{i=1}^n \1_{\{Y_i \geq \tau^*\}}\1_{\{B_i =1\}}\left\{
	\bar\Lambda(\tau^*)e^{\bar\beta'Z_i}- \Lambda_{0}(\tau^*)e^{\beta'_0Z_i}\right\} +o(1)\\
	&=\hat{l}^*_n(\bar\beta,\bar{\Lambda}_{|\tau*},\gamma_0)-\hat{l}^*_n(\beta_0,{\Lambda}_{0|\tau*},\gamma_0)+o(1),
	\end{split}
	\]
	where the remainder term  depends on $\omega$ and converges to zero. Hence, {the first term on the right hand side of \eqref{eqn:R2}} converges to zero. Let $A_n^2{(\tau^*)}$ be the event where 
	\[
	\hat{l}^*_n(\beta_0,{\Lambda}_{0|\tau*},\gamma_0)\to\E\left[l(Y^*,\Delta^*,X,Z;\beta_0,\Lambda_{0|\tau^*},\gamma_0)\right] \quad\text{ as }n\to\infty.
	\]
	By the law of large numbers $\p[A^2_n{(\tau^*)}]=1$, implying that also {the second term on the right hand side of \eqref{eqn:R2}} converges to zero if $\omega\in A^2_n{(\tau^*)}$.  It remains to deal with the third term. Note that here $(\bar\beta,\bar\Lambda)$ depend on $\omega$ and the expectation is taken with respect to $(Y,\Delta,X,Z)$ for fixed $(\bar\beta,\bar\Lambda)$.
	We have the same issue as in the proof of Theorem \ref{theo:consistency} when dealing with the terms involving $\bar\beta$ and $\bar\Lambda$, so we need to consider approximations by elements of a countable dense subset of $\B$ and of the space of bounded, absolutely continuous, increasing functions in $[0,\tau^*]$ (is separable, so such subset exists). The same reasoning is used also in \cite{murphy,Lu,scharfstein}. Hence, there exists a countable collection of probability one sets $\{N_l\}_{l\geq 1}$ where
	\[
	\hat{l}^*_n(\beta_l,{\Lambda}_{l},\gamma_0)\to\E\left[l(Y^*,\Delta^*,X,Z;\beta_l,\Lambda_{l},\gamma_0)\right] \quad\text{ as }n\to\infty
	\]
	and $(\beta_l,\Lambda_l)$ can be taken arbitrarily close to $(\bar\beta,\bar\Lambda)$. As a result, if $\omega\in A^3_n{(\tau^*)}=\bigcap_{l\geq 1}N_l$, then 
	\[
	\left|\hat{l}^*_n(\bar\beta,\bar{\Lambda}_{|\tau*},\gamma_0)-\E\left[l(Y^*,\Delta^*,X,Z;\bar\beta,\bar\Lambda_{|\tau^*},\gamma_0)\right]\right|\to 0.
	\]
	To conclude, we define $A_3{(\tau^*)}=A^1_3{(\tau^*)}\cap A^2_3{(\tau^*)}\cap A^3_3{(\tau^*)}$ and we have $\p[A_3{(\tau^*)}]=1$. 
\end{proof}


\begin{lemma}
	\label{lem:consist_full_2}
	Let $D_{1n}$ and $D_{2n}$ be defined as in \eqref{deco_d1} and \eqref{deco_d2}, respectively, for some $\tau^* <\tau_0$. Then there exists a constant $c_b$ independent of $\tau^*$ such that  
	$$
	\mathbb P \left[ \limsup_{n\rightarrow\infty} \left|D_{1n}-D_{2,n}\right|  > c_b \left\{ \p(T_0\geq \tau^*) \log \{1/\p (T_0\geq \tau^*)\} +  \p(C\in[ \tau^*,\tau_0])\right\}  \right] =0.
	$$
\end{lemma}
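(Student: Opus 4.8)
The plan is to expand $D_{1n}$ and $D_{2n}$ from their definitions \eqref{deco_d1}--\eqref{deco_d2} using the explicit form of $\hat l_n$ in \eqref{def:hat_l_cox} and of $\hat l_n^*$, and to observe that every contribution of an observation with $Y_i<\tau^*$ cancels between $\hat l_n$ and $\hat l_n^*$: for such $i$ one has $\1_{\{Y_i<\tau_0\}}=1$ and $\hat\Lambda_{n|\tau^*}$ agrees with $\hat\Lambda_n$ on $[0,\tau^*)$ (and likewise for $\tilde\Lambda_{0,n}$). Hence $D_{1n}-D_{2n}$ is a sum over the tail $\{Y_i\geq\tau^*\}$ only. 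First I would invoke the zero-tail constraint: for a cured censored subject ($\Delta_i=0$, $B_i=0$) with $Y_i$ beyond the largest observed event time, $\exp(-\hat\Lambda_n(Y_i)e^{\hat\beta_n'Z_i})=0$, so its contribution to $\hat l_n$ is $\log\{1-\phi(\hat\gamma_n,X_i)\}$, which is exactly its contribution to $\hat l_n^*$ on the event $\{B_i=0\}$; the same holds with $\tilde\Lambda_{0,n}$, so these subjects drop out of $D_{1n}-D_{2n}$ entirely.

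Next I would split the surviving tail terms into \emph{smooth} terms and \emph{log-jump} terms. The covariate pieces $\hat\beta_n'Z_i$ and the censoring log-terms $\log\{1-\phi(\hat\gamma_n,X_i)+\phi(\hat\gamma_n,X_i)\exp(-\hat\Lambda_n(Y_i)e^{\hat\beta_n'Z_i})\}-\log\{1-\phi(\hat\gamma_n,X_i)\}$ are $O(1)$ uniformly by (AC2), (AC4), (AC5), and the compensator pieces attached to uncured censored subjects are $O(1)$ through the a.s. boundedness of $\hat\Lambda_n(\tau^*)$ from Lemma~\ref{lemma:boundedness_hat_Lambda}. The one piece that is not individually bounded is the uncensored compensator $[\hat\Lambda_n(Y_i)-\hat\Lambda_n(\tau^*-)]e^{\hat\beta_n'Z_i}$, since $\hat\Lambda_n$ need not stay bounded near $\tau_0$; here I would use a double-sum rearrangement in which the inner risk-set sum $\tfrac1n\sum_i\Delta_i\1_{\{Y_k\leq Y_i<\tau_0\}}e^{\hat\beta_n'Z_i}$ is bounded above by the Breslow denominator $R_n(Y_k)=\tfrac1n\sum_j\1_{\{Y_k\leq Y_j\leq\tau_0\}}e^{\hat\beta_n'Z_j}\{\Delta_j+(1-\Delta_j)g_j\}$ and thereby cancels the factor $1/R_n(Y_k)$ carried by $\Delta\hat\Lambda_n(Y_k)$, reducing this piece to the mere count of tail events. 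With the analogous bounds for $\tilde\Lambda_{0,n}$, the whole smooth part is dominated by a constant times the empirical fractions of the tail observations, which by the SLLN converge a.s. to $\p(\tau^*\leq Y<\tau_0,\Delta=1)$, $\p(C\in[\tau^*,\tau_0],B=1)$ and $\p(C\in[\tau^*,\tau_0],B=0)$, each at most $\p(T_0\geq\tau^*)$ or $\p(C\in[\tau^*,\tau_0])$; since $\p(T_0\geq\tau^*)\leq\p(T_0\geq\tau^*)\log\{1/\p(T_0\geq\tau^*)\}$ for $\tau^*$ near $\tau_0$, this folds into the announced bound with a $c_b$ independent of $\tau^*$.

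The main obstacle is the log-jump part $\tfrac1n\sum_{i:\,\tau^*\leq Y_i<\tau_0,\ \Delta_i=1}\{\log\Delta\hat\Lambda_n(Y_i)-\log\Delta\tilde\Lambda_{0,n}(Y_i)\}$. Here I would exploit that both jumps have the common numerator $1/n$, so that the difference equals $\log\{R_{0,n}(Y_i)/R_n(Y_i)\}$ with $R_{0,n}$ the risk-set sum of \eqref{def:tilde_Lambda_0}; the crucial point is that the factor $1/n$ cancels, so \emph{no} $\log n$ survives. I would then control the ratio of risk sums through the per-subject weight ratios $e^{(\beta_0-\hat\beta_n)'Z_j}\{\Delta_j+(1-\Delta_j)g_j(Y_j,\Lambda_0,\beta_0,\gamma_0)\}/\{\Delta_j+(1-\Delta_j)g_j(Y_j,\hat\Lambda_n,\hat\beta_n,\hat\gamma_n)\}$, bounded via (AC2), (AC4), (AC5), while treating separately the plateau-censored subjects (those with $Y_j$ beyond the largest event), which enter $R_{0,n}$ but not $R_n$ and whose number is $\lesssim n\,\p(C\in[\tau^*,\tau_0])$. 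The genuinely delicate point is that near $\tau_0$ the risk sets shrink, so the individual log-ratios are only of order $\log$ of the number still at risk; summing this $\log$ over the $M\approx n\,\p(T_0\geq\tau^*)$ tail events and applying Stirling's formula to $\log(M!)$ produces exactly the entropy factor $\p(T_0\geq\tau^*)\log\{1/\p(T_0\geq\tau^*)\}$ rather than a cruder $\p(T_0\geq\tau^*)\log n$ bound. Passing to $\limsup_n$ via the SLLN and taking $c_b$ to be the maximum of the finitely many constants produced along the way then completes the argument.
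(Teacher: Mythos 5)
Your reduction of $D_{1n}-D_{2n}$ to a sum over the tail $\{Y_i\geq\tau^*\}$ matches the paper's decomposition, and your treatment of the log-jump terms is a genuinely different, and in itself valid, route. The paper shows that the jump ratios $\hat\Lambda_n(\{Y_i\})/\tilde\Lambda_{0,n}(\{Y_i\})$ are uniformly $O(1)$ by splitting off a shrinking interval $[\tau_0-a_n,\tau_0]$ and invoking Wellner's (1978) theorem on the ratio of empirical to true distribution functions, so that in the paper the log-jump contribution is only of order $N^*/n$ and the entropy factor comes from elsewhere. You instead bound the $k$-th-from-last log-ratio by $\log(c_u/c_l)+\log(n/k)$ (uncensored risk count $\geq k$ below, total count $\leq n$ above) and sum via Stirling, so the $\log n$ terms cancel and $n^{-1}\{N^*\log n-\log(N^*!)\}$ converges a.s.\ to $p^*\log(1/p^*)+p^*$ with $p^*=\p(Y\geq\tau^*,\Delta=1)\leq \p(T_0\geq\tau^*)$; this correctly yields the entropy term from this piece and avoids Wellner's theorem altogether. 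One intermediate claim, however, is false: the ratio of risk sums cannot be controlled "through the per-subject weight ratios" bounded via (AC2), (AC4), (AC5). The weights $\Delta_j+(1-\Delta_j)g_j$ only lie in $[0,1]$, and $g_j$ in \eqref{def:g_j} can be arbitrarily close to $0$ (it is exactly $0$ for plateau-censored subjects under the zero-tail constraint, and $g_j(Y_j,\Lambda_0,\beta_0,\gamma_0)\to 0$ as $Y_j\to\tau_0$ when \eqref{eqn:jump_cond} fails, since $\Lambda_0$ is then unbounded near $\tau_0$), so these per-subject ratios are unbounded. Your Stirling bound does not use them, so this step should simply be removed.

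The genuine gap is in the compensator pieces attached to the uncured censored tail subjects, i.e.\ the terms $\hat\Lambda_n(\tau^*-)e^{\hat\beta_n'Z_i}$ and $\tilde\Lambda_{0,n}(\tau^*-)e^{\beta_0'Z_i}$ arising from $\log S_u(\tau^*|Z_i;\beta,\Lambda)$ in $\hat l_n^*$. You declare these "$O(1)$ through the a.s.\ boundedness of $\hat\Lambda_n(\tau^*)$ from Lemma~\ref{lemma:boundedness_hat_Lambda}", but that boundedness is a random and, crucially, $\tau^*$-dependent statement: the bound behaves like $1/\p(Y\geq\tau^*,\Delta=1)$ (and $\tilde\Lambda_{0,n}(\tau^*-)\to\Lambda_0(\tau^*-)$), both of which diverge as $\tau^*\to\tau_0$ precisely in the regime the lemma exists for, namely when \eqref{eqn:jump_cond} fails and $\Lambda_0(\tau_0-)=\infty$. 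If you absorb such a bound into your "constant", then $c_b$ depends on $\tau^*$ and the lemma becomes vacuous: its whole point is that $c_b$ is uniform in $\tau^*$, because in the proof of Theorem~\ref{theo:consistency2} one fixes $c_b$ first and then chooses $\tau^*$ close to $\tau_0$ to make $c_b\{\p(T_0\geq\tau^*)\log(1/\p(T_0\geq\tau^*))+\p(C\in[\tau^*,\tau_0])\}$ smaller than $c/2$. Closing the gap requires tracking the product of the diverging bound on $\Lambda(\tau^*-)$ with the vanishing fraction of uncured censored tail subjects. The paper does this with the harmonic-sum bound $\Lambda(\tau^*-)\leq c_1\log(N/N^*)$ (the $j$-th-from-last jump is at most $1/(jc_l)$), giving a contribution of order $q^*\log(1/p^*)$ with $q^*$ the probability of an uncured censored tail observation, which is exactly where the paper's entropy term originates; your Lemma~3 route could alternatively be salvaged by proving $\limsup_n\hat\Lambda_n(\tau^*-)\lesssim 1/p^*$ together with $q^*/p^*\lesssim\p(C\in[\tau^*,\tau_0])$, but either way an argument is needed that your proposal does not supply.
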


\begin{proof}
	By definition, for any $\gamma$, $\beta$ and cumulative hazard function $\Lambda$ piecewise constant with jumps at the observed events
	\begin{align*}
	& {l}_n(\beta,   \Lambda,  \gamma) - \hat{l}_n^*(\beta,  \Lambda_{|\tau^*}, \gamma)
	\\
	& = \frac{1}{n}\sum_{i=1}^n \1_{\{{\tau^*\leq Y_i< \tau_0 } \}} \Delta_i \log \Lambda (\{Y_i\})\\
	& \quad + \frac{1}{n}\sum_{i=1}^n \1_{\{{\tau^*\leq Y_i< \tau_0 }\}} \Delta_i \beta^\prime Z_i  
	\\ 
	& \quad - \frac{1}{n}\sum_{i=1}^n \1_{\{ Y_i\geq \tau^*  \}} e^{\beta^\prime Z_i} \{ \Delta_i \Lambda (Y_i) - \1_{\{B_i=1\}}\Lambda (\tau^*-) \} \\
	& \quad +\frac{1}{n}\!\sum_{i=1}^n \!\1_{\{\tau_0\geq Y_i \geq  \tau^*\!\}}\!\! \left[\! (1\!-\!\Delta_i )\log\!\left\{\!1\!-\!\phi_i(\gamma)\!+\!\phi_i(\gamma)\exp\left(\!-\Lambda(Y_i)e^{\beta'Z_i}\!\right)\!\!\right\} \right. \! \\
	& \quad \quad \left. - \!\1_{\{B_i=0\}}\! \log\left\{1\!-\!\phi_i(\gamma)\right\}\! \right]\\
	& =: r_{1n} (\Lambda;\tau^*) + r_{2n} (\beta;\tau^*) - r_{3n}  (\Lambda, \beta;\tau^*) + r_{4n} (\Lambda, \beta, \gamma;\tau^*) ,
	\end{align*}
	where $\phi_i (\gamma) $ is a short notation for $\phi(\gamma,X_i)$.  For proving the Lemma, we have to suitably bound $r_{1n},\ldots,r_{4n}$.  For this purpose, let us notice that, by definition, all the cumulative hazard functions we have to consider ($\hat\Lambda_n$, $\tilde \Lambda_{0,n}$,...) have bounded jumps at the event times. More precisely, because the parameter space $\mathcal B$ and $Z$ are supposed bounded, there exist constants 
	$0<c_l < c_u $ such that 
	$$
	c_l\leq \exp(\beta'Z)\leq c_u.
	$$
	Then the largest jump of any of  the cumulative hazard functions we need to consider is bounded by $1/c_l$ (which is located at the last uncensored observation), the second largest one (and is located at the before last uncensored observation) is bounded by $1/2c_l$,...

	To control $r_{1n} (\Lambda;\tau^*) $, one would  look for a suitable lower bound forthe jumps of $\Lambda$. However, no meaningful lower bound could be derived for these jumps. More precisely, such a bound is necessarily of order $1/n$, so that the sequence of the logarithm of the jumps is unbounded. Fortunately, for our purposes it suffices to find a bound for 
	$$
	\left| r_{1n} (\hat \Lambda_{n};\tau^*) - r_{1n} ( \tilde \Lambda_{0,n} ;\tau^*)\right| = \frac{1}{n}\sum_{i=1}^n \1_{\{ {\tau^*\leq Y_i< \tau_0 } \}} \Delta_i \left| \log  \frac{\hat \Lambda_{n} (\{Y_i\})} { \tilde \Lambda_{0,n}  (\{Y_i\}) } \right|,
	$$
	where 
	$$
	\frac{\hat \Lambda_{n} (\{Y_i\})} { \tilde \Lambda_{0,n}  (\{Y_i\}) } = \frac{\sum_{j=1}^n \1_{\{\tau_0 \geq Y_j\geq Y_i \}}\exp(\beta'_0Z_j)\left\{\Delta_j+(1-\Delta_j)g_j(Y_j,\Lambda_{0},\beta_0,\hat \gamma_n))\right\} }
	{\sum_{j=1}^n\1_{\{{\tau_0\geq Y_j\geq Y_i  }\}}\exp(\beta'Z_j)\left\{ \Delta_j +(1-  \Delta_j )g_j(Y_j,{\Lambda}_{n,\hat\beta,\hat\gamma_n},\beta,\hat \gamma_n) \right\} }.
	$$
	Since all $g_j$'s are between 0 and 1, it  is easy to see that  for any uncensored $Y_i\geq \tau^*$,
	$$
	\frac{1}{\rho_n(Y_i)} \frac{c_l } {c_u} \leq \frac{\hat \Lambda_{n} (\{Y_i\})} { \tilde \Lambda_{0,n}  (\{Y_i\}) }  \leq \frac{c_u}{c_l } \rho _n(Y_i) ,
	$$
	where
	$$
	\rho_n(t) =  \frac{\sum_{j=1}^n \1_{\{ \tau_0\geq Y_j \geq t\geq \tau^*\} } } {\sum_{j=1}^n  \Delta_j \1_{\{ \tau_0\geq Y_j \geq t \geq \tau^*\} }}, \qquad t\in[\tau^*, \tau_0].
	$$
	Thus,  since all $\rho_n(Y_i)$'s are larger than 1, it suffices to suitably bound 
	$$
	0\leq A_n = \frac{1}{n}\sum_{i=1}^n \1_{\{ Y_i\geq \tau^* \}} \Delta_i \log (\rho_n(Y_i)),
	$$
	which we decompose as
	$$
	A_n \! = \frac{1}{n}\sum_{i=1}^n \1_{\{ Y_i\in [\tau^*\! ,\tau_0 - a_n  ] \}} \Delta_i \log (\rho_n(Y_i)) +  \frac{1}{n}\sum_{i=1}^n \1_{\{ Y_i\in [\tau_0 - a_n ,\tau_0 ] \}} \Delta_i \log (\rho_n(Y_i)) =: A_{1n} \!+\!  A_{2n} ,
	$$
	for some sequence of real numbers $a_n$, $n\geq 1$, decreasing to zero. The rate of $a_n$ should be taken such that, on one hand,  for any constant $C>0$,
	\begin{equation}\label{eq:up_A2}
	\p (\limsup_{n\rightarrow\infty} A_{2n} > C) = 0,
	\end{equation}
	and, on the other hand, the $\limsup$ of $A_{1n}$ could be controlled by a function of $\tau^*$ almost surely. More precisely, since 
	$$
	A_{2n} \leq \frac{\log n }{n}\sum_{i=1}^n \1_{\{ Y_i\in [\tau_0 - a_n ,\tau_0 ] \}} \Delta_i ,
	$$
	we take $a_n$ such that $p_n\log n \rightarrow 0$ and $p_n \log^2 n \rightarrow \infty$, where 
	\begin{multline*}
	p_n = \p (Y\in [\tau_0-a_n,\tau_0] , \Delta = 1) = \mathbb E \left[ \phi(\gamma_0,X) \int_{[\tau_0-a_n,\tau_0]} F_C([t,\infty)|X,Z) F_u(dt|Z)\right].
	\end{multline*} 
	Then, by Theorem 1(i) from  \cite{Wellner}, we have 
	$$
	\lim_{n\rightarrow \infty }  \frac{1 }{p_n} \frac{1}{n}\sum_{i=1}^n \1_{\{ Y_i\in [\tau_0 - a_n ,\tau_0 ] \}} \Delta_i = 1,\quad a.s.,
	$$
	which implies \eqref{eq:up_A2}. On the other hand, we have
	$$
	A_{1n} \leq \log\left(\sup_{t\in [\tau^*,\tau_0 - a_n  ] } \rho_n(t) \right) \times  \frac{1}{n}\sum_{i=1}^n \1_{\{ Y_i\in [\tau^*,\tau_0   ]\}} \Delta_i .
	$$
	By the same  Theorem 1(i) from  \cite{Wellner},
	$$
	\lim_{n\rightarrow \infty } \sup_{t\in [\tau^*,\tau_0 - a_n  ] }\left[  \rho_n(t)  \frac{\p (Y\in [t,\tau_0 - a_n  ] , \Delta = 1) }{\p (Y\in [t,\tau_0 - a_n  ] ) } \right]  = 1,\quad a.s.
	$$
	By our assumptions, there exists a constant $C_r$, independent of $\tau^*$, $\beta$, $\gamma$ and $\Lambda$, such that 
	$$
	1< \inf_{t\in [\tau^*,\tau_0{-a_n} ] }\frac{\p (Y\in [t,\tau_0 - a_n  ] ) }{\p (Y\in [t,\tau_0 - a_n  ] , \Delta = 1) } < \sup_{t\in [\tau^*,\tau_0{-a_n} ] }\frac{\p (Y\in [t,\tau_0 - a_n  ] ) }{\p (Y\in [t,\tau_0 - a_n  ] , \Delta = 1) } \leq C_r.
	$$
	Gathering facts, deduce with probability 1, for sufficiently large $n$, 
	$$
	\left| r_{1n} (\hat \Lambda_{n};\tau^*) - r_{1n} ( \tilde \Lambda_{0,n} ;\tau^*)\right|   \leq  c  \frac{N^*}{n} ,
	$$
	where $N^*$ be the number of uncensored observations in $[\tau^*,\tau_0]$ and $c$ is some constant (independent of $\tau^*$, $\beta$, $\gamma$ and $\Lambda$). Here,  $N^*$ is a binomial random variable  with $n$ trials and  success probability 
	\begin{multline*}
	p^* = \mathbb P ( Y\geq\tau^* , \Delta = 1) = \mathbb E \left[ \phi(\gamma_0,X) \int_{[\tau^*,\tau_0]} F_C([t,\infty)|X,Z) F_u(dt|Z)\right]\\ \leq \sup_{x} \phi(\gamma_0,x) \mathbb P (Y\geq\tau^* ).
	\end{multline*}

	To bound $r_{3n}=r_{3n}(\Lambda, \beta;\tau^*)$, we note that $\1_{\{B_i=1\}} =  \Delta_i  + (1-\Delta_i )\1_{\{B_i=1\}}$ and rewrite 
	\begin{multline*}
	r_{3n} = \frac{1}{n}\sum_{i=1}^n \1_{\{ Y_i\geq \tau^*  \}} e^{\beta^\prime Z_i}\Delta_i \{  \Lambda (Y_i) - \Lambda (\tau^*-) \}\\- \frac{\Lambda (\tau^*-)}{n}\sum_{i=1}^n \1_{\{ Y_i\geq \tau^*  \}} e^{\beta^\prime Z_i}\1_{\{B_i=1\}} (1- \Delta_i)  = r_{3an}-r_{3bn}.
	\end{multline*}
	On one hand,  
	$$
	r_{3an} \leq  \frac{c_u}{c_l} \times \frac{N^*}{n}.
	$$
	The last inequality is obtained by bounding the jumps of $\Lambda$ and using the following identity: for any integer $M\geq 1$, 
	$$
	\sum_{k=1}^M \sum_{j=k}^M \frac{1}{j} = \sum_{j,k=1}^M  \frac{\1_{\{k\leq j\}} }{j}  = M. 
	$$
	To bound $r_{3bn}$, let us note that 
	$$
	\Lambda (\tau^*-) \leq \frac{1}{c_l} \sum_{j=N^*+1}^N \frac{1}{j} \leq c_1 \log \frac{N}{N^*},
	$$
	with $c_1$ some constant depending only on $c_l$ and the maximal value of the convergent sequence 
	$$\sum_{j=1}^m \frac{1}{j}  - \log m, \quad m\geq 1.$$ Here, $N =\sum_{i=1}^n \Delta_i$ is a binomial random variable  with $n$ trials and success probability
	$$
	p = \mathbb P ( \Delta = 1) = \mathbb E \left[ \phi(\gamma_0,X) \int_{[0,\tau_0]} F_C([t,\infty)|X,Z) F_u(dt|Z)\right]  .
	$$
	Thus
	$$
	r_{3bn} \leq c \log \frac{N}{N^*} \times \frac{1}{n}\sum_{i=1}^n \1_{\{ Y_i\geq \tau^*  \}} \1_{\{B_i=1\}} (1- \Delta_i) =  c \log \frac{N/n}{N^*/n} \times \frac{Q^*}{n},
	$$
	where $Q^*$ is a binomial variable 
	with $n$ trials and success probability 
	\begin{multline*}
	q^* =  \mathbb E \left[ \phi(\gamma_0,X) \int_{[\tau^*,\tau_0]} F_u([t,\tau_0]|X,Z) F_C(dt|X,Z)\right]\\\leq  \mathbb E \left[ \phi(\gamma_0,X)   F_C([\tau^*,\tau_0]|X,Z) F_u([\tau^*,\tau_0]|Z)\right] \\
	\leq \left[ \sup_{x} \phi(\gamma_0,x) \right]
	\left[\sup_{x,z} \frac{F_C([\tau^*,\tau_0]|X=x,Z=z)}{\tau_0 - \tau^*}\right] 
	\times (\tau^* -\tau_0)\times  \mathbb P (T_0\geq \tau^*) \\ \leq c  (\tau^* -\tau_0)\times  \mathbb P (T_0\geq \tau^*) ,
	\end{multline*}
	and $c$ is some constant. By the strong Law of Large Numbers, 
	$$
	\lim_{n\rightarrow \infty} \log \frac{N}{N^*}  = \log \frac{p}{p^*}, \, a.s.
	$$
	
	Next, to bound $ r_{2n}= r_{2n} (\beta;\tau^*) $, we write
	$$
	r_{2n} = \frac{1}{n}\sum_{i=1}^n \1_{\{ {\tau^*\leq T_0<\tau_0} \}} e^{\beta^\prime Z_i} \Delta_i 
	\leq c_u \frac{N^*}{n}.
	$$ 
	
	Finally, to control $ r_{4n}= r_{4n} (\Lambda, \beta, \gamma;\tau^*)$, 
	since $\1_{\{B_i=0\}} = (1-\Delta_i) \1_{\{B_i=0\}}$ and $\log(1+u)\leq u,$ $\forall u\geq 0$,
	we have 
	\begin{multline*}
	r_{4n} = \frac{1}{n}\sum_{i=1}^n \1_{\{\tau_0\geq Y_i \geq  \tau^*\}} \1_{\{B_i=0\}}  \log\left\{1+\frac{\phi_i(\gamma)\exp\left(-\Lambda(Y_i)e^{\beta'Z_i}\right)}{1-\phi_i(\gamma)}\right\}  \\
	+ \frac{1}{n}\sum_{i=1}^n \1_{\{Y_i \geq  \tau^*\!\}}  (1-\Delta_i ) \1_{\{B_i=1\}}\log\left\{1-\phi_i(\gamma)+\phi_i(\gamma)\exp\left(-\Lambda(Y_i)e^{\beta'Z_i}\right)\right\}.
	\end{multline*}
	Thus 
	\begin{multline*}
	|r_{4n}| \leq \sup_{\gamma,x}\left| \frac{\phi(\gamma,x)}{1-\phi(\gamma,x)}\right| \exp\left(-c_l \Lambda(\tau^*-)\right)  \frac{1}{n}\sum_{i=1}^n \1_{\{Y_i \geq  \tau^*\}}  \1_{\{B_i=0\}}
	\\ + 
	\sup_{\gamma,x}\left| \log\left\{1-\phi(\gamma,x)\right\}\right| \times  \frac{1}{n}\sum_{i=1}^n  \1_{\{Y_i \geq  \tau^*\!\}} (1-\Delta_i )\1_{\{B_i=1\}}\\
	= c_1\exp\left(-c_l \Lambda(\tau^*-)\right)   \frac{R^*}{n} + c_2  \frac{Q^*}{n},
	\end{multline*}
	where $R^*$ is a binomial variable 
	with $n$ trials and success probability 
	$$
	r^* =  \mathbb E \left[ \{1-\phi(\gamma_0,X)\}  F_C([\tau^*,\tau_0]|X,Z)\right],
	$$
	and $c_1$ and $c_2$ are some constants. 
	Deduce that there exists a constant $c_3$ such that
	$$
	|r_{4n}| \leq c_3 \left( 
	\frac{R^*}{n} +   \frac{Q^*}{n}\right).
	$$
	
	Gathering facts, there exists a constants $C^*$ and 
	$c^*$, independent of $\tau^*$, $\beta$, $\gamma$ and $\Lambda$, such that 
	$$
	\left| \hat{l}_n(\beta,   \Lambda,  \gamma) - \hat{l}_n^*(\beta,  \Lambda_{|\tau^*}, \gamma) \right| \leq C^* \left\{ \frac{N^*}{n} + \frac{Q^*}{n}\left[1 + \log  {\frac{n}{N^*}}\right]  +  \frac{R^*}{n}  \right\} + o_{a.s.} (1), 
	$$
	where $N^*$, $Q^*$ and $R^*$ are binomial  with $n$ trials and success probabilities $p^*$, $q^*$ and $r^*$, respectively, and
	$$
	p^* + q^* \leq  c^*\p (T_0\geq \tau^*)  \quad \text{and}\quad r^* \leq c^* \p (C \in [\tau^*, \tau_0]) .
	$$
	
\end{proof}


\begin{lemma}
	\label{lem:consist_full_3}
	Assume that for any $x$ and $z$, the conditional distribution of the censoring times given $X=x$ and $Z=z$
	is such that there exists a constant $C>0$ such that
	$$
	\inf_{[t_1,t_2]\subset [0,\tau_0]} \inf_{x,z}\{ F_C(t_2|x,z)- F_C(t_1|x,z) \}>C(t_2-t_1) ,\qquad \forall \delta>0.
	$$
	Let $0<\bar{\tau}<\tau_0$ and $\epsilon>0$. There exist  $c_1, c_2>0$, $\delta>0$ such that $c_1+c_2=1$ and 
	\[
	\begin{split}
	&\inf\Bigg\{\mathbb E [\ell (Y^*,\Delta^*,X,Z;\beta_0,\Lambda_{0|\tau^*} ,\gamma_0)]-	\mathbb E [\ell (Y^*,\Delta^*,X,Z; \beta, \Lambda_{|\tau^*},\gamma_0)]:\\
	&\left.\qquad\quad \bar\tau+\delta\leq \tau^*<\tau_0,\quad \|\beta-\beta_0\|\geq c_1\epsilon\quad\text{ or}\quad \sup\limits_{t\in[0,\bar\tau]}|\Lambda(t)-\Lambda_0(t)|\geq c_2\epsilon\right\}>0
	\end{split}
	\]
\end{lemma}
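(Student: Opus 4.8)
The plan is to read off the statement as a statement about Kullback--Leibler divergences. By the computation in the proof of Lemma~\ref{theo:identif}, for each fixed $\tau^*$ the quantity $\mathbb{E}[\ell(Y^*,\Delta^*,X,Z;\beta_0,\Lambda_{0|\tau^*},\gamma_0)]-\mathbb{E}[\ell(Y^*,\Delta^*,X,Z;\beta,\Lambda_{|\tau^*},\gamma_0)]$ is exactly the $(X,Z)$-averaged KL divergence between the true conditional law of $(Y^*,\Delta^*)$ and the one indexed by $(\beta,\Lambda_{|\tau^*})$ (the censoring-only terms cancel in the difference). Hence each summand is nonnegative and vanishes only at the truth. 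Taking $c_1=c_2=1/2$ and any $\delta\in(0,\tau_0-\bar\tau)$, it remains to bound this divergence away from zero over the stated set, \emph{uniformly} in $\tau^*\in[\bar\tau+\delta,\tau_0)$.

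The essential difficulty is this uniformity in $\tau^*$: as $\tau^*\uparrow\tau_0$ the cure model loses identifiability, which is precisely the obstruction discussed earlier, so one cannot simply pass to the limit $\tau^*\to\tau_0$. I would sidestep this by conditioning on the observable event $E=\{Y<\bar\tau\}$. Since $\bar\tau<\bar\tau+\delta\le\tau^*$, on $E$ one has $Y^*=Y$ and $\Delta^*=\Delta$, so the contribution of $E$ to $\ell$ coincides with the untruncated contribution from \eqref{def:hat_l_cox_g} and is independent of $\tau^*$. The chain rule for KL relative to the observable indicator $\mathbb{1}_E$ then gives, after discarding the two other nonnegative terms, the lower bound $\mathbb{E}[\ell(\cdots;\beta_0,\Lambda_{0|\tau^*})]-\mathbb{E}[\ell(\cdots;\beta,\Lambda_{|\tau^*})]\ge \mathbb{E}_{(X,Z)}[\mathbb{P}(Y<\bar\tau\mid X,Z)\,K(\beta,\Lambda;X,Z)]$, where $K$ is the conditional KL divergence of $(Y,\Delta)$ given $E$. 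The right-hand side no longer depends on $\tau^*$ and depends on $\Lambda$ only through $\Lambda|_{[0,\bar\tau)}$. Moreover, since $\{C<\bar\tau\}\subseteq E$, the assumed censoring lower bound yields $\mathbb{P}(Y<\bar\tau\mid x,z)\ge\mathbb{P}(C<\bar\tau\mid x,z)\ge C\bar\tau>0$ uniformly in $(x,z)$, so it suffices to bound $\mathbb{E}_{(X,Z)}[K(\beta,\Lambda;X,Z)]$ away from zero on the $\tau^*$-free set $\{\|\beta-\beta_0\|\ge\epsilon/2\}\cup\{\sup_{[0,\bar\tau]}|\Lambda-\Lambda_0|\ge\epsilon/2\}$.

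For this last step I would argue by contradiction and compactness. Suppose there were $(\beta_k,\Lambda_k)$ in this set with $\mathbb{E}[K(\beta_k,\Lambda_k;X,Z)]\to0$. Since $B$ is compact, pass to $\beta_k\to\beta^*$; since the $\Lambda_k$ are monotone, Helly's selection theorem (as in Lemma~2 of \cite{Lu}) gives a subsequence along which $t\mapsto\exp(-\Lambda_k(t)e^{\beta_k'z})$ converges pointwise on $[0,\bar\tau)$ to a monotone limit. Lower-bounding $K$ by a squared Hellinger distance (Pinsker), vanishing of the averaged divergence forces the limiting conditional law on $E$ to equal the truth for a.e.\ $(x,z)$; the inversion formula of \cite{PK}, applied on $[0,\bar\tau)$ exactly as in Lemma~\ref{theo:identif}, then identifies $S_u$, and full rank of $\mathrm{Var}(Z)$ yields $\beta^*=\beta_0$ and $\Lambda_k\to\Lambda_0$ uniformly on $[0,\bar\tau]$ (using continuity of $\Lambda_0$ and the convention $\Lambda(\bar\tau)=\lim_{t\uparrow\bar\tau}\Lambda(t)$). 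This contradicts the constraint, which is inherited by the limit.

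The step I expect to be genuinely delicate is this compactness argument, because $\Lambda$ is infinite-dimensional and a priori unbounded: a minimizing sequence may have $\Lambda_k(\bar\tau^-)\to\infty$. The Hellinger formulation is what keeps the functional well behaved there, since an escaping $\Lambda_k$ collapses $S_u(\cdot;\beta_k,\Lambda_k)$ toward zero near $\bar\tau$, so the candidate sub-law on $E$ stays Hellinger-bounded-away from the truth and the averaged divergence cannot vanish, while the constraint $\sup_{[0,\bar\tau]}|\Lambda_k-\Lambda_0|\ge\epsilon/2$ is automatically inherited. Verifying that these facts jointly exclude every escaping subsequence, and that the inversion formula truly pins down $S_u$ on all of $[0,\bar\tau)$ (which is where the uniform censoring lower bound is used, ensuring events are observable throughout $[0,\bar\tau]$), is the technical heart of the proof.
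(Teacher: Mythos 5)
Your route is genuinely different from the paper's: the paper also starts from the Kullback--Leibler interpretation, but then applies Pinsker's inequality and, for each alternative $(\beta,\Lambda)$, exhibits an \emph{explicit} distinguishing event (censored observations in a small interval $[\bar t,\bar t+\delta]$ intersected with $\{(\beta-\beta_0)'Z\geq 0\}$ or $\{(\beta-\beta_0)'Z\geq z_0\}$) whose probability gap is bounded below by direct computation; this is where the censoring lower bound, the case distinction ($\Lambda$ deviates versus only $\beta$ deviates), the constraint $\bar\tau+\delta\leq\tau^*$, and the specific constants $c_1=4/(\lambda_{\min}^{1/2}+4)$, $c_2=\lambda_{\min}^{1/2}/(\lambda_{\min}^{1/2}+4)$ all come from. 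Your conditioning on $E=\{Y<\bar\tau\}$ is an elegant alternative way to kill the $\tau^*$-dependence, and it would indeed let you take $c_1=c_2=1/2$. But the proposal has a genuine gap exactly at the step you delegate to Lemma~\ref{theo:identif}.

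The gap is this: vanishing of the KL divergence between the laws \emph{conditional on} $E$ does not give equality of the sub-densities on $E$; it gives proportionality $q(y,d\,|\,x,z)=g(x,z)\,p(y,d\,|\,x,z)$ with the unknown normalization $g(x,z)=Q(E|x,z)/P(E|x,z)$. The inversion-formula argument of Lemma~\ref{theo:identif} is applied to the \emph{unconditional} sub-distributions $H_0^*,H_1^*$ (which carry the masses at $\tau^*$), so it does not apply ``exactly'' here, and the normalization is not innocuous: the Helly limit $\Lambda^*$ of a minimizing sequence need not satisfy $\Lambda^*(0+)=0$, and limits with an atom at the origin (i.e. $S_u^*(0+\,|\,z)<1$) generate precisely the proportional-but-unequal alternatives hidden by $g$. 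Concretely, when $\phi(\gamma_0,\cdot)$ is constant (intercept-only incidence, which the lemma must cover), every survival function of the form $S_u^*(y|z)=S_u^*(0+|z)-g(z)\{1-S_u(y|z;\beta_0,\Lambda_0)\}$ with $g(z)=1-\phi+\phi S_u^*(0+|z)$ reproduces the true conditional law on $E$; excluding such limits requires a further argument using the Cox structure, e.g.\ show that equality of the uncensored parts at two values of $z$ forces $\Lambda^*=c_1\Lambda_0+c_2$, that matching the coefficient of $\Lambda_0(y)$ in the exponent gives $c_1e^{\beta^{*\prime}z}=e^{\beta_0'z}$ for a.e.\ $z$ (hence $\beta^*=\beta_0$ and $c_1=1$ by full rank of $\mathrm{Var}(Z)$), and that the censored part then forces $c_2=0$. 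This argument, together with the weak convergence of the conditional laws along Helly subsequences (allowing mass to escape to $\{0\}$ and $\{\bar\tau\}$), the joint lower semicontinuity of KL, and the Fatou step for the $(X,Z)$-average, is the actual content of your proof and is nowhere supplied; the paper's explicit-event computation avoids all of this compactness machinery and yields a quantitative bound besides.
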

\begin{proof}
	Note that, for any $\tau^*\in(\bar\tau,\tau_0)$,
	\[
	\mathbb E [\ell (Y^*,\Delta^*,X,Z;\beta_0,\Lambda_{0|\tau^*} ,\gamma_0)]-	\mathbb E [\ell (Y^*,\Delta^*,X,Z; \beta, \Lambda_{|\tau^*},\gamma_0)]
	\]
	is the Kullback-Leibler divergence $KL(\p|Q)$, where $\p$ and $Q$ are the probability measures of $(Y^*,\Delta^*,X,Z)$ when the true parameters are $(\Lambda_0,\beta_0,\gamma_0)$ and $(\Lambda,\beta,\gamma_0)$ respectively. By Pinsker's inequality, we have
	\[
	KL(\p|Q)\geq 2\delta(\p,Q)^2,
	\]
	where $\delta(\p,Q)$ is the total variation distance between the two probability measures, defined as 
	\[
	\delta(\p,Q)=\sup_{A}|\p(A)-Q(A)|,
	\]
	where the supremum is taken over all measurable sets $A$. We want to find a positive lower bound for $\delta(\p,Q)$ independent of $\tau^*$ and $Q$, for all $Q$ such that $\|\beta-\beta_0\|\geq c_1\epsilon$  or $ \sup_{t\in[0,\bar\tau]}|\Lambda(t)-\Lambda_0(t)|\geq c_2\epsilon$. Hence, it is sufficient to find $k>0$ and for each such $Q$ an event $A$, {which could depend on $Q$}, for which $|\p(A)-Q(A)|>k.$  Without loss of generality we can assume  that the covariate vector $Z$ has mean zero. 
	
	\bigskip
	\textit{Case 1.} If $\sup_{t\in[0,\bar\tau]}|\Lambda(t)-\Lambda_0(t)|\geq c_2\epsilon$, there exists $\bar{t}\in[0,\bar\tau] $ such that either
	\begin{equation}\label{eq:cas1}
	\Lambda(\bar{t})\geq\Lambda_0(\bar{t})+c_2\epsilon
	\end{equation}
	or 
	\begin{equation}\label{eq:cas2}
	\Lambda(\bar{t})\leq\Lambda_0(\bar{t})-c_2\epsilon.
	\end{equation} 
We first  consider  \eqref{eq:cas1} and define 
	$$
	\delta=\min\left
	\{\frac{\tau_0-\bar\tau}{2} ,\;\frac{c_2\epsilon}{2\sup_{t\in[0, (\bar\tau+\tau_0)/2]}\lambda_0(t)}\right\}.
	$$   
	It follows that for all 
	$t\in [\bar{t},\bar{t}+\delta]\subset [0, (\bar\tau+\tau_0)/2] \subset[0,{\tau_0})  $ we have $\Lambda(t)\geq\Lambda_0(t)+\frac12c_2\epsilon$. Indeed, we can write 
	\begin{equation*}
	\Lambda(t)\geq \Lambda(\bar{t})\geq\Lambda_0(\bar{t})+c_2\epsilon\geq \Lambda_0(t)-\delta\sup_{u\in[0,(\bar\tau+\tau_0)/2]}\lambda_0(u)+c_2\epsilon\geq \Lambda_0(t)+\frac12c_2\epsilon,\quad \forall t\in [\bar{t},\bar{t}+\delta] .
	\end{equation*}
	Since  $Z$ has mean zero, $(\beta-\beta_0)^\prime Z$ also has zero mean. 
	Moreover, since $B$ is compact and $Z$ is bounded non degenerated variance, we have
	\begin{equation}\label{whynot}
	\inf_{\beta\in B} \p((\beta-\beta_0)^\prime Z\geq 0)>0\quad \text{ and } \quad \inf_{\beta\in B} \p((\beta-\beta_0)^\prime Z\leq 0)>0
	\end{equation}
	{(see proof below)}.
	Let $A_\beta$ be the event $\{\Delta^*=0, Y^*\in[\bar{t},\bar{t}+\delta],{(\beta-\beta_0)^\prime Z}\geq 0\}$, which depends on $\beta$ and thus on $Q$. However, by \eqref{whynot} and the construction of the model, the event $A_\beta$ has positive probability which stays bounded away from zero. Moreover, we have
	\[
	\begin{split}
{\p(A_\beta)-Q(A_\beta)}&=\iint\limits_{{(\beta-\beta_0)^\prime z\geq 0}}\int_{\bar{t}}^{\bar{t}+\delta}\phi(\gamma_0,x)\\
	&\qquad \quad\times \left\{\exp\left(-\Lambda_0(t)e^{\beta_0^\prime z}\right)-\exp\left(-\Lambda(t)e^{\beta ^\prime z}\right)\right\} F_C(\dd t|x,z) G(\dd x,\dd z).
	\end{split}
	\]
	Whenever $(\beta-\beta_0)^\prime z\geq 0$, by the mean value theorem, we obtain
	\[
	\begin{split}
	\exp\left(-\Lambda_0(t)e^{\beta_0^\prime z}\right)-\exp\left(-\Lambda(t)e^{\beta ^\prime  z}\right)
	&=\left\{\Lambda(t)e^{\beta^\prime z}-\Lambda_0(t)e^{\beta_0 ^\prime z}\right\}e^{-\xi}\\
	&=\left[\{\Lambda(t)-\Lambda_0(t)\}e^{\beta_0^\prime z}+\Lambda(t)\{e^{\beta^\prime z}-e^{\beta_0 ^\prime z}\}\right] e^{-\xi}\\
	& { \geq \{\Lambda(t)-\Lambda_0(t)\}e^{\beta_0^\prime z} e^{-\xi},}
	\end{split}
	\]
	for some $\xi>0$ such that $|\xi-\Lambda_0(t)e^{\beta_0^\prime z}|\leq |\Lambda(t)e^{\beta^\prime z}-\Lambda_0(t)e^{\beta_0^\prime z}|$, {$t \in[\bar{t}, \bar{t}+\delta]$. }
	Now, let 
	\[
	M(t)=\Lambda_0(t)\frac{\sup_{\beta,z}e^{\beta^\prime z}}{\inf_{\beta,z}e^{\beta ^\prime z}}+\frac{\log 2}{\inf_{\beta,z}e^{\beta^\prime  z}},\quad {t \in[\bar{t}, \bar{t}+\delta]. }
	\]
	Then, for $(\beta-\beta_0)^\prime z\geq 0$ and $t\in[\bar{t},\bar{t}+\delta],$ such that  $\Lambda(t)\leq M(t)$ we simply use \eqref{eq:cas1} and write 
	\[
	\exp\left(-\Lambda_0(t)e^{\beta_0^\prime z}\right)-\exp\left(-\Lambda(t)e^{\beta^\prime  z}\right)\geq \frac12 c_2\epsilon e^{\beta_0^\prime z}e^{-\xi}\geq k_1\epsilon ,
	\]
	for some constant $k_1>0$ independent of $\Lambda$, $\beta$ and the event $A_\beta$, because $M(t)$ is uniformly bounded on $[0,(\bar{\tau}+\tau_0)/2]$ and thus $e^{-\xi}$ is bounded away from zero.  On the other hand, for $t\in[\bar{t},\bar{t}+\delta]$ such that  $\Lambda(t)>M(t)$, we have 
	\[
	\exp\left(-\Lambda(t)e^{\beta^\prime z}\right)\leq \exp\left(-M(t)\inf_{\beta,z}e^{\beta ^\prime z}\right)\leq \frac{1}{2}\exp\left(-\Lambda_0(t)e^{\beta_0^\prime z}\right).
	\] 
	Consequently,
	\begin{multline*}
	\exp\left(-\Lambda_0(t)e^{\beta_0^\prime z}\right)-\exp\left(-\Lambda(t)e^{\beta ^\prime z}\right)\geq \frac{1}{2}\exp\left(-\Lambda_0(t)e^{\beta_0 ^\prime z}\right) \\ \geq \frac{1}{2}\exp\left(-\Lambda_0((\bar\tau+\tau_0)/2)\sup_{\beta,z}e^{\beta_0 ^\prime z}\right)=k_2(\bar\tau)>0.
	\end{multline*}
	We conclude that, for any $t\in [\bar{t},\bar{t}+\delta]$,
	\[
	\exp\left(-\Lambda_0(t)e^{\beta_0^\prime z}\right)-\exp\left(-\Lambda(t)e^{\beta ^\prime z}\right)\geq {\min\{k_1\epsilon , k_2(\bar\tau)\}}>0.
	\]
	It follows that 
	\begin{multline*}
	|\p(A_\beta)\! -Q(A_\beta)|\geq {\min\{k_1\epsilon , k_2(\bar\tau)\}}\inf_x\phi(\gamma_0,x)\\
	\times \iint\limits_{{(\beta-\beta_0)^\prime z\geq 0}} \left\{F_C(\bar{t}+\delta|x,z)\!-F_C(\bar{t}|x,z)\right\} G(\dd x,\dd z).
	\end{multline*}
	By assumption  we have  
	\[
	\inf_{x,z}F_C(\bar{t}+\delta|x,z)-F_C(\bar{t}|x,z)\geq C \delta,
	\] 
	yielding  that there exist another constant $k_3>0$  independent of $\Lambda$, $\beta$ and the event {$A_\beta$ (but depending on $\epsilon$ and $\bar \tau$)} such that
	\[
	\forall \beta\in B,\quad |\p(A_\beta)-Q(A_\beta)|\geq k_3 \inf_{\beta\in B} \p((\beta-\beta_0)^\prime Z\geq 0)
	>0.
	\]
	Note that {the uniform lower bound} holds for any choice of the constants $c_1$ and $c_2$ in the statement of the Lemma. 
	
 We next consider  \eqref{eq:cas2}. Let
	$$
	{\bar\delta}=\min\left\{\frac{\bar \tau}{2},\, \frac{c_2\epsilon}{2\sup_{t\in[0,\bar\tau]}\lambda_0(t)}  \right\}.
	$$   
	It follows that for all 
	$t\in [\, \bar{t}-{\bar\delta},\bar{t}\, ]  $ we have 
	$\Lambda(t)\leq\Lambda_0(t)-\frac12c_2\epsilon$. Indeed, we can write 
	\begin{multline*}
	\Lambda(t)\leq \Lambda(\bar{t}) \leq \{ \Lambda_0(\bar{t}) - \Lambda_0(t) \} + \Lambda_0(t) - c_2\epsilon\\
	\leq {\bar\delta}\sup_{u\in[0,\bar{\tau}]}\lambda_0(u)+ \Lambda_0(t) - c_2\epsilon \leq 
	\Lambda_0(t)-\frac12c_2\epsilon,\quad \forall t\in [\, \bar{t}-{\bar\delta},\bar{t}\, ] .
	\end{multline*}
	Next we redefine  $A_\beta$ as the event $\{\Delta^*=0, Y^*\in[\bar{t}-{\bar\delta},\bar{t}\,],{(\beta-\beta_0)^\prime Z}\leq 0\}$, which depends on $\beta$ and thus on $Q$. However, by \eqref{whynot} and the construction of the model, the event $A_\beta$ has positive probability which stays bounded away from zero. Moreover, we have
	\[
	\begin{split}
	\p(A_\beta)-Q(A_\beta)&=\iint\limits_{{(\beta-\beta_0)^\prime z\leq 0}}\int_{\bar{t}-{\bar\delta}}^{\bar{t}}\phi(\gamma_0,x)\\
	&\qquad \quad\times \left\{\exp\left(-\Lambda_0(t)e^{\beta_0^\prime z}\right)-\exp\left(-\Lambda(t)e^{\beta ^\prime z}\right)\right\} F_C(\dd t|x,z) G(\dd x,\dd z).
	\end{split}
	\]
	Whenever $(\beta-\beta_0)^\prime z\leq 0$, by the mean value theorem, we obtain
	\[
	\begin{split}
	\exp\left(-\Lambda_0(t)e^{\beta_0^\prime z}\right)-\exp\left(-\Lambda(t)e^{\beta ^\prime  z}\right)
	\leq \{\Lambda(t)-\Lambda_0(t)\}e^{\beta_0^\prime z} e^{-\xi} \leq - \frac12 c_2\epsilon e^{\beta_0^\prime z}e^{-\xi},
	\end{split}
	\]
	for some $\xi>0$ such that $|\xi-\Lambda_0(t)e^{\beta_0^\prime z}|\leq |\Lambda(t)e^{\beta^\prime z}-\Lambda_0(t)e^{\beta_0^\prime z}|\leq 2\Lambda_0(t)e^{\beta_0^\prime z} $, $t \in[\bar{t}-{\bar\delta}, \bar{t}\,]$. Thus necessarily $0<\xi \leq 2\Lambda_0(\bar\tau)e^{\beta_0^\prime z}$, and thus $e^{-\xi}$ stays away from zero. 
	Using arguments as we used for the case \eqref{eq:cas2}, we deduce that $\p(A_\beta)-Q(A_\beta)$ is negative and away from zero. Thus we obtain the result with 
	$\bar\tau \leq \tau^*<\tau_0$ instead of $\bar\tau+\delta \leq \tau^*<\tau_0$.
	Finally, it remains to recall that $\inf$ is a decreasing function of nested sets. Now the arguments for \emph{Case 1} are complete for any choice of the constants $c_1$ and $c_2$ in the statement of the Lemma.

	\bigskip
	\textit{Case 2.} If $\sup_{t\in[0,\bar\tau]}|\Lambda(t)-\Lambda_0(t)|\leq c_2\epsilon$, then necessarily $\|\beta-\beta_0\|\geq c_1\epsilon.$ In particular we also have that $\Lambda(\bar{\tau})\leq\Lambda_0(\bar{\tau})+c_2\epsilon$, so all such functions $\Lambda$ are uniformly bounded on $[0,\bar{\tau}]$. Without loss of generality we can also assume that 
	$\Lambda_0(\bar\tau/2)\geq 1$ (otherwise we can take a larger $\bar\tau$).
	Note that 
	$$
	Var ((\beta-\beta_0)^\prime Z) = (\beta-\beta_0)^\prime Var (Z) (\beta-\beta_0) \geq ( c_1\epsilon)^2 \lambda_{min}   ,
	$$
	with $\lambda_{min} $ the smallest eigenvalue of $Var(Z)$. From this lower bound for the variance of $(\beta-\beta_0)^\prime Z$, and since $Z$ is centered and has a bounded support, we have
	\begin{equation}\label{whynot2}
	\inf_{|\beta-\beta_0|\geq c_1\epsilon}\left[ \p((\beta-\beta_0)^\prime Z\geq z_0)+ \p((\beta-\beta_0)^\prime Z\leq - z_0)\right] >\frac{( c_1\epsilon)^2 \lambda_{min}}{2\sup \|Z\|^2},
	\end{equation}
	for $z_0= c_1\epsilon \lambda^{1/2}_{min}/2  $ {(see proof below)}.
	If $$\inf_{|\beta-\beta_0|\geq c_1\epsilon}\p\left((\beta-\beta_0)^\prime Z\geq z_0\right)> \frac{( c_1\epsilon)^2\lambda_{min}}{2\sup \|Z\|^2},$$ let $A_\beta$ be the event $\{\Delta^*=0,Y^*\in[\bar\tau/2,\bar\tau],(\beta-\beta_0)^\prime Z \geq z_0\}$. 
	By \eqref{whynot2} and the construction of the model, the event $A_\beta$ has positive probability which stays bounded away from zero.
	Next, as in \emph{Case 1},  we write
	\[
	\begin{split}
	\p(A_\beta) - Q(A_\beta)&= \iint\limits_{(\beta-\beta_0)^\prime  z \geq z_0}\int_{\frac12\bar{\tau}}^{\bar{\tau}}\phi(\gamma_0,x)\\
	&\qquad   \qquad \times \left\{\exp\left(-\Lambda_0(t)e^{\beta_0^\prime z}\right)-\exp\left(-\Lambda(t)e^{\beta^\prime z}\right)\right\} F_C(\dd t|x,z)\, G(\dd x,\dd z) ,
	\end{split}
	\]
	and
	\[
	\begin{split}
	\exp\left(-\Lambda_0(t)e^{\beta_0^\prime z}\right)-\exp\left(-\Lambda(t)e^{\beta^\prime z}\right)=\left[\{\Lambda(t)-\Lambda_0(t)\}e^{\beta_0^\prime z}+\Lambda(t)\{e^{\beta ^\prime z}-e^{\beta_0 ^\prime z}\}\right] e^{-\xi},
	\end{split}
	\]
	for some $\xi>0$ such that $|\xi-\Lambda_0(t)e^{\beta_0 ^\prime z}|\leq |\Lambda(t)e^{\beta^\prime z}-\Lambda_0(t)e^{\beta_0^\prime z}|$. From the boundedness of $\beta$, $z$,  $\Lambda$ and $\Lambda_0$ on $[0,\bar\tau]$, it follows that $e^{-\xi}\geq k_4>0$ for some $k_4$ independent of $\Lambda,$ $\beta$ and the event $A_\beta$ (but depending on $\bar\tau$).  Moreover, since for $t\in[\bar\tau/2,\bar\tau]$, $(\beta-\beta_0)^\prime  z\geq z_0$, 
	\[
	\left|\Lambda(t)-\Lambda_0(t)\right|e^{\beta_0^\prime z}\leq c_2\epsilon e^{\beta_0^\prime z},
	\]
	and 
	\[
	\begin{split}
	\Lambda(t)\{e^{\beta^\prime z}-e^{\beta_0^\prime z}\} 
	\geq \Lambda_0 (\bar\tau/2) e^{\beta_0^\prime z}\{   e^{(\beta-\beta_0)^\prime z} - 1\}\geq e^{\beta_0^\prime z}z_0 =  e^{\beta_0^\prime z}\lambda^{1/2}_{min}  \,c_1\epsilon /2,
	\end{split}
	\]
	we obtain 
	\[
	\exp\left(-\Lambda_0(t)e^{\beta_0^\prime z}\right)-\exp\left(-\Lambda(t)e^{\beta ^\prime z}\right)\geq \epsilon \left[ \lambda^{1/2}_{min}  \,c_1/2 - c_2\right] e^{\beta_0^\prime z}e^{-\xi} .
	\]
	Define
	\[
	c_1= \frac{4}{\lambda_{min}^{1/2} + 4}\qquad\text{and}\qquad c_2= \frac{\lambda_{min}^{1/2}}{\lambda_{min}^{1/2} + 4},
	\]
	such that $0<c_1,c_2<1$ and $c_1+c_2=1$, and deduce that
	\[
	\exp\left(-\Lambda_0(t)e^{\beta_0^\prime z}\right)-\exp\left(-\Lambda(t)e^{\beta ^\prime z}\right)\geq   \epsilon \lambda^{1/2}_{min}   e^{\beta_0^\prime z}k_4 \left\{ \lambda_{min}^{1/2} + 4 \right\}^{-1} .
	\]
	Deduce that, for any $\|\beta-\beta_0\|\geq c_1\epsilon$,
	\begin{multline*}
	\p(A_\beta)-Q(A_\beta)\geq \epsilon\left[ \lambda^{1/2}_{min}   e^{\beta_0^\prime z}k_4  \left\{ \lambda_{min}^{1/2} + 4\right\}^{-1} \right]\inf_{z}e^{\beta_0^\prime z}\inf_x\phi(\gamma_0,x) \\ \times \inf_{|\beta-\beta_0|\geq c_1\epsilon} \p(C\in[\bar\tau/2,\bar\tau],(\beta-\beta_0)^\prime Z \geq z_0)>0.
	\end{multline*}
	Note that this bound does not depend on $\delta$ in the statement of the Lemma. Finally it is easy to see that similar arguments apply when $\inf_{|\beta-\beta_0|\geq c_1\epsilon}\p((\beta-\beta_0)^\prime Z\leq - z_0)> ( c_1\epsilon)^2 \lambda_{min}/\{4\sup \|Z\|^2\} $, for the same expression of $z_0$. In this case, we define 
	$A_\beta= \{\Delta^*=0,Y^*\in[\bar\tau/2,\bar\tau],(\beta-\beta_0)^\prime Z \leq -z_0\}$ and follow the same steps as above to show that 
	\[
	\exp\left(-\Lambda_0(t)e^{\beta_0^\prime z}\right)-\exp\left(-\Lambda(t)e^{\beta ^\prime z}\right) <0,
	\]
	and the difference of exponentials stays away from zero.  Now, the proof of Lemma \ref{lem:consist_full_3} is complete. 
\end{proof}


\begin{proof}[\textsc{Proof of Equation \eqref{whynot}.}]
	For $\beta=\beta_0$ we have $\p((\beta-\beta_0)^\prime Z\geq 0)=1$ and thus we only have to study $\beta\neq \beta_0$. Since  $Var(Z)$ is non degenerated, $\p((\beta-\beta_0)^\prime Z= 0)<1$ for any $\beta\neq \beta_0$. Next, we could write 
	\begin{multline*}
	0 = \mathbb E \left( \frac{(\beta-\beta_0)^\prime Z}{\|\beta-\beta_0\|}  \right) = \mathbb E  \left( \frac{(\beta-\beta_0)^\prime Z}{\|\beta-\beta_0\|}  \1_{\{ (\beta-\beta_0)^\prime Z \geq 0\}  }\right) + 
	\mathbb E  \left( \frac{(\beta-\beta_0)^\prime Z}{\|\beta-\beta_0\|}  \1_{ \{ (\beta-\beta_0)^\prime Z < 0 \} }\right) \\ \leq \|Z\| \p((\beta-\beta_0)^\prime Z\geq 0) +\mathbb E  \left( \frac{(\beta-\beta_0)^\prime Z}{\|\beta-\beta_0\|} \1_{\{ (\beta-\beta_0)^\prime Z < 0 \} }\right) .
	\end{multline*}
	It remains to notice that 
	$$\sup_{\|\tilde \beta\|=1}  \mathbb E  \left( \tilde \beta^\prime Z\1_{\{ \tilde \beta^\prime Z < 0 \} }\right)<0.$$
	Indeed, if $\mathbb E  \left( \tilde \beta^\prime Z\1_{\{ \tilde \beta^\prime Z < 0 \} }\right)$, which is negative,  could be arbitrarily close to zero, and since $\mathbb E  \left( \tilde \beta^\prime Z\1_{\{ \tilde \beta^\prime Z < 0 \} }\right)= - \mathbb E  \left( \tilde \beta^\prime Z\1_{\{ \tilde \beta^\prime Z \geq 0 \} }\right)$, we deduce that $\mathbb E  (| \tilde \beta^\prime Z|)$ could be arbitrarily close to zero, for suitable $ \tilde \beta$ with unit norm. Since the support of $Z$ is bounded and 
	$$
	\lambda_{min} (Var(Z)) \leq \mathbb E  (| \tilde \beta^\prime Z|^2)\leq \mathbb E  (| \tilde \beta^\prime Z|)\sup \|Z\|, \quad \forall  \|\tilde \beta\|=1,
	$$
	we thus get a contradiction with the assumption that $\lambda_{min} $, the smallest eigenvalue of $Var(Z)$, is positive. 
	Deduce that \eqref{whynot} holds true. \end{proof}


\quad

\begin{proof}[\textsc{Proof of Equation \eqref{whynot2}.}]
	It suffices to prove the following property.  
	Let $U$ be a centered variable such that $|U|\leq M$ for some constant $M$ and $Var(U)$ is bounded from below by some constant $0<C<M^2$. Then there exists $z_0>0$ such that $\p(|U|\geq z_0)> C/2M^2$ with $z_0$ depending on $M$ and $C$ but independent of the law of $U$. 
	
	For any $0<z_0 <M$ we can write 
	$$
	C\leq \mathbb E (U^2) = \mathbb E \left( U^2 \1 _{ \{ |U| \geq z_0 \} } \right) +
	\mathbb E \left( U^2 \1 _{ \{ |U| < z_0 \} } \right)\\
	\leq M^2 \p (|U| \geq z_0 ) + z_0^2 \{ 1 - \p (|U| \geq   z_0 ) \}.
	$$
	Deduce that 
	$$
	\p (|U| \geq   z_0 ) \geq \frac{C-z_0^2}{M^2-z_0^2}.
	$$
	Finally, it suffices to take $z_0^2 \leq  CM^2/(2M^2 - C)$ in order to obtain 
	$$
	\p (|U| \geq   z_0 ) \geq \frac{C}{2M^2}.
	$$
	Since $C$ could be arbitrarily small,  we could take $z_0^2 =  C/4$. \end{proof}
\subsection{Asymptotic normality}
\begin{proof}[\textsc{Proof of Theorem \ref{theo:asymptotic_normality}.}]
Let $m$ be as in \eqref{def:m} and define $M_n(\gamma,\pi)=\frac{1}{n}\sum_{i=1}^{n}m(X_i;\gamma,\pi)$ and $M(\gamma,\pi)=\E[m(X;\gamma,\pi)]$. Note that $m(x;\gamma_0,\pi_0)=0$ and $M(\gamma_0,\pi_0)=0$. If we take partial derivatives with respect to  $\gamma$ of the vector $M(\gamma,\pi)$ evaluated at $(\gamma_0,\pi_0)$ we obtain a matrix $\nabla_\gamma M(\gamma_0,\pi_0)$ with elements
\[
\nabla_\gamma M(\gamma_0,\pi_0)_{kl}=\E\left[\frac{\partial}{\partial \gamma_l}\left\{\frac{1-\pi(X)}{\phi(\gamma,X)}-\frac{\pi(X)}{1-\phi(\gamma,X)} \right\}\frac{\partial}{\partial \gamma_k}\phi(\gamma,X)\right]\Bigg|_{\gamma_0,\pi_0}, 
\]
for $k,l\in\{1,\dots,p\}$, because 
\[
\frac{1-\pi_0(x)}{\phi(\gamma_0,x)}-\frac{\pi_0(x)}{1-\phi(\gamma_0,x)}=0\quad\forall x\in\X.
\]
Hence, 
\begin{equation}
\label{def:gradient_M}
\Gamma_1:=\nabla_\gamma M(\gamma_0,\pi_0)=-\E\left[W(X)\nabla_\gamma\phi(\gamma_0,X)\nabla_\gamma\phi(\gamma_0,X)'\right]
\end{equation}
where
\[
W(X)=\frac{1-\pi_0(X)}{\phi(\gamma_0,X)^2}+\frac{\pi_0(X)}{\left(1-\phi(\gamma_0,X)\right)^2}=\frac{1}{\phi(\gamma_0,X)}+\frac{1}{1-\phi(\gamma_0,X)}>0.
\]
We will also use the Gateaux derivative of $M(\gamma,\pi_0)$ in a direction $[\pi-\pi_0]$ given by
{\small\begin{equation}
\label{eqn:directional_derivative}
\begin{split}
\Gamma_2(\gamma,\pi_0)[\pi-\pi_0]&:=\nabla_\pi M(\gamma,\pi_0)[\pi-\pi_0]\\
&=\lim_{h\to 0}\frac{1}{h}\left[M(\gamma,\pi_0+h(\pi-\pi_0))-M(\gamma,\pi_0)\right]\\
&=-\lim_{h\to 0}\frac{1}{h}\E\left[\left\{\frac{h\left\{\pi(X)-\pi_0(X)\right\}}{\phi(\gamma,X)}+\frac{h\left\{\pi(X)-\pi_0(X)\right\}}{1-\phi(\gamma,X)} \right\}\nabla_\gamma\phi(\gamma,X)\right]\\
&=-\E\left[\left\{\pi(X)-\pi_0(X)\right\}\left\{\frac{1}{\phi(\gamma,X)}+\frac{1}{1-\phi(\gamma,X)} \right\}\nabla_\gamma\phi(\gamma,X)\right].
\end{split}
\end{equation} }               
	We apply Theorem 2 in \cite{chen} so we need to verify its conditions. Consistency of $\hat\gamma_n$ is shown in Theorem \ref{theo:consistency}, while condition (2.1) in \cite{chen} is satisfied by construction since 
	\[
	\Vert M_n(\hat\gamma_n,\hat\pi_n)\Vert=0=\inf_{\gamma\in G}\Vert M_n(\gamma,\hat\pi_n)\Vert.
	\] 
	Note that assumption (AC1) was needed in Theorem \ref{theo:consistency} in order to obtain almost sure convergence. However, here we only need convergence in probability for which (AN4)-(ii) suffices. 
	For condition (2.2) in \cite{chen}, the derivative of $M$ with respect to  $\gamma$ is computed in \eqref{def:gradient_M} and the matrix is negative definite (as a result also full rank) because of our assumption (AN3).  Moreover the directional derivative was computed in \eqref{eqn:directional_derivative} and for $(\gamma,\pi)\in G_{\delta_n}\times\Pi_{\delta_n}$ with $G_{\delta_n}=\{\gamma\in G: \Vert\gamma-\gamma_0\Vert\leq\delta_n\}$, $\Pi_{\delta_n}=\{\pi\in \Pi: \Vert\pi-\pi_0\Vert_{\infty}\leq\delta_n\}$, $\delta_n=o(1)$, we have
	\[
	\begin{split}
	&\left\Vert M(\gamma,\pi)-M(\gamma,\pi_0)-\Gamma_2(\gamma,\pi_0)[\pi-\pi_0]\right\Vert\\
	&=\left\Vert\E\left[\left\{\frac{\pi_0(X)-\pi(X)}{\phi(\gamma,X)} +\frac{\pi_0(X)-\pi(X)}{1-\phi(\gamma,X)} \right\}\nabla_\gamma\phi(\gamma,X)\right]\right.\\
	&\quad+\left.\E\left[\left\{\pi(X)-\pi_0(X)\right\}\left\{\frac{1}{\phi(\gamma,X)} +\frac{1}{1-\phi(\gamma,X)}\right\}\nabla_\gamma\phi(\gamma,X)\right]\right\Vert=0,
	\end{split}
	\]
	which means that condition (2.3i) is satisfied. For condition (2.3ii), we have 
	\[
	\begin{split}
	&\Gamma_2(\gamma,\pi_0)[\pi-\pi_0]-\Gamma_2(\gamma_0,\pi_0)[\pi-\pi_0]\\
	&=-\E\left[\left\{\pi(X)-\pi_0(X)\right\}\left\{\left( \frac{1}{\phi(\gamma,X)} +\frac{1}{1-\phi(\gamma,X)}\right)\nabla_\gamma\phi(\gamma,X)\right.\right.\\
	&\quad\qquad-\left.\left. \left( \frac{1}{\phi(\gamma_0,X)} +\frac{1}{1-\phi(\gamma_0,X)}\right)\nabla_\gamma\phi(\gamma_0,X)\right\}\right].
	\end{split}
	\]
	Then, from $\sup_x|\pi(x)-\pi_0(x)|\leq\delta_n$, $|\gamma-\gamma_0|\leq\delta_n\to 0$ and (AN1),
	it follows that 
	\[
	\Vert\Gamma_2(\gamma,\pi_0)[\pi-\pi_0]-\Gamma_2(\gamma_0,\pi_0)[\pi-\pi_0]\Vert\leq o(1)\delta_n.
	\]
	Conditions (2.4) and (2.6) in \cite{chen} are satisfied thanks to our assumption (AN4) because
	\begin{equation}
	\label{eqn:M_n-Gamma_2}
	\begin{split}
	&M_n(\gamma_0,\pi_0)+\Gamma_2(\gamma_0,\pi_0)[\hat\pi-\pi_0]\\
	&=\frac{1}{n}\sum_{i=1}^n\left\{\frac{1-\pi_0(X_i)}{\phi(\gamma_0,X_i)}-\frac{\pi_0(X_i)}{1-\phi(\gamma_0,X_i)}\right\}\nabla_\gamma\phi(\gamma_0,X_i)\\
	&\quad+ \E^*\left[\left\{\hat\pi(X)-\pi_0(X)\right\}\left(\frac{1}{\phi(\gamma_0,X)} +\frac{1}{1-\phi(\gamma_0,X)}\right)\nabla_\gamma\phi(\gamma_0,X)\right] \\
	&=\E^*\left[\left\{\hat\pi(X)-\pi_0(X)\right\}\left(\frac{1}{\phi(\gamma_0,X)} +\frac{1}{1-\phi(\gamma_0,X)}\right)\nabla_\gamma\phi(\gamma_0,X)\right].
	\end{split}
	\end{equation}
	Then we conclude by central limit theorem that 
	\[
	\sqrt{n}\left(M_n(\gamma_0,\pi_0)+\Gamma_2(\gamma_0,\pi_0)[\hat\pi-\pi_0]\right)\xrightarrow{d} N(0,V)
	\]
	where $V=Var(\Psi(Y,\Delta,X,Z))$.  It remains to deal with condition (2.5), which is a consequence of Theorem 3 in \cite{chen} and assumption (AN2) because from (AN1) we have
	\[
	\begin{split}
	\Vert m(x;\gamma_1,\pi_1)-m(x;\gamma_2,\pi_2)\Vert
	&\leq \left\Vert\left(\frac{1-\pi_1(x)}{\phi(\gamma_1,x)} +\frac{\pi_1(x)}{1-\phi(\gamma_1,X)}\right)\nabla_\gamma\phi(\gamma_1,X)\right.\\
	&\left.\qquad\qquad-\left(\frac{1-\pi_2(x)}{\phi(\gamma_2,x)} +\frac{\pi_2(x)}{1-\phi(\gamma_2,X)}\right)\nabla_\gamma\phi(\gamma_2,X)\right\Vert\\
	&\leq C_1\Vert\gamma_1-\gamma_2\Vert+C_2\Vert\pi_1-\pi_2\Vert_\infty.
	\end{split}
	\]
	Finally, the asymptotic normality  follows from Theorem 2 in \cite{chen} and the asymptotic covariance matrix is given by 
	\begin{equation}
	\label{def:Sigma_gamma}
	\Sigma_\gamma=(\Gamma'_1\Gamma_1)^{-1}\Gamma'_1V\Gamma_1(\Gamma'_1\Gamma_1)^{-1}=\Gamma_1^{-1}V\Gamma_1^{-1}
	\end{equation}
	\end{proof}
\begin{proof}[\textsc{Proof of Theorem \ref{theo:asymptotic_normality2}.}]
	We  show that conditions 1 and 4 of Theorem 4 in \cite{Lu} are satisfied. Define $S_n$ as the version of $\hat{S}_n$ where $\hat\gamma_n$ is replaced by $\gamma_0$ 
\[
\begin{split}
{S}_n(\hat\Lambda_n,\hat\beta_n)(h_1,h_2)&=\frac{1}{n}\sum_{i=1}^n \Delta_i\1_{\{Y_i<\tau_0\}} \left[h_1(Y_i)+h'_2Z_i\right]\\
&\quad-\frac{1}{n}\sum_{i=1}^n \left\{ \Delta_i+(1-\Delta_i)\1_{\{Y_i\leq\tau_0\}}g_i(Y_i,\hat\Lambda_n,\hat\beta_n,\gamma_0) \right\}\\
&\qquad\qquad\quad\times\left\{e^{\hat\beta'_nZ_i}\int_0^{Y_i}h_1(s)\dd\hat\Lambda_n(s)+e^{\hat\beta'_nZ_i}\hat\Lambda_n(Y_i)h'_2Z_i
\right\}
\end{split}
\]
\textit{Condition 1.} We start by writing 
\begin{equation}
\label{eqn:condition1}
\begin{split}
\hat{S}_n(\Lambda_0,\beta_0)(h_1,h_2)-{S}(\Lambda_0,\beta_0)(h_1,h_2)&=\left[\hat{S}_n(\Lambda_0,\beta_0)(h_1,h_2)-{S}_n(\Lambda_0,\beta_0)(h_1,h_2)\right]\\
&\quad+\left[{S}_n(\Lambda_0,\beta_0)(h_1,h_2)-{S}(\Lambda_0,\beta_0)(h_1,h_2)\right].
\end{split}
\end{equation}
For the second term on the right hand side we have
\begin{equation}
\label{eqn:condition1_2}
{S}_n(\Lambda_0,\beta_0)(h_1,h_2)-{S}(\Lambda_0,\beta_0)(h_1,h_2)=\int f_h(y,\delta,x,z)\,\dd(\p_n-\p)(y,\delta,x,z)
\end{equation}
where
\[
\begin{split}
f_h(y,\delta,x,z)&=h_2'z\left\{\delta\1_{\{y<\tau_0\}}-\left[\delta-(1-\delta)\1_{\{y\leq\tau_0\}}g(y,\Lambda_0,\beta_0,\gamma_0)\right]e^{\beta'_0z}\Lambda_0(y) \right\}\\
&+\delta \1_{\{y<\tau_0\}}h_1(y)-\left[\delta -(1-\delta)\1_{\{y\leq\tau_0\}}g(y,\Lambda_0,\beta_0,\gamma_0)\right]e^{\beta'_0z}\int_0^{y}h_1(s)\dd\Lambda_0(s).
\end{split}
\]
The classes  $\{h_2\in\R^q,\Vert h_2\Vert\leq \mathfrak{m}\} $, $\{h_1\in BV[0,\tau_0],\, \Vert h_1\Vert_v\leq \mathfrak{m}\}$ and  $$\left\{\int_0^y h_1(t)\,\dd\Lambda_0(t), \,h_1\in BV[0,\tau_0], \Vert h_1\Vert_v\leq \mathfrak{m}\right\}$$ are Donsker classes (the last one because it consists of monotone bounded functions). As in \cite{Lu}, because of the boundedness of the covariates and $\Lambda_0$, it follows that $\{f_h(y,\delta,x,z),\,h\in\mathcal{H}_\mathfrak{m}\}$ is also a Donsker class since it is sum of products of Donsker classes with fixed uniformly bounded functions.  

On the other hand, for the first term on the right hand side of \eqref{eqn:condition1}, we have 
\begin{equation}
\label{eqn:condition1_4}
\begin{split}
&\left[\hat{S}_n(\Lambda_0,\beta_0)(h_1,h_2)-{S}_n(\Lambda_0,\beta_0)(h_1,h_2)\right]\\
&=-\frac1n\sum_{i=1}^n (1-\Delta_i)\1_{\{Y_i\leq\tau_0\}}\left\{e^{\beta'_0Z_i}\int_0^{Y_i}h_1(s)\dd\Lambda_0(s)+e^{\beta'_0Z_i}\Lambda_0(Y_i)h'_2Z_i
\right\}\\
&\qquad\qquad\qquad\qquad\qquad\qquad\qquad\qquad\times\left\{g_i(Y_i,\Lambda_0,\beta_0,\hat\gamma_n)-g_i(Y_i,\Lambda_0,\beta_0,\gamma_0)\right\}\\
&=-\frac1n\sum_{i=1}^n (1-\Delta_i)\1_{\{Y_i\leq\tau_0\}}e^{\beta'_0Z_i}\left\{\int_0^{Y_i}h_1(s)\dd\Lambda_0(s)+\Lambda_0(Y_i)h'_2Z_i
\right\}\\
&\quad\qquad\qquad\qquad\qquad\qquad\times\frac{\partial g_i}{\partial\phi}(Y_i,\Lambda_0,\beta_0,\gamma_0)\left\{\phi(\hat\gamma_n,X_i)-\phi(\gamma_0,X_i)\right\}+o_P(n^{-1/2}),
\end{split}
\end{equation}
where 
\[
\begin{split}
\frac{\partial g_i}{\partial\phi}(Y_i,\Lambda_0,\beta_0,\gamma_0)&=\frac{\exp\left(-\Lambda(Y_i) e^{\beta'Z_j}\right)}{1-\phi(\gamma,X_j)+\phi(\gamma,X_j)\exp\left(-\Lambda(Y_i) e^{\beta'Z_j}\right)}\\
&\quad+\frac{\phi(\gamma,X_j)\exp\left(-\Lambda(Y_i)e^{\beta'Z_j}\right)\left[\exp\left(-\Lambda(Y_i) e^{\beta'Z_j}\right)-1\right]}{\left[1-\phi(\gamma,X_j)+\phi(\gamma,X_j)\exp\left(-\Lambda(Y_i) e^{\beta'Z_j}\right)\right]^2}.
\end{split}
\]
In order to conclude that the remainder term is of order $o_P(n^{-1/2})$ we use 
\[
\sup_x\left|\phi(\hat\gamma_n,x)-\phi(\gamma_0,x)\right|\leq \sup_{\gamma\in G, x\in\X}\left\Vert\nabla_\gamma\phi(\gamma,x)\right\Vert |\hat\gamma_n-\gamma_0|=O_P(n^{-1/2})
\]
and  the fact that $\frac{\partial^2 g_i}{\partial\phi^2}(Y_i,\Lambda_0,\beta_0,\gamma)$ and  $$(1-\Delta_i)\1_{\{Y_i\leq\tau_0\}}e^{\beta'_0Z_i}\left\{\int_0^{Y_i}h_1(s)\dd\Lambda_0(s)+\Lambda_0(Y_i)h'_2Z_i
\right	\}$$  are uniformly bounded functions thanks to our assumptions on  $Z,$ $\Lambda$, $\Phi$ and $h$. 
From the same assumptions we also obtain 
\begin{equation}
\label{eqn:condition1_5}
\begin{split}
&\frac1n\sum_{i=1}^n (1-\Delta_i)\1_{\{Y_i\leq\tau_0\}}e^{\beta'_0Z_i}\left\{\int_0^{Y_i}h_1(s)\dd\Lambda_0(s)+\Lambda_0(Y_i)h'_2Z_i
\right\}\\
&\qquad\qquad\qquad\qquad\qquad\times\frac{\partial g_i}{\partial\phi}(Y_i,\Lambda_0,\beta_0,\gamma_0)\left\{\phi(\hat\gamma_n,X_i)-\phi(\gamma_0,X_i)\right\}\\
&=\frac1n\sum_{i=1}^n (1-\Delta_i)e^{\beta'_0Z_i}\left\{\int_0^{Y_i}h_1(s)\dd\Lambda_0(s)+\Lambda_0(Y_i)h'_2Z_i
\right\}\\
&\qquad\qquad\qquad\qquad\qquad\times\frac{\partial g_i}{\partial\phi}(Y_i,\Lambda_0,\beta_0,\gamma_0)\nabla_\gamma\phi(\gamma_0,X_i)'\left\{\hat\gamma_n-\gamma_0\right\}+o_P(n^{-1/2})\\
&=\E\left[(1-\Delta)\1_{\{Y\leq\tau_0\}}e^{\beta'_0Z}\left\{\int_0^{Y}h_1(s)\dd\Lambda_0(s)+\Lambda_0(Y)h'_2Z
\right\}\right.\\
&\left.\qquad\qquad\qquad\qquad\qquad\times\frac{\partial g}{\partial\phi}(Y,\Lambda_0,\beta_0,\gamma_0)\nabla_\gamma\phi(\gamma_0,X)'\right]\left(\hat\gamma_n-\gamma_0\right)+o_P(n^{-1/2}).
\end{split}
\end{equation}
and the expectation term is uniformly bounded. To prove the asymptotic normality of $\hat\gamma_n-\gamma_0$ in Theorem \ref{theo:asymptotic_normality} we used Theorem 2 in \cite{chen}. Going through the proof of Theorem 2 in \cite{chen}, we actually have 
\[
(\hat\gamma_n-\gamma_0)=-(\Gamma_1'\Gamma_1)^{-1}\Gamma'_1\left\{M_n(\gamma_0,\pi_0)+\Gamma_2(\gamma_0,\pi_0)[\hat\pi-\pi_0]\right\}+ o_P(n^{-1/2})
\]
where $\Gamma_1$ is defined in \eqref{def:gradient_M} and  $\Gamma_2$ in \eqref{eqn:directional_derivative}. From Assumption (AN4-iii) and \eqref{eqn:M_n-Gamma_2}, it follows that 
\begin{equation}
\label{eqn:condition_1_6}
(\hat\gamma_n-\gamma_0)=-(\Gamma_1'\Gamma_1)^{-1}\Gamma'_1	\int \Psi(y,\delta,x)\,\d(\p_n-\p)(y,\delta,x,z)+ o_P(n^{-1/2}). 
\end{equation}
Putting together \eqref{eqn:condition1}-\eqref{eqn:condition_1_6}, we have
\[
\begin{split}
&\hat{S}_n(\Lambda_0,\beta_0)(h_1,h_2)-{S}(\Lambda_0,\beta_0)(h_1,h_2)\\
&=\int \left\{ f_h(y,\delta,x,z)-Q_h\Gamma_1^{-1} \Psi(y,\delta,x)\right\}\,\dd (\p_n-\p)(y,\delta,x,z)+o_P(n^{-1/2})
\end{split}
\]
where
\[
Q_h=\E\left[(1-\Delta)\1_{\{Y\leq\tau_0\}}e^{\beta'_0Z}\left\{\int_0^{Y}h_1(s)\dd\Lambda_0(s)+\Lambda_0(Y)h'_2Z
\right\}\frac{\partial g}{\partial\phi}(Y,\Lambda_0,\beta_0,\gamma_0)\nabla_\gamma\phi(\gamma_0,X)'\right].
\]
In order to conclude the convergence of $\sqrt{n}(\hat{S}_n(\Upsilon_0)-S(\Upsilon_0))$ to a Gaussian process $G^*$, we need to have that $\{Q_h\Gamma_1^{-1} \Psi(y,\delta,x), h\in\mathcal{H}_\mathfrak{m}\}$ is a bounded Dosker class of functions (since sum of bounded Donsker classes is also Donsker). We can write 
\[
\begin{split}
&Q_h\Gamma_1^{-1}\Psi(y,\delta,x)\\
&=h'_2\E\left[(1-\Delta)\1_{\{Y\leq\tau_0\}}Ze^{\beta'_0Z}\Lambda_0(Y)
\frac{\partial g}{\partial\phi}(Y,\Lambda_0,\beta_0,\gamma_0)\nabla_\gamma\phi(\gamma_0,X)'\right]\Gamma_1^{-1}\Psi(y,\delta,x)\\
&\quad+\int_0^{\tau_0}\E\left[(1-\Delta)\1_{\{Y\leq\tau_0\}}e^{\beta'_0Z}\frac{\partial g}{\partial\phi}(Y,\Lambda_0,\beta_0,\gamma_0)\nabla_\gamma\phi(\gamma_0,X)'\right]h_1(s)\dd\Lambda_0(s) \Gamma_1^{-1} \Psi(y,\delta,x)
\end{split}
\]
By assumption (AN1) and $\inf_{x}H((\tau_0,\infty)|x)>0$, $\Lambda_0(\tau_0)<\infty$ and the boundedness of the covariates,  we have that $$\E\left[(1-\Delta)\1_{\{Y\leq\tau_0\}}Ze^{\beta'_0Z}\Lambda_0(Y)
\frac{\partial g}{\partial\phi}(Y,\Lambda_0,\beta_0,\gamma_0)\nabla_\gamma\phi(\gamma_0,X)'\right]\Gamma_1^{-1} \Psi(y,\delta,x)$$ is uniformly bounded. Hence 
\[
\begin{split}
&\left\{h'_2\E\left[(1-\Delta)\1_{\{Y\leq\tau_0\}}Ze^{\beta'_0Z}\Lambda_0(Y)
\frac{\partial g}{\partial\phi}(Y,\Lambda_0,\beta_0,\gamma_0)\nabla_\gamma\phi(\gamma_0,X)'\right]\Gamma_1^{-1} \Psi(y,\delta,x):\right.\\
&\,\qquad\qquad\qquad\qquad\qquad\qquad \qquad\qquad\qquad \qquad\qquad\qquad h_2\in\R^q, \Vert h_2\Vert_{L_1}\leq \mathfrak{m}\bigg \}
\end{split}
\]
is a Donsker class (see Example 2.10.10 in \cite{VW}). It can also be shown that, since $h_1$ belongs to the class of bounded functions with bounded variation and all the other terms are uniformly bounded,  that 
\[
\begin{split}
&\left\{\int_0^{\tau_0}\E\left[(1-\Delta)\1_{\{Y\leq\tau_0\}}e^{\beta'_0Z}\frac{\partial g}{\partial\phi}(Y,\Lambda_0,\beta_0,\gamma_0)\nabla_\gamma\phi(\gamma_0,X)'\right]h_1(s)\dd\Lambda_0(s) \Gamma_1^{-1} \Psi(y,\delta,x), \right.\\
&\qquad\qquad\qquad\qquad\qquad\qquad \qquad\qquad\qquad \qquad\qquad\qquad h_1\in BV[0,\tau_0], \,\Vert h_1\Vert_v\leq \mathfrak{m}\bigg\}
\end{split}
\]
is also a bounded Donsker class (covering numbers of order $\epsilon$ of $\{ h_1\in BV[0,\tau_0], \Vert h_1\Vert_v\leq \mathfrak{m}\}$ correspond to covering numbers of order $c\epsilon$ for some constant $c>0$).

The limit process $G^*$ has mean zero because
\[
\E[f_h(y,\delta,x,z)]=S(\Upsilon_0)(h)=0\qquad\text{and}\qquad\E[\Psi(Y,\Delta,X)]=0.
\]
The covariance process of $G^*$ is 
\begin{equation}
\label{eqn:cov_G*}
\begin{split}
&Cov\left(G^*(h),G^*(\tilde{h})\right)\\
&=\E\left[\left\{f_h(Y,\Delta,X,Z)-Q_h\Gamma_1^{-1} \Psi(Y,\Delta,X)\right\}\left\{f_{\tilde{h}}(Y,\Delta,X,Z)-Q_{\tilde{h}}\Gamma_1^{-1} \Psi(Y,\Delta,X)\right\}\right]\\
&=\E[f_h(Y,\Delta,X,Z)f_{\tilde{h}}(Y,\Delta,X,Z)]-Q_h\Gamma_1^{-1}\E[f_{\tilde{h}}(Y,\Delta,X,Z)\Psi(Y,\Delta,X)]\\
&\quad-Q_{\tilde{h}}\Gamma_1^{-1}\E[f_h(Y,\Delta,X,Z)\Psi(Y,\Delta,X)]+Q_{\tilde{h}}\Gamma_1^{-1}\E[\Psi(Y,\Delta,X)\Psi(Y,\Delta,X)']\Gamma_1^{-1}Q'_h.
\end{split}
\end{equation}
\textit{Condition 4 of Theorem 4 in \cite{Lu}. } As for condition 1, we write
\begin{equation}
\label{eqn:cond4}
\begin{split}
\sqrt{n}\left\{(\hat{S}_n-S)(\Upsilon_n)-(\hat{S}_n-S)(\Upsilon_0)\right\}&=\sqrt{n}\left\{({S}_n-S)(\Upsilon_n)-({S}_n-S)(\Upsilon_0)\right\}\\
&\quad+\sqrt{n}\left\{(\hat{S}_n-S_n)(\Upsilon_n)-(\hat{S}_n-S_n)(\Upsilon_0)\right\}
\end{split}
\end{equation}
For the second term in the right hand side of \eqref{eqn:cond4}, similarly to \eqref{eqn:condition1_4}-\eqref{eqn:condition1_5}, we have
\[
\begin{split}
&(\hat{S}_n-S_n)(\Upsilon_n)-(\hat{S}_n-S_n)(\Upsilon_0)\\
&=\frac1n\sum_{i=1}^n (1-\Delta_i)\1_{\{Y_i\leq\tau_0\}}e^{\hat\beta'_nZ_i}\left\{\int_0^{Y_i}h_1(s)\dd\hat\Lambda_n(s)+\hat\Lambda_n(Y_i)h'_2Z_i
\right\}\\
&\quad\qquad\qquad\qquad\qquad\qquad\frac{\partial g_i}{\partial\phi}(Y_i,\hat\Lambda_n,\hat\beta_n,\gamma_0)\nabla_\gamma\phi(\gamma_0,X_i)'\left\{\hat\gamma_n-\gamma_0\right\}\\
&\quad-\frac1n\sum_{i=1}^n (1-\Delta_i)\1_{\{Y_i\leq\tau_0\}}e^{\beta'_0Z_i}\left\{\int_0^{Y_i}h_1(s)\dd\Lambda_0(s)+\Lambda_0(Y_i)h'_2Z_i
\right\}\\
&\quad\qquad\qquad\qquad\qquad\qquad\frac{\partial g_i}{\partial\phi}(Y_i,\Lambda_0,\beta_0,\gamma_0)\nabla_\gamma\phi(\gamma_0,X_i)'\left\{\hat\gamma_n-\gamma_0\right\}+o_P(n^{-1/2})\\
\end{split}
\]
Using the boundedness in probability of $\hat\beta_n$ and $\hat\Lambda_n(\tau_0)$, the boundedness of the covariates, $\beta_0$, $\Lambda_0(\tau)$, $\nabla_\gamma\phi(\gamma,x)$ and the consistency results in Theorem \ref{theo:consistency2}, it follows that 
\[
\begin{split}
&\left|\frac1n\sum_{i=1}^n (1-\Delta_i)\1_{\{Y_i\leq\tau_0\}}\left\{e^{\hat\beta'_nZ_i}\left(\int_0^{Y_i}h_1(s)\dd\hat\Lambda_n(s)+\hat\Lambda_n(Y_i)h'_2Z_i
\right)\frac{\partial g_i}{\partial\phi}(Y_i,\hat\Lambda_n,\hat\beta_n,\gamma_0)\right.\right.\\
&\left.\left.\quad-e^{\beta'_0Z_i}\left(\int_0^{Y_i}h_1(s)\dd\Lambda_0(s)+\Lambda_0(Y_i)h'_2Z_i
\right)\frac{\partial g_i}{\partial\phi}(Y_i,\Lambda_0,\beta_0,\gamma_0)\right\}\nabla_\gamma\phi(\gamma_0,X_i)\right|=o_P(1)
\end{split}
\]
As a consequence, since $\hat\gamma_n-\gamma_0=O_P(n^{-1/2})$, we obtain
\[
\sqrt{n}\left\{(\hat{S}_n-S_n)(\Upsilon_n)-(\hat{S}_n-S_n)(\Upsilon_0)\right\}=o_P(1).
\]
It remains to deal with the first term on the right hand side of \eqref{eqn:cond4}. It suffices to show that, for any sequence $\epsilon_n\to  0$,
\[
\sup_{|\Lambda-\Lambda_0|_\infty\leq\epsilon_n, \,\Vert\beta-\beta_0\Vert\leq\epsilon_n}\frac{\left|({S}_n-S)(\Upsilon)-({S}_n-S)(\Upsilon_0)\right|}{n^{-1/2}\vee \Vert\beta-\beta_0\Vert\vee |\Lambda-\Lambda_0|_\infty }=o_P(1).
\]
Let
\[
a_1(y,\delta,z,x)=\delta e^{\beta'z}\int_0^{y}h_1(s)\dd\Lambda(s)-\delta e^{\beta'_0z}\int_0^{y}h_1(s)\dd\Lambda_0(s),
\]
\[
a_2(y,\delta,z,x)=\delta e^{\beta'z}\Lambda(y)h'_2z-\delta e^{\beta'_0z}\Lambda_0(y)h'_2z,
\]
and
\[
\begin{split}
a_3(y,\delta,z,x)&=(1-\delta)\1_{\{y\leq\tau_0\}}g(y,\Lambda,\beta,\gamma_0)\left\{e^{\beta'z}\int_0^{y}h_1(s)\dd\Lambda(s)+e^{\beta'z}\Lambda(y)h'_2z\right\}\\
&\quad-(1-\delta)\1_{\{y\leq\tau_0\}}g(y,\Lambda_0,\beta_0,\gamma_0)\left\{e^{\beta'_0z}\int_0^{y}h_1(s)\dd\Lambda_0(s)+e^{\beta'_0z}\Lambda_0(y)h'_2z
\right\}
\end{split}
\]
Then, we have  
\[
\begin{split}
({S}_n-S)(\Upsilon)-({S}_n-S)(\Upsilon_0)&=-\frac{1}{n}\sum_{i=1}^n \left\{a_1(Y_i,\Delta_i,Z_i,X_i)-\E\left[a_1(Y,\Delta,Z,X)\right]  \right\}\\
&\quad-\frac{1}{n}\sum_{i=1}^n \left\{a_2(Y_i,\Delta_i,Z_i,X_i)-\E\left[a_2(Y,\Delta,Z,X)\right]  \right\}\\
&\quad-\frac{1}{n}\sum_{i=1}^n \left\{a_3(Y_i,\Delta_i,Z_i,X_i)-\E\left[a_3(Y,\Delta,Z,X)\right]  \right\} 
\end{split}
\]
Next we consider the first term. The other two can be handled similarly. From a Taylor expansion we have
\[
\begin{split}
&\delta e^{\beta'z}\int_0^{y}h_1(s)\dd\Lambda(s)-\delta e^{\beta'_0z}\int_0^{y}h_1(s)\dd\Lambda_0(s)\\
&=(\beta-\beta_0)'z\delta e^{\beta'_0z}\int_0^{y}h_1(s)\dd\Lambda(s)+\delta e^{\beta'_0z}\int_0^{y}h_1(s)\dd(\Lambda-\Lambda_0)(s)+o(\Vert\beta-\beta_0\Vert).
\end{split}
\]Hence
\begin{equation}
\label{eqn:a1}
\begin{split}
&\frac{1}{n}\sum_{i=1}^n \left\{a_1(Y_i,\Delta_i,Z_i,X_i)-\E\left[a_1(Y,\Delta,Z,X)\right]  \right\}\\
&\leq (\beta-\beta_0)\left\{\frac{1}{n}\sum_{i=1}^n Z_i\Delta_i e^{\beta'_0Z_i}\int_0^{Y_i}h_1(s)\dd\Lambda(s)-\E\left[Z\Delta e^{\beta'_0Z}\int_0^{Y}h_1(s)\dd\Lambda(s)\right]+o(1)\right\}\\
&\quad+ \frac{1}{n}\sum_{i=1}^n\left\{\Delta_i e^{\beta'_0Z_i}\int_0^{Y_i}h_1(s)\dd(\Lambda-\Lambda_0)(s)-\E\left[\Delta e^{\beta'_0Z}\int_0^{Y}h_1(s)\dd(\Lambda-\Lambda_0)(s)\right]\right\}.
\end{split}
\end{equation}
By the law of large numbers
\[
\frac{1}{n}\sum_{i=1}^n\left\{ Z_i\Delta_i e^{\beta'_0Z_i}\int_0^{Y_i}h_1(s)\dd\Lambda(s)-\E\left[Z\Delta e^{\beta'_0Z}\int_0^{Y}h_1(s)\dd\Lambda(s)\right] \right\}=o_P(1)
\]
and as a result, the first term in the right hand side of \eqref{eqn:a1} is $o_P(\Vert\beta-\beta_0\Vert)$. The second term can be rewritten as 
\[
\int_0^{\tau_0} D_n(s)h_1(s)\dd(\Lambda-\Lambda_0)(s)
\]
where
\[
D_n(s)=\frac{1}{n}\sum_{i=1}^n\left\{\Delta_i\1_{\{Y_i\geq s\}} e^{\beta'_0Z_i}-\E\left[\Delta\1_{\{Y\geq s\}} e^{\beta'_0Z}\right]\right\}. 
\]
By integration by parts and the chain rule we have
\[
\begin{split}
&\int_0^{\tau_0} D_n(s)h_1(s)\dd(\Lambda-\Lambda_0)(s)\\
&=  D_n(\tau_0)h_1(\tau_0)(\Lambda-\Lambda_0)(\tau_0)- \int_0^{\tau_0} (\Lambda-\Lambda_0)(s)\dd \left[D_n(s)h_1(s)\right]\\
&=D_n(\tau_0)h_1(\tau_0)(\Lambda-\Lambda_0)(\tau_0)-\int_0^{\tau_0} (\Lambda-\Lambda_0)(s) D_n(s)\dd h_1(s)\\
&\quad +\frac{1}{n}\sum_{i=1}^n\left\{ \Delta_i (\Lambda-\Lambda_0)(Y_i)h_1(Y_i) e^{\beta'_0Z_i}-\E\left[ (\Lambda-\Lambda_0)(Y)h_1(Y)\Delta e^{\beta'_0Z}\right]\right\}
\end{split} 
\]
Note that 
\[
\begin{split}
\E\left[ (\Lambda-\Lambda_0)(Y)h_1(Y)\Delta e^{\beta'_0Z}\right]&=\E\left[e^{\beta'_0Z}\int_0^{\tau_0}(\Lambda-\Lambda_0)(s)h_1(s)\dd H_1(s|X,Z)\bigg|X,Z\right]\\
&=-\int_0^{\tau_0}(\Lambda-\Lambda_0)(s)h_1(s)\dd\E\left[\Delta\1_{\{Y\geq s\}}e^{\beta'_0Z}\right].
\end{split}
\]
It can be shown that $\sqrt{n}D_n$ converges weakly to a tight, mean zero Gaussian process $D$ in $l^\infty([0,\tau_0])$. Since $h_1$ is bounded, it follows 
\[
\frac{\left|D_n(\tau_0)h_1(\tau_0)(\Lambda-\Lambda_0)(\tau_0)\right|}{\Vert\Lambda-\Lambda_0\Vert_{\infty}}=o_P(1)
\]
Moreover, since $D_n\to 0$ and $h_1$ is of bounded variation 
\[
\frac{ \left|\int_0^{\tau_0} (\Lambda-\Lambda_0)(s) D_n(s)\dd h_1(s)\right|}{\Vert\Lambda-\Lambda_0\Vert_{\infty}}\leq \sup_{t\in[0,\tau_0]}|D_n(s)| \int_0^{\tau_0}\,\left|\dd h(s)\right|\to 0.
\] 
Finally, since  $\left\{g_{\Lambda}(y,\delta,z)=\delta(\Lambda-\Lambda_0)(y)h_1(y)e^{\beta'_0Z}:\,\Vert\Lambda-\Lambda_0\Vert_{\infty}\leq \epsilon_n\right\}$ is a Donsker class (product of bounded variation functions, uniformly bounded) and 
\[
\E\left[(\Lambda-\Lambda_0)(Y)^2h_1(Y)^2\Delta e^{2\beta'_0Z}\right]=O(\epsilon_n^2)=o(1)
,
\] 
we have that
\[
\sqrt{n}\frac{1}{n}\sum_{i=1}^n\left\{ \Delta_i (\Lambda-\Lambda_0)(Y_i)h_1(Y_i) e^{\beta'_0Z_i}-\E\left[ (\Lambda-\Lambda_0)(Y)h_1(Y)\Delta e^{\beta'_0Z}\right]\right\}
\]
converges to zero in probability. So we obtain 
\[
\frac{ \int_0^{\tau_0} D_n(s)h_1(s)\dd(\Lambda-\Lambda_0)(s)}{n^{-1/2}\vee \Vert\Lambda-\Lambda_0\Vert_{\infty} }=o_P(1)
\] 
The other two terms related to $a_2$ and $a_3$ can be treated similarly. 

This concludes the verification of conditions of Theorem 4 in \cite{Lu} (or Theorem 3.3.1 in \cite{VW}). Hence, the weak convergence of $\sqrt{n}(\Upsilon_n-\Upsilon_0)$ to a tight, mean zero Gaussian process $G$ follows. Next we compute the covariance process of $G$.  
From Theorem 3.3.1 in \cite{VW} we have
\begin{equation}
\label{eqn:asymptotic_relation}
-\sqrt{n}\dot{S}(\Upsilon_0)(\Upsilon_n-\Upsilon_0)(h)=\sqrt{n}(\hat{S}_n(\Upsilon_0)-S(\Upsilon_0))(h)+o_P(1).
\end{equation}
Moreover, in \cite{Lu} it is computed that 
\begin{equation}
\label{eqn:derivative}
\dot{S}(\Upsilon_0)(\Upsilon_n-\Upsilon_0)(h)=\int_0^{\tau_0}\sigma_1(h)(t)\dd(\hat\Lambda_n(t)-\Lambda_0(t))+(\hat\beta_n-\beta_0)'\sigma_2(h)
\end{equation}
where $\sigma=(\sigma_1,\sigma_2)$ is a continuous linear operator from $\mathcal{H}_\mathfrak{m}$ to $\mathcal{H}_\mathfrak{m}$ of the form 
\[
\begin{split}
\sigma_1(h)(t)&=\E\left[\1_{\{Y\geq t\}}V(t,\Upsilon_0)(h)g(t,\Upsilon_0)e^{\beta'_0Z}\right]\\
&\quad-\E\left[\int_t^{\tau_0}\1_{\{Y\geq s\}}V(t,\Upsilon_0)(h)g(s,\Upsilon_0)\{1-g(s,\Upsilon_0)\}e^{2\beta'_0Z}\dd\Lambda_0(s)\right]
\end{split}
\]
and
\[
\sigma_2(h)(t)=\E\left[\int_0^{\tau_0}\1_{\{Y\geq t\}}W(t,\Upsilon_0)V(t,\Upsilon_0)(h)g(t,\Upsilon_0)e^{\beta'_0Z}\dd\Lambda_0(t)\right]
\]
where
\[
V(t,\Upsilon_0)(h)=h_1(t)-\left\{1-g(t,\Upsilon_0)\right\}e^{\beta'_0Z}\int_0^th_1(s)\dd\Lambda_0(s)+h'_2W(t,\Upsilon_0)
\]
and
\[
W(t,\Upsilon_0)=\left[1-\left\{1-g(t,\Upsilon_0)\right\}e^{\beta'_0Z}\Lambda_0(t)\right]Z
\]
In \cite{Lu}, it is also shown that $\sigma$ is invertible with inverse $ \sigma^{-1}=(\sigma_1^{-1},\sigma_2^{-1})$.  Hence, for all $g\in\mathcal{H}_\mathfrak{m}$, let $h=\sigma^{-1}(g)$. If in \eqref{eqn:derivative} we replace $h$ by $\sigma^{-1}(g)$ and use \eqref{eqn:asymptotic_relation}, we obtain
\[
\begin{split}
&\int_0^{\tau_0}g_1(t)\dd\sqrt{n}(\Lambda_n(t)-\Lambda_0(t))+\sqrt{n}(\hat\beta_n-\beta_0)'g_2\\&=-\sqrt{n}(\hat{S}_n(\Upsilon_0)-S(\Upsilon_0))(\sigma^{-1}(g))+o_P(1)\xrightarrow{d}- G^*(\sigma^{-1}(g)).
\end{split}
\]
Since the previous results holds for all $g\in\mathcal{H}_\mathfrak{m}$, it follows that $(\sqrt{n}(\hat\Lambda_n-\Lambda_0),\sqrt{n}(\hat\beta_n-\beta_0))$ converges to a tight mean zero Gaussian process $G$ with covariance 
\begin{equation}
\label{def:cov_G}
Cov(G(g),G(\tilde{g}))=Cov\left(G^*(\sigma^{-1}(g)),G^*(\sigma^{-1}(\tilde{g}))\right)
\end{equation}
and the covariance of $G^*$ is given in \eqref{eqn:cov_G*}.
\end{proof}
\section{Additional simulation results}
In this section we report the simulation results for scenario 2 of  the models~{1-4}, $n=200, 400$, that were omitted from the main paper and the results for sample size $n=1000$ (all models and scenarios). In addition, Table \ref{tab:results3_4_beta} complements Tables~\ref{tab:results3} and~\ref{tab:results4} containing results for $\hat\beta$.

\clearpage
\begin{table}
	\caption{\label{tab:results1_2:2}Bias, variance and MSE of $\hat\gamma$ and $\hat\beta$ for \texttt{smcure} (second rows) and our approach (first rows) in Model 1 and 2 (scenario 2).}
	\centering
	\scalebox{0.85}{
		\fbox{
			\begin{tabular}{ccccrrrrrrrrr}
				&	&&&\multicolumn{3}{c}{Cens. level 1}&\multicolumn{3}{c}{Cens. level 2}&\multicolumn{3}{c}{Cens. level 3}\\
				Mod.&	n&  scen. & Par. &  Bias & Var. & MSE & Bias & Var. & MSE & Bias & Var. & MSE\\[2pt]
				\hline
				&	& & & & & & & & & && \\[-8pt]
	1&	$200$ &	 $2 $ & $\gamma_1 $ & $0.005 $  & $0.032 $ & $0.032  $ & $-0.005  $ & $0.037 $ & $0.037 $ & $ -0.005 $ & $0.045 $ & $0.045 $\\
						&	& &  & $0.013 $  & $0.032 $ & $0.032  $ & $ 0.009 $ & $0.038 $ & $ 0.038$ & $ 0.015 $ & $0.046 $ & $0.046 $\\
						&	& & $\gamma_2 $ & $-0.024 $  & $0.093 $ & $ 0.094 $ & $-0.035  $ & $0.116 $ & $0.118 $ & $-0.018  $ & $ 0.137$ & $0.137 $\\
						&	& & & $0.017 $  & $0.096 $ & $ 0.096 $ & $ 0.016 $ & $0.123 $ & $ 0.123$ & $ 0.046 $ & $0.144 $ & $0.146 $\\
						&	& & $\beta $ & $0.019 $  & $0.033 $ & $0.033  $ & $0.025  $ & $0.035 $ & $0.036 $ & $ 0.011 $ & $0.041 $ & $0.041 $\\
						&	& &  & $0.018 $  & $0.033 $ & $ 0.033 $ & $0.023  $ & $0.036 $ & $0.036 $ & $0.007  $ & $0.042 $ & $0.042 $\\
							\cline{2-13}
							& & & & & & & & & && \\[-8pt]
						&	$400$ & $2 $ & $\gamma_1 $ & $0.002 $  & $ 0.016$ & $0.016  $ & $0.003  $ & $0.018 $ & $0.018 $ & $0.009  $ & $ 0.021$ & $ 0.021$\\
						&	& &  & $0.008 $  & $0.016 $ & $0.016  $ & $ 0.011 $ & $0.019 $ & $0.019 $ & $ 0.020 $ & $0.021 $ & $0.022 $\\
						&	& & $\gamma_2 $ & $-0.017 $  & $0.046 $ & $ 0.047 $ & $-0.016  $ & $ 0.054$ & $0.054 $ & $ -0.013 $ & $0.070 $ & $ 0.070$\\
						&	& & & $0.012 $  & $ 0.046$ & $0.047  $ & $0.018  $ & $0.055 $ & $0.055 $ & $0.025  $ & $0.068 $ & $0.068 $\\
						&	& & $\beta $ & $ 0.001$  & $0.014 $ & $0.014  $ & $0.019  $ & $0.019 $ & $0.019 $ & $0.001  $ & $0.020 $ & $ 0.020$\\
						&	& &  & $0.000 $  & $0.014 $ & $ 0.014 $ & $ 0.017 $ & $0.019 $ & $0.019 $ & $ -0.001 $ & $0.020 $ & $0.020 $\\
							\hline
						&	& & & & & & & & & && \\[-8pt]
						2&	$200$ & $2 $ & $\gamma_1 $ & $-0.004 $  & $0.033 $ & $0.033  $ & $-0.016  $ & $0.039 $ & $0.039 $ & $-0.032  $ & $0.050 $ & $0.051 $\\
						&	& &  & $0.020 $  & $0.034 $ & $0.034  $ & $ 0.037 $ & $0.042 $ & $0.044 $ & $ 0.079 $ & $0.064 $ & $0.070 $\\
						&	& & $\gamma_2 $ & $-0.016 $  & $0.041 $ & $ 0.041 $ & $-0.037  $ & $0.044 $ & $0.045 $ & $-0.054  $ & $0.057 $ & $0.060 $\\
						&	& & & $0.029 $  & $0.044 $ & $ 0.045 $ & $0.031  $ & $0.052 $ & $ 0.053$ & $0.064  $ & $ 0.075$ & $0.079 $\\
						&	& & $\beta $ & $ 0.004$  & $ 0.016$ & $ 0.016 $ & $0.008  $ & $0.017 $ & $0.017 $ & $ 0.008 $ & $0.018 $ & $0.018 $\\
						&	& &  & $0.002 $  & $0.016 $ & $ 0.016 $ & $ 0.003 $ & $0.018 $ & $0.018 $ & $-0.005  $ & $0.019 $ & $ 0.019$\\
							\cline{2-13}
						&	& & & & & & & & & && \\[-8pt]
							&	$400$ &  $2 $ & $\gamma_1 $ & $0.000 $  & $0.018 $ & $ 0.018 $ & $-0.002  $ & $ 0.022$ & $0.022 $ & $-0.015  $ & $0.028 $ & $0.028 $\\
						&	& &  & $ 0.012$  & $0.018 $ & $ 0.018 $ & $ 0.027 $ & $0.022 $ & $ 0.023$ & $ 0.043 $ & $0.030 $ & $0.032 $\\
						&	& & $\gamma_2 $ & $-0.019 $  & $0.021 $ & $ 0.021 $ & $ -0.022 $ & $0.025 $ & $ 0.026$ & $-0.045  $ & $0.031 $ & $0.033 $\\
						&	& & & $ 0.007$  & $0.022 $ & $0.022  $ & $ 0.018 $ & $0.027 $ & $0.027 $ & $ 0.023 $ & $0.033 $ & $0.033 $\\
						&	& & $\beta $ & $0.005 $  & $0.007 $ & $ 0.007 $ & $ 0.010 $ & $0.008 $ & $0.008 $ & $ 0.009 $ & $0.010 $ & $0.010 $\\
						&	& &  & $0.004 $  & $ 0.007$ & $0.007  $ & $ 0.007 $ & $ 0.008$ & $0.008 $ & $ 0.001 $ & $ 0.011$ & $ 0.011$\\
						\end{tabular}
				}}
			\end{table}
				
\begin{table}
	\caption{	\label{tab:results3_4_beta}Bias, variance and MSE of $\hat\beta$ for \texttt{smcure} (second rows) and our approach (first rows) in Model 3 and 4.}
	\centering
	\scalebox{0.85}{
		\fbox{
			\begin{tabular}{ccccrrrrrrrrr}
				&	& & & & & & & & & && \\[-8pt]
				&	&&&\multicolumn{3}{c}{Cens. level 1}&\multicolumn{3}{c}{Cens. level 2}&\multicolumn{3}{c}{Cens. level 3}\\
				Mod.&	n&  scen. & Par. &  Bias & Var. & MSE & Bias & Var. & MSE & Bias & Var. & MSE\\[2pt]
				\hline
				&	& & & & & & & & & && \\[-8pt]
			3	&$200$	&1 & $\beta_1 $ & $-0.013 $  & $0.007 $ & $  0.007$ & $ -0.007 $ & $ 0.006$ & $0.006 $ & $-0.009  $ & $0.008 $ & $0.008 $\\
		&		& &  & $-0.014 $  & $0.007 $ & $0.007  $ & $ -0.009 $ & $0.006 $ & $0.007 $ & $-0.014  $ & $0.008 $ & $0.008 $\\
		&		& & $\beta_2 $ & $0.007 $  & $0.003 $ & $ 0.003 $ & $0.006 $ & $0.003 $ & $ 0.003 $ & $0.005 $ & $0.004 $& $ 0.004 $ \\
		&		& &  & $0.007 $  & $0.003 $ & $0.003  $ & $ 0.008 $ & $0.003 $ & $0.003 $ & $ 0.007 $ & $0.004 $ & $0.004 $\\
		&		& & $\beta_3 $ & $0.016 $  & $0.043 $ & $  0.043$ & $0.004  $ & $ 0.043$ & $0.043 $ & $0.007  $ & $0.053 $ & $0.053 $\\
		&		& &  & $ 0.017$  & $0.043 $ & $ 0.043 $ & $ 0.006 $ & $ 0.043$ & $0.043 $ & $ 0.015 $ & $0.053 $ & $0.054 $\\
					\cline{3-13}
						&		& & & & & & & & & && \\[-8pt]
		&		& 3& $\beta_1 $ & $0.024 $  & $0.014 $ & $0.014  $ & $0.024  $ & $0.014 $ & $0.015 $ & $0.017  $ & $0.017 $ & $ 0.017$\\
		&		& &  & $0.023 $  & $ 0.014$ & $0.014  $ & $ 0.025 $ & $ 0.014$ & $ 0.015$ & $ 0.024 $ & $0.017 $ & $0.018 $\\
		&		& & $\beta_2 $ & $ -0.005$  & $0.004 $ & $0.004  $ & $-0.002  $ & $0.005 $ & $0.005 $ & $-0.003  $ & $0.006 $ & $0.006 $\\
		&		& &  & $-0.005 $  & $0.004 $ & $0.004  $ & $-0.003  $ & $0.005 $ & $0.005 $ & $-0.004  $ & $0.006 $ & $ 0.006$\\
		&		& & $\beta_3 $ & $0.018 $  & $ 0.066$ & $ 0.066 $ & $ -0.002 $ & $0.081 $ & $ 0.081$ & $ 0.021 $ & $ 0.099$ & $ 0.100$\\
		&		& &  & $0.018 $  & $0.066 $ & $ 0.066 $ & $0.001  $ & $0.083 $ & $0.083 $ & $ 0.033 $ & $0.104 $ & $0.105 $\\
					\cline{2-13}
		&		& & & & & & & & & && \\[-8pt]
		& $400$		&1 & $\beta_1 $ & $-0.007 $  & $0.003 $ & $ 0.003 $ & $-0.004  $ & $0.003 $ & $0.003 $ & $ -0.002 $ & $0.003 $ & $0.003 $\\
		&		& &  & $-0.007 $  & $0.003 $ & $ 0.003 $ & $ -0.005 $ & $0.003 $ & $0.003 $ & $ -0.006 $ & $0.003 $ & $0.003 $\\
		&		& & $\beta_2 $ & $0.004 $  & $0.002 $ & $  0.002$ & $0.003  $ & $ 0.002$ & $0.002 $ & $ 0.002 $ & $ 0.002$ & $0.002 $\\
		&		& &  & $0.004 $  & $0.002 $ & $0.002  $ & $  0.003$ & $0.002 $ & $0.002 $ & $ 0.003 $ & $0.002 $ & $0.002 $\\
		&		& & $\beta_3 $ & $ 0.006$  & $ 0.020$ & $0.020  $ & $0.007  $ & $0.022 $ & $0.022 $ & $ 0.002 $ & $0.024 $ & $0.024 $\\
		&		& &  & $ 0.006$  & $0.020 $ & $0.020  $ & $ 0.008 $ & $ 0.022$ & $0.022 $ & $ 0.006 $ & $ 0.024$ & $0.024 $\\
			\cline{2-13}
		& & & & & & & & & && \\[-8pt]
	& 	& 3& $\beta_1 $ & $0.013 $  & $ 0.006$ & $ 0.006 $ & $ 0.009 $ & $0.007 $ & $0.007 $ & $ 0.003 $ & $0.008 $ & $0.008 $\\
	&	& &  & $0.012 $  & $0.006 $ & $ 0.006 $ & $ 0.009 $ & $0.007 $ & $0.007 $ & $ 0.007 $ & $ 0.008$ & $0.008 $\\
	&	& & $\beta_2 $ & $ -0.002$  & $0.002 $ & $0.002  $ & $-0.001  $ & $ 0.003$ & $0.003 $ & $ 0.000 $ & $0.003 $ & $0.003 $\\
	&	& &  & $-0.002 $  & $0.002 $ & $ 0.002 $ & $ -0.001 $ & $0.003 $ & $0.003 $ & $-0.001  $ & $0.003 $ & $0.003 $\\
	&	& & $\beta_3 $ & $ 0.008$  & $0.030 $ & $ 0.030 $ & $0.000  $ & $0.035 $ & $0.035 $ & $ -0.007 $ & $0.045 $ & $ 0.045$\\
	&	& &  & $0.007 $  & $0.030 $ & $ 0.030 $ & $ 0.001 $ & $ 0.035$ & $0.035 $ & $ 0.001 $ & $ 0.047$ & $0.047 $\\
				\hline
	&	& & & & & & & & & && \\[-8pt]
	4	&$200$	&1 & $\beta_1 $ & $-0.012  $  & $ 0.005$ & $  0.005 $ & $ -0.010  $ & $ 0.005 $ & $  0.005$ & $  -0.006 $ & $ 0.005 $ & $ 0.005 $\\
	&		& &  & $ -0.013 $  & $0.005 $ & $ 0.005  $ & $-0.013   $ & $0.005  $ & $ 0.005 $ & $ -0.013  $ & $ 0.005 $ & $ 0.006 $\\
	&		& & $\beta_2 $ &$0.003  $  & $ 0.003$ & $ 0.003  $ & $ 0.002  $ & $ 0.003 $ & $ 0.003 $ & $0.001   $ & $ 0.003 $ & $ 0.003 $\\
	&		& &  &$ 0.003 $  & $ 0.003$ & $ 0.003  $ & $ 0.002  $ & $ 0.003 $ & $ 0.003 $ & $0.003   $ & $ 0.003 $ & $  0.003$\\
	&		& & $\beta_3 $ &$ 0.002 $  & $0.032 $ & $ 0.032  $ & $ 0.000  $ & $ 0.036 $ & $0.036  $ & $ 0.000  $ & $0.040  $ & $ 0.040 $\\
	&		& &    & $0.002 $ & $ 0.032  $ & $ 0.032  $ & $  0.002$ & $ 0.037 $ & $ 0.037  $& $ 0.005 $ & $ 0.041 $ & $  0.041$\\
	\cline{3-13}
	&		& & & & & & & & & && \\[-8pt]
	&		& 3& $\beta_1 $ & $ 0.013 $  & $0.005 $ & $  0.005 $ & $  0.010 $ & $ 0.005 $ & $ 0.005 $ & $ 0.002  $ & $ 0.007 $ & $  0.007$\\
	&		& &  & $ 0.014 $  & $ 0.005$ & $  0.005 $ & $ 0.013  $ & $ 0.005 $ & $  0.006$ & $0.010   $ & $ 0.007 $ & $0.007  $\\
	&		& & $\beta_2 $ &$- 0.003 $  & $0.004 $ & $ 0.004  $ & $-0.002   $ & $ 0.005 $ & $  0.005$ & $ -0.002  $ & $ 0.006 $ & $ 0.006 $\\
	&		& &  &$ -0.004 $  & $ 0.004$ & $ 0.004  $ & $ -0.003  $ & $ 0.005 $ & $0.005  $ & $ -0.004  $ & $  0.006$ & $ 0.006 $\\
		&		& & $\beta_3 $ &$ -0.001 $  & $ 0.057$ & $ 0.057  $ & $ -0.008  $ & $ 0 .067$ & $ 0.067 $ & $ -0.023  $ & $  0.085$ & $  0.085$\\
	&		& &  & $ 0.001 $  & $ 0.057$ & $  0.057 $ & $ -0.001  $ & $ 0.068 $ & $0.068  $ & $ -0.005  $ & $ 0.089 $ & $ 0.089 $\\
	\cline{2-13}
	&		& & & & & & & & & && \\[-8pt]
		& $400$		& 1& $\beta_1 $ &$  -0.005$  & $0.002 $ & $ 0.002  $ & $ -0.003  $ & $ 0.002 $ & $ 0.002 $ & $ 0.001  $ & $ 0.003 $ & $ 0.003 $\\
		&		& &  & $-0.005  $  & $ 0.002$ & $  0.002 $ & $-0.005   $ & $ 0.002 $ & $ 0.002 $ & $ -0.003  $ & $ 0.003 $ & $ 0.003 $\\
		&		& & $\beta_2 $ &$0.000  $  & $0.001 $ & $ 0.001  $ & $  0.000 $ & $ 0.001 $ & $ 0.001 $ & $ -0.001  $ & $ 0.001 $ & $ 0.001 $\\
		&		& &  &$ 0.000 $  & $0.001 $ & $ 0.001  $ & $ 0.000  $ & $  0.001$ & $ 0.001 $ & $ -0.001  $ & $ 0.001 $ & $ 0.001 $\\
		&		& & $\beta_3 $ &$0.002  $  & $ 0.015$ & $ 0.015  $ & $ 0.000  $ & $ 0.017 $ & $0.017  $ & $ -0.002  $ & $ 0.018 $ & $ 0.018 $\\
	&		& &  & $0.002  $  & $ 0.015$ & $  0.015 $ & $ 0.000  $ & $ 0.017 $ & $ 0.017 $ & $ -0.001  $ & $ 0.018 $ & $ 0.018 $\\
	\cline{3-13}
	& & & & & & & & & && \\[-8pt]
	& 	& 3& $\beta_1 $ & $ 0.006 $  & $0.002 $ & $  0.002 $ & $0.004   $ & $ 0.003 $ & $  0.003$ & $0.000   $ & $ 0.003 $ & $  0.003$\\
	&		& &  & $0.006  $  & $0.002 $ & $ 0.002  $ & $ 0.006  $ & $  0.003$ & $ 0.003 $ & $ 0.006  $ & $ 0.004 $ & $ 0.004 $\\
	&		& & $\beta_2 $ &$ -0.004 $  & $0.002 $ & $ 0.002  $ & $ -0.004  $ & $ 0.003 $ & $  0.003$ & $ -0.004  $ & $ 0.003 $ & $ 0.003 $\\
	&		& &  &$ -0.005 $  & $0.002 $ & $  0.002 $ & $ -0.005  $ & $ 0.003 $ & $0.003  $ & $ -0.005  $ & $ 0.003 $ & $ 0.003 $\\
	&		& & $\beta_3 $ &$ 0.001 $  & $0.028 $ & $ 0.028  $ & $  -0.004 $ & $ 0.033 $ & $ 0.033 $ & $ -0.013  $ & $ 0.046 $ & $ 0.046 $\\
&		& &  & $ 0.001 $  & $ 0.028$ & $ 0.028  $ & $ -0.001  $ & $ 0.034 $ & $ 0.034 $ & $ 0.000  $ & $ 0.048 $ & $ 0.048 $\\
			\end{tabular}
	}}
\end{table}
		
	\begin{table}
		\caption{	\label{tab:results3:2}Bias, variance and MSE of $\hat\gamma$ and $\hat\beta$ for \texttt{smcure} (second rows) and our approach (first rows) in Model 3 (Scenario 2).}
		\centering
		\scalebox{0.85}{
			\fbox{
				\begin{tabular}{ccccrrrrrrrrr}
				&	& & & & & & & & & && \\[-8pt]
				&	&&&\multicolumn{3}{c}{Cens. level 1}&\multicolumn{3}{c}{Cens. level 2}&\multicolumn{3}{c}{Cens. level 3}\\
				Mod.&	n&  scen. & Par. &  Bias & Var. & MSE & Bias & Var. & MSE & Bias & Var. & MSE\\[2pt]
					\hline
				&	& & & & & & & & & && \\[-8pt]
			3&$200$		& $2 $ & $\gamma_1 $ & $-0.056 $  & $ 0.261$ & $0.264  $ & $-0.215  $ & $0.344 $ & $0.390 $ & $-0.379  $ & $0.422 $ & $0.566 $\\
			&		& &  & $0.091 $  & $ 0.299$ & $0.308  $ & $ 0.090 $ & $0.454 $ & $ 0.462$ & $ 0.127 $ & $0.755 $ & $ 0.771$\\
			&		& & $\gamma_2 $ & $-0.055 $  & $0.124 $ & $ 0.127 $ & $ -0.123 $ & $0.137 $ & $0.152 $ & $-0.198  $ & $0.165 $ & $0.204 $\\
			&		& & & $0.183 $  & $0.178 $ & $ 0.212 $ & $ 0.275 $ & $0.268 $ & $ 0.344$ & $ 0.390 $ & $ 0.485$ & $0.638 $\\
			&		& & $\gamma_3 $ & $0.023 $  & $0.401 $ & $0.402  $ & $ 0.087 $ & $0.512 $ & $0.519 $ & $ 0.162 $ & $ 0.612$ & $ 0.639$\\
		&			& & & $0.161 $  & $0.488 $ & $ 0.514 $ & $ 0.278 $ & $0.729 $ & $0.806 $ & $ 0.427 $ & $ 1.159$ & $1.341 $\\
		&			& & $\gamma_4 $ & $-0.049 $  & $ 0.321$ & $ 0.324 $ & $ -0.093 $ & $ 0.421$ & $ 0.429$ & $ -0.164 $ & $0.543 $ & $0.570 $\\
		&			& & & $0.036 $  & $0.367 $ & $0.369  $ & $ 0.086 $ & $ 0.547$ & $ 0.554$ & $ 0.062 $ & $ 0.795$ & $ 0.799$\\
		&			& & $\beta_1 $ & $ 0.017$  & $0.008 $ & $0.009  $ & $ 0.017 $ & $ 0.009$ & $0.009 $ & $ 0.013 $ & $0.010 $ & $ 0.010$\\
		&			& &  & $0.016 $  & $ 0.008$ & $ 0.009 $ & $0.018  $ & $0.009 $ & $0.009 $ & $0.015  $ & $0.010 $ & $0.011 $\\
		&			& & $\beta_2 $ & $0.004 $  & $0.004 $ & $ 0.004 $ & $0.007  $ & $ 0.004$ & $0.004 $ & $0.005  $ & $0.005 $ & $0.005 $\\
		&			& &  & $ 0.004$  & $0.004 $ & $ 0.004 $ & $0.009  $ & $0.004 $ & $0.004 $ & $  0.008$ & $ 0.005$ & $0.005 $\\
		&			& & $\beta_3 $ & $0.022 $  & $0.059 $ & $0.059  $ & $0.008  $ & $0.059 $ & $0.060 $ & $-0.003  $ & $ 0.074$ & $0.074 $\\
		&			& &  & $ 0.024$  & $0.059 $ & $ 0.060$ & $ 0.015 $ & $0.060 $ & $0.060 $ & $ 0.013 $ & $0.074 $ & $0.075 $\\
					\cline{2-13}
		&			& & & & & & & & & && \\[-8pt]
		&$400$			& $2 $ & $\gamma_1 $ & $-0.037 $  & $0.120 $ & $0.121  $ & $-0.145  $ & $ 0.174$ & $0.195 $ & $ -0.255 $ & $0.240 $ & $0.305 $\\
		&			& &  & $0.047 $  & $0.122 $ & $0.124  $ & $ 0.029 $ & $ 0.179$ & $0.179 $ & $ 0.042 $ & $0.265 $ & $ 0.267$\\
		&			& & $\gamma_2 $ & $-0.059 $  & $0.056 $ & $ 0.060 $ & $-0.123  $ & $0.075 $ & $0.090 $ & $ -0.177 $ & $0.077 $ & $0.0108 $\\
		&			& & & $0.090 $  & $ 0.068$ & $0.076  $ & $ 0.110 $ & $ 0.098$ & $ 0.110$ & $0.173  $ & $0.125 $ & $0.155 $\\
		&			& & $\gamma_3 $ & $-0.020 $  & $0.181 $ & $ 0.182 $ & $ 0.026 $ & $ 0.245$ & $0.246 $ & $0.080  $ & $0.337 $ & $0.344 $\\
		&			& & & $0.066 $  & $ 0.194$ & $ 0.198 $ & $ 0.124 $ & $0.261 $ & $ 0.276$ & $0.212  $ & $0.380 $ & $ 0.425$\\
		&			& & $\gamma_4 $ & $-0.021 $  & $ 0.151$ & $ 0.152 $ & $ -0.069 $ & $ 0.220$ & $ 0.225$ & $ -0.122 $ & $ 0.315$ & $0.329 $\\
		&			& & & $0.023 $  & $0.156 $ & $ 0.157 $ & $ 0.022 $ & $0.218 $ & $ 0.218$ & $ 0.033 $ & $ 0.309$ & $ 0.310$\\
		&			& & $\beta_1 $ & $ 0.011$  & $ 0.004$ & $0.004  $ & $ 0.008 $ & $0.004 $ & $0.004 $ & $0.013  $ & $0.005 $ & $ 0.005$\\
		&			& &  & $0.011 $  & $0.004 $ & $0.004  $ & $ 0.007 $ & $ 0.004$ & $0.004 $ & $ 0.013 $ & $0.005 $ & $ 0.005$\\
		&			& & $\beta_2 $ & $0.003 $  & $ 0.002$ & $0.002  $ & $ 0.003 $ & $ 0.002$ & $ 0.002$ & $0.001  $ & $ 0.002$ & $ 0.002$\\
		&			& &  & $ 0.003$  & $ 0.002$ & $  0.002$ & $ 0.004 $ & $0.002 $ & $ 0.002$ & $ 0.003 $ & $0.002 $ & $0.002 $\\
		&			& & $\beta_3 $ & $0.019 $  & $0.028 $ & $0.029 $ & $0.002  $ & $ 0.029$ & $0.029 $ & $-0.001  $ & $ 0.033$ & $0.033 $\\
				&	& &  & $0.020 $  & $ 0.028$ & $ 0.029 $ & $0.006  $ & $ 0.029$ & $ 0.029$ & $0.009  $ & $0.034 $ & $0.034 $\\
								\end{tabular}
			}	}
		\end{table}
	
	\begin{table}
		\caption{	\label{tab:results4:2}Bias, variance and MSE of $\hat\gamma$ and $\hat\beta$ for \texttt{smcure} (second rows) and our approach (first rows) in Model 4 (Scenario 2).}
		\centering
		\scalebox{0.85}{
			\fbox{
				\begin{tabular}{ccccrrrrrrrrr}
					&	& & & & & & & & & && \\[-8pt]
					&	&&&\multicolumn{3}{c}{Cens. level 1}&\multicolumn{3}{c}{Cens. level 2}&\multicolumn{3}{c}{Cens. level 3}\\
					Mod.&	n&  scen. & Par. &  Bias & Var. & MSE & Bias & Var. & MSE & Bias & Var. & MSE\\[2pt]
								\hline
				&	& & & & & & & & & && \\[-8pt]
				4&$200$		& $2 $ & $\gamma_1 $ & $ -0.121 $  & $0.117  $ & $ 0.131  $ & $ -0.240  $ & $ 0.140 $ & $ 0.198$ & $ -0.375 $ & $0.179  $ & $ 0.320 $\\
				&		& &  & $ 0.006 $  & $0.134  $ & $ 0.134  $ & $ -0.011  $ & $ 0.170 $ & $ 0.170$ & $ -0.035 $ & $0.231  $ & $ 0.232 $\\
				&		& & $\gamma_2 $ & $ 0.053 $  & $ 0.014 $ & $ 0.017  $ & $ 0.091  $ & $0.017  $ & $ 0.026$ & $ 0.122 $ & $ 0.018 $ & $  0.033$\\
				&		& & &$ 0.041 $  & $ 0.016 $ & $ 0.018  $ & $ 0.056  $ & $0.020  $ & $0.024 $ & $ 0.077 $ & $0.027 $ & $ 0.033 $\\
				&		& & $\gamma_3 $ & $ -0.195 $  & $ 0.137 $ & $ 0.175  $ & $ -0.348  $ & $ 0.156 $ & $ 0.277$ & $  -0.511$ & $0.165  $ & $  0.426$\\
				&			& & & $ 0.119 $  & $0.169  $ & $  0.183 $ & $  0.122 $ & $ 0.216 $ & $ 0.231$ & $ 0.139 $ & $ 0.273 $ & $ 0.293 $\\
				&			& & $\gamma_4 $ & $ 0.096 $  & $ 0.156 $ & $ 0.165  $ & $ 0.157  $ & $ 0.189 $ & $ 0.213$ & $ 0.240 $ & $0.229  $ & $0.287  $\\
				&			& & & $ 0.072 $  & $ 0.182 $ & $ 0.187  $ & $ 0.096  $ & $  0.216$ & $0.226 $ & $ 0.137 $ & $0.288  $ & $ 0.307 $\\
					&			& & $\gamma_5 $ & $ -0.038 $  & $ 0.161 $ & $  0.163 $ & $ -0.078  $ & $ 0.185 $ & $0.191 $ & $  -0.149$ & $ 0.218 $ & $  0.240$\\
				&			& & & $0.046  $  & $ 0.175 $ & $  0.177 $ & $  0.058 $ & $ 0.220 $ & $ 0.223$ & $ 0.047 $ & $0.254  $ & $0.257  $\\
					&			& & $\beta_1 $ & $ 0.014 $  & $  0.007$ & $  0.007 $ & $ 0.009  $ & $ 0.008 $ & $0.008 $ & $  0.001$ & $ 0.008 $ & $ 0.008 $\\
				&			& &  &$ 0.017 $  & $ 0.007 $ & $ 0.007  $ & $ 0.016  $ & $ 0.008 $ & $0.008 $ & $ 0.014 $ & $  0.008$ & $ 0.008 $\\
					&			& & $\beta_2 $ & $ 0.005 $  & $  0.004$ & $ 0.004  $ & $0.003   $ & $ 0.004 $ & $0.004 $ & $  0.002$ & $0.004  $ & $  0.005$\\
				&			& &  &$ 0.005 $  & $ 0.004 $ & $0.004   $ & $  0.005 $ & $ 0.004 $ & $ 0.004$ & $0.005  $ & $  0.004$ & $0.005  $\\
					&			& & $\beta_3 $ & $ 0.010 $  & $0.056  $ & $ 0.056  $ & $ -0.003  $ & $ 0.063 $ & $0.063 $ & $-0.022  $ & $ 0.071 $ & $ 0.072 $\\
				&			& &  &$  0.015$  & $ 0.057 $ & $ 0.057  $ & $ 0.011  $ & $ 0.064 $ & $ 0.064$ & $ 0.006 $ & $  0.072$ & $  0.072$\\
				\cline{2-13}
				&			& & & & & & & & & && \\[-8pt]
				&$400$			& $2 $ & $\gamma_1 $ & $ -0.068 $  & $  0.057$ & $ 0.062  $ & $  -0.169 $ & $0.073  $ & $ 0.102$ & $ -0.279 $ & $ 0.089 $ & $ 0.167 $\\
				&			& &  & $ 0.015 $  & $ 0.061 $ & $ 0.061  $ & $ 0.001  $ & $ 0.078 $ & $ 0.078$ & $ -0.022 $ & $ 0.097 $ & $ 0.098 $\\
				&			& & $\gamma_2 $ & $ 0.033 $  & $ 0.007 $ & $  0.008 $ & $ 0.068  $ & $ 0.008 $ & $ 0.013$ & $ 0.099 $ & $  0.010$ & $  0.019$\\
				&			& & & $ 0.019 $  & $ 0.008 $ & $  0.008 $ & $ 0.030  $ & $ 0.010 $ & $ 0.011$ & $ 0.044 $ & $ 0.012 $ & $ 0.014 $\\
				&			& & $\gamma_3 $ & $ -0.180 $  & $ 0.073 $ & $ 0.105  $ & $  -0.311 $ & $ 0.084 $ & $ 0.181$ & $ -0.450 $ & $ 0.093 $ & $ 0.296 $\\
				&			& & &$ 0.043 $  & $ 0.079 $ & $ 0.081  $ & $ 0.038 $  & $0.096  $ & $  0.097 $  & $ 0.030 $ & $  0.119$ & $ 0.120 $\\
				&			& & $\gamma_4 $ & $ 0.050 $  & $0.078  $ & $ 0.080  $ & $ 0.119  $ & $ 0.096 $ & $0.110 $ & $ 0.177 $ & $ 0.118 $ & $ 0.149 $\\
				&			& & &  $ 0.028  $ & $ 0.082 $ & $0.083 $   & $ 0.049  $ & $ 0.101 $ & $ 0.104$ & $  0.066$ & $ 0.124 $ & $ 0.129 $\\
					&			& & $\gamma_5 $ & $ -0.060 $  & $ 0.077 $ & $0.081   $ & $-0.106   $ & $0.096  $ & $0.107 $ & $ -0.150 $ & $  0.101$ & $ 0.124 $\\
				&			& & &$ 0.003 $  & $0.082  $ & $ 0.082  $ & $ 0.001  $ & $0.099  $ & $0.099 $ & $ 0.003 $ & $ 0.116 $ & $ 0.116 $\\
				&			& & $\beta_1 $ & $0.006  $  & $ 0.003 $ & $ 0.003  $ & $ 0.002  $ & $ 0.004 $ & $ 0.004$ & $ -0.004 $ & $ 0.004 $ & $ 0.004 $\\
				&			& &  &$ 0.008 $  & $ 0.003 $ & $ 0.003  $ & $  0.008 $ & $ 0.004 $ & $0.004 $ & $0.007  $ & $ 0.004 $ & $ 0.004 $\\
				&			& & $\beta_2 $ & $ 0.002 $  & $ 0.002 $ & $ 0.002  $ & $0.000   $ & $  0.002$ & $ 0.002$ & $ 0.000 $ & $ 0.002 $ & $0.002  $\\
			&			& &  &$ 0.002 $  & $ 0.002 $ & $ 0.002  $ & $0.001   $ & $ 0.002 $ & $0.002 $ & $0.002  $ & $ 0.002 $ & $ 0.002 $\\
				&			& & $\beta_3 $ & $ 0.003 $  & $0.026  $ & $ 0.026  $ & $ -0.010  $ & $0.030  $ & $ 0.030$ & $ -0.023 $ & $  0.034$ & $0.034  $\\
			&			& &  &$ 0.007 $  & $0.026  $ & $0.026   $ & $ 0.002  $ & $0.031  $ & $0.031 $ & $ -0.001 $ & $0.034  $ & $ 0.034 $\\
					\end{tabular}
		}	}
	\end{table}

\begin{table}
	\caption{\label{tab:results1_2_1000}Bias, variance and MSE of $\hat\gamma$ and $\hat\beta$ for \texttt{smcure} (second rows) and our approach (first rows) in Model 1 and 2 ($n=1000$).}
	\centering
	\scalebox{0.85}{
	\fbox{
		\begin{tabular}{ccccrrrrrrrrr}
		&	&&&\multicolumn{3}{c}{Cens. level 1}&\multicolumn{3}{c}{Cens. level 2}&\multicolumn{3}{c}{Cens. level 3}\\
		Mod.&	n&  scen. & Par. &  Bias & Var. & MSE & Bias & Var. & MSE & Bias & Var. & MSE\\[2pt]
			\hline
		&	& & & & & & & & & && \\[-8pt]
		1&	$1000$ & $1 $ & $\gamma_1 $ & $ 0.007$  & $ 0.011$ & $ 0.011 $ & $  0.008$ & $ 0.013$ & $0.013 $ & $ 0.010 $ & $ 0.015$ & $0.015 $\\
		&	& &  & $0.016 $  & $0.011 $ & $  0.012$ & $  0.019$ & $0.013 $ & $ 0.013$ & $0.022  $ & $0.015 $ & $0.016 $\\
		&	& & $\gamma_2 $ & $-0.006 $  & $ 0.032$ & $ 0.032 $ & $-0.006  $ & $0.037 $ & $0.037 $ & $-0.006  $ & $0.041 $ & $0.041 $\\
		&	& & & $0.023 $  & $ 0.033$ & $ 0.034 $ & $0.027  $ & $ 0.038$ & $0.039 $ & $ 0.028 $ & $ 0.043$ & $0.044 $\\
		&	& & $\beta $ & $0.001 $  & $ 0.005$ & $0.005  $ & $ 0.002 $ & $0.006 $ & $ 0.006$ & $ 0.002 $ & $ 0.006$ & $ 0.006$\\
		&	& &  & $ 0.000$  & $ 0.005$ & $ 0.005 $ & $ 0.001 $ & $0.006 $ & $0.006 $ & $ 0.000 $ & $0.006 $ & $0.006 $\\
			\cline{3-13}
		&	& & & & & & & & & && \\[-8pt]
		&	& $2 $ & $\gamma_1 $ & $ 0.005$  & $ 0.007$ & $0.007  $ & $ 0.006 $ & $0.008 $ & $0.008 $ & $ 0.007 $ & $ 0.010$ & $0.010 $\\
		&	& &  & $ 0.008$  & $0.007 $ & $  0.007$ & $0.011  $ & $0.008 $ & $ 0.008$ & $ 0.014 $ & $0.010 $ & $0.010 $\\
		&	& & $\gamma_2 $ & $-0.008 $  & $ 0.019$ & $0.019  $ & $ -0.006 $ & $0.022 $ & $ 0.022$ & $ -0.007 $ & $0.026 $ & $0.026 $\\
		&	& & & $0.012 $  & $ 0.019$ & $ 0.020 $ & $ 0.016 $ & $ 0.023$ & $ 0.023$ & $ 0.018 $ & $0.027 $ & $0.027 $\\
		&	& & $\beta $ & $ 0.000$  & $ 0.006$ & $0.006  $ & $ -0.001 $ & $0.007 $ & $ 0.007$ & $ 0.000 $ & $ 0.008$ & $0.008 $\\
		&	& &  & $-0.001 $  & $0.006 $ & $ 0.006 $ & $-0.002  $ & $0.007 $ & $ 0.007$ & $ -0.002 $ & $ 0.008$ & $ 0.008$\\
			\cline{3-13}
		&	& & & & & & & & & && \\[-8pt]
		&	& $3 $ & $\gamma_1 $ & $0.001 $  & $0.011 $ & $0.011  $ & $ -0.008 $ & $ 0.012$ & $ 0.012$& $ -0.003 $ & $ 0.018$ & $0.018 $ \\
		&	& &  & $ 0.004$  & $0.011 $ & $  0.011$ & $-0.002  $ & $0.013 $ & $0.013 $ & $0.009  $ & $ 0.019$ & $ 0.019$\\
		&	& & $\gamma_2 $ & $ -0.064$  & $ 0.093$ & $0.097  $ & $ -0.076 $ & $ 0.123$ & $0.129 $ & $ -0.107 $ & $0.163 $ & $ 0.175$\\
			&& & & $ 0.046$  & $0.099 $ & $ 0.102 $ & $0.058  $ & $0.130 $ & $0.134 $ & $ 0.072 $ & $0.167 $ & $ 0.173$\\
		&	& & $\beta $ & $ 0.011$  & $ 0.018$ & $0.018  $ & $0.005  $ & $0.022 $ & $ 0.022$ & $ 0.004 $ & $0.026 $ & $0.026 $\\
		&	& &  & $ 0.009$  & $0.018 $ & $0.018  $ & $ 0.001 $ & $0.022 $ & $0.022 $ & $ -0.005 $ & $0.026 $ & $ 0.026$\\
			\hline
		&	& & & & & & & & & && \\[-8pt]
		2&	$1000$ & $1 $ & $\gamma_1 $ & $0.006 $  & $ 0.008$ & $ 0.008 $ & $ 0.000 $ & $0.009 $ & $ 0.009$ & $0.007  $ & $ 0.012$ & $0.012 $\\
		&	& &  & $0.009 $  & $0.008 $ & $ 0.008 $ & $ 0.007 $ & $0.009 $ & $0.009 $ & $0.022  $ & $0.013 $ & $0.013 $\\
		&	& & $\gamma_2 $ & $-0.010 $  & $ 0.007$ & $ 0.007 $ & $ -0.011 $ & $0.009 $ & $0.009 $ & $ -0.018 $ & $0.011 $ & $0.011 $\\
		&	& & & $-0.002 $  & $0. 007$ & $0.007  $ & $-0.002  $ & $0.009 $ & $ 0.009$ & $ -0.005 $ & $0.012 $ & $0.012 $\\
		&	& & $\beta $ & $-0.001 $  & $0.002 $ & $ 0.002 $ & $0.000  $ & $ 0.003$ & $ 0.003$ & $ 0.001 $ & $0.003 $ & $0.003 $\\
		&	& &  & $-0.001 $  & $0.002 $ & $ 0.002 $ & $-0.001  $ & $0.003 $ & $0.003 $ & $ -0.001 $ & $0.003 $ & $ 0.003$\\
			\cline{3-13}
		&	& & & & & & & & & && \\[-8pt]
		&	& $2 $ & $\gamma_1 $ & $-0.004 $  & $ 0.006$ & $ 0.006 $ & $-0.010  $ & $0.007 $ & $ 0.007$ & $-0.004  $ & $0.010 $ & $0.010 $\\
			&& &  & $0.002 $  & $0.006 $ & $ 0.006 $ & $0.003  $ & $0.008 $ & $ 0.008$ & $ 0.022 $ & $0.010 $ & $0.011 $\\
	&		& & $\gamma_2 $ & $ -0.012$  & $0.007 $ & $ 0.008 $ & $ -0.023 $ & $ 0.009$ & $ 0.010$ & $ -0.031 $ & $ 0.012$ & $0.012 $\\
	&		& & & $0.004 $  & $0.008 $ & $0.008  $ & $0.000  $ & $0.009 $ & $0.009 $ & $ 0.002 $ & $0.012 $ & $ 0.012$\\
	&		& & $\beta $ & $0.005 $  & $0.003 $ & $0.003  $ & $0.006  $ & $0.003 $ & $0.003 $ & $  0.005$ & $0.004 $ & $ 0.004$\\
	&		& &  & $0.004 $  & $0.003 $ & $0.003  $ & $0.005 $ & $0.003 $ & $0.003 $ & $  0.001$ & $0.004 $ & $ 0.004$\\
			\cline{3-13}
	&		& & & & & & & & & && \\[-8pt]
		&	& $3 $ & $\gamma_1 $ & $-0.011 $  & $0.014 $ & $0.014  $ & $ -0.029 $ & $ 0.014$ & $0.015 $ & $ -0.054 $ & $0.019 $ & $ 0.022$\\
		&	& &  & $ 0.002$  & $0.016 $ & $0.016  $ & $ 0.015 $ & $ 0.018$ & $0.018 $ & $ 0.028 $ & $0.029 $ & $0.029 $\\
		&	& & $\gamma_2 $ & $-0.314 $  & $0.144 $ & $ 0.242 $ & $ -0.583 $ & $0.189 $ & $0.529 $ & $ -0.841 $ & $ 0.237$ & $0.945 $\\
		&	& & & $0.045 $  & $ 0.148$ & $ 0.151 $ & $0.068  $ & $0.175 $ & $ 0.180$ & $0.137  $ & $0.223 $ & $0.242 $\\
		&	& & $\beta $ & $ 0.004$  & $0.006 $ & $ 0.006 $ & $ 0.004 $ & $0.008 $ & $ 0.008$ & $ 0.006 $ & $0.010 $ & $0.010 $\\
		&	& &  & $ 0.002$  & $0.006 $ & $ 0.006 $ & $ 0.001 $ & $0.008 $ & $ 0.008$ & $ 0.003 $ & $0.010 $ & $0.010 $\\
		\end{tabular}
	}}	
\end{table}

\begin{table}
	\caption{	\label{tab:results3_1000}Bias, variance and MSE of $\hat\gamma$ and $\hat\beta$ for \texttt{smcure} and our approach in Model 3 ($n=1000$).}
	\centering
\scalebox{0.85}{
	\fbox{
		\begin{tabular}{cccrrrrrrrrr}
			& & & & & & & & & && \\[-8pt]
			&&&\multicolumn{3}{c}{Cens. level 1}&\multicolumn{3}{c}{Cens. level 2}&\multicolumn{3}{c}{Cens. level 3}\\
			n&  scen. & Par. &  Bias & Var. & MSE & Bias & Var. & MSE & Bias & Var. & MSE\\[2pt]
			\hline
			& & & & & & & & & && \\[-8pt]
			$1000$ & $1 $ & $\gamma_1 $ & $ 0.000$  & $ 0.026$ & $ 0.026 $ & $  0.010$ & $0.035 $ & $0.035 $ & $  0.013$ & $0.050 $ & $ 0.050$\\
			& &  & $0.001 $  & $0.026 $ & $0.026  $ & $0.003  $ & $0.031 $ & $0.031 $ & $ 0.006 $ & $0.040 $ & $ 0.040$\\
			& & $\gamma_2 $ & $-0.009 $  & $0.008 $ & $0.008  $ & $ -0.021 $ & $ 0.001$ & $0.011 $ & $-0.051  $ & $ 0.014$ & $0.017 $\\
			& & & $-0.018 $  & $0.008 $ & $0.008  $ & $-0.022  $ & $0.010 $ & $0.010 $ & $ -0.036 $ & $0.013 $ & $0.014 $\\
			& & $\gamma_3 $ & $ 0.019$  & $ 0.071$ & $ 0.072 $ & $ 0.022 $ & $0.094 $ & $ 0.094$ & $ 0.022 $ & $ 0.128$ & $0.128 $\\
			& & & $0.041 $  & $0.069 $ & $ 0.070 $ & $ 0.060 $ & $ 0.078$ & $ 0.082$ & $  0.062$ & $0.105 $ & $ 0.109$\\
			& & $\gamma_4 $ & $0.004 $  & $ 0.061$ & $ 0.061 $ & $ -0.030 $ & $0.087 $ & $0.088 $ & $ -0.095 $ & $0.124 $ & $ 0.133$\\
			& & & $0.026 $  & $ 0.058$ & $ 0.058 $ & $ -0.002 $ & $0.077 $ & $ 0.078$ & $ 0.033 $ & $ 0.098$ & $0.099 $\\
			& & $\beta_1 $ & $ -0.002$  & $0.001 $ & $ 0.001 $ & $-0.001  $ & $ 0.001$ & $ 0.001$ & $0.001  $ & $0.001 $ & $0.001 $\\
			& &  & $ -0.002$  & $ 0.001$ & $  0.001$ & $-0.002  $ & $0.001 $ & $0.001 $ & $-0.002  $ & $0.001 $ & $0.001 $\\
			& & $\beta_2 $ & $ 0.002$  & $0.001 $ & $ 0.001 $ & $0.000  $ & $ 0.001$ & $ 0.001$ & $0.001  $ & $0.001 $ & $0.001 $\\
			& &  & $ 0.002$  & $ 0.001$ & $  0.001$ & $0.000 $ & $0.001 $ & $0.001 $ & $0.001 $ & $0.001 $ & $0.001 $\\
			& & $\beta_3 $ & $ 0.001$  & $0.007 $ & $ 0.007 $ & $-0.006  $ & $ 0.008$ & $ 0.008$ & $-0.003  $ & $0.009 $ & $0.009 $\\
			& &  & $ 0.001$  & $ 0.007$ & $  0.007$ & $-0.006  $ & $0.0018$ & $0.008 $ & $-0.001  $ & $0.009 $ & $0.009 $\\
			\cline{2-12}
			& & & & & & & & & && \\[-8pt]
			& $2 $ & $\gamma_1 $ & $ -0.025$  & $0.046 $ & $ 0.047 $ & $ -0.071 $ & $ 0.077$ & $ 0.082$ & $ -0.142 $ & $0.117 $ & $0.137 $\\
			& &  & $ 0.020$  & $ 0.046$ & $0.047  $ & $  0.007$ & $ 0.066$ & $ 0.066$ & $0.005  $ & $0.090 $ & $0.090 $\\
			& & $\gamma_2 $ & $-0.057 $  & $ 0.022$ & $ 0.025 $ & $-0.097  $ & $0.030 $ & $0.040 $ & $ -0.108 $ & $ 0.038$ & $0.050 $\\
			& & & $0.032 $  & $0.023 $ & $0.024  $ & $0.037  $ & $0.031 $ & $ 0.032$ & $ 0.056 $ & $ 0.038$ & $0.041 $\\
			& & $\gamma_3 $ & $-0.041 $  & $0.068 $ & $0.070  $ & $-0.024  $ & $ 0.125$ & $ 0.126$ & $  0.015$ & $0.169 $ & $0.170 $\\
			& & & $0.009 $  & $0.066 $ & $0.066  $ & $0.044  $ & $0.104 $ & $0.106 $ & $ 0.059 $ & $0.132 $ & $0.136 $\\
			& & $\gamma_4 $ & $-0.019 $  & $ 0.059$ & $ 0.060 $ & $ -0.025 $ & $0.093 $ & $ 0.093$ & $ -0.037 $ & $0.127 $ & $0.128 $\\
			& & & $0.005 $  & $0.058 $ & $ 0.058 $ & $0.020  $ & $ 0.081$ & $0.081 $ & $0.039  $ & $0.101 $ & $0.102 $\\
			& & $\beta_1 $ & $0.003 $  & $0.001 $ & $0.001  $ & $  0.002$ & $ 0.002$ & $ 0.002$ & $ 0.002 $ & $0.002 $ & $0.002 $\\
			& &  & $0.003 $  & $0.001 $ & $0.001  $ & $  0.001$ & $ 0.002$ & $ 0.002$ & $ 0.002 $ & $0.002 $ & $0.002 $\\
			& & $\beta_2 $ & $0.001 $  & $0.001 $ & $0.001  $ & $  0.002$ & $0.001 $ & $ 0.001$ & $-0.002  $ & $0.001 $ & $0.001 $\\
			& &  & $0.001 $  & $0.001 $ & $0.001  $ & $  0.002$ & $0.001 $ & $ 0.001$ & $-0.002  $ & $0.001 $ & $0.001 $\\
			& & $\beta_3 $ & $ 0.003$  & $ 0.010$ & $ 0.010 $ & $ 0.004 $ & $0.011 $ & $0.011 $ & $ -0.010 $ & $0.013 $ & $0.013 $\\
			& &  & $ 0.003$  & $ 0.010$ & $ 0.010 $ & $ 0.005 $ & $0.011 $ & $0.011 $ & $ -0.004 $ & $0.013 $ & $0.013 $\\
			\cline{2-12}
			& & & & & & & & & && \\[-8pt]
			& $3 $ & $\gamma_1 $ & $0.004 $  & $ 0.030$ & $0.031  $ & $-0.008  $ & $0.050 $ & $0.050 $ & $ -0.058 $ & $0.093 $ & $ 0.096$\\
			& &  & $ -0.001$  & $0.029 $ & $0.029  $ & $-0.008  $ & $0.043 $ & $ 0.043$ & $ -0.022 $ & $ 0.069$ & $ 0.070$\\
			& & $\gamma_2 $ & $-0.014 $  & $0.009 $ & $ 0.009 $ & $ -0.011 $ & $ 0.014$ & $0.014 $ & $ 0.031 $ & $ 0.022$ & $0.023 $\\
			& & & $0.015 $  & $0.009 $ & $0.009  $ & $0.018  $ & $0.012 $ & $0.012 $ & $0.045  $ & $0.018 $ & $0.020 $\\
			& & $\gamma_3 $ & $ 0.000$  & $0.044 $ & $ 0.044 $ & $ -0.004 $ & $0.072 $ & $ 0.072$ & $ 0.049 $ & $0.123 $ & $ 0.125$\\
			& & & $ 0.017$  & $0.043 $ & $ 0.043 $ & $0.016  $ & $0.058 $ & $ 0.058$ & $0.043  $ & $0.089 $ & $0.091 $\\
			& & $\gamma_4 $ & $ -0.002$  & $ 0.037$ & $ 0.037 $ & $0.002  $ & $0.060 $ & $ 0.060$ & $  -0.020$ & $ 0.099$ & $0.099 $\\
			& & & $-0.007 $  & $0.036 $ & $0.036  $ & $0.003  $ & $0.051 $ & $0.051 $ & $  -0.012$ & $0.071 $ & $0.072 $\\
			& & $\beta_1 $ & $ 0.007$  & $ 0.002$ & $ 0.002 $ & $ 0.008 $ & $0.003 $ & $0.003 $ & $0.005  $ & $0.003 $ & $0.003 $\\
			& &  & $ 0.006$  & $0.002 $ & $0.002  $ & $ 0.007 $ & $0.003 $ & $ 0.003$ & $  0.006$ & $0.003 $ & $0.003 $\\
			& & $\beta_2 $ & $ 0.000$  & $ 0.001$ & $ 0.001 $ & $ -0.002 $ & $ 0.001$ & $ 0.001$ & $ -0.001 $ & $0.001 $ & $0.001 $\\
			& &  & $ 0.000$  & $ 0.001$ & $ 0.001 $ & $ -0.002 $ & $ 0.001$ & $ 0.001$ & $ -0.001 $ & $0.001 $ & $0.001 $\\
			& & $\beta_3 $ & $ 0.000$  & $ 0.001$ & $ 0.001 $ & $ 0.014 $ & $ 0.013$ & $ 0.013$ & $ -0.004 $ & $0.017 $ & $0.017 $\\
			& &  &  $ 0.000$  & $ 0.001$ & $ 0.001 $ & $ 0.014 $ & $ 0.012$ & $ 0.013$ & $ -0.001 $ & $0.016 $ & $0.016 $\\
		\end{tabular}
	}}
\end{table}

\begin{table}
	\caption{	\label{tab:results4_1000}Bias, variance and MSE of $\hat\gamma$ and $\hat\beta$ for \texttt{smcure} and our approach in Model 4 ($n=1000$).}
	\centering
\scalebox{0.85}{
	\fbox{
		\begin{tabular}{cccrrrrrrrrr}
			& & & & & & & & & && \\[-8pt]
			&&&\multicolumn{3}{c}{Cens. level 1}&\multicolumn{3}{c}{Cens. level 2}&\multicolumn{3}{c}{Cens. level 3}\\
			n&  scen. & Par. &  Bias & Var. & MSE & Bias & Var. & MSE & Bias & Var. & MSE\\[2pt]
			\hline
			& & & & & & & & & && \\[-8pt]
			$1000$ & $1 $ & $\gamma_1 $ & $-0.002  $  & $ 0.027$ & $ 0.027  $ & $  0.007 $ & $  0.033$ & $ 0.033 $ & $ 0.010 $ & $0.038  $ & $ 0.038 $\\
			& & & $  0.004$  & $0.027 $ & $  0.027 $ & $ 0.008  $ & $ 0.032 $ & $ 0.032 $ & $ 0.012 $ & $0.038  $ & $ 0.038 $\\
			& & $\gamma_2 $ & $ 0.013 $  & $ 0.007$ & $ 0.007  $ & $ -0.001  $ & $ 0.008 $ & $ 0.008 $  & $ -0.005 $ & $ 0.009 $& $ 0.009 $\\
			& & & $ -0.015 $  & $ 0.007$ & $ 0.007  $ & $ -0.024  $ & $0.009  $ & $ 0.009 $ & $ -0.029 $ & $ 0.011 $ & $0.012  $\\
			& & $\gamma_3 $ & $-0.192  $  & $0.041 $ & $ 0.078  $ & $  -0.149 $ & $ 0.037 $ & $ 0.059 $ & $-0.196  $ & $  0.046$ & $0.084  $\\
			& & & $  0.020$  & $ 0.037$ & $0.037   $ & $ 0.029  $ & $ 0.047 $ & $ 0.048 $ & $ 0.035 $ & $ 0.058 $ & $ 0.059 $\\
			& & $\gamma_4 $ & $ -0.054 $  & $ 0.063$ & $  0.066 $ & $ -0.086  $ & $  0.076$ & $ 0.084 $ & $ -0.155 $ & $ 0.090 $ & $  0.114$\\
		& &  & $ 0.030 $  & $ 0.067$ & $ 0.068  $ & $  0.047 $ & $ 0.083 $ & $ 0.085 $ & $ 0.053 $ & $ 0.104 $ & $ 0.107 $\\
			& & $\gamma_5 $ & $ -0.043 $  & $ 0.060$ & $  0.062 $ & $  -0.108 $ & $ 0.075 $ & $0.087  $ & $ -0.188 $ & $ 0.083 $ & $ 0.118 $\\
			& &  & $ 0.025 $  & $ 0.061$ & $  0.062 $ & $ 0.035  $ & $ 0.075 $ & $0.076  $ & $0.039  $ & $0.092  $ & $ 0.093 $\\
			& & $\beta_1 $ & $ -0.001 $  & $ 0.001$ & $   0.001$ & $ 0.000  $ & $0.001  $ & $ 0.001 $ & $ 0.002 $ & $ 0.001 $ & $ 0.001 $\\
			& &  &$ -0.002 $  & $ 0.001$ & $ 0.001  $ & $   -0.002$ & $ 0.001 $ & $0.001  $ & $ -0.001 $ & $ 0.001 $ & $  0.001$\\
			& & $\beta_2 $ & $ 0.000 $  & $0.000 $ & $ 0.000  $ & $ 0.000  $ & $0.001  $ & $ 0.001 $ & $ 0.000 $ & $0.001  $ & $ 0.001 $\\
			& &  & $0.000  $  & $0.000 $ & $ 0.000  $ & $ 0.000  $ & $ 0.001 $ & $ 0.001 $ & $ 0.000 $ & $ 0.001 $ & $ 0.001 $\\
			& & $\beta_3 $ & $ 0.001 $  & $0.006 $ & $   0.006 $ & $ 0.002  $ & $ 0.006 $ & $ 0.006 $ & $ 0.002 $ & $ 0.007 $ & $  0.007$\\
			& &  & $ 0.001$  & $0.006 $ & $ 0.006  $ & $ 0.002  $ & $ 0.007 $ & $ 0.007 $ & $  0.003$ & $0.007  $ & $ 0.007 $\\
			\cline{2-12}
			& & & & & & & & & && \\[-8pt]
			& $2 $ & $\gamma_1 $ & $ -0.049 $  & $ 0.024$ & $  0.027 $ & $  -0.116 $ & $ 0.031 $ & $0.044  $ & $  -0.207$ & $0.040  $ & $ 0.083 $\\
			& &  & $0.006  $  & $0.024 $ & $  0.024 $ & $ -0.004  $ & $  0.031$ & $ 0.031 $ & $ -0.020 $ & $ 0.039 $ & $ 0.039 $\\
			& & $\gamma_2 $ & $0.019  $  & $ 0.003$ & $  0.003 $ & $  0.049 $ & $ 0.003 $ & $0.006  $ & $ 0.079 $ & $ 0.004 $ & $ 0.011 $\\
			& &  & $0.006  $  & $0.003 $ & $ 0.003  $ & $  0.011 $ & $ 0.004 $ & $ 0.004 $ & $ 0.018 $ & $ 0.005 $ & $ 0.005 $\\
			& & $\gamma_3 $ & $ -0.142 $  & $0.031 $ & $ 0.051  $ & $  -0.229 $ & $ 0.039 $ & $ 0.091 $ & $ -0.330 $ & $ 0.045 $ & $ 0.154 $\\
			& & & $ 0.017 $  & $ 0.032$ & $  0.033 $ & $  0.021 $ & $  0.040$ & $ 0.040 $ & $0.017  $ & $ 0.046 $ & $0.046  $\\
			& & $\gamma_4 $ & $ 0.029 $  & $0.033 $ & $ 0.034  $ & $ 0.079  $ & $ 0.041 $ & $0.047  $ & $ 0.144 $ & $ 0.051 $ & $ 0.072 $\\
			& & & $ 0.012 $  & $ 0.033$ & $ 0.033  $ & $  0.026 $ & $0.040  $ & $ 0.040 $ & $ 0.039 $ & $  0.047$ & $ 0.049 $\\
			& & $\gamma_5 $ & $ -0.034 $  & $0.033 $ & $0.034   $ & $ -0.068  $ & $0.039  $ & $ 0.044 $ & $  -0.103$ & $0.048  $ & $ 0.058 $\\
			& & & $  0.005$  & $0.032 $ & $  0.032 $ & $ 0.006  $ & $ 0.036 $ & $ 0.036 $ & $ 0.007 $ & $ 0.041 $ & $0.042  $\\
			& & $\beta_1 $ & $ 0.001 $  & $ 0.001$ & $   0.001$ & $ -0.003  $ & $ 0.001 $ & $ 0.001 $ & $ -0.008 $ & $ 0.001 $ & $ 0.001 $\\
			& &  &$ 0.002 $  & $0.001 $ & $  0.001 $ & $  0.002 $ & $  0.001$ & $ 0.001 $ & $ 0.001 $ & $  0.001$ & $ 0.001 $\\
			& & $\beta_2 $ & $0.000  $  & $ 0.001$ & $ 0.001  $ & $-0.001   $ & $ 0.001 $ & $ 0.001 $ & $  -0.001$ & $0.001  $ & $ 0.001 $\\
			& &  & $ 0.001 $  & $ 0.001$ & $ 0.001  $ & $  0.000 $ & $ 0.001 $ & $ 0.001 $ & $ 0.001 $ & $  0.001$ & $ 0.001 $\\
			& & $\beta_3 $ & $0.000  $  & $0.010 $ & $0.010   $ & $  -0.009 $ & $ 0.011 $ & $  0.011$ & $ -0.019 $ & $0.012  $ & $ 0.012 $\\
			& &  & $0.002  $  & $0.010 $ & $  0.010 $ & $0.000   $ & $ 0.011 $ & $ 0.011 $ & $  -0.001$ & $ 0.012 $ & $ 0.012 $\\
			\cline{2-12}
			& & & & & & & & & && \\[-8pt]
			& $3 $ & $\gamma_1 $ & $ -0.002 $  & $ 0.015$ & $  0.015 $ & $  -0.012 $ & $ 0.019 $ & $ 0.019 $ & $ -0.029 $ & $ 0.026 $ & $ 0.027 $\\
			& &  & $ 0.004 $  & $ 0.015$ & $ 0.015  $ & $  0.001 $ & $  0.018$ & $ 0.018 $ & $0.001  $ & $ 0.023 $ & $ 0.023 $\\
			& & $\gamma_2 $ & $ -0.005 $  & $0.001 $ & $  0.001$&$-0.001 $ & $ 0.002  $ & $ 0.002 $ & $  0.006$ & $ 0.002 $ & $ 0.002 $ \\
			& &  & $ 0.007 $  & $0.002 $ & $ 0.002  $ & $  0.004 $ & $ 0.002 $ & $ 0.002 $ & $ 0.002 $ & $ 0.002 $ & $ 0.002 $\\
			& & $\gamma_3 $ & $ 0.051 $  & $0.016 $ & $ 0.019  $ & $  0.034 $ & $  0.015$ & $ 0.017 $ & $ 0.041 $ & $  0.021$ & $  0.022$\\
			& & &  $-0.009  $ & $ 0.014 $ & $0.014  $& $ -0.006  $ & $ 0.016 $ & $ 0.016 $&$ -0.006 $  & $0.021 $ & $ 0.021  $  \\
			& & $\gamma_4 $ & $ 0.000 $  & $0.020 $ & $ 0.020  $ & $ 0.009  $ & $ 0.024 $ & $0.024  $ & $ 0.018 $ & $ 0.031 $ & $0.031  $\\
			& & & $ -0.001 $  & $0.019 $ & $  0.019 $ & $  0.004 $ & $  0.023$ & $ 0.023 $ & $ 0.004 $ & $  0.028$ & $ 0.028 $\\
			& & $\gamma_5 $ & $ -0.006 $  & $0.019 $ & $  0.019 $ & $ -0.011  $ & $  0.024$ & $ 0.024 $ & $ -0.015 $ & $ 0.033 $ & $ 0.033 $\\
			& & & $ -0.005 $  & $ 0.019$ & $  0.019 $ & $ -0.007  $ & $ 0.022 $ & $0.022  $ & $ -0.011 $ & $ 0.028 $ & $ 0.028 $\\
			& & $\beta_1 $ & $0.003  $  & $ 0.001$ & $  0.001 $ & $   0.001$ & $ 0.001 $ & $ 0.001 $ & $ -0.001 $ & $ 0.001 $ & $  0.001$\\
			& &  &$ 0.002 $  & $0.001 $ & $  0.001 $ & $ 0.001  $ & $  0.001$ & $ 0.001 $ & $0.001  $ & $  0.001$ & $ 0.001 $\\
			& & $\beta_2 $ & $ 0.000 $  & $ 0.001$ & $ 0.001  $ & $ 0.000  $ & $ 0.001 $ & $ 0.001 $ & $  0.000$ & $ 0.001 $ & $ 0.001 $\\
			& &  & $ -0.001 $  & $ 0.001$ & $ 0.001  $ & $0.000   $ & $  0.001$ & $ 0.001 $ & $0.000  $ & $ 0.001 $ & $ 0.001 $\\
			& & $\beta_3 $ & $ 0.003 $  & $0.010 $ & $ 0.010  $ & $  0.001 $ & $ 0.012 $ & $ 0.012 $ & $ -0.005 $ & $ 0.016 $ & $ 0.016 $\\
			& &  & $ 0.003 $  & $ 0.010$ & $  0.010 $ & $ 0.002  $ & $ 0.012 $ & $ 0.012 $ & $ -0.001 $ & $0.016  $ & $0.016  $\\
		\end{tabular}
	}}
\end{table}

\end{document}